\pgfplotsset{compat=1.9}
\newtheorem{theorem}{Theorem}[section]
\newtheorem{lemma}[theorem]{Lemma}
\newtheorem{corollary}[theorem]{Corollary}
\newtheorem{observation}[theorem]{Observation}
\theoremstyle{definition}
\newtheorem{definition}[theorem]{Definition}
\theoremstyle{remark}
\newtheorem{remark}[theorem]{Remark}
\DeclareMathOperator{\Aut}{Aut}
\DeclareMathOperator{\Out}{Out}
\DeclareMathOperator{\Sym}{Sym}
\DeclareMathOperator{\Alt}{Alt}
\DeclareMathOperator{\Cl}{Cl}
\DeclareMathOperator{\Soc}{Soc}
\DeclareMathOperator{\Iso}{Iso}
\DeclareMathOperator{\Aff}{Aff}
\DeclareMathOperator{\im}{im}
\DeclareMathOperator{\dist}{dist}
\DeclareMathOperator{\ourgamma}{\widehat{\Gamma}}
\DeclareMathOperator{\AGL}{AGL}
\DeclareMathOperator{\GL}{GL}
\DeclareMathOperator{\SL}{SL}
\DeclareMathOperator{\SU}{SU}
\DeclareMathOperator{\Sp}{Sp}
\numberwithin{equation}{section}
\numberwithin{figure}{section}
\newcommand{\angles}[1]{\left\langle#1\right\rangle}
\tikzstyle{vertex}=[circle,fill=white,draw=black]
\newcounter{claimcounter}
\newenvironment{claim}[1][]{
  \renewcommand{\proof}{\smallskip\par\noindent\textit{Proof. }}
  \medskip\par\noindent%
  \ifthenelse{\equal{#1}{}}{%
    \setcounter{claimcounter}{0}\refstepcounter{claimcounter}\textit{Claim~\arabic{claimcounter}.}
  }{%
    \ifthenelse{\equal{#1}{resume}}{%
      \refstepcounter{claimcounter}\textit{Claim~\arabic{claimcounter}.}
    }{%
      \textit{Claim~#1.}
    }
  }
}{
  \par\medskip
}
\newcommand{\uend}{\hfill$\lrcorner$}
\newcommand{\polylog}[1]{\operatorname{polylog}(#1)}
\title{A Faster Isomorphism Test for Graphs of Small Degree}
\author{
Martin Grohe\\
RWTH Aachen University\\
\texttt{grohe@informatik.rwth-aachen.de}
\and
Daniel Neuen\\
RWTH Aachen University\\
\texttt{neuen@informatik.rwth-aachen.de}
\and
Pascal Schweitzer\\
TU Kaiserslautern\\
\texttt{schweitzer@cs.uni-kl.de}
}
\date{}
\begin{document}

\maketitle

\begin{abstract}
 In a recent breakthrough, Babai (STOC 2016) gave a quasipolynomial time
 graph isomorphism test. In this work, we give an improved
 isomorphism test for graphs of small degree: our algorithms runs in time 
 $n^{\mathcal{O}((\log d)^{c})}$, where $n$ is the number of vertices of
 the input graphs, $d$ is the maximum degree of the input graphs, and
 $c$ is an absolute constant.  
 The best previous isomorphism test for graphs of maximum degree $d$ due to
 Babai, Kantor and Luks (FOCS 1983) runs in time
 $n^{\mathcal{O}(d/ \log d)}$.  
\end{abstract}

\section{Introduction}

Luks's polynomial time isomorphism test for graphs of bounded degree \cite{luks82} is one of the cornerstones of the algorithmic theory of graph isomorphism.
With a slight improvement given later~\cite{BKL83}, it tests in time
$n^{\mathcal{O}(d/\log d)}$ 
whether two $n$-vertex graphs of maximum degree $d$ are isomorphic.
Over the past decades Luks's algorithm and its algorithmic framework have been used as a building block for many isomorphism algorithms
(see e.g.\
\cite{BKL83,BL83,GM15,KS17,Luks91,Ponomarenko91,Seress03}). More importantly, 
it also forms the basis for Babai's recent isomorphism test for general graphs \cite{Babai15-full, Babai16} which runs in quasipolynomial time (i.e.,\ the running time is bounded by $n^{\polylog{n}}$).
Indeed, Babai's algorithm follows Luks's algorithm, but attacks the obstacle cases for which the recursion performed by Luks's framework does not lead to the desired running time.
Graphs whose maximum degree $d$ is at most polylogarithmic in the
number $n$ of vertices are not a critical case for Babai's algorithm,
because for such graphs no large alternating or symmetric groups appear
as factors of the automorphism group, and therefore the running time
of Babai's algorithm on the class of all these graphs is still quasipolynomial. 
Hence graphs of polylogarithmic maximum degree form one of the obstacle cases
towards improving Babai's algorithm. This alone is a strong motivation
for trying to improve Luks's algorithm. In view of Babai's
quasipolynomial time algorithm, it is natural to ask whether there is
an $n^{\polylog{d}}$-isomorphism test for graphs of maximum degree
$d$. In this paper we answer this question affirmatively.

\begin{theorem}\label{thm:main-result-degree-d}
 The Graph Isomorphism Problem for graphs of maximum degree $d$ can be solved in time $n^{\mathcal{O}((\log d)^c)}$, for an absolute constant $c$.
\end{theorem}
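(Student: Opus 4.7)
The plan is to proceed within Luks's group-theoretic framework but replace its blunt orbit-by-orbit recursion (which is what forces the $d/\log d$ exponent) by a recursion whose branching factor is only polylogarithmic in $d$. Concretely, I would first recall Luks's reduction of bounded-degree graph isomorphism to the \emph{String Isomorphism Problem} under a permutation group $\Gamma\leq\Sym(\Omega)$ all of whose non-abelian composition factors appear inside the local action of automorphisms on a $d$-neighborhood, so every composition factor $T$ of $\Gamma$ embeds as a subgroup of some primitive group of degree at most $d$. I would then try to design a procedure that, given such a $\Gamma$-string, either directly splits the problem into at most $n^{O((\log d)^{c})}$ subproblems on a genuinely smaller $\Omega$, or returns a canonical structure on $\Omega$ that witnesses automorphisms.

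The natural tool to import is Babai's quasipolynomial machinery: local certificates, the Unaffected-Stabilizer theorem, and the Split-or-Johnson routine. The point is that Babai pays a factor of $n^{O(\log n)}$ per recursion level because a priori the only bound on the size of a giant alternating composition factor is $n$ itself. In the bounded-degree setting, every composition factor is bounded by (a function of) $d$, so the "giants" that have to be tamed live in $A_k$ with $k\leq d$, and the corresponding certificate/search should cost only $n^{O((\log d)^{c_1})}$ per call. The key combinatorial input will be an analogue of Cameron/Babai's classification of large primitive groups, asserting that a primitive $T\leq S_k$ with $k\leq d$ and no giant section has order $d^{O(\log d)}$, and that giant sections can be detected together with a canonical Johnson-like structure.

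With those ingredients in hand I would carry out the recursion in the usual ladder: alternate (i) orbit decomposition of $\Gamma$, (ii) passage to a system of imprimitivity to reduce to the primitive case, and (iii) handle primitive factors. Steps (i) and (ii) are cheap in the group-theoretic cost model; the crux is (iii). If the primitive action is small, brute force over the block; otherwise Split-or-Johnson plus local certificates returns either a balanced colored partition of the action domain (which lets us recurse on pieces of size at most a constant fraction of the current domain with only $\log d$-bounded branching) or a large Johnson/giant structure on which we can compute directly. Summing, each recursive level multiplies the cost by $n^{O((\log d)^{c_2})}$, and the recursion has depth $O(\log n)$, so the total running time is $n^{O((\log d)^c)}$.

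The main obstacle I anticipate is the interaction between Luks's and Babai's recursions: when Babai's local-certificates routine is invoked, the stabilizer subgroups it produces must again be groups with composition factors of degree bounded by $d$ in order for the induction hypothesis to apply, and the colored/relational structures the algorithm must canonize after identifying Johnson schemes need to be shown to again fall within the bounded-degree regime (possibly after an individualization step whose cost enters the $(\log d)^c$ exponent). Making this invariant precise---formulating a recursion hypothesis strong enough to be preserved through both the group-theoretic descent and the combinatorial Split-or-Johnson step---is, I expect, where the bulk of the technical work lies.
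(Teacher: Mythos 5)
Your high-level plan---reduce to string isomorphism over $\ourgamma_d$-groups, prove a Cameron-type structure theorem for large primitive $\ourgamma_d$-groups, and adapt Babai's local-certificates machinery so the ``giants'' live in alternating groups of degree $\le d$ rather than $\le n$---matches the skeleton of the paper's argument. You correctly identify the structure theorem as a needed ingredient, and you correctly anticipate that the bulk of the work is in making Babai's recursion cohabit with Luks's.

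However, there is a concrete gap, and it is not where you locate it. You worry about whether the stabilizer subgroups and post-individualization structures produced during the recursion remain in the bounded-degree regime. That part is actually unproblematic: $\ourgamma_d$ is closed under subgroups and homomorphic images (Lemma~\ref{la:gamma-d-closure}), so every group encountered in the ladder automatically stays in $\ourgamma_d$. The real obstruction is that Babai's Unaffected Stabilizers Theorem---the group-theoretic statement underpinning the entire local-certificates routine---simply does \emph{not} hold for arbitrary $\ourgamma_d$-groups (this is noted in Babai's own manuscript, Remark~8.2.5 of the full version). So the plan of ``importing local certificates verbatim, just with $k\le d$ instead of $k\le n$'' would fail at the very first lemma it needs. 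The paper's fix, which is absent from your proposal and is one of its two main technical contributions, is a preprocessing reduction that \emph{changes the group action}: using the structure theorem, one replaces $\Omega$ by a domain $\Omega^*$ of size $n^{\polylog d}$ (roughly, blowing up cosets of socles and unfolding Johnson schemes into subset lattices) so that $G$ acts on $\Omega^*$ with an ``almost $d$-ary'' sequence of invariant partitions, while preserving string isomorphisms (Theorems~\ref{thm:reduction-one} and~\ref{thm:reduction-two}). Only for groups constrained by such a sequence does a version of the Unaffected Stabilizers Theorem go through (Theorem~\ref{thm:unaffected-stabilizer-tree}), and only then can the local-certificates ladder be run. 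Without this change-of-action step your recursion has no base to stand on, so the proposal as written does not yield the claimed bound.
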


To prove the result we follow the standard route of considering the \emph{String Isomorphism
Problem}, which is an abstraction of the Graph Isomorphism Problem
that has been introduced by Luks in order to facilitate a recursive
isomorphism test based on the structure of the permutation groups
involved \cite{BL83,luks82}. Here a \emph{string} is simply a mapping
$\mathfrak x:\Omega\to\Sigma$, where the \emph{domain} $\Omega$ and
\emph{alphabet} $\Sigma$ are just
finite sets. Given two strings
$\mathfrak x,\mathfrak y:\Omega\to\Sigma$ and a permutation group
$G\le\Sym(\Omega)$, the objective of the String Isomorphism Problem is
to compute the set $\Iso_G(\mathfrak x,\mathfrak y)$ of all
\emph{$G$-isomorphisms} from $\mathfrak x$ to $\mathfrak y$, that is,
all permutations $g\in G$ mapping $\mathfrak x$ to $\mathfrak y$. 
We study the String Isomorphism Problem for groups
$G$ in the class $\ourgamma_d$ of groups all of whose composition
factors are isomorphic to subgroups of $S_d$, the symmetric group
acting on $d$ points. 
Luks introduced this class because he observed that, after fixing a
single vertex, the automorphism group of a connected graph of maximum degree $d$
is in $\ourgamma_d$\footnote{In \cite{luks82}, the class
$\ourgamma_d$ is denoted by $\Gamma_d$. However, in the more recent
literature $\Gamma_d$ typically refers to a larger class of groups
\cite{BCP82} (see Subsection~\ref{subsubsec:restr:comp:fac}).}.
Our main technical result, Theorem~\ref{thm:main-result-gamma-d},
states that we can solve the String
Isomorphism Problem for groups $G\in\ourgamma_d$ in time
$n^{\polylog{d}}$, where $n=|\Omega|$ is the length of the input
strings. This implies Theorem~\ref{thm:main-result-degree-d} (as outlined in Section~\ref{sec:applications}).

To prove this result, we introduce the new concept of an
\emph{almost~$d$-ary sequence} 
of invariant
partitions. More precisely, we exploit for the group~$G$ 
a sequence $\{\Omega\} = \mathfrak{B}_0 \succ \dots \succ \mathfrak{B}_m =
\{\{\alpha\} \mid \alpha \in \Omega\}$ of $G$-invariant partitions
$\mathfrak B_i$ of $\Omega$, where 
$\mathfrak B_{i-1}\succ\mathfrak B_i$ means that $\mathfrak B_i$
refines $\mathfrak B_{i-1}$.
For this sequence we require that for all $i$ 
the induced group of permutations of the subclasses in~$\mathfrak B_i$
of a given class in~$\mathfrak B_{i-1}$ is permutationally equivalent to a subgroup of
the symmetric group~$S_d$ or semi-regular (i.e., only the identity has fixed points). 
Our algorithm that exploits such a sequence is heavily based on
techniques introduced by Babai for his quasipolynomial time
isomorphism test. We even use Babai's algorithm as a black box in one
case. One of our technical contributions is an adaptation of Babai's
Unaffected Stabilizers Theorem \cite[Theorem 6]{Babai16} to groups
constrained by an almost $d$-ary sequence of invariant
partitions. In~\cite{Babai16}, the Unaffected Stabilizers Theorem lays
the groundwork for the group theoretic algorithms (the Local
Certificates routine), and it plays a similar role here. However, we
need a more refined running time analysis.
Based on this we
can then adapt the Local Certificates routine to our setting. 

However, not every group in~$\ourgamma_d$ has such an almost~$d$-ary
sequence required by our technique. We remedy this by changing the
operation of the group while preserving string isomorphisms. The
structural and algorithmic results enabling such a change of operation
form the second technical contribution of our work.  For this we
employ some heavy group theoretic results. First, applying the
classification of finite simple groups via the O'Nan-Scott Theorem and
several other group theoretic characterizations, we obtain a structure
theorem for primitive permutation groups in~$\ourgamma_d$ showing that
they are either small (of size at most $n^{\polylog{d}}$) or have a
specific structure. More precisely, large primitive groups
in~$\ourgamma_d$ are composed, in a well defined manner, of Johnson groups
(i.e.\ symmetric/alternating groups with an induced action on $t$-element subsets of the standard domain).
Second, to construct the almost~$d$-ary sequence of partitions,
we exploit the existence of these Johnson schemes and introduce subset
lattices which are unfolded yielding the desired group operation.

With Luks's framework being used as a subroutine in various other
algorithms, one can ask for the impact of the improved running time in
such contexts. 
As a first, simple application we obtain an improved 
isomorphism test for relational structures
(Theorem~\ref{thm:relational}) and hypergraphs (Corollary~\ref{cor:hypergraph}). 
A deeper application is an improved fixed-parameter tractable algorithm
for graph isomorphism of graphs
parameterized by tree width~\cite{GroheNSW18}, which substantially improves the
algorithm from~\cite{LPPS14}. 

\paragraph{Outline} 
Section~\ref{sec:characterization-primitive} is
concerned with the structure of primitive $\ourgamma_d$ groups; it
culminates in Theorem~\ref{thm:first-main-theorem} with a structural
description. In Section~\ref{sec:almost:d:ary} we describe how to
algorithmically change the operation of a group in $\ourgamma_d$ to
force the existence of an almost $d$-ary sequence of
invariant partitions
$\{\Omega\} = \mathfrak{B}_0 \succ \dots \succ \mathfrak{B}_m =
\{\{\alpha\} \mid \alpha \in \Omega\}$
without changing string isomorphisms. In
Sections~\ref{sec:affected:orbits} and~\ref{sec:local-certificates} we
extend Babai's structural group theoretic results to our
situation. Finally, in Section~\ref{sec:algorithm} we prove our main
algorithmic results,
Theorems~\ref{thm:string-isomorphism-almost-d-ary} and
\ref{thm:main-result-gamma-d}. Applications of these results, among
them Theorem~\ref{thm:main-result-degree-d}, are presented in
Section~\ref{sec:applications}.

\section{Preliminaries}

\subsection{Graphs and other structures}

A \emph{graph} is a pair $\Gamma=(V,E)$ with vertex set $V = V(\Gamma)$ and edge relation $E = E(\Gamma)$.
In this paper all graphs are finite and simple, i.e.\ there are no loops or multiple edges. The \emph{neighborhood} of~$v\in V(\Gamma)$ is denoted~$N(v)$.
A \emph{path} of length $k$ is a sequence $v_0,\dots,v_k$ of distinct vertices such that $(v_{i-1},v_i) \in E(\Gamma)$ for all $i \in [k]$ (where $[k] := \{1,\dots,k\}$).
The \emph{distance} between two vertices $v,w \in V(\Gamma)$, denoted by $\dist(v,w)$, is the length of the shortest path from $v$ to $w$.

An \emph{isomorphism} from a graph $\Gamma_1$ to another graph
$\Gamma_2$ is a bijective mapping $\varphi\colon V(\Gamma_1)
\rightarrow V(\Gamma_2)$ which preserves the edge relation, that is, $(v,w) \in E(\Gamma_1)$ if and only if $(\varphi(v),\varphi(w)) \in E(\Gamma_2)$ for all~$v,w \in V(\Gamma_1)$.
Two graphs $\Gamma_1$ and $\Gamma_2$ are \emph{isomorphic} ($\Gamma_1 \cong \Gamma_2$) if there is an isomorphism from~$\Gamma_1$ to~$\Gamma_2$.
An \emph{automorphism} of a graph $\Gamma$ is an isomorphism from~$\Gamma$ to itself. By $\Aut(\Gamma)$ we denote the group of automorphisms of $\Gamma$.
The \emph{Graph Isomorphism Problem} asks, given two (undirected) graphs $\Gamma_1$ and $\Gamma_2$, whether they are isomorphic.

A \emph{$t$-ary relational structure} is a tuple $\mathfrak{A} = (D,R_1,\dots,R_k)$ with domain $D$ and $t$-ary relations $R_i \subseteq D^{t}$ for $i \in [k]$.
An \emph{isomorphism} from a structure $\mathfrak{A}_1 = (D_1,R_1,\dots,R_k)$ to another structure $\mathfrak{A}_2 = (D_2,S_1,\dots,S_k)$ is a bijective mapping $\varphi\colon D_1 \rightarrow D_2$ such that $(v_1,\dots,v_t) \in R_i$ if and only if $(\varphi(v_1),\dots,\varphi(v_t)) \in S_i$ for all~$v_1,\dots,v_t \in D_1$ and $i \in [k]$.
As before, $\Aut(\mathfrak{A})$ denotes the automorphism group of $\mathfrak{A}$.

Let $\mathfrak{B}_1,\mathfrak{B}_2$ be two partitions of the same set $\Omega$.
We say $\mathfrak{B}_1$ \emph{refines} $\mathfrak{B}_2$, denoted by $\mathfrak{B}_1 \preceq \mathfrak{B}_2$, if for every $B_1 \in \mathfrak{B}_1$ there is some $B_2 \in \mathfrak{B}_2$ such that $B_1 \subseteq B_2$.
If additionally $\mathfrak{B}_1$ and $\mathfrak{B}_2$ are distinct we say $\mathfrak{B}_1$ \emph{strictly refines} $\mathfrak{B}_2$ ($\mathfrak{B}_1 \prec \mathfrak{B}_2$).
The \emph{index} of $\mathfrak B_1$ in $\mathfrak B_2$
is $|\mathfrak{B}_2 : \mathfrak{B}_1| \coloneqq \max_{B_2 \in \mathfrak{B}_2} |\{B_1 \in \mathfrak{B}_1 \mid B_1 \subseteq B_2\}|$.
A partition $\mathfrak{B}$ (of the set $\Omega$) is an \emph{equipartition} if all elements $B \in \mathfrak{B}$ have the same size.
For $S \subseteq \Omega$ we define the \emph{induced partition} $\mathfrak{B}[S] = \{B \cap S \mid B \in \mathfrak{B} \text{ such that } B \cap S \neq \emptyset\}$.
Note that $\mathfrak{B}[S]$ forms a partition of the set $S$.

For a set $M$ and a natural number $t \leq |M|$ we denote by $\binom{M}{t}$ the set of all $t$-element subsets of $M$, that is, $\binom{M}{t} = \{X \subseteq M \mid |X| = t\}$.
Note that the number of elements in $\binom{M}{t}$ is exactly $\binom{|M|}{t}$.
Moreover, $\binom{M}{\leq t}$ denotes the set of all subsets of $M$ of cardinality at most $t$. 

\subsection{Group Theory}

In this section we introduce the group theoretic notions required in this work.
For a general background on group theory we refer to \cite{Rotman99} whereas background on permutation groups can be found in \cite{DM96}.

\subsubsection{Permutation groups}

A \emph{permutation group} acting on a set $\Omega$ is a subgroup $G \leq \Sym(\Omega)$ of the symmetric group.
The size of the permutation domain $\Omega$ is called the \emph{degree} of $G$ and, throughout this work, is denoted by $n = |\Omega|$.
If $\Omega = [n]$ then we also write $S_n$ instead of $\Sym(\Omega)$.
For $g \in G$ and $\alpha \in \Omega$ we denote by $\alpha^{g}$ the image of $\alpha$ under the permutation $g$.
The set $\alpha^{G} = \{\alpha^{g} \mid g \in G\}$ is the \emph{orbit} of $\alpha$.
The group $G$ is \emph{transitive} if $\alpha^{G} = \Omega$ for some (and therefore every) $\alpha \in \Omega$.

For $\alpha \in \Omega$ the group $G_\alpha = \{g \in G \mid \alpha^{g} = \alpha\} \leq G$ is the \emph{stabilizer} of $\alpha$ in $G$.
The group $G$ is \emph{semi-regular} if $G_\alpha = \{1\}$ for all $\alpha \in \Omega$.
If additionally $G$ is transitive then the group $G$ is called \emph{regular}.
For $\Delta \subseteq \Omega$ and $g \in G$ let $\Delta^{g} = \{\alpha^{g} \mid \alpha \in \Delta\}$.
The \emph{pointwise stabilizer} of $\Delta$ is the subgroup $G_{(\Delta)} = \{g \in G \mid\forall \alpha \in \Delta\colon \alpha^{g}= \alpha \}$.
The \emph{setwise stabilizer} of $\Delta$ is the subgroup $G_{\Delta} = \{g \in G \mid \Delta^{g}= \Delta\}$.
Observe that $G_{(\Delta)} \leq G_{\Delta}$.

Let $G \leq \Sym(\Omega)$ be a transitive group.
A \emph{block} of $G$ is a nonempty subset $B \subseteq \Omega$ such that $B^{g} = B$ or $B^{g} \cap B = \emptyset$ for all $g \in G$.
The trivial blocks are $\Omega$ and the singletons $\{\alpha\}$ for
$\alpha \in \Omega$.
The group $G$ is said to be \emph{primitive} if there are no non-trivial blocks.
If $G$ is not primitive it is called \emph{imprimitive}.
If $B \subseteq \Omega$ is a block of $G$ then $\mathfrak{B} = \{B^{g} \mid g \in G\}$ forms a \emph{block system} of $G$.
Note that $\mathfrak{B}$ is an equipartition of $\Omega$.
The group $G_{(\mathfrak{B})} = \{g \in G \mid \forall B \in \mathfrak{B}\colon B^{g} = B\}$ denotes the subgroup stabilizing each block $B \in \mathfrak{B}$ setwise.
Observe that $G_{(\mathfrak{B})}$ is a normal subgroup of~$G$.
We denote by $G^{\mathfrak{B}} \leq \Sym(\mathfrak{B})$ the natural action of $G$ on the block system $\mathfrak{B}$.
More generally, if $A$ is a set of objects on which $G$ acts naturally, we denote by $G^{A} \leq \Sym(A)$ the action of $G$ on the set $A$.
A block system $\mathfrak{B}$ is \emph{minimal} if there is no non-trivial block system $\mathfrak{B}'$ such that $\mathfrak{B} \prec \mathfrak{B}'$.
Note that a block system $\mathfrak{B}$ is minimal if and only if $G^{\mathfrak{B}}$ is primitive.

A class of primitive groups that plays an important role in this work is the class of \emph{Johnson groups}, alternating and symmetric groups with their actions on $t$-element subsets of the standard domain.
For $m \in \mathbb{N}$ we denote by $A_m$ the alternating group acting on the set $[m]$.
For $t \leq \frac{m}{2}$ let $A_m^{(t)}$ be the action of $A_m$ on the set of $t$-element subsets of $[m]$.
Similarly, $S_m^{(t)}$ denotes the action of $S_m$ on the set of $t$-element subsets of $[m]$.

Let $G \leq \Sym(\Omega)$ and $G' \leq \Sym(\Omega')$.
A \emph{homomorphism} is a mapping $\varphi\colon G \rightarrow G'$ such that $\varphi(g)\varphi(h) = \varphi(gh)$ for all $g,h \in G$.
For $g \in G$ we denote by $g^{\varphi}$ the $\varphi$-image of $g$.
Similarly, for $H \leq G$ we denote by $H^{\varphi}$ the $\varphi$-image of $H$ (note that $H^{\varphi}$ is a subgroup of $G'$).

A \emph{permutational isomorphism} from $G$ to $G'$ is a bijective mapping $f:\Omega\to\Omega'$
such that $G' = \{f^{-1}gf \mid g \in G\}$ where $f^{-1}gf\colon \Omega' \rightarrow \Omega'$ is the unique map mapping~$f(\alpha)$ to~$f(\alpha^{g})$ for all~$\alpha\in \Omega'$.
If there is a permutational isomorphism from $G$ to $G'$, we call $G$ and $G'$ \emph{permutationally equivalent}.
A \emph{permutational automorphism} of $G$ is a permutational isomorphism from $G$ to itself.

\subsubsection{Algorithms for permutation groups}

We review some basic facts about algorithms for permutation groups.
For detailed information we refer to \cite{Seress03}.

In order to perform computational tasks for permutation groups efficiently the groups are represented by generating sets of small size.
Indeed, most algorithms are based on so called strong generating sets,
which can be chosen of size quadratic in the size of the permutation domain of the group and can be computed in polynomial time given an arbitrary generating set.

\begin{theorem}[cf.\ \cite{Seress03}] 
 \label{thm:permutation-group-algorithm-library}
 Let $G \leq \Sym(\Omega)$ and let $S$ be a generating set for $G$.
 Then the following tasks can be performed in time polynomial in $n$ and $|S|$:
 \begin{enumerate}
  \item compute the order of $G$,
  \item given $g \in \Sym(\Omega)$, test whether $g \in G$,
  \item compute the orbits of $G$,
  \item given $\Delta \subseteq \Omega$, compute a generating set for $G_{(\Delta)}$, and
  \item compute a minimal block system for $G$.
 \end{enumerate}
 For a second group $G' \leq \Sym(\Omega')$ with domain size $n' = |\Omega'|$, the following tasks can be solved in time polynomial in $n$, $n'$ and $|S|$:
 \begin{enumerate}
  \item[6.] given a homomorphism $\varphi\colon G \rightarrow G'$ (given as a list of images for $g \in S$),
  \begin{enumerate}
   \item compute a generating set for $\ker(\varphi) = \{g \in G \mid \varphi(g) = 1\}$, and
   \item given $g'\in G'$, compute an element $g \in G$ such that
     $\varphi(g) = g'$ (if it exists).
  \end{enumerate}
 \end{enumerate}
\end{theorem}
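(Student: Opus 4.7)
The plan is to invoke the Schreier--Sims framework as the central tool and derive each item as a routine consequence. First I would fix an ordering $\alpha_1,\dots,\alpha_n$ of $\Omega$ and compute a strong generating set for $G$ relative to the corresponding base, producing along the way the stabilizer chain $G=G^{(0)}\geq G^{(1)}\geq\dots\geq G^{(n)}=\{1\}$ with $G^{(i)}=G_{(\{\alpha_1,\dots,\alpha_i\})}$ together with a transversal (the Schreier tree) for each step. The classical Schreier--Sims procedure accomplishes this in time polynomial in $n$ and $|S|$, and the fundamental orbit sizes $|\alpha_i^{G^{(i-1)}}|$ are produced on the way.

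Items 1--4 then follow immediately. The order of $G$ is the product of the fundamental orbit sizes. Membership is decided by sifting, which successively strips off transversal elements until either the identity (so $g\in G$) or a non-identity element fixing the whole base (so $g\notin G$) remains. Orbits are computed by the standard breadth-first search driven by the generators in $S$. For the pointwise stabilizer of $\Delta\subseteq\Omega$, reorder the base so that the elements of $\Delta$ come first, rerun Schreier--Sims, and read off a generating set of $G^{(|\Delta|)}=G_{(\Delta)}$ from the corresponding level of the chain.

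For item 5 I would use Atkinson's algorithm: for each $\alpha\in\Omega\setminus\{\alpha_1\}$, compute the coarsest $G$-invariant partition in which $\alpha_1$ and $\alpha$ lie in the same class via union-find refinement driven by the generators; among the resulting equipartitions the one with smallest non-singleton class size yields a minimal non-trivial block system. For item 6 one works with the induced stabilizer chain. For 6(a), Schreier's lemma expresses $\ker(\varphi)$ in terms of Schreier generators; those whose $\varphi$-image is trivial are kept, and the Schreier--Sims procedure applied to this (potentially large) set returns a polynomially bounded generating set in polynomial time. For 6(b), observe that $\varphi(S)$ is a generating set of $G^\varphi\leq G'$, from which a strong generating set can be computed; sifting $g'$ expresses it as a word in $\varphi(S)$ whenever $g'\in G^\varphi$, and substituting each letter by the corresponding preimage in $S$ gives the desired $g\in G$ with $\varphi(g)=g'$.

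The main delicate point, if one were to carry out full details, is item 5: unlike the other tasks, computing a minimal block system is not a direct byproduct of building a strong generating set, and a careful analysis of Atkinson's partition-refinement procedure (or an equivalent method based on normal closures of point stabilizers) is needed to justify the polynomial running time. Every other item reduces mechanically to sifting, stabilizer-chain bookkeeping, or Schreier's lemma, so the work really lies in assembling these standard ingredients in the right order.
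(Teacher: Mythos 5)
The paper does not prove this theorem; it is stated with ``cf.\ \cite{Seress03}'' as a background result quoted from Seress's monograph on permutation group algorithms, and no argument is given in the text. Your sketch is an accurate account of the standard proofs found in that reference: Schreier--Sims for the stabilizer chain, sifting for membership, reordering the base for pointwise stabilizers, Atkinson's refinement algorithm for block systems, and Schreier generators for kernels and preimages. The one place where your wording is imprecise is item~6(a): you cannot apply Schreier's lemma directly to a transversal of $\ker(\varphi)$ in $G$, since that transversal has size $|G^\varphi|$, which may be exponential in $n'$. The standard remedy --- which Seress uses, and which your framing almost says --- is to build a stabilizer chain for $G^\varphi$ in $\Sym(\Omega')$, pull it back level by level to a chain $G = G_0 \geq G_1 \geq \dots \geq G_m = \ker(\varphi)$ in $G$, and apply Schreier's lemma at each level (where transversals have size at most $n'$); equivalently, one embeds $G$ diagonally in $\Sym(\Omega \sqcup \Omega')$ via $g \mapsto (g,\varphi(g))$ and invokes item~4 to compute the pointwise stabilizer of $\Omega'$. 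With that adjustment your proposal matches the cited source.
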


\subsubsection{Groups with restricted composition factors}\label{subsubsec:restr:comp:fac}

In this work we shall be interested in a particular subclass of permutation groups, namely groups with restricted composition factors.
Let $G$ be a group.
A \emph{subnormal series} is a sequence of subgroups $G = G_0 \trianglerighteq G_1 \trianglerighteq \dots \trianglerighteq G_k = \{1\}$.
The length of the series is $k$ and the groups $G_{i-1} / G_{i}$ are the factor groups of the series, $i \in [k]$.
A \emph{composition series} is a strictly decreasing subnormal series of maximal length. 
For every finite group $G$ all composition series have the same family of factor groups considered as a multi-set (cf.\ \cite{Rotman99}).
A \emph{composition factor} of a finite group $G$ is a factor group of a composition series of $G$.

\begin{lemma}[\cite{BCP82}, Lemma 2.2]
 \label{la:size-gamma-d}
 Suppose $d \geq 6$.
 Let $G$ be a permutation group of degree $n$ such that $G$ has no composition factor isomorphic to an alternating group $A_k$ of degree $k > d$.
 Then $|G| \leq d^{n-1}$.
\end{lemma}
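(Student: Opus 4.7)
The plan is to proceed by induction on $n = |\Omega|$, with the trivial base case $n = 1$ (where $|G| = 1 \leq 1 = d^0$). For the inductive step, I would distinguish three cases depending on whether $G$ is intransitive, transitive but imprimitive, or primitive. In each case, I use that composition factors of subgroups and of quotients of $G$ are (up to isomorphism) composition factors of $G$, so the hypothesis on alternating factors is inherited by all the groups we pass to during the recursion.

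\emph{Intransitive case.} Suppose $G$ has orbits $\Delta_1, \dots, \Delta_r$ with $r \geq 2$ and $|\Delta_i| = n_i$, so $\sum n_i = n$. The natural map $G \hookrightarrow G^{\Delta_1} \times \dots \times G^{\Delta_r}$ is injective and each $G^{\Delta_i}$ satisfies the hypothesis on $n_i < n$ points. By the inductive hypothesis, $|G| \leq \prod_{i=1}^r d^{n_i - 1} = d^{n-r} \leq d^{n-1}$.

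\emph{Transitive but imprimitive case.} Let $\mathfrak{B}$ be a minimal nontrivial block system, with block size $b$ and $m = n/b$ blocks, where $1 < b < n$. Consider the short exact sequence $1 \to G_{(\mathfrak{B})} \to G \to G^{\mathfrak{B}} \to 1$. Then $G^{\mathfrak{B}} \leq \Sym(\mathfrak{B})$ is a permutation group of degree $m < n$, so by induction $|G^{\mathfrak{B}}| \leq d^{m-1}$. For the kernel, $G_{(\mathfrak{B})}$ embeds into $\prod_{B \in \mathfrak{B}} G_{(\mathfrak{B})}^B$, and each factor $G_{(\mathfrak{B})}^B$ is a permutation group of degree $b < n$ satisfying the hypothesis, so by induction $|G_{(\mathfrak{B})}^B| \leq d^{b-1}$. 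Multiplying gives
\[
|G| = |G^{\mathfrak{B}}| \cdot |G_{(\mathfrak{B})}| \leq d^{m-1} \cdot d^{(b-1)m} = d^{bm - 1} = d^{n-1}.
\]

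\emph{Primitive case.} This is the main obstacle and accounts for essentially all the content of the lemma: we must bound the order of a primitive group of degree $n \geq 2$ whose composition factors avoid $A_k$ for $k > d$. The natural route is via the O'Nan--Scott theorem (together with the classification of finite simple groups), splitting into the affine, almost simple, simple diagonal, product action, and twisted wreath product types. In each type the socle structure tightly constrains $|G|$: for affine type one has $|G| \leq n \cdot |\GL_a(p)|$ where $n = p^a$ and $p \leq d$ and $a \leq \log_p n$; for the almost simple, diagonal, and product types the composition factors of $\Soc(G)$ must come from subgroups of $S_d$, which severely limits the size of the nonabelian simple factors and hence of $G$; and the hypothesis $d \geq 6$ rules out the small exceptional cases. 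In fact one obtains the much stronger bound $|G| \leq n^{O(\log d)}$ in the primitive case, and a direct arithmetic comparison shows $n^{O(\log d)} \leq d^{n-1}$ once $n \geq 2$ and $d \geq 6$. The hard part is verifying the primitive bound uniformly across all O'Nan--Scott types; given that this is essentially the content of the theorems cited from \cite{BCP82}, I would cite their structural classification of primitive $\ourgamma_d$-groups and then combine it with the intransitive and imprimitive reductions above.
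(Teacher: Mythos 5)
The paper itself does not prove this lemma but imports it from \cite{BCP82}, so your argument has to stand on its own. Your reduction to the primitive case is correct and is exactly the standard one: the orbit decomposition gives $|G|\le d^{\,n-r}$, and the block-system step gives $|G|\le|G^{\mathfrak{B}}|\cdot\prod_{B\in\mathfrak{B}}|G_{(\mathfrak{B})}^{B}|\le d^{m-1}\cdot d^{(b-1)m}=d^{n-1}$. One misstatement should be repaired, though: composition factors of an arbitrary subgroup of $G$ need \emph{not} be composition factors of $G$, and the hypothesis class of this lemma is not even subgroup-closed (for $k\ge 5$ one has $A_k\le S_k\le\GL_{k-1}(2)$, and $\GL_{k-1}(2)$ is simple of Lie type, so it has no alternating composition factor at all). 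Your recursion is nevertheless safe because it only ever passes to quotients of $G$ and to quotients of the \emph{normal} subgroup $G_{(\mathfrak{B})}$, for which the factors are genuinely inherited; just say it that way.

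The genuine gap is the primitive case. The intermediate bound $|G|\le n^{\mathcal{O}(\log d)}$ you invoke is false under the hypothesis of this lemma: only alternating composition factors are restricted, so $S_d$ in its natural action (degree $n=d$, order $d!=d^{\Theta(d)}$) and $\GL_a(2)$ acting on the $n=2^a-1$ points of projective space (order about $n^{\log n}$, independent of $d$) are primitive groups satisfying the hypothesis and violating that bound. The $n^{\mathcal{O}(\log d)}$ statement is (one branch of) Theorem~\ref{thm:first-main-theorem} of this paper and requires the much smaller class $\ourgamma_d$, while the structural results of \cite{BCP82} you propose to cite concern their class $\Gamma_d$, which also bounds the rank of classical composition factors (so it excludes the $\GL_a(2)$ example) and in any case yields only $n^{\mathcal{O}(d)}$; neither covers the class assumed here. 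Moreover, an asymptotic bound with an unspecified constant cannot produce the exact inequality $|G|\le d^{n-1}$ for every $n\ge 2$. The primitive case is actually elementary: if $n\le d$ then $|G|\le n!\le d^{n-1}$; if $n>d\ge 6$ then $G$ cannot contain $\Alt(\Omega)$ (otherwise $A_n$ would be a composition factor of degree $n>d$), and an elementary order bound for primitive groups not containing the alternating group, such as the Praeger--Saxl bound $|G|<4^{n}$, gives $|G|<4^{n}\le d^{n-1}$ because $n\ge 7$ and $d\ge 6$. This is how the explicit constant-free bound $d^{n-1}$ comes out; an O'Nan--Scott case analysis is neither needed nor, as you set it up, applicable.
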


\begin{definition}
 For $d \geq 2$ let $\ourgamma_d$ denote the class of all groups $G$ for which every composition factor of $G$ is isomorphic to a subgroup of $S_d$.
\end{definition}

We want to stress the fact that there are two similar classes of
groups that have been used in the literature both typically denoted by~$\Gamma_d$.
One of these is the class we define as~$\ourgamma_d$ introduced by Luks~\cite{luks82} while the other one used in~\cite{BCP82} in particular allows composition factors that are simple groups of Lie type of bounded dimension.

\begin{lemma}[Luks \cite{luks82}]
 \label{la:gamma-d-closure}
 Let $G \in \ourgamma_d$. Then
 \begin{enumerate}
  \item $H \in \ourgamma_d$ for every subgroup $H \leq G$, and
  \item $G^{\varphi} \in \ourgamma_d$ for every homomorphism $\varphi\colon G \rightarrow H$.
 \end{enumerate}
\end{lemma}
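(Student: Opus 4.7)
The plan is to prove both parts via Jordan-Hölder/Schreier refinement machinery, tackling (2) first as a warm-up for (1).

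I would first dispose of (2): since $G^\varphi\cong G/\ker\varphi$, Schreier's refinement theorem converts the subnormal series $G\trianglerighteq\ker\varphi\trianglerighteq\{1\}$ into a composition series of $G$ passing through $\ker\varphi$. By Jordan-Hölder the composition factors of $G/\ker\varphi$ then appear (up to isomorphism and multiplicity) as a subcollection of those of $G$, all of which lie in the class of subgroups of $S_d$ by hypothesis. Hence $G^\varphi\in\ourgamma_d$.

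For (1), I would fix a composition series $G=G_0\trianglerighteq G_1\trianglerighteq\dots\trianglerighteq G_k=\{1\}$ and set $H_i:=H\cap G_i$. Each $H_i$ is normal in $H_{i-1}$, and the second isomorphism theorem yields
\[
 H_{i-1}/H_i \;=\; H_{i-1}/(H_{i-1}\cap G_i) \;\cong\; H_{i-1}G_i/G_i \;\le\; G_{i-1}/G_i,
\]
so each quotient $H_{i-1}/H_i$ embeds into the simple composition factor $G_{i-1}/G_i\le S_d$. Refining $H=H_0\trianglerighteq H_1\trianglerighteq\dots\trianglerighteq H_k=\{1\}$ into a composition series of $H$ then exhibits every composition factor of $H$ as a composition factor of some $H_{i-1}/H_i$, hence of a subgroup of $S_d$.

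What remains is the auxiliary claim that every subgroup of $S_d$ is itself in $\ourgamma_d$; I would establish this by a separate induction on $d$. An intransitive $L\le S_d$ decomposes via its orbits into a subgroup of a product of smaller symmetric groups, and chaining the projection kernels expresses the composition factors of $L$ as composition factors of subgroups of $S_{d_i}$ with $d_i<d$. A transitive but imprimitive $L$ splits via a nontrivial block system $\mathfrak{B}$ into $L^{\mathfrak{B}}\le S_k$ and an intransitive block-kernel $L_{(\mathfrak{B})}\le(S_s)^k$ with $s,k<d$. The primitive case will be the main obstacle: invoking the O'Nan-Scott theorem on the socle $T^k$ of $L$ shows $T\le L\le S_d$ automatically, while $L/T^k$ embeds into $\Out(T)\wr S_k$, where $\Out(T)$ is solvable by Schreier's conjecture and $k<d$ is small enough for the outer induction to close. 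This appeal to O'Nan-Scott (and implicitly CFSG) is consistent with the heavy group-theoretic machinery invoked elsewhere in the paper.
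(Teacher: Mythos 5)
Your reductions for part (2), for the intransitive case, and for the imprimitive case are correct, as is the reduction of part (1) to the auxiliary claim that every subgroup of $S_d$ lies in $\ourgamma_d$ (the paper itself gives no proof, only a citation to Luks). However, your treatment of the primitive case has a genuine gap: it silently assumes the socle is a power of a \emph{non-abelian} simple group. When $L \le S_d$ is primitive of affine type, $\Soc(L) = T^a$ with $T = \mathbb{Z}_p$, and $L/\Soc(L)$ is (isomorphic to) a subgroup of $\GL(a,p)$, which does \emph{not} embed in $\Out(T)\wr S_a = \mathbb{Z}_{p-1}\wr S_a$: already $|\GL(2,2)| = 6$ exceeds $|\mathbb{Z}_1\wr S_2| = 2$. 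So the step ``$L/T^k$ embeds into $\Out(T)\wr S_k$'' is false in the affine case and the induction does not close there. (A smaller omission: you need the primes $p$ dividing $|\Out(T)|$ that arise as cyclic composition factors to satisfy $p\le d$; this holds because they divide $|L|$, which divides $d!$, but it should be said.)

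The primitive case in fact needs neither O'Nan--Scott nor CFSG nor Schreier's conjecture. Since every nontrivial normal subgroup of a primitive group is transitive, $\Soc(L)$ is transitive, hence $L = \Soc(L)\,L_\alpha$ for a point stabilizer $L_\alpha$, and $L/\Soc(L) \cong L_\alpha/(L_\alpha\cap\Soc(L))$ is a quotient of $L_\alpha$. Since $L_\alpha$ fixes a point it embeds in $S_{d-1}$, so the induction on $d$ covers all composition factors of $L_\alpha$ and therefore all composition factors of the quotient $L/\Soc(L)$; and every simple direct factor $T$ of $\Soc(L)$ (abelian or not) is literally a subgroup of $L\le S_d$. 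This handles the affine and non-affine subcases uniformly and replaces the heavy machinery you invoke.
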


\subsubsection{String Isomorphism and Luks's algorithm}

In the following we give an outline of Luks's algorithm \cite{luks82}.
Our description of the algorithm as well as the notation mainly follows \cite{Babai16}.

Let $\mathfrak{x},\mathfrak{y}\colon \Omega \rightarrow \Sigma$ be two strings over a finite alphabet $\Sigma$ and let $G \leq \Sym(\Omega)$ be a group.
For $\sigma \in \Sym(\Omega)$ the string $\mathfrak{x}^{\sigma}$ is defined by \[\mathfrak{x}^{\sigma}(\alpha) = \mathfrak{x}(\alpha^{\sigma^{-1}})\] for all $\alpha \in \Omega$.
A permutation $\sigma \in \Sym(\Omega)$ is a \emph{$G$-isomorphism} from $\mathfrak{x}$ to $\mathfrak{y}$ if $\sigma \in G$ and $\mathfrak{x}^{\sigma} = \mathfrak{y}$.
The \emph{String Isomorphism Problem} asks, given $\mathfrak{x},\mathfrak{y}\colon \Omega \rightarrow \Sigma$ and a group $G \leq \Sym(\Omega)$ given as a set of generators, whether there is a $G$-isomorphism from $\mathfrak{x}$ to $\mathfrak{y}$.
The set of $G$-isomorphisms is denoted by $\Iso_G(\mathfrak{x},\mathfrak{y}) := \{g \in G \mid \mathfrak{x}^{g} = \mathfrak{y}\}$.

More generally, for $K \subseteq \Sym(\Omega)$ and $W \subseteq \Omega$ we define
\begin{equation}
 \Iso_K^{W}(\mathfrak{x},\mathfrak{y}) = \{g \in K \mid \forall \alpha \in W\colon \mathfrak{x}(\alpha) = \mathfrak{y}(\alpha^{g})\}.
\end{equation}
In this work $K = Gg$ will always be a coset where $G \leq \Sym(\Omega)$ and $g \in \Sym(\Omega)$ and the set $W$ will be $G$-invariant.
In this case $\Iso_K^{W}(\mathfrak{x},\mathfrak{y})$ is either empty or a coset of the group $\Aut_G^{W}(\mathfrak{x}) := \Iso_G^{W}(\mathfrak{x},\mathfrak{x})$, that is, $\Iso_K^{W}(\mathfrak{x},\mathfrak{y}) = \Aut_G^{W}(\mathfrak{x})\sigma$ where $\sigma \in \Iso_K^{W}(\mathfrak{x},\mathfrak{y})$ is arbitrary.
Hence, the set $\Iso_K^{W}(\mathfrak{x},\mathfrak{y})$ can be represented by a generating set for $\Aut_G^{W}(\mathfrak{x})$ and an element $\sigma$.
Moreover, using the identity
\begin{equation}
 \label{eq:string-align}
 \Iso_{Gg}^{W}(\mathfrak{x},\mathfrak{y}) = \Iso_G^{W}(\mathfrak{x},\mathfrak{y}^{g^{-1}})g,
\end{equation}
it is actually possible to restrict ourselves to the case where $K$ is a group.

We now describe the two main
recursive steps used in Luks's algorithm \cite{luks82}.
First suppose $G \leq \Sym(\Omega)$ is not transitive and let $\Omega_1,\dots,\Omega_s$ be the orbits of $G$.
Then the strings are processed orbit by orbit as described in Algorithm \ref{alg:orbit-by-orbit}.

\begin{algorithm}
 \caption{Orbit-by-Orbit processing}
 \label{alg:orbit-by-orbit}
 \DontPrintSemicolon
 $K := G$\;
 \For{$i=1,\dots,s$}{
  $K := \Iso_K^{\Omega_i}(\mathfrak{x},\mathfrak{y})$\;
 }
 \Return $K$\;
\end{algorithm}

Note that the set $\Iso_K^{\Omega_i}(\mathfrak{x},\mathfrak{y})$ can be computed making one call to String Isomorphism over domain size $n_i = |\Omega_i|$.
Indeed, using Equation (\ref{eq:string-align}), it can be assumed that $K \leq \Sym(\Omega)$ is a group and $\Omega_i$ is $K$-invariant.
Then
\[\Iso_K^{\Omega_i}(\mathfrak{x},\mathfrak{y}) = \left\{k \in K \mid k^{\Omega_i} \in \Iso_{K^{\Omega_i}}(\mathfrak{x}^{\Omega_i},\mathfrak{y}^{\Omega_i})\right\}.\]
Here, $\mathfrak{x}^{\Omega_i}$ (respectively $\mathfrak{y}^{\Omega_i}$) denotes the restriction of the string $\mathfrak{x}$ (respectively~$\mathfrak{y}$) to the set $\Omega_i$.
Having computed the set $\Iso_{K^{\Omega_i}}(\mathfrak{x}^{\Omega_i},\mathfrak{y}^{\Omega_i})$ making one recursive call to String Isomorphism over domain size $n_i = |\Omega_i|$,
the set $\Iso_K^{\Omega_i}(\mathfrak{x},\mathfrak{y})$ can be computed in polynomial time by Theorem \ref{thm:permutation-group-algorithm-library}.
So overall the algorithm needs to make $s$ recursive calls to String Isomorphism over domain sizes $n_1,\dots,n_s$.

For the second type of recursion
let $H \leq G$ be a subgroup and let $T = \{g_1,\dots,g_t\}$ be a transversal for $H$. Then
\begin{equation}
 \Iso_{G}(\mathfrak{x},\mathfrak{y}) = \bigcup_{i \in [t]} \Iso_{Hg_i}(\mathfrak{x},\mathfrak{y}).
\end{equation}
In Luks's algorithm this type of recursion 
is applied when $G$ is a transitive group, $\mathfrak{B}$ is a minimal block system and $H = G_{(\mathfrak{B})}$.
Observe that $G^{\mathfrak{B}}$ is a primitive group and $t = |G^{\mathfrak{B}}|$.
Also note that $H$ is not transitive.
Indeed, each orbit of $H$ has size $n/b$ where $b = |\mathfrak{B}|$.
Hence, combining both types of recursion the computation of $\Iso_{G}(\mathfrak{x},\mathfrak{y})$ is reduced to $t\cdot b$ instances of String Isomorphism over domain size $n/b$.
We refer to this as the \emph{standard Luks reduction}.

Now suppose $G \in \ourgamma_d$. The crucial step to analyze Luks's algorithm is to determine the size of primitive groups occurring in the recursion.

\begin{theorem}[\cite{BCP82}]
 There exists a function $f$ such that every primitive $\ourgamma_d$-group $G \leq \Sym(\Omega)$ has order $|G| \leq n^{f(d)}$.
\end{theorem}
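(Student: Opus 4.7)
The plan is to apply the O'Nan-Scott theorem to the primitive group $G \leq \Sym(\Omega)$, which describes the socle $N = \Soc(G) \cong T^{k}$ for some finite simple group $T$ and integer $k \geq 1$, and then classifies the inclusion $N \trianglelefteq G$ into a handful of structural types. The hypothesis $G \in \ourgamma_d$ imposes that $T$, and every composition factor of the quotient $G/N \leq \Out(T) \wr S_{k}$, embeds into $S_{d}$; combined with the classification of finite simple groups, this constrains all parameters in terms of $d$ alone.

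The almost simple case ($k=1$, $T$ non-abelian simple) is immediate: $T \trianglelefteq G \leq \Aut(T)$, and both $|T|$ and $|\Aut(T)|$ are bounded by a function $f_{0}(d)$ since $T$ is a simple subgroup of $S_{d}$, so $|G| \leq f_{0}(d) \leq n^{f(d)}$ as soon as $n \geq 2$. In the simple diagonal, product action, and twisted wreath types, $T$ is non-abelian simple with $T \leq S_{d}$, and the composition factor $A_{k}$ of $S_{k}$ (for $k \geq 5$) also embeds into $S_{d}$, forcing $k \leq d$. Using the degree estimates $n \geq |T|^{k-1}$ (diagonal), $n \geq 5^{k}$ (product action, since a non-trivial primitive action of a simple group has degree at least $5$), or $n = |T|^{k}$ (twisted wreath), together with the generic estimate $|G| \leq |N| \cdot |\Out(T)|^{k} \cdot k!$, yields the desired inequality in each case.

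The central case is the affine type, where $N \cong (\mathbb{Z}/p\mathbb{Z})^{k}$ is elementary abelian of order $n = p^{k}$, the point stabilizer $G_{0}$ is an irreducible subgroup of $\GL_{k}(p)$ lying in $\ourgamma_d$, and $G = N \rtimes G_{0}$. Here $p \leq d$ follows from $\mathbb{Z}/p\mathbb{Z}$ being a composition factor of $N$. To bound $k$, one argues that for $k$ large the irreducible subgroup $G_{0} \leq \GL_{k}(p)$ must contain a classical simple composition factor whose minimal faithful permutation degree is at least $(p^{k}-1)/(p-1)$; this degree is then bounded by $d$, which caps $k$ by a function of $d$, after which $|G| \leq p^{k} \cdot |\GL_{k}(p)| \leq p^{k^{2}+k} \leq n^{f(d)}$ follows.

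The principal obstacle is precisely this affine analysis: while the non-abelian O'Nan-Scott types reduce cleanly to CFSG-based enumeration of simple subgroups of $S_{d}$, bounding $k$ in the affine case requires structural information about irreducible subgroups of the general linear group, for instance Aschbacher's classification or Liebeck's order bounds for primitive groups, to rule out the existence of large irreducible $\ourgamma_d$-subgroups of $\GL_{k}(p)$ once $k$ outgrows a threshold depending only on $d$.
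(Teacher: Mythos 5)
Your proposal follows the same O'Nan-Scott decomposition that the paper itself uses in Section~\ref{sec:characterization-primitive} to establish (a strengthening of) this statement, and the affine case is indeed the technically hard part, as you correctly anticipate; your sketch there is broadly consistent with what Theorem~\ref{cor:primitive-size-type-1} does via the Liebeck--Shalev structure theorem and the minimal-degree bounds for classical groups.

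However, there is a genuine gap in the non-affine product/diagonal/twisted-wreath cases: the claim that ``the composition factor $A_k$ of $S_k$ also embeds into $S_d$, forcing $k \leq d$'' does not hold. The top group $K$ is only a transitive \emph{subgroup} of $S_k$ lying in $\ourgamma_d$; its composition factors need not include $A_k$. For example, an iterated wreath product $S_3 \wr S_3 \wr \cdots \wr S_3$ is transitive on $3^m$ points, has only $\mathbb{Z}_2$ and $\mathbb{Z}_3$ as composition factors, and so lies in $\ourgamma_d$ for every $d\geq 3$ even though $k=3^m$ is unbounded. Without $k\leq d$, the generic estimate $|G| \leq |N|\cdot|\Out(T)|^k\cdot k!$ that you invoke is too weak: from the degree estimates you do only get $k = O(\log n)$, and then $k!$ can be of order $n^{\Theta(\log\log n)}$, which is not bounded by $n^{f(d)}$. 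The correct replacement is the bound from Lemma~\ref{la:size-gamma-d} (the [BCP82] lemma): since $K\leq S_k$ is a $\ourgamma_d$-group of degree $k$, one has $|K|\leq d^{k-1}$, and combined with $k=O(\log n)$ this gives $|K|\leq n^{O(\log d)}$. This is exactly how the paper's Lemmas~\ref{la:primitive-size-type-4} and~\ref{la:primitive-size-type-5} control the top group. With that substitution your outline is sound.
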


Indeed, the function $f$ can be chosen to be linear in $d$ (cf.\;\cite{LS99}).
As a result, Luks's algorithm runs in time $n^{\mathcal{O}(d)}$ for all groups $G \in \ourgamma_d$.

\subsection{Recursion}

For the purpose of later analyzing our recursion, we record some bounds.

\begin{lemma}
 \label{la:recursion-inductive-step}
 Let $k,n \in \mathbb{N}$
 and suppose $n_1,\dots,n_\ell \leq n/2$ such that $\sum_{i=1}^{\ell} n_i \leq 2^{k}n$.
 Then $\sum_{i=1}^{\ell} \left(\frac{n_i}{n}\right)^{k+1} \leq 1$.
\end{lemma}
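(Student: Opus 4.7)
The plan is to normalize the variables and then exploit the bound $n_i \le n/2$ to turn all but one factor of $(n_i/n)$ into a constant. Setting $x_i := n_i/n$, the hypotheses become $x_i \le 1/2$ for all $i$ and $\sum_{i=1}^\ell x_i \le 2^k$, while the desired conclusion reads $\sum_{i=1}^\ell x_i^{k+1} \le 1$.

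The key observation is that the exponent $k+1$ in the target sum is exactly one more than the power $2^k$ appearing in the hypothesis, so one factor of $x_i$ can be left intact to match up with $\sum x_i$, while the remaining $k$ factors absorb the $2^k$. Concretely, I would write
\[
x_i^{k+1} \;=\; x_i \cdot x_i^{k} \;\le\; x_i \cdot \left(\tfrac{1}{2}\right)^k \;=\; \frac{x_i}{2^k},
\]
using $x_i \le 1/2$ for the middle inequality. Summing over $i$ and plugging in the bound on $\sum x_i$ then yields
\[
\sum_{i=1}^\ell x_i^{k+1} \;\le\; \frac{1}{2^k}\sum_{i=1}^\ell x_i \;\le\; \frac{1}{2^k}\cdot 2^k \;=\; 1,
\]
which is exactly the claim.

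There is no genuine obstacle here: the statement is essentially a bookkeeping inequality whose whole content is that the hypothesis $n_i \le n/2$ is precisely tight enough (as a function of $k$) to counteract the factor $2^k$ in the bound on $\sum n_i$. The only thing worth emphasizing in the write-up is that the inequality $x_i \le 1/2$ is used $k$ times while the linear bound $\sum x_i \le 2^k$ is used once, so both hypotheses are genuinely needed and the exponent $k+1$ in the conclusion cannot be reduced without strengthening them.
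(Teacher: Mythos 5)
Your proof is correct, and it is genuinely different from — and noticeably simpler than — the argument in the paper. You observe directly that $x_i^{k+1} = x_i \cdot x_i^k \le x_i\cdot(1/2)^k$, then sum and use $\sum x_i \le 2^k$; this is a clean two-line chain of inequalities. The paper instead argues by contradiction via an extremal/smoothing argument: it assumes a minimal counterexample (minimal $\ell$, and among those, maximal number of $\alpha_i$ equal to $1/2$), shows via the convexity-type inequality $\alpha_i^{k+1}+\alpha_j^{k+1}\le (1/2)^{k+1}+(\alpha_i+\alpha_j-1/2)^{k+1}$ that two non-extreme coordinates can be pushed apart to increase the count of $1/2$'s, deduces that at most one $\alpha_i$ is strictly between $0$ and $1/2$, bounds $\ell \le 2^{k+1}$, and then bounds the sum by $\ell(1/2)^{k+1}\le 1$, a contradiction. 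Both reach the same conclusion, but your direct factorization $x_i^{k+1}=x_i\cdot x_i^k$ eliminates the extremal machinery entirely and also makes it visibly transparent that the exponent $k+1$ is the precise threshold; it is the proof I would keep.
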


\begin{proof}
 For $i \in [\ell]$ define $\alpha_i = \frac{n_i}{n}$. Observe that $\alpha_i \leq \frac{1}{2}$ and $\sum_{i=1}^{\ell} \alpha_i \leq 2^{k}$.
 Now suppose towards a contradiction that there are $\ell \in
 \mathbb{N}$ and nonnegative reals $\alpha_1,\dots,\alpha_\ell \in \mathbb{R}$  meeting these assumptions such that $\sum_{i=1}^{\ell} \alpha_i^{k+1} > 1$.
 Pick $\ell \in \mathbb{N}$, $\alpha_1,\dots,\alpha_\ell \in \mathbb{R}$ such that
 \begin{enumerate}
  \item[(i)] $\alpha_i \leq \frac{1}{2}$ for all $i \in [\ell]$ and $\sum_{i=1}^{\ell} \alpha_i \leq 2^{k}$,
  \item[(ii)] $\sum_{i=1}^{\ell} \alpha_i^{k+1} >1$,
  \item[(iii)] $\ell$ is minimal subject to Conditions (i) and (ii), 
  \item[(iv)] $|\{i \in [\ell] \mid \alpha_i = \frac{1}{2}\}|$ is
    maximal subject to Conditions (i) - (iii).
 \end{enumerate}
 Then $\alpha_i + \alpha_j > \frac{1}{2}$ for all $i,j \in [\ell]$.
 Let $A = \{i \in [\ell] \mid \alpha_i \neq \frac{1}{2}\}$ and suppose $|A| \geq 2$.
 Let $i,j \in A$ be distinct.
 Then
 \begin{equation*}
  \alpha_i^{k+1} + \alpha_j^{k+1} \leq \left(\frac{1}{2}\right)^{k+1} + \left(\alpha_i + \alpha_j - \frac{1}{2}\right)^{k+1}
 \end{equation*}
 which contradicts Condition (iv). Condition (iii) implies
 $\alpha_i>0$ for all $i$. Hence, $(\ell-1)\frac{1}{2}<\sum_{i=1}^\ell \alpha_i
 \leq 2^{k}$, which implies $\ell\le 2^{k+1}$.
 Therefore, $\sum_{i=1}^{\ell} \alpha_i^{k+1}  \leq \ell \left(\frac{1}{2}\right)^{k+1} \leq 1$, contradicting Condition (ii).
\end{proof}

\begin{lemma}
 \label{la:recursion-tree}
 Let $k\in\mathbb N$ and $t\colon \mathbb{N} \rightarrow \mathbb{N}$
 such that
 \begin{enumerate}
  \item $t(1) = 1$ and 
  \item for every $n \geq 2$ there are natural numbers $\ell \in \mathbb{N}$ and $n_1,\dots,n_\ell \leq n/2$ such that $t(n) \leq \sum_{i=1}^{\ell} t(n_i)$ and $\sum_{i=1}^{\ell} n_i \leq 2^{k}n$.
 \end{enumerate}
 Then $t(n) \leq n^{k+1}$ for all $n \in \mathbb{N}$.
\end{lemma}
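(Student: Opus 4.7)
The plan is a straightforward strong induction on $n$, leveraging Lemma~\ref{la:recursion-inductive-step} as the essential ingredient. The base case $n = 1$ is immediate since $t(1) = 1 = 1^{k+1}$.

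For the inductive step, I would fix $n \geq 2$ and choose $\ell$ and $n_1, \dots, n_\ell \leq n/2$ as guaranteed by hypothesis~(2), so that $t(n) \leq \sum_{i=1}^{\ell} t(n_i)$ and $\sum_{i=1}^{\ell} n_i \leq 2^{k} n$. Since each $n_i \leq n/2 < n$, strong induction gives $t(n_i) \leq n_i^{k+1}$ for every $i \in [\ell]$. Therefore
\[
t(n) \;\leq\; \sum_{i=1}^{\ell} t(n_i) \;\leq\; \sum_{i=1}^{\ell} n_i^{k+1} \;=\; n^{k+1} \sum_{i=1}^{\ell} \left(\frac{n_i}{n}\right)^{k+1}.
\]
Now Lemma~\ref{la:recursion-inductive-step}, applied to the very same numbers $n_1, \dots, n_\ell$ (which satisfy its hypotheses $n_i \leq n/2$ and $\sum n_i \leq 2^k n$), yields $\sum_{i=1}^{\ell} (n_i/n)^{k+1} \leq 1$. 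Substituting back gives $t(n) \leq n^{k+1}$, closing the induction.

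There is essentially no obstacle here, since the work has already been done in the previous lemma. The only thing to verify carefully is that one is entitled to apply the induction hypothesis, which is guaranteed by the strict inequality $n_i \leq n/2 < n$ from hypothesis~(2). A minor sanity check is that when $n_i = 0$ is allowed in principle, the term $t(0)$ is not defined; but if any $n_i = 0$, that term can simply be dropped from the sum (its contribution to $\sum n_i$ is zero and, by convention, $t$ need only be bounded on $\mathbb{N}$ as stated), so this is a non-issue. The recursion in our applications will never produce $n_i = 0$ anyway.
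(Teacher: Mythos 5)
Your proof is correct and follows exactly the same route as the paper: strong induction on $n$, with Lemma~\ref{la:recursion-inductive-step} supplying the bound $\sum_{i}(n_i/n)^{k+1}\leq 1$. The additional remark about $n_i=0$ is a harmless extra sanity check not present in the paper's one-line argument.
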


\begin{proof}
 The statement is proved by induction on $n \in \mathbb{N}$.
 For $n > 1$ it holds that
 \begin{equation*}
  t(n) \leq \sum_{i=1}^{\ell} t(n_i) \leq \sum_{i=1}^{\ell} n_i^{k+1} = n^{k+1} \sum_{i=1}^{\ell} \left(\frac{n_i}{n}\right)^{k+1} \leq n^{k+1}
 \end{equation*}
 by Lemma \ref{la:recursion-inductive-step}.
\end{proof}

\begin{lemma}
  \label{la:recursion-bound}
 Let $k\in\mathbb N$ and $t\colon \mathbb{N} \rightarrow \mathbb{N}$ be a function such that $t(1) = 1$. Suppose that for every $n \in \mathbb{N}$ there are natural numbers~$n_1,\dots,n_\ell$ for which one of the following holds:
 \begin{enumerate}
  \item $t(n) \leq \sum_{i=1}^{\ell} t(n_i)$ where $\sum_{i=1}^{\ell} n_i \leq 2^{k}n$ and $n_i \leq n/2$ for all $i \in [\ell]$, or
  \item $t(n) \leq \sum_{i=1}^{\ell} t(n_i)$ where $\sum_{i=1}^{\ell} n_i \leq n$ and $\ell \geq 2$.
 \end{enumerate}
 Then $t(n) \leq n^{k+1}$.
\end{lemma}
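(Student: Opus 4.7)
The plan is to prove the statement by strong induction on $n$, closely following the proof of Lemma~\ref{la:recursion-tree} but adding a separate argument to handle the new recursion type~(2). The base case $n=1$ is immediate from $t(1)=1\le 1^{k+1}$. For the inductive step, we fix $n\ge 2$ and split according to which of the two conditions holds for this particular value of $n$.

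If condition~(1) holds, then all $n_i$ satisfy $n_i\le n/2<n$, so the induction hypothesis yields $t(n_i)\le n_i^{k+1}$, and the computation
\[
t(n)\;\le\;\sum_{i=1}^{\ell}t(n_i)\;\le\;\sum_{i=1}^{\ell}n_i^{k+1}\;=\;n^{k+1}\sum_{i=1}^{\ell}\left(\frac{n_i}{n}\right)^{\!k+1}\;\le\;n^{k+1}
\]
proceeds exactly as in Lemma~\ref{la:recursion-tree}, with Lemma~\ref{la:recursion-inductive-step} supplying the last inequality. This part requires no new idea.

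The interesting case is condition~(2). Here the individual $n_i$ are no longer constrained to be at most $n/2$, so Lemma~\ref{la:recursion-inductive-step} is not directly applicable; instead we exploit that $\ell\ge 2$ and that the $n_i$ are positive integers summing to at most $n$, which forces $n_i\le n-(\ell-1)\le n-1<n$ for each $i$. Hence the induction hypothesis applies to every $n_i$, giving $t(n_i)\le n_i^{k+1}$. To close the induction, I would use the elementary super\-additivity inequality
\[
\sum_{i=1}^{\ell}n_i^{k+1}\;\le\;\Bigl(\sum_{i=1}^{\ell}n_i\Bigr)^{\!k+1}\;\le\;n^{k+1},
\]
which holds for arbitrary nonnegative reals whenever the exponent is at least one (by induction on $\ell$ from the binomial identity $(a+b)^{k+1}\ge a^{k+1}+b^{k+1}$ for $a,b\ge 0$).

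Overall, there is no real obstacle; the only subtle point is recognizing that in case~(2) the condition $\ell\ge 2$ is precisely what makes the induction go through, since it guarantees $n_i<n$ and hence termination. Once that observation is made, case~(2) reduces to the super\-additivity of $x\mapsto x^{k+1}$, while case~(1) is essentially Lemma~\ref{la:recursion-tree}.
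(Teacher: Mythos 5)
Your proposal is correct and follows essentially the same route as the paper: induction on $n$, with case~(1) handled via Lemma~\ref{la:recursion-inductive-step} exactly as in Lemma~\ref{la:recursion-tree}, and case~(2) by superadditivity of $x\mapsto x^{k+1}$, i.e.\ $\sum_i n_i^{k+1}\leq\bigl(\sum_i n_i\bigr)^{k+1}\leq n^{k+1}$. Your explicit remark that $\ell\geq 2$ forces $n_i<n$ (so the induction hypothesis applies) is a point the paper leaves implicit, but it is the same argument.
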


\begin{proof}
 The statement is proved by induction on $n \in \mathbb{N}$.
 For the first option it holds that
 \begin{equation*}
  t(n) \leq \sum_{i=1}^{\ell} t(n_i) \stackrel{\text{I.H.}}{\leq} \sum_{i=1}^{\ell} n_i^{k+1} = n^{k+1} \sum_{i=1}^{\ell} \left(\frac{n_i}{n}\right)^{k+1} \leq n^{k+1}
 \end{equation*}
 by Lemma \ref{la:recursion-inductive-step}.
 For the second case we have
 \begin{equation*}
  t(n) \leq \sum_{i=1}^{\ell} t(n_i) \stackrel{\text{I.H.}}{\leq} \sum_{i=1}^{\ell} n_i^{k+1} \leq \left(\sum_{i=1}^{\ell} n_i\right)^{k+1} \leq n^{k+1}.
 \end{equation*}
\end{proof}

\begin{lemma}
 \label{la:approx-binom}
 Let $m,k \geq 1$ and suppose $k \leq \frac{m}{2}$. Then
 \begin{equation}
  \binom{m}{k}^{\log m} \geq m^{k}.
 \end{equation}
\end{lemma}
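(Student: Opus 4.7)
The plan is a short computation starting from the standard estimate $\binom{m}{k}\ge (m/k)^k$. First I would dispose of the edge case: the assumptions $k\ge 1$ and $k\le m/2$ force $m\ge 2$, so $\log m\ge 1$ (throughout, $\log$ is base~$2$, consistent with the paper's use of $\polylog$).

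Next I would raise the bound $\binom{m}{k}\ge (m/k)^k$ to the power $\log m$ to obtain
\[
\binom{m}{k}^{\log m}\;\ge\;\left(\frac{m}{k}\right)^{k\log m}.
\]
Taking logarithms, the desired inequality $\binom{m}{k}^{\log m}\ge m^{k}$ is implied by
\[
k\,\log m\cdot\log\!\left(\frac{m}{k}\right)\;\ge\;k\,\log m,
\]
i.e., by $\log(m/k)\ge 1$. This last inequality is exactly the hypothesis $k\le m/2$, so the chain closes.

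There is no real obstacle here; the only thing to get right is picking the lower bound $(m/k)^k$ for $\binom{m}{k}$ rather than a weaker one, because it is precisely this form that makes the two sides balance: the factor $\log(m/k)$ it produces after taking logs is exactly what the hypothesis $k\le m/2$ controls.
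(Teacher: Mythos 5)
Your proof is correct and is essentially the same argument as the paper's: both start from $\binom{m}{k}\ge (m/k)^k$, raise to the power $\log m$, and then use $k\le m/2$ to conclude $(m/k)^{k\log m}\ge 2^{k\log m}=m^k$; you merely phrase the final step via logarithms while the paper writes the exponential chain directly. The remark on $m\ge 2$ is a harmless but unnecessary aside.
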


\begin{proof}
 It holds that \[\binom{m}{k}^{\log m} \geq \left(\frac{m}{k}\right)^{k \log m} \geq 2^{k \log m} = m^{k}.\]
\end{proof}

\section[The structure of primitive groups with restricted composition factors]{The structure of primitive groups in~$\ourgamma_d$}
\label{sec:characterization-primitive}

Recall that we denote by~$\ourgamma_d$ the class of groups whose composition factors are all isomorphic to subgroups of~$S_d$.
In this section we will prove several properties concerning the structure of primitive permutation groups in~$\ourgamma_d$ with a focus on their size in relation to their degree and~$d$.
More precisely, the goal of this section is to find a precise description of large primitive groups in~$\ourgamma_d$.
For the purpose of this work, a primitive permutation group $G \in \ourgamma_d$ is large if the cardinality of $G$ exceeds the term $n^{\mathcal{O}(\log d)}$.
We shall prove that large primitive permutation groups in~$\ourgamma_d$ are composed of Johnson groups in a well-defined manner meaning that Johnson groups form the only obstacles to efficient Luks reduction.
For the proof we perform a case-by-case analysis following the well-known O'Nan-Scott Theorem that classifies primitive permutation groups into five types.

\subsection{The O'Nan-Scott Theorem}

Let $G$ be a primitive permutation group acting on a set $\Omega$ of size $n$.
By the well known O'Nan-Scott Theorem (see, for example,~\cite{DM96}) the group $G$ has to be one of the following types.
The \emph{socle} of $G$, denoted by $\Soc(G)$, is the subgroup generated by all minimal normal subgroups of $G$.

\subparagraph{\textsf{I}. Affine Groups.} 
In this case there is a vector space~$V$ over a field of prime order~$p$ such that~$G$ is isomorphic to a group~$H$ that satisfies~$V^+\leq H\leq \AGL(V)$, where~$V^+$ is the additive group of the vector space~$V$.
The socle~$N$ of the group is a transitive abelian group (i.e,~$\mathbb{Z}_p^k$ for the prime~$p$ and an integer~$k$) and can be identified with~$V^+$. Furthermore, the stabilizer~$G_0$ of the~$0$-vector is an irreducible linear group (i.e., it does not have an invariant subspace).

\subparagraph{\textsf{II}. Almost Simple Groups.} In this case $\Soc(G) = T$ is a non-abelian simple group and $T \leq G \leq \Aut(T)$.

\subparagraph{\textsf{III}. Simple Diagonal Action.} In this case $\Soc(G) \cong T_1 \times \dots \times T_k$ where all $T_i$ are isomorphic to some non-abelian simple group $T$.
 Additionally, $n = |T|^{k-1}$, and the stabilizer of some point $\alpha \in \Omega$ is a diagonal subgroup $D \leq T_1 \times \dots \times T_k$.

\subparagraph{\textsf{IV}. Product Action.} In this case the set $\Omega$ can be identified with the $k$-tuples of some set~$M$. In particular $n = |M|^{k}$.
 Furthermore there is some primitive group $P \leq \Sym(M)$ of Type \textsf{II} or \textsf{III} and a transitive group $K \leq S_k$ such that $G \leq P \wr K$.
 The group $G$ acts in the natural product action of the wreath product. The socle of $G$ is $\Soc(G) = T^{k}$ where $T = \Soc(P)$.

\subparagraph{\textsf{V}. Twisted Wreath Product Action.} 
 In this case there is a transitive permutation group $P \leq S_k$ and a non-abelian simple group $T$ such that $G = B \rtimes P$ where $B$ is isomorphic to $T^{k}$.
 Furthermore $|\Omega| = |T|^{k}$ and $B$ acts regularly on $\Omega$.

\medskip

We analyze the structure of primitive~$\ourgamma_d$-groups according to the distinction into these five types. For each of them we will either be interested in a structural description or a bound on the size. To obtain such a bound we use the existence of small bases.

\begin{definition}
 Let $G \leq \Sym(\Omega)$ be a permutation group. A subset $B \subseteq \Omega$ is a \emph{base for $G$} if $G_{(B)} = \{1\}$.
 We define the minimum base size as $b(G) = \min\{|B| \mid B \subseteq \Omega\colon G_{(B)} = \{1\}\}$.
\end{definition}

The base size is related to the order of the group by the equation $2^{b(G)} \leq |G| \leq n^{b(G)}$.

\subsection{Affine type}

In the affine case we have a group $V^+\leq G\leq \AGL(V)$,
where~$V=\mathbb{F}_p^k$ and $V^+\cong\mathbb Z_p^k$ is the additive group of $V$,
such that the point stabilizer~$G_0$ is irreducible.
We say that a group~$G_0\leq \GL(k,p)$ \emph{acts primitively as a linear group}
if it does not preserve any direct sum decomposition $V = V_1 \oplus \dots \oplus V_\ell$, $\ell \geq 2$, of the underlying vector space~$V=\mathbb{F}_p^k$.
For primitive affine groups in~$\ourgamma_d$ we will draw conclusions using a characterization of~\cite{LS02, LS14}.
The characterization involves quasi-simple classical
groups~$\SL_{r}(q')$, $\SU_{r}(q')$, $\Sp_{r}(q')$ or~$\Omega_{r}(q')$.
Recall that a group is quasi-simple if it is equal to its own commutator subgroup (i.e., perfect) and it is simple modulo its center.
With finitely many exception, the mentioned groups are indeed quasi-simple~\cite[Proposition 1.10.3]{BHRC13}.

\begin{theorem}[Consequence of \cite{LS02}, \cite{LS14}]\label{thm:prim:linear:grp:struct}
 There are absolute constants $c_1,c_2,c_3 \in \mathbb{N}$ such that the following holds.
 Let $G \leq \Sym(\Omega)$ be a primitive group of Type \textsf{I}
 such that $G_0$
 acts primitively as a linear group.
 Then
 \begin{enumerate}
  \item $b(G) \leq c_1$, or\label{item:bounded:case}
  \item $G$ contains a quasi-simple classical group of rank $k$, more precisely $\SL_k(q')$,~$\SU_k(q')$,~$\Sp_k(q')$ or~$\Omega_k(q')$, and $b(G) \leq c_2 k + c_3$, or\label{item:lie:group:case}
  \item $G$ contains an alternating group $A_k$ and $b(G) \leq c_2 \log k + c_3$.\label{item:alternating:case}
 \end{enumerate}
\end{theorem}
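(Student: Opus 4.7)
The plan is to derive the statement directly from the results of Liebeck and Shalev on base sizes of primitive linear groups, reducing the affine setting to a statement purely about the point stabilizer $G_0$.

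The first step is to relate $b(G)$ to the base size of $G_0$ viewed as a linear group on $V = \mathbb{F}_p^k$. Since $G$ is of Type \textsf{I}, the socle $V^+$ is a regular normal subgroup and $G = V^+ \rtimes G_0$. Choosing the zero vector $0 \in V$ as the first base point, the pointwise stabilizer $G_{(\{0\})}$ equals $G_0$, and additional base points (which may be identified with vectors of $V$) are precisely those needed to form a base for the linear action of $G_0$. Hence $b(G_0) \leq b(G) \leq b(G_0) + 1$, so it suffices to bound the linear base size $b(G_0)$.

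The second step is to apply the Liebeck--Shalev classification \cite{LS02, LS14} of primitive linear groups with large base size. Since $G_0 \leq \GL(V)$ acts irreducibly (by Type \textsf{I}) and primitively as a linear group (by assumption), the trichotomy from these papers applies: either (a) $b(G_0)$ is bounded by an absolute constant, corresponding to case~\ref{item:bounded:case}; or (b) $G_0$ contains, modulo scalars, one of the quasi-simple classical groups $\SL_k(q'),\SU_k(q'),\Sp_k(q'),\Omega_k(q')$ acting on its natural module of rank $k$, in which case \cite{LS14} provides a linear bound $b(G_0) \leq c_2' k + c_3'$ on the base size, yielding case~\ref{item:lie:group:case}; or (c) $G_0$ contains an alternating group $A_k$ acting on an irreducible constituent of its natural permutation module (the fully deleted permutation module over $\mathbb{F}_p$), in which case $b(G_0)$ is logarithmic in $k$, explicitly $b(G_0) \leq c_2' \log k + c_3'$, yielding case~\ref{item:alternating:case}.

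Absorbing the additive $+1$ from the affine-to-linear reduction into the constants $c_1, c_2, c_3$ completes the proof. The main obstacle is cosmetic rather than structural: one must carefully match the formulation in \cite{LS02,LS14}, which is stated for linear groups and stratified by Aschbacher class, with the clean three-case statement we need. In particular, the logarithmic bound in the alternating case requires recalling why a base for $A_k$ on the deleted permutation module is much smaller than a base for the natural action of $A_k$ on $[k]$: each vector in the module encodes roughly $\log k$ bits of permutation information, so $\mathcal{O}(\log k)$ base vectors suffice to pin down a unique group element. Once this dictionary is set up, the three cases of the theorem fall out of the cited results without further work.
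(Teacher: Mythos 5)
Your proposal is essentially the same as the paper's: both reduce the problem to the Liebeck--Shalev base-size bounds for primitive linear groups and read off the trichotomy. Your affine-to-linear reduction (bounding $b(G)$ by $b(G_0)+1$) is a sensible way to set this up; the paper instead invokes Theorem~1 of~\cite{LS14} ``which applies to $G$'' directly, but the effect is identical.

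Where your write-up is looser than the paper's is in how much of the actual structure of the cited results you engage with. Theorem~1 in~\cite{LS14} does not give a clean trichotomy ``constant / one classical group / one alternating group''; it says that $F^*$ of the point stabilizer contains a product $\prod_i \Alt(m_i)\cdot\prod_i \Cl_{d_i}(q_i)^{(\infty)}$, where $\Cl_{d_i}(q_i)$ is the normalizer of a quasi-simple classical group, and the base-size bound in Proposition~2 is stated in terms of that product. The paper explicitly notes two points you gloss over: (a) finitely many of $\SL,\SU,\Sp,\Omega$ are not perfect, so one must shunt those exceptional cases into option~1 before identifying $\Cl_{d_i}(q_i)^{(\infty)}$ with a quasi-simple group; and (b) one has to pass from the auxiliary quantity $b^*$ used in~\cite{LS14} to $b(G)$, which loses an additive constant. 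Your heuristic about the fully deleted permutation module ``encoding $\log k$ bits per vector'' is fine as intuition but is not a substitute for Proposition~2; the paper simply quotes that proposition. None of these are structural gaps --- your framing ``the obstacle is cosmetic, matching formulations'' is accurate --- but a reader filling in your outline would have to do the same translation the paper does, so the proposal should be read as a correct sketch along the paper's own route rather than an alternative proof.
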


\begin{proof} Let $G \leq \Sym(\Omega)$ be a primitive group of Type \textsf{I} such that $G_0$
 acts primitively as a linear group.

Theorem~1 in~\cite{LS14} (which applies to~$G$, see the comment after that theorem) states that either~$b(G)\leq C$ or that~$F^*(H^0)\leq G$ contains~$\prod_{i=1}^s \Alt(m_i)\cdot \prod_{i=1}^t \Cl_{d_i}(q_i)^{(\infty)}$ for some integers~$m_1,\ldots,m_s,q_1,\ldots q_t$.  Here~$\Cl_{(d_i)}(q_i)$ is the normalizer of a quasi-simple classical group, namely $\SL_{d_i}(q')$, $\SU_{d_i}(q')$, $\Sp_{d_i}(q')$ or~$\Omega_{d_i}(q')$.
Furthermore~$\Cl_{(d_i)}(q_i)^{(\infty)}$ is the last group of the derived series of~$\Cl_{(d_i)}(q_i)$.
Only finitely many of the groups $\SL_{d_i}(q')$,~$\SU_{d_i}(q')$,~$\Sp_{d_i}(q')$ or~$\Omega_{d_i}(q')$ are not perfect.
Thus, by referring to Case 1 for the finitely many exceptions, we can assume that $\Cl_{(d_i)}(q_i)^{(\infty)}$ contains $\SL_{d_i}(q')$,~$\SU_{d_i}(q')$,~$\Sp_{d_i}(q')$ or~$\Omega_{d_i}(q')$.

Proposition~2 in~\cite{LS14} states further that either~$b^*(H^0)\leq 9d_i+22$ for all~$i$ or~$b^*(H^0) \leq 3\log_p m_i+22$ for all~$m_i$. 
Also~$b(G)\leq b(H^0)+1\leq b^*(H^0)+2$.
\end{proof}

\begin{lemma}[\cite{Cooperstein78},~\cite{KleLie90}]\label{lem:rank:vs:size:d}
 Let $G \in \ourgamma_d$ be a simple group of Lie type of rank~$k$ or one of the quasi-simple classical groups $\SL_k(q')$,~$\SU_k(q')$,~$\Sp_k(q')$ or~$\Omega_k(q')$.
 Then $k = \mathcal{O}(\log d)$.
\end{lemma}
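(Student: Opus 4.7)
The plan is to reduce the lemma to a known lower bound on the minimum faithful permutation representation degree of a simple group of Lie type. First I would use the definition of $\ourgamma_d$: in either alternative of the lemma, a simple group of Lie type of rank $k$ must appear as a composition factor. In the case where $G$ is itself a non-abelian simple group, $G$ is its own (unique) composition factor, so by the assumption $G \in \ourgamma_d$ we have $G \leq S_d$. In the quasi-simple classical case $G \in \{\SL_k(q'),\SU_k(q'),\Sp_k(q'),\Omega_k(q')\}$, the simple group obtained by quotienting by the center appears as a composition factor and must again embed into $S_d$ (setting aside the finitely many exceptions from~\cite[Proposition 1.10.3]{BHRC13}, which can be absorbed into the implicit constant). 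In both cases one obtains a finite simple group of Lie type of rank $k$ admitting a faithful permutation representation of degree at most $d$.

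Next, I would invoke the cited results of Cooperstein~\cite{Cooperstein78} and Kleidman--Liebeck~\cite{KleLie90}. These provide a uniform lower bound of the form $q^{\Omega(k)}$ on the index of any proper subgroup of a finite simple group of Lie type of rank $k$ over $\mathbb{F}_q$; equivalently, the minimum faithful permutation representation degree of such a group is at least $c_1 \cdot 2^{c_2 k}$ for absolute constants $c_1,c_2>0$ (the base $q\geq 2$ provides the exponential growth in $k$ even in the worst case). Combining this with the previous paragraph yields $d \geq c_1 \cdot 2^{c_2 k}$, and solving for $k$ gives $k \leq c_2^{-1}(\log d - \log c_1) = \mathcal{O}(\log d)$.

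The main difficulty I anticipate is purely bibliographic rather than conceptual: one must verify that the minimum-degree bound from the cited references is truly uniform across all the listed families and, in the first alternative, across the exceptional families of Lie type ($E_6,E_7,E_8,F_4,G_2$, and the twisted variants). Additionally, a finite collection of small-rank or small-characteristic exceptions (including the non-perfect cases of the classical groups) must be checked by hand and then absorbed into the constant hidden by the $\mathcal{O}(\cdot)$ notation. Once this is cross-checked against the tables in~\cite{KleLie90} and Cooperstein's classification~\cite{Cooperstein78}, the exponential-in-$k$ lower bound applies in every case and the claim follows immediately.
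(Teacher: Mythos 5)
Your proposal is correct and follows essentially the same route as the paper: reduce to the observation that a simple group in $\ourgamma_d$ must embed in $S_d$, then invoke the Cooperstein and Kleidman--Liebeck bounds showing the minimum faithful permutation degree of a simple group of Lie type is exponential in its rank. The only minor deviation is that you handle the quasi-simple case by passing to the central quotient, whereas the paper cites Cooperstein's bound for the quasi-simple groups directly; both are valid and yield the same conclusion.
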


\begin{proof}
Note that by our definition of $\ourgamma_d$, a simple group is in~$\ourgamma_d$ if and only if it is a subgroup of~$S_d$. To prove the lemma it thus suffices to show that the smallest~$d(k)$ for which~$S_{d(k)}$ contains a simple group of Lie type of rank~$k$ is exponential in~$k$.

Cooperstein~\cite{Cooperstein78} lists the minimum degree of a permutation representation of the mentioned quasi-simple classical groups. They are all exponential in the rank~$k$. In~\cite{KleLie90}    the minimum degree of a permutation representation is listed for all simple groups of Lie type. Likewise 
they are exponential in~$k$.
\end{proof}

\begin{theorem}
 \label{cor:primitive-size-type-1}
 Let $G \in \ourgamma_d$ be a primitive permutation group of degree $n$ of Type \textsf{I}.
 Then $b(G) = \mathcal{O}(\log d)$ and therefore $|G| = n^{\mathcal{O}(\log d)}$. 
\end{theorem}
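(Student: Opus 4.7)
The plan is a case analysis on whether the point stabilizer $G_0$ of the primitive affine group $G = V^+ \rtimes G_0$ acts primitively as a linear group on $V \cong \mathbb{F}_p^k$. Writing $n = p^k$, I will bound $b(G) = \mathcal{O}(\log d)$, from which $|G| \leq n^{b(G)} = n^{\mathcal{O}(\log d)}$ is immediate. Moreover, since $b(G) \leq 1 + b(G_0, V)$, where $b(G_0, V)$ denotes the minimum number of vectors in $V$ with trivial pointwise $G_0$-stabilizer (fix any single vector to land in a conjugate of $G_0$, then use a base of $G_0$ on $V$), the task reduces to showing $b(G_0, V) = \mathcal{O}(\log d)$.

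If $G_0$ acts primitively as a linear group, I invoke Theorem~\ref{thm:prim:linear:grp:struct}, which splits into three subcases. The first directly gives $b(G) \leq c_1$. In the second, $G$ contains a quasi-simple classical group of rank $k'$ with $b(G) \leq c_2 k' + c_3$; by Lemma~\ref{la:gamma-d-closure} this classical subgroup lies in $\ourgamma_d$, so Lemma~\ref{lem:rank:vs:size:d} yields $k' = \mathcal{O}(\log d)$ and hence $b(G) = \mathcal{O}(\log d)$. In the third, $G$ contains $A_{k'}$ with $b(G) \leq c_2 \log k' + c_3$; since $A_{k'}$ is simple for $k' \geq 5$ and lies in $\ourgamma_d$, it embeds in $S_d$, forcing $k' \leq d$ and hence $\log k' \leq \log d$.

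If $G_0$ does not act primitively as a linear group, I choose a non-trivial direct sum decomposition $V = V_1 \oplus \dots \oplus V_\ell$ preserved by $G_0$, refined maximally so that the restriction $H \leq \GL(V_1)$ of the setwise stabilizer of $V_1$ in $G_0$ acts primitively as a linear group on $V_1$. Then $G_0$ embeds into the wreath product $H \wr P$, where $P \leq S_\ell$ is the transitive permutation group induced by $G_0$ on the summands; by Lemma~\ref{la:gamma-d-closure} both $H$ and $P$ lie in $\ourgamma_d$. Since $\dim V_1 < \dim V$, I would argue by induction on $\dim V$ applied to the smaller primitive affine group $V_1^+ \rtimes H$ to obtain $b(H, V_1) = \mathcal{O}(\log d)$, and then assemble a base of $G_0$ on $V$ by combining a base of $H$ on $V_1$ with additional vectors that encode the block permutation action of $P$.

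The hardest part will be this combination step in the imprimitive linear case: the extra vectors that pin down the action of $P$ on $[\ell]$ must contribute only $\mathcal{O}(\log d)$ to the base size, even though $\ell$ may exceed $d$ and naive coordinate-distinguishing vectors would contribute $\Omega(\log_p \ell)$, which can be as large as $\Omega(\log \log n)$. Resolving this will require exploiting the $\ourgamma_d$-structure of $P$, whose composition factors are all subgroups of $S_d$, and choosing base vectors whose coordinates simultaneously encode a base of $H$ on $V_1$ and a sufficiently economical labeling of the summands, so that the linear action within the primary block and the permutation of the blocks are pinned down in a single pass.
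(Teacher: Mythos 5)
Your plan matches the paper's proof in structure: same reduction to $b(G_0)$, same case split on linear primitivity, and the primitive linear case is handled exactly as the paper does via Theorem~\ref{thm:prim:linear:grp:struct} together with Lemma~\ref{lem:rank:vs:size:d} (with the correct use of Lemma~\ref{la:gamma-d-closure} to place the classical/alternating subgroup inside $\ourgamma_d$). That part is complete and correct.

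The imprimitive linear case, however, contains a genuine gap, and you have correctly identified where it is but not how to close it. You write $G_0 \leq H \wr P$ with $H$ primitive linear on $V_1$ and $P \leq S_\ell$ transitive, plan to get $b(H,V_1) = \mathcal{O}(\log d)$ from the primitive case, and then say the extra base vectors ``must contribute only $\mathcal{O}(\log d)$'' while acknowledging that a naive construction gives $\Omega(\log_p \ell)$ and that ``resolving this will require exploiting the $\ourgamma_d$-structure of $P$.'' That is precisely the nontrivial content, and the proposal stops short of providing it. The paper closes the gap by invoking Lemma~4.2(a) of the cited reference (Gluck--Seress--Shalev), which produces a set $B_1$ of $\mathcal{O}(\log d)$ vectors in $V$ whose pointwise stabilizer in $G_0$ already lies in the block-diagonal subgroup $H^\ell$; that lemma is exactly the statement that a $\ourgamma_d$-group $P$ permuting the summands can be pinned down by $\mathcal{O}(\log d)$ well-chosen points regardless of $\ell$. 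One then takes the diagonal points $b_i = (x_i,\dots,x_i)$, where $\{x_1,\dots,x_t\}$ is a base of $H$ on $V_1$ of size $t = \mathcal{O}(\log d)$, to form a base $B_2$ of $H^\ell$, and $B_1 \cup B_2$ is the desired base of $G_0$. Your idea of combining a base of $H$ on $V_1$ with ``an economical labeling of the summands'' is the right instinct, but without the GSS98 lemma (or an equivalent argument about bases of $\ourgamma_d$ permutation groups acting on the summands) the construction does not go through; the hard step is not ``finding clever coordinates'' in the linear geometry, but the purely permutation-group-theoretic fact that $P$ can be stabilized with $\mathcal{O}(\log d)$ points.

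A secondary remark: your induction on $\dim V$ is unnecessary once the primitive linear case is in hand, since the inner group $H$ on $V_1$ is already chosen to be primitive as a linear group and the first case of the theorem applies to $V_1^+ \rtimes H$ directly, without descending further. This is a harmless detour, not an error.
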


\begin{proof}
Let~$G\in \ourgamma_d$ be a primitive permutation group of Type \textsf{I}. This implies that the point stabilizer~$G_0\leq \GL(k,p)$, where~$p^k=n$, is an irreducible linear group.
It suffices to show that~$b(G_0)\in  \mathcal{O}(\log d)$ since~$b(G)=b(G_0)+1$.

If~$G_0$ is primitive as a linear group this follows by assembling Theorem~\ref{thm:prim:linear:grp:struct} and Lemma~\ref{lem:rank:vs:size:d}: indeed, if~$G_0$ is in Case~\ref{item:bounded:case} of Theorem~\ref{thm:prim:linear:grp:struct} the claim is obvious. For Case~\ref{item:lie:group:case} we see that~$b(G)\leq c_2k+c_3\in \mathcal{O}(\log d)$ by the lemma and for the last case we know that~$b(G)\leq c_2 \log k+c_3\in \mathcal{O}(\log d)$.

Now suppose that~$G_0$ is irreducible but imprimitive. Then~$G_0$ can be written as~$P\wr H$ for some primitive linear group~$P\leq \GL(k/\ell,p)$ and transitive group~$H$ that permutes~$\ell$ subspaces~$V_1,\ldots,V_{\ell}$ of~$V^k$.
By~\cite[Lemma 4.2 (a)]{GSS98} there is a set~$B_1$ of~$\mathcal{O}(\log d)$ points in~$V^k$ such that for the point-wise stabilizer we have~$(G_0)_{(B_1)} \leq P^{\ell}$.

(We now follow the techniques from Section~6 of~\cite{GSS98}.)
Since~$P$ is a primitive linear group, by the first part of the proof there is a base~$\{x_1\ldots,x_t\}$ of~$P$ of size $t =\mathcal{O}(\log d)$ .
Let~$b_i$ be the point~$(x_i,x_i,\ldots,x_i)$ in~$V_1\times V_2\times \cdots \times V_{\ell}$.
Then~$B_2= \{b_1,\ldots,b_t\}$ is a base of~$P^{\ell}$.
If follows that~$B_1\cup B_2$ is a base for~$G_0$ of size at most~$\mathcal{O}(\log d)$.
\end{proof}

\subsection{Non-affine type}

For a group $G$ we denote by $\Out(G)$ the outer automorphism group of $G$.
It is well-known that $|\Out(S_m)|=1$ and $|\Out(A_m)|=2$ for all $m>6$.

\begin{lemma}
 Let $G$ be a non-abelian simple group. Then $|\Out(G)| \leq \mathcal{O}(\log |G|)$.
\end{lemma}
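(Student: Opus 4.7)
The plan is to prove the bound by invoking the Classification of Finite Simple Groups (CFSG) and handling each family of non-abelian finite simple groups separately. The three families are: alternating groups $A_m$ with $m \geq 5$, simple groups of Lie type over a finite field $\mathbb{F}_q$, and the $26$ sporadic simple groups. In each case one extracts an explicit bound on $|\Out(G)|$ and compares it to $\log|G|$.

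I would dispose of the easy families first. For alternating groups one has $|\Out(A_m)| \leq 4$ for all $m \geq 5$ (with equality only at $m = 6$, and $|\Out(A_m)| = 2$ otherwise), whereas $|A_m| = m!/2$ grows superexponentially in $m$; hence $|\Out(A_m)| = \mathcal{O}(1) = \mathcal{O}(\log|A_m|)$. For sporadic groups there are only $26$ isomorphism types, each with $|\Out| \leq 2$, so a uniform constant upper bound is immediate and the claim is trivial.

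The main work, and also the main obstacle, lies with the groups of Lie type, because here both $|\Out(G)|$ and $|G|$ depend on the Lie rank $r$ and the field size $q$ and one must carefully compare them. Using the standard description $|\Out(G)| = d \cdot f \cdot g$ (diagonal, field, graph contributions), one has $d \leq r+1$ for the classical families and $d \leq 9$ for the exceptional types, $f = \log_p q$ where $q = p^f$, and $g \leq 6$. Hence $|\Out(G)| \leq 6(r+1)\log_p q = \mathcal{O}(r \log q)$. On the other hand, every finite simple group of Lie type satisfies $|G| \geq q^{c \cdot r}$ for an absolute constant $c > 0$ (arising from the product formula for the order of a Chevalley group), so $\log|G| \geq c r \log q$. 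Combining these two estimates yields $|\Out(G)| = \mathcal{O}(\log|G|)$, completing the Lie-type case.

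Putting the three cases together gives the claim with an absolute constant. The only nontrivial step is the Lie-type estimate, where the potential pitfall is that the diagonal factor $d$ grows linearly in the rank and the field factor $f$ grows as $\log q$; one must verify that the order lower bound $|G| \geq q^{cr}$ absorbs \emph{both} at once, which is why one needs a bound that is simultaneously linear in $r$ \emph{and} linear in $\log q$, rather than just one of the two.
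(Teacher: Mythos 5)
Your proof is correct and follows essentially the same CFSG-based case analysis as the paper: alternating and sporadic groups are dismissed immediately, and the Lie type case is handled via the $d\cdot f\cdot g$ decomposition of $|\Out(G)|$. The paper delegates the Lie-type estimate to inspecting Tables 5 and 6 of the Atlas, whereas you spell out the bounds $d = \mathcal{O}(r)$, $f = \log_p q$, $g = \mathcal{O}(1)$ and compare against $|G| \geq q^{cr}$ explicitly; this makes the key step (that the order bound must absorb both the rank and field contributions simultaneously) visible rather than implicit.
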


\begin{proof}
For finite simple groups of Lie type, this follows by inspecting the Tables~5 and~6 in the Atlas of Finite Groups~\cite{Atlas}.
In Table 5 the size of the outer automorphism group is given as the
product~$d\cdot  f \cdot g$ which, according to Table~6, is
logarithmic in the size of the group for each simple group of Lie
type. For alternating simple groups the statement is obvious. 
The values for the sporadic groups disappear in the~$\mathcal{O}$-notation.
\end{proof}

Recall that for $t \leq \frac{m}{2}$ we denote by $A_m^{(t)}$ be the action of the alternating group $A_m$ on the set of $t$-element subsets of $[m]$.

\begin{theorem}[Liebeck \cite{liebeck84}]
 \label{thm:large-primitive-almost-simple-classification}
 Let $G \leq \Sym(\Omega)$ be a primitive group and suppose $N = \Soc(G)$ is simple. Then one of the following holds:

 \begin{enumerate}
  \item\label{item:liebeck-almost-simple-1} $N$ is permutationally equivalent to $A_m^{(t)}$ for some $m \in \mathbb{N}$ and $t \leq \frac{m}{2}$,
  \item\label{item:liebeck-almost-simple-2} $N$ is permutationally
    equivalent to $A_m$ acting on the set of partitions of $[m]$ into
    subsets of size $b$ (for some $b \leq m$),
  \item\label{item:liebeck-almost-simple-3} $N$ is a classical simple
    group acting on an orbit of subspaces of the natural module or
    pairs of subspaces of complementary dimension, or
  \item $|G| \leq n^{9}$.
 \end{enumerate}
\end{theorem}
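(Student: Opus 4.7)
The strategy is to invoke the Classification of Finite Simple Groups (CFSG) and case-split on the isomorphism type of $N$, leveraging detailed knowledge of maximal subgroups in each family. Since $G$ is primitive with simple non-abelian socle $N$, we are in Type \textsf{II} of the O'Nan-Scott classification, so $N \trianglelefteq G \leq \Aut(N)$ and the point stabilizer $M = G_\alpha$ is a maximal subgroup of $G$ with $n = [G:M]$. The permutation representation is thus determined by the isomorphism type of $N$ and the $\Aut(N)$-conjugacy class of $M \cap N$, so the task reduces to identifying the maximal subgroups $M \cap N$ of $N$ one family at a time.

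If $N = A_m$ with $m \geq 5$, I would apply the O'Nan-Scott-type classification of maximal subgroups of $A_m$ and $S_m$ (due to Liebeck-Praeger-Saxl). The intransitive maximal subgroups are precisely the setwise stabilizers of $t$-subsets of $[m]$, yielding Case~\ref{item:liebeck-almost-simple-1}; the transitive imprimitive ones are stabilizers of uniform partitions of $[m]$, yielding Case~\ref{item:liebeck-almost-simple-2}. For the remaining types — primitive, affine, diagonal, and product action — the orders of the corresponding maximal subgroups are known to be very small relative to $|A_m|$, so elementary index estimates give $n \geq |A_m|^{1/9}$, hence $|G| \leq n^{9}$.

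If $N$ is a classical simple group, I would invoke Aschbacher's classification of maximal subgroups into the eight geometric classes $\mathcal{C}_1,\dots,\mathcal{C}_8$ together with a class $\mathcal{S}$ of quasi-simple ``non-geometric'' subgroups. The class $\mathcal{C}_1$ consists of stabilizers of totally singular or non-degenerate subspaces (and in certain types also of complementary-dimension pairs), producing exactly the actions listed in Case~\ref{item:liebeck-almost-simple-3}. For $\mathcal{C}_2,\dots,\mathcal{C}_8$ the orders are read off from explicit formulas, giving $[N : M\cap N] \geq |N|^{1/9}$; for class $\mathcal{S}$ one combines Schreier-type control on $|\Out(N)|$ with CFSG-based lower bounds on the minimal permutation degree of quasi-simple subgroups, in the spirit of Lemma~\ref{lem:rank:vs:size:d}. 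For exceptional groups of Lie type, analogous Liebeck-Seitz bounds on maximal subgroups yield $|G| \leq n^{9}$ directly, and for the finitely many sporadic $N$ the statement is verified from the Atlas.

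The main obstacle is obtaining the precise constant $9$ in Case~4 uniformly across all families. The generic index bounds in each geometric Aschbacher class have to be matched against numerous small-rank and small-field exceptions, and the low-degree permutation representations of exceptional and sporadic groups produce a long list of special cases that have to be handled individually. Bundling these analyses into a single clean exponent is the technically delicate part; no new ideas beyond CFSG and the maximal-subgroup machinery above are required.
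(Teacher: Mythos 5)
The paper never proves this statement: it is quoted directly from Liebeck \cite{liebeck84} and used as a black box (the text immediately afterwards even says that only the distinction between Cases~\ref{item:liebeck-almost-simple-1}--\ref{item:liebeck-almost-simple-3} and the order bound is used). So the relevant comparison is with Liebeck's original argument, and your outline does follow the same broad lines as that proof: reduce to $N \trianglelefteq G \leq \Aut(N)$ with $G_\alpha$ maximal, then run CFSG family by family, using the intransitive/imprimitive/primitive trichotomy for maximal subgroups when $N = A_m$ (giving Cases~\ref{item:liebeck-almost-simple-1} and~\ref{item:liebeck-almost-simple-2}), Aschbacher's classes $\mathcal{C}_1,\dots,\mathcal{C}_8$ and $\mathcal{S}$ when $N$ is classical (with $\mathcal{C}_1$ giving Case~\ref{item:liebeck-almost-simple-3}), Liebeck--Seitz-type information for exceptional groups, and the Atlas for the sporadic groups. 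As a strategy this is the right one.

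As a self-contained proof, however, there is a genuine gap, and you have named it yourself: the uniform exponent $9$ in Case~4 is not something one ``reads off from explicit formulas''---it is the entire content of Liebeck's theorem, and all of the work in \cite{liebeck84} goes into it. Concretely: for $N = A_m$ the primitive (and the remaining O'Nan--Scott-type) maximal subgroups are not bounded by ``elementary index estimates'' but by CFSG-based order bounds for primitive groups not containing $A_m$ (Praeger--Saxl/Cameron-type estimates); for classical $N$ each of $\mathcal{C}_2,\dots,\mathcal{C}_8$ needs its own index estimate with a long list of small-rank and small-field exceptions, and class $\mathcal{S}$ needs lower bounds on the degrees of faithful actions of quasi-simple subgroups; and on top of the socle index one must absorb $|G:N| \leq |\Out(N)|$ into the same exponent. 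Since all of this is asserted rather than carried out, the proposal is an accurate roadmap of the literature proof rather than a proof; for the way the statement is used here, the paper's route---citing Liebeck---is the appropriate one.
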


We will not exploit the structure of the action of~$N$ in
Case~\ref{item:liebeck-almost-simple-3} of the Theorem, and rather
only use that~$N$ is a simple group of Lie type.

\begin{lemma}
 \label{la:primitive-size-type-2}
 Let $G \leq \Sym(\Omega)$ be a primitive $\ourgamma_d$-group of Type \textsf{II}.
 Let $N = \Soc(G)$. Then one of the following holds:
 \begin{enumerate}
  \item\label{item:primitive-size-type-2-item-1} $N$ is permutationally equivalent to $A_m^{(t)}$ for some $m \leq d$ and $t \leq \frac{m}{2}$ and $|G:N| \leq 2$, or
  \item $|G| = n^{\mathcal{O}(\log d)}$.
 \end{enumerate}
\end{lemma}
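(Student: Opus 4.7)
The plan is to apply Liebeck's classification Theorem~\ref{thm:large-primitive-almost-simple-classification} to the almost simple group $G$, splitting the analysis into four cases for the structure of $N = \Soc(G)$. Throughout, the crucial structural input is that since $N$ is simple, it is the unique composition factor of $G$, so the hypothesis $G \in \ourgamma_d$ forces $N$ to embed abstractly into $S_d$.

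In Liebeck's first case, $N$ is permutationally equivalent to $A_m^{(t)}$ and is thus abstractly isomorphic to $A_m$. The embedding into $S_d$ together with the non-abelian simplicity of $N$ (which rules out $m \leq 4$) forces $m \leq d$. The index bound $|G:N| \leq 2$ then follows from $G \leq \Aut(N) = \Aut(A_m)$ together with $|\Out(A_m)| = 2$ for $m \neq 6$; this yields conclusion~1. The exceptional value $m = 6$ (where $|\Out(A_6)| = 4$) has $n \leq 20$ and $|G|$ bounded by an absolute constant, and can thus be absorbed into conclusion~2.

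Liebeck's second case, where $N \cong A_m$ acts on partitions of $[m]$ into blocks of some proper divisor size $b$, likewise gives $m \leq d$, and it remains to establish conclusion~2. The key quantitative step is the lower bound $n \geq 2^{\Omega(m)}$ on the number of such partitions, which I would verify by a Stirling estimate on $n = m!/((b!)^{m/b}(m/b)!)$. Combined with $|G| \leq 2 \cdot m! \leq 2^{O(m \log m)}$, this yields $|G| \leq n^{O(\log m)} = n^{O(\log d)}$. For Liebeck's third case, where $N$ is a classical simple group on a subspace orbit, Lemma~\ref{lem:rank:vs:size:d} forces the rank of $N$ to be $O(\log d)$; appealing to the standard fact that classical groups admit bases of size linear in the rank in these subspace actions, I obtain $|N| \leq n^{O(\log d)}$, and appending the bound $|G:N| \leq |\Out(N)|$ (which the $\Out$-lemma earlier in the section ensures is polylogarithmic in $|N|$) completes the case. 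Liebeck's fourth case is immediate since $|G| \leq n^{9} = n^{O(\log d)}$.

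The main obstacle will be locating (or reproving) the linear-in-rank base-size bound for classical groups in case (iii), uniformly across linear, symplectic, unitary, and orthogonal types and their various subspace actions. By contrast, the Stirling computation in case (ii) and the arithmetic for the $m = 6$ exception in case (i) are routine.
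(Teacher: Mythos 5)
Your treatment of Liebeck's Cases 1, 2, and 4 matches the paper's proof essentially line by line, including the Stirling estimate giving $n = 2^{\Omega(m)}$ for the partition action and the separate handling of the $m = 6$ exception. The divergence is in Case 3 (classical groups on subspace orbits). You propose combining Lemma~\ref{lem:rank:vs:size:d} (rank $O(\log d)$) with a base size bound linear in rank for subspace actions, which you acknowledge as the main obstacle. The paper instead takes a minimal faithful permutation representation $\varphi: N \to S_{d'}$ and notes that $d' \leq \min(d,n)$; since $N$ is simple and not alternating, it is not a Cameron group, so by the Cameron--Mar\'oti bound $|N| \leq (d')^{1+\log d'} \in n^{O(\log d)}$ directly. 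This sidesteps any base size computation: one never needs to know which subspace action $N$ has, only that $N$ itself is small relative to its minimal degree. Your route is viable in principle --- base sizes for classical groups in subspace actions are indeed $O(\text{rank})$, by a pigeonhole argument comparing $|N| \approx q^{\Theta(k^2)}$ to the number of subspaces --- but verifying it uniformly across linear, unitary, symplectic, and orthogonal types and across all subspace orbit types is exactly the kind of case analysis the Cameron-theorem shortcut is designed to avoid. If you do want to complete your route, the cleanest way is to note that the minimal degree of any faithful permutation representation of a classical group of rank $k$ over $\mathbb{F}_q$ is $q^{\Omega(k)}$, while $|N| \leq q^{O(k^2)}$, giving $|N| \leq n^{O(k)} = n^{O(\log d)}$ without invoking base sizes explicitly.
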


\begin{proof}
 The proof is based on Theorem \ref{thm:large-primitive-almost-simple-classification}.
 First suppose $N$ is permutationally equivalent to~$A_m^{(t)}$ for some $m \in \mathbb{N}$ and $t \leq \frac{m}{2}$.
 Note that $m \leq d$ since $N \in \ourgamma_d$ by Lemma \ref{la:gamma-d-closure}.
 Furthermore $|G:N| \leq |\Out(N)| \leq 2$  since~$N$ is an alternating group (in case $m \leq 6$ the second option is satisfied).
 
 Next consider the case that $N$ is permutationally equivalent to $A_m$ acting on partitions of $[m]$ into subsets of size $b$.
 Again, $m \leq d$ and $|G:N| \in \mathcal{O}(1)$.
 In this case $n = \frac{m!}{(b!)^{a}a!}$ where $a \cdot b = m$.
 Using Stirling's approximation it can be calculated that $n = 2^{\Omega(m)}$.
 Hence, $|N| \leq m^{m} = n^{\mathcal{O}(\log m)} = n^{\mathcal{O}(\log d)}$ and consequently $|G| = n^{\mathcal{O}(\log d)}$.
 
 It remains to analyze the third case. 
 It suffices to show that~$|N| \in  n^{\mathcal{O}(\log d)}$ since then~$|G|\leq |N| |G:N|\leq |N| |\Out(N)|\leq |N| \mathcal{O}(\log (|N|)) \in  n^{\mathcal{O}(\log d)}$.
 For this let~$\varphi\colon N\rightarrow S_{d'}$ be a permutation representation of~$N$ with~$d'$ as small as possible.
 Then~$d'\leq d$ and~$d'\leq n$. Moreover, being minimal, the action is faithful and primitive since~$N$ is simple.
 Not being an alternating group, the group~$N$ is not a Cameron group\footnote{Cameron groups form a collection of primitive permutation groups
 that exactly characterize primitive permutation groups of size greater than $n^{1 + \log n}$ (for $n$ sufficiently large) \cite{Cameron81,Mar02}. We shall not formally define Cameron groups in this work and rather only remark that the only simple Cameron groups are alternating groups (not necessarily in their standard action).}
 and we conclude~$|N|\leq (d')^{1+ \log {(d')}} \in n^{\mathcal{O}(\log d)}$~\cite{Cameron81,Mar02}.
\end{proof}

\begin{lemma}[\cite{GSS98}]
 \label{la:primitive-size-type-3}
 Let $G \in \ourgamma_d$ be a primitive group of Type \textsf{III}.
 Then $b(G)\leq 2\ell+1$ and thus~$|G| \leq n^{2\ell+1}\in n^{\mathcal{O}(\log d)}$ where $\ell := \max\{5,\lceil \log d\rceil\}$.
\end{lemma}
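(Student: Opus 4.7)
My plan is to derive the base size bound directly from \cite{GSS98} applied to simple diagonal type groups, and then to specialize the parameters using the hypothesis $G\in\ourgamma_d$.

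First, I would set up the O'Nan-Scott structure: let $N:=\Soc(G)=T_1\times\cdots\times T_k$ with each $T_i$ isomorphic to a fixed non-abelian simple group $T$, and identify $\Omega$ with $T^{k-1}$ via the natural coset action of $N$ on $N/D$, where $D$ is the point stabilizer diagonal, so that $n=|T|^{k-1}$. The $N$-stabilizer of a single point is a twisted full diagonal isomorphic to $T$, and for a generic choice of a second point the joint $N$-stabilizer collapses to the centralizer in $T$ of a prescribed pair of tuples, which is trivial for an appropriate choice because every finite non-abelian simple group contains two elements with trivial joint centralizer. This gives $b(N)\le 2$.

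Next, to lift a base of $N$ to a base of $G$ one must additionally fell the quotient $G/N$, which embeds into $\Out(T)\times P$, where $P\le S_k$ denotes the induced permutation group on the $k$ simple factors. The $\Out(T)$-part contributes only a constant number of base points since $|\Out(T)|=\mathcal{O}(\log|T|)$. For the $P$-part I would invoke the diagonal-type base size bound from \cite{GSS98}, which is of the shape $b(G)\le c_1+c_2\,(\log_2 k)/\log_2|T|$ with small absolute constants, valid as soon as $|T|$ exceeds a small threshold; this threshold is the origin of the constant $5$ in $\ell$, since $\log_2|A_5|\approx 5.9$ and $A_5$ is the smallest non-abelian simple group.

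Finally, I would use $G\in\ourgamma_d$ to bound both $|T|$ and $k$: the simple group $T$ is a composition factor of $G$ and hence a subgroup of $S_d$, and the primitive factor $P\le S_k$ is itself a primitive $\ourgamma_d$-group of strictly smaller degree, so by induction on the degree (using Theorem~\ref{cor:primitive-size-type-1} and Lemma~\ref{la:primitive-size-type-2} for Types~\textsf{I} and~\textsf{II}, combined with the analogous bound for diagonal type applied to $P$) one obtains $b(P)=\mathcal{O}(\log d)$. Tracking the constants in \cite{GSS98} then delivers the explicit estimate $b(G)\le 2\ell+1$, and the elementary inequality $|G|\le n^{b(G)}$ yields the claimed bound on $|G|$. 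The main obstacle, I expect, will be pinning down the explicit constant $2\ell+1$ rather than merely $\mathcal{O}(\log d)$: this requires a careful inspection of the diagonal-type analysis in \cite{GSS98} together with its precise interaction with the bound on $b(P)$ coming from the $\ourgamma_d$ hypothesis.
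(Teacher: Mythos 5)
The paper does not give a self‑contained proof of this lemma; the "proof" is a one‑sentence remark that the bound is implicit in Lemma~4.2(c) and a comment in Section~6 of~\cite{GSS98}. Your proposal is more ambitious in that it tries to reconstruct the argument, and the structural picture you lay out (socle $T^k$ with small base, outer part embedding in $\Out(T)\times P$, specialization via $G\in\ourgamma_d$) is the right one. However, the step where you try to bound the contribution of $P$ has a genuine gap. You claim $b(P)=\mathcal{O}(\log d)$ via an induction invoking Theorem~\ref{cor:primitive-size-type-1} and Lemma~\ref{la:primitive-size-type-2}, but those statements bound $|P|$, not $b(P)$, and the two are not interchangeable: in the Johnson case of Lemma~\ref{la:primitive-size-type-2} one can have $P=A_k$ in its natural degree‑$k$ action with $k\le d$, for which $b(P)=k-2=\Theta(d)$, not $\mathcal{O}(\log d)$. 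Moreover this $b(P)$ bound is not actually what the diagonal‑type base argument in~\cite{GSS98} requires. The relevant quantity is a "labeling" cost of roughly $\lceil\log_2 k/\log_2|T|\rceil$ points needed to separate the $k$ socle factors, which is controlled by $|T|\ge 60$ (so $\log_2|T|>5$) together with the constraint that $k\le d$ whenever the top permutation group on the factors contains a large alternating group in $\ourgamma_d$ — not by any inductive bound on $b(P)$. So your argument conflates $b(P)$ with the factor‑separation cost, and the induction you set up is both misapplied and unnecessary. The guess that the constant $5$ comes from $\log_2|A_5|\approx 5.9$ is plausible but unsubstantiated. As written, the proposal identifies the right ingredients but the quantitative argument would not close without replacing the $b(P)$ step by the genuine~\cite{GSS98} factor‑separation argument.
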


\begin{proof}[Remark on the proof]
 While not explicitly stated in~\cite{GSS98}, the Lemma is implicit from \cite[Lemma 4.2 (c)]{GSS98} and the comment in Section 6 on Type~III~(a) in~\cite{GSS98}. (We advise that the types in that paper do not agree with ours.)
\end{proof}

\begin{lemma}
 \label{la:primitive-size-type-4}
 Let $G \leq \Sym(\Omega)$ be a primitive $\ourgamma_d$-group of Type \textsf{IV}.
 Let $N = \Soc(G)$. Then one of the following holds:
 \begin{enumerate}
  \item\label{item:primitive-size-type-4-item-1} 
  $G \leq P \wr K$ is a wreath product in the product action of a transitive group $K\leq S_k$ in~$\ourgamma_d$ and a group~$P$ of Type~\textsf{II} with a socle~$T$ permutationally equivalent to $A_m^{(t)}$ for some $m \leq d$ and $t \leq \frac{m}{2}$, and $N$ is 
  isomorphic to~$T^k$ with $|G:N| \leq n^{1 + \log d}$, 
  or
  \item $|G| = n^{\mathcal{O}(\log d)}$.
 \end{enumerate}
\end{lemma}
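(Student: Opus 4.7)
The strategy is to reduce Type \textsf{IV} to the already-resolved Types \textsf{II} and \textsf{III} via the wreath product decomposition, and then combine a $P$-side bound with a $K$-side bound.

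First I would set up notation from O'Nan-Scott: $\Omega = M^k$, $n = |M|^k$, $N = \Soc(G) = T^k$, and $G \leq P \wr K$ in the product action, where $P \leq \Sym(M)$ is primitive of Type \textsf{II} or \textsf{III} with $\Soc(P) = T$. I take $K$ to be the image of $G$ under the natural projection $P \wr K \to S_k$; primitivity of $G$ forces $K$ to be transitive, and as a homomorphic image of $G$ it lies in $\ourgamma_d$ by Lemma~\ref{la:gamma-d-closure}. Hence Lemma~\ref{la:size-gamma-d} yields $|K| \leq d^{k-1}$. Similarly $P$ is a section of $G$, so $P \in \ourgamma_d$. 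Since $P$ is primitive with non-abelian socle, $|M| \geq 5$, so $k \leq \log_5 n \leq \log n$; in particular $2^k \leq n$ and $d^{k-1} \leq d^{\log n} = n^{\log d}$.

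Next I would branch on the type of $P$. If $P$ has Type \textsf{III}, Lemma~\ref{la:primitive-size-type-3} gives $|P| \leq |M|^{\mathcal{O}(\log d)}$. If $P$ has Type \textsf{II}, Lemma~\ref{la:primitive-size-type-2} either gives the same bound $|P| \leq |M|^{\mathcal{O}(\log d)}$, or the \emph{Johnson alternative}: $T$ is permutationally equivalent to $A_m^{(t)}$ with $m \leq d$, $t \leq m/2$, and $|P : T| \leq 2$. In the two non-Johnson sub-cases, combining yields
\[
|G| \;\leq\; |P|^k \cdot |K| \;\leq\; |M|^{\mathcal{O}(k \log d)} \cdot d^{k-1} \;=\; n^{\mathcal{O}(\log d)} \cdot n^{\log d} \;=\; n^{\mathcal{O}(\log d)},
\]
which is the second conclusion of the lemma.

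In the Johnson sub-case, $G/N$ embeds into $(P \wr K)/T^k \cong (P/T) \wr K$, so
\[
|G : N| \;\leq\; |P : T|^k \cdot |K| \;\leq\; 2^k \cdot d^{k-1} \;\leq\; n \cdot n^{\log d} \;=\; n^{1 + \log d},
\]
matching the first conclusion. The main obstacle is precisely this last bookkeeping step; obtaining the clean exponent $1 + \log d$ (as opposed to $\mathcal{O}(\log d)$) relies on three ingredients lining up: the sharp bound $|P : T| \leq 2$ from Lemma~\ref{la:primitive-size-type-2}, the $\ourgamma_d$-bound $|K| \leq d^{k-1}$ from Lemma~\ref{la:size-gamma-d} applied to the transitive top group $K \leq S_k$, and the product-action identity $n = |M|^k$ with $|M| \geq 5$, which allows $2^k$ to be absorbed into $n$ and $d^{k-1}$ into $n^{\log d}$.
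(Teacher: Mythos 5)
Your proposal is correct and takes essentially the same route as the paper: decompose $G\leq P\wr K$, bound $|K|\leq d^{k-1}\leq n^{\log d}$ via Lemma~\ref{la:size-gamma-d} and closure of $\ourgamma_d$ under sections, reduce the $P$-side to Lemmas~\ref{la:primitive-size-type-2} and~\ref{la:primitive-size-type-3}, and split according to whether $P$ falls in the small or Johnson alternative. Your packaging of the index bound through the quotient embedding $G/N\hookrightarrow (P\wr K)/T^k\cong (P/T)\wr K$ is a slightly cleaner rendition of the paper's factoring $|G:N|\leq |K|\cdot|P:T|^k$, but the content is identical; the aside that $|M|\geq 5$ is harmless overkill, since $|M|\geq 2$ already gives $k\leq\log n$.
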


\begin{proof}
 If~$G\in \ourgamma_d$ is of Type~\textsf{IV} then $G \leq P \wr K$ for some primitive group $P \leq \Sym(M)$ and a transitive group $K \leq S_k$.
 Observe that both $P \in \ourgamma_d$ and $K \in \ourgamma_d$.
 Let $H = P^{k}$.
 Then $|G:H| = |K| \leq d^{k-1} \leq 2^{k \cdot \log d} \leq n^{\log d}$ by Lemma \ref{la:size-gamma-d}.
 Moreover $|G| \leq n^{\log d} \cdot |P|^{k}$.
 
 In case $|P| = |M|^{\mathcal{O}(\log d)}$ we thereby conclude $|G| \leq n^{\log d} \cdot (|M|^{\mathcal{O}(\log d)})^{k} = n^{\mathcal{O}(\log d)}$.
 Thus, by Lemmas \ref{la:primitive-size-type-3} and \ref{la:primitive-size-type-2}, it suffices to consider the case where $P$ is a primitive group of Type \textsf{II} which satisfies Part \ref{item:primitive-size-type-2-item-1} of Lemma \ref{la:primitive-size-type-2}. That is, the socle $T = \Soc(P)$ of~$P$ 
 is permutationally equivalent to $A_m^{(t)}$ for some $m \leq d$ and $t \leq \frac{m}{2}$ and $|P:T| \leq 2$.
 Note that $N = T^{k} \leq P^k= H$.
 Thus, $|G:N| = |G:H| \cdot |P:T|^{k} \leq n^{\log d} \cdot 2^{k} \leq n^{1 + \log d}$.
\end{proof}

\begin{lemma}
 \label{la:primitive-size-type-5}
 Let $G \in \ourgamma_d$ be a primitive group of Type \textsf{V}.
 Then $|G| \leq n^{1+\log d}$.
\end{lemma}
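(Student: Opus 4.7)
The plan is to exploit the explicit structure of Type \textsf{V} groups, which pins down both $|G|$ and the relation between $|G|$ and $n$ very tightly. From the O'Nan-Scott description we have $G = B \rtimes P$ with $B \cong T^{k}$ acting regularly on $\Omega$, so $|B| = |\Omega| = n$ and in particular $|G| = |B| \cdot |P| = n \cdot |P|$. All that remains is to bound $|P|$ suitably.

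For the bound on $|P|$, I would observe that the complement $P \leq G$ is a transitive subgroup of $S_{k}$, and by Lemma \ref{la:gamma-d-closure} it inherits membership in $\ourgamma_{d}$. Since $P$ permutes $k$ points and its composition factors lie in $S_{d}$, Lemma \ref{la:size-gamma-d} gives $|P| \leq d^{k-1}$ (here I may implicitly assume $d \geq 6$; the small cases $d \in \{5\}$, in which $T$ must equal $A_{5}$, can be absorbed into the constant or checked directly).

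The last step is to convert $d^{k-1}$ into a bound in $n$, and this is where the regularity of $B$ pays off a second time: since $T$ is a non-abelian simple group we have $|T| \geq 2$ (in fact $|T| \geq 60$), so $n = |T|^{k} \geq 2^{k}$. Hence
\[
d^{k-1} \;\leq\; d^{k} \;=\; \bigl(2^{\log d}\bigr)^{k} \;=\; 2^{k \log d} \;\leq\; n^{\log d},
\]
and combining this with $|G| = n \cdot |P|$ yields $|G| \leq n \cdot n^{\log d} = n^{1+\log d}$, as desired.

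There is no real obstacle here; the argument is essentially bookkeeping once the Type \textsf{V} structure and Lemma \ref{la:size-gamma-d} are in place. The only subtlety worth flagging is making sure the hypothesis of Lemma \ref{la:size-gamma-d} applies, which follows because Type \textsf{V} forces the existence of a non-abelian simple $T \leq S_{d}$ and hence $d \geq 5$.
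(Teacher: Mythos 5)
Your proof is correct and follows essentially the same route as the paper: use the Type \textsf{V} structure $G \cong T^k \rtimes P$ with $T^k$ regular (so $|G| = n\cdot|P|$), invoke Lemma~\ref{la:size-gamma-d} on $P \in \ourgamma_d$ to get $|P| \leq d^{k-1}$, and then convert to a bound in $n$ via $k \leq \log n$. Your extra remark about the $d=5$ edge case is a reasonable caveat that the paper leaves implicit, but the substance of the argument is identical.
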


\begin{proof}
 For a primitive group~$G$ of Type \textsf{V}, a primitive twisted wreath product, 
 there is a transitive group $P \leq S_k$ and a non-abelian simple group $T$ such that $G \cong T^k \rtimes P$.
 Moreover, $n=|\Omega| = |T|^{k}$ and thus~$k\leq \log(n)$.
 Note that $P \in \ourgamma_d$ since~$\ourgamma_d$ is closed under subgroups and thus~$|P|\leq d^{k-1}$ by Lemma \ref{la:size-gamma-d}. We conclude
 that $|G| = |T^k| \cdot |P| = n \cdot |P| \leq n \cdot d^{k-1} \leq n \cdot d^{\log n} = n^{1 + \log d}$.
\end{proof}

\subsection[Structure theorem for primitive groups with restricted composition factors]{Structure theorem for primitive groups in~$\ourgamma_d$}

Having analyzed the structure of large $\ourgamma_d$-groups for all five types of primitive groups we now combine those statements into a structure theorem.
For this, we need two auxiliary lemmata.

\begin{lemma}
 \label{la:primitive-stabilizer}
 Let $G \leq \Sym(\Omega)$ be a transitive group and $\alpha \in \Omega$.
 Then \[B_\alpha = \{\beta \in \Omega \mid \beta^{G_\alpha} = \{\beta\}\}\] forms a block of $G$.
\end{lemma}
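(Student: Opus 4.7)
The plan is to show that $B_\alpha$ arises from a subgroup $H$ with $G_\alpha \le H \le G$, by invoking the standard correspondence between blocks of $G$ containing $\alpha$ and intermediate subgroups between $G_\alpha$ and $G$ (in a transitive action, $\alpha^H$ is always a block).

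First I would note that $\alpha \in B_\alpha$, so $B_\alpha$ is nonempty. Next, the crucial observation is that for every $\beta \in B_\alpha$ one actually has the equality $G_\alpha = G_\beta$, not merely $G_\alpha \le G_\beta$. Indeed, $\beta \in B_\alpha$ says $G_\alpha \le G_\beta$ directly from the definition, and since $G$ is transitive, every point stabilizer has the same order (stabilizers of points in the same orbit are conjugate); so $|G_\alpha| = |G_\beta|$ forces equality. As a consequence, $B_\beta = B_\alpha$ for every $\beta \in B_\alpha$, because the set $B_\gamma$ is by definition determined by $G_\gamma$.

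Now let $H = \{g \in G \mid B_\alpha^{g} = B_\alpha\}$ be the setwise stabilizer of $B_\alpha$ in $G$. Since $G_\alpha$ fixes each element of $B_\alpha$ pointwise, $G_\alpha \le H$, and the orbit $\alpha^H$ is therefore a block of $G$. It remains to identify $\alpha^H$ with $B_\alpha$. The inclusion $\alpha^H \subseteq B_\alpha$ is immediate from $\alpha \in B_\alpha$ together with the definition of $H$. For the reverse inclusion, pick $\beta \in B_\alpha$ and, by transitivity of $G$, choose $g \in G$ with $\alpha^{g} = \beta$. Then $B_\alpha^{g}$ consists of the fixed points of $g^{-1}G_\alpha g = G_{\alpha^{g}} = G_\beta$, which is exactly $B_\beta$; and by the equality of stabilizers above, $B_\beta = B_\alpha$. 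Hence $g \in H$ and $\beta = \alpha^{g} \in \alpha^H$.

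I do not anticipate a real obstacle here. The one subtlety that has to be treated carefully is the step $G_\alpha = G_\beta$ for $\beta \in B_\alpha$: the inclusion is definitional but the reverse inclusion genuinely uses transitivity of $G$ (so that point stabilizers have the same cardinality). Everything else is the standard subgroup-block correspondence.
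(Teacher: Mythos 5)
Your proof is correct. The two decisive facts you use -- that $\beta \in B_\alpha$ forces $G_\alpha = G_\beta$ (inclusion by definition, equality because transitivity makes all point stabilizers have the same order), and that $B_\alpha^{g}$ is the fixed-point set of $g^{-1}G_\alpha g = G_{\alpha^{g}}$ -- are exactly the two facts the paper uses. The difference lies in how you conclude blockness: you realize $B_\alpha$ as the orbit $\alpha^{H}$ of an intermediate subgroup $G_\alpha \leq H \leq G$ (namely the setwise stabilizer of $B_\alpha$) and invoke the standard correspondence between blocks through $\alpha$ and subgroups containing $G_\alpha$, whereas the paper verifies blockness directly: it shows that the relation $R = \{(\gamma,\beta) \mid \beta^{G_\gamma} = \{\beta\}\}$ is a $G$-invariant equivalence relation (symmetry being the stabilizer-equality step), so its classes form a block system containing $B_\alpha$. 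The paper's route is self-contained and yields the whole block system $\{B_\gamma\}_{\gamma\in\Omega}$ in one stroke; yours is slightly shorter at the cost of citing the subgroup--block correspondence as a known lemma, and it additionally identifies the setwise stabilizer of $B_\alpha$ explicitly, which is harmless but not needed. Either way, your one flagged subtlety (where transitivity enters) is precisely the point the paper is careful about as well.
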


\begin{proof}
 Let $R = \{(\alpha,\beta) \in \Omega^{2} \mid \beta^{G_\alpha} = \{\beta\}\}$.
 Clearly the relation $R$ is reflexive and transitive.
 Suppose that $(\alpha,\beta) \in R$.
 Then $G_\alpha \leq G_\beta$.
 Moreover, $|G_\alpha| = |G|/|\alpha^{G}| = |G|/|\beta^{G}| = |G_\beta|$ since $G$ is transitive.
 It follows that $G_\alpha = G_\beta$ and thus, $(\beta,\alpha) \in R$.
 So $R$ is also symmetric and hence, $R$ is an equivalence relation.
 
 Now let $(\alpha,\beta) \in R$ and $g \in G$.
 Then $(\alpha^{g},\beta^{g}) \in R$ because $G_{\alpha^{g}} = g^{-1}G_\alpha g$.
 Thus, $R$ is invariant under $G$ and the partition into equivalence classes forms a block system for $G$.
\end{proof}

\begin{lemma}
 \label{la:direct-product-all-block-systems}
 Let $P \leq \Sym(\Omega)$ be a non-regular primitive group and $k \geq 2$.
 Let $\mathfrak{B}$ be a block system of $P^{k}$ with its natural
 action on $\Omega^{k}$.
 Then there is some $I \subseteq [k]$ such that
 \[\mathfrak{B} = \{\{(\alpha_1,\dots,\alpha_k) \in \Omega^{k} \mid \forall i \in I\colon \alpha_i = \beta_i\} \mid (\beta_i)_{i \in I} \in \Omega^{|I|}\}.\]
\end{lemma}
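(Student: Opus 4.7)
The plan is to parametrize blocks of $P^k$ via the subgroup lattice above a point stabilizer. Fix a block $B \in \mathfrak{B}$ and a tuple $\bar\alpha = (\alpha_1,\dots,\alpha_k) \in B$. Let $H = (P^k)_B$ be the setwise stabilizer. Because $P$ is transitive on $\Omega$, the direct product $P^k$ is transitive on $\Omega^k$, so the standard correspondence for blocks shows that $H$ contains the point stabilizer $(P^k)_{\bar\alpha} = \prod_{i=1}^k P_{\alpha_i}$ and acts transitively on $B$, i.e.\ $B = \bar\alpha^H$. The proof thus reduces to describing all subgroups $H$ of $P^k$ sandwiched between $\prod_i P_{\alpha_i}$ and $P^k$.

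Since $P$ is primitive, each stabilizer $P_{\alpha_i}$ is a maximal subgroup of $P$, so the image $\pi_i(H)$ of $H$ under the projection to the $i$-th coordinate is either $P_{\alpha_i}$ or $P$. Let $I = \{i \in [k] \mid \pi_i(H) = P_{\alpha_i}\}$ and $J = [k] \setminus I$. The central claim will be
\[
H \;=\; \prod_{i\in I} P_{\alpha_i} \;\times\; \prod_{i\in J} P.
\]
Granted this, taking orbits of $\bar\alpha$ under $H$ yields $B = \{(\beta_1,\dots,\beta_k) \mid \beta_i = \alpha_i \text{ for all } i \in I\}$ (for $i\in J$ the $i$-th coordinate ranges over $\alpha_i^{P}=\Omega$ by transitivity of $P$). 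Since $P^k$ permutes the blocks in $\mathfrak B$ transitively, every block has the same shape, with the same index set $I$, which is the statement to be proved.

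The heart of the argument is thus the display above; the main obstacle is showing, for each $i \in J$, that $H$ contains the full ``axis subgroup'' $\{1\}^{i-1}\times P\times\{1\}^{k-i}$. This is precisely where the non-regularity assumption enters. Fix $i \in J$, say $i=1$ for notational simplicity, and consider
\[
V_1 \;=\; \{\,g\in P \mid (g,1,\dots,1) \in H\,\}.
\]
Then $V_1$ is a subgroup of $P$ that contains $P_{\alpha_1}$ (take the $P_{\alpha_1}$-component of $\prod_i P_{\alpha_i} \le H$). Moreover $V_1$ is normal in $\pi_1(H)=P$: for any $h \in P$ choose $\tilde h = (h,h_2,\dots,h_k) \in H$; conjugating $(g,1,\dots,1) \in H$ by $\tilde h$ yields $(h^{-1}gh,1,\dots,1) \in H$, so $h^{-1}V_1 h \subseteq V_1$. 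Now $P_{\alpha_1}$ is core-free and maximal in $P$: any normal subgroup of $P$ properly containing $P_{\alpha_1}$ must equal $P$, while $P_{\alpha_1}$ itself is not normal (its normality would make it the kernel of the action on $\Omega$, forcing $P$ to be regular, contrary to the hypothesis). Therefore $V_1 = P$, which yields $P\times\{1\}^{k-1}\le H$. Doing this for every $i \in J$ gives the claimed lower bound on $H$, and the reverse inclusion is immediate from the definition of $I$.
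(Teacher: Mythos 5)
Your proof is correct, and it follows a genuinely different route from the paper's. The paper argues directly on the block $B$: for a coordinate $j$ on which $B$ is non-constant, it shows the slice $\Delta=\{\alpha\mid(\alpha_1',\dots,\alpha,\dots,\alpha_k')\in B\}$ is itself a block of $P$, invokes Lemma~\ref{la:primitive-stabilizer} together with non-regularity to get $|\Delta|\geq 2$, and then primitivity gives $\Delta=\Omega$. You instead pass to the block stabilizer $H=(P^k)_B$ via the standard block--subgroup correspondence, project coordinatewise, and show the ``axis'' subgroups are forced into $H$ by a normality argument: the set $V_i=\{g\in P\mid (1,\dots,g,\dots,1)\in H\}$ is normal in $\pi_i(H)$, contains $P_{\alpha_i}$, and by maximality plus non-normality of a point stabilizer (here non-regularity is used) must equal $P$ whenever $\pi_i(H)=P$. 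Both proofs are sound and of comparable length; the paper's argument is more elementary and works purely at the level of subsets, while yours is more structural --- it gives the explicit decomposition $H=\prod_{i\in I}P_{\alpha_i}\times\prod_{i\in J}P$ of the block stabilizer as a byproduct, which makes the description of the block system transparent and would generalize cleanly to, say, products of possibly different primitive groups.
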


\begin{proof}
 Let $B \in \mathfrak{B}$ be a block and let $I = \{i \in [k] \mid |\pi_i(B)| = 1\}$ where $\pi_i(B) = \{\alpha_i \mid (\alpha_1,\dots,\alpha_k) \in B\}$.
 For every $i \in I$ suppose $\pi_i(B) = \{\beta_i\}$.
 It suffices to show that $B = \{(\alpha_1,\dots,\alpha_k) \in \Omega^{k} \mid \forall i \in I\colon \alpha_i = \beta_i\}$.
 Let $j \in [k] \setminus I$ and let $(\alpha_1,\dots,\alpha_k) ,(\alpha_1',\dots,\alpha_k') \in B$ such that $\alpha_j \neq \alpha_j'$.
 Since $G$ is non-regular and primitive there is some $g \in G_{\alpha_j}$ such that $(a_j')^{g} \neq \alpha_j'$ (see Lemma \ref{la:primitive-stabilizer}).
 Note that $(\alpha_1',\dots,\alpha_{j-1}',
 (a_j')^{g},\alpha_{j+1}',\alpha_k') \in B$.
 Let $\Delta = \{\alpha \in \Omega \mid (\alpha_1',\dots,\alpha_{j-1}', \alpha,\alpha_{j+1}',\alpha_k') \in B \}$.
 Since $\Delta$ forms a block of $P$ and $|\Delta| \geq 2$ we get that $\Delta = \Omega$.
 This implies that $B = \{(\alpha_1,\dots,\alpha_k) \in \Omega^{k} \mid \forall i \in I\colon \alpha_i = \beta_i\}$.
\end{proof}

Let $G \leq \Sym(\Omega)$ and let $\mathfrak{B},\mathfrak{B}'$ be two $G$-invariant partitions such that $\mathfrak{B} \prec \mathfrak{B}'$.
Consistent with our previous notation we denote by $G_B^{\mathfrak{B}[B]}$ the natural induced action of $G_B$ on the set $\mathfrak{B}[B]$ for all $B \in \mathfrak{B}'$.

\begin{theorem}
 \label{thm:first-main-theorem}
 Let $G \leq \Sym(\Omega)$ be a primitive $\ourgamma_d$-group.
 Then one of the following holds:
 \begin{enumerate}
  \item\label{item:first-main-theorem-1} $|G| \leq n^{c_1 \log d + c_2}$ for some absolute constants $c_1,c_2$, or
  \item\label{item:first-main-theorem-2} for the normal subgroup $N = \Soc(G) \leq G$ there is a sequence of partitions $\{\Omega\} = \mathfrak{B}_1 \succ \dots \succ \mathfrak{B}_k = \{\{\alpha\} \mid \alpha \in \Omega\}$ such that the following holds:
  \begin{enumerate}
   \item $|G:N| \leq n^{1 + \log d}$,
   \item $\mathfrak{B}_i$ is $N$-invariant for every $i \in [k]$, and
   \item there are $m \leq d$ and $t \leq \frac{m}{2}$ with $m > 4 \log s$ where $s = \binom{m}{t}$ such that for all $i \in [k-1]$ and $B \in \mathfrak{B}_i$ the group $N_B^{\mathfrak{B}_{i+1}[B]}$
         is permutationally equivalent to $A_m^{(t)}$.
  \end{enumerate}
 \end{enumerate}
 Moreover, there is a polynomial-time algorithm that determines one of the options that is satisfied and in case of the second option computes~$N$ and the partitions~$\mathfrak{B}_1, \dots ,\mathfrak{B}_k$.
\end{theorem}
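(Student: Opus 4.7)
The plan is to follow the O'Nan--Scott classification and dispatch each of the five types through the size and structure results already established in this section. For Types~\textsf{I},~\textsf{III}, and~\textsf{V}, Theorem~\ref{cor:primitive-size-type-1} together with Lemmas~\ref{la:primitive-size-type-3} and~\ref{la:primitive-size-type-5} immediately yield $|G| = n^{\mathcal{O}(\log d)}$, placing $G$ in Case~\ref{item:first-main-theorem-1}. Types~\textsf{II} and~\textsf{IV} are the interesting ones: Lemmas~\ref{la:primitive-size-type-2} and~\ref{la:primitive-size-type-4} each have a ``small group'' branch (Case~\ref{item:first-main-theorem-1}) and an ``almost Johnson'' branch that isolates the socle $N$ as a direct power of a Johnson group $A_m^{(t)}$ with $m \leq d$ and $t \leq m/2$. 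I plan to treat these two remaining cases by constructing an explicit $N$-invariant partition sequence, and by showing that the threshold condition $m > 4\log s$ can be enforced without loss by absorbing the borderline cases into Case~\ref{item:first-main-theorem-1}.

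For Type~\textsf{II}, where $N$ is permutationally equivalent to $A_m^{(t)}$ and is already primitive on~$\Omega$ with $n = s = \binom{m}{t}$, the only candidate sequence is the trivial chain $\mathfrak{B}_1 = \{\Omega\} \succ \mathfrak{B}_2 = \{\{\alpha\} \mid \alpha \in \Omega\}$. For $B = \Omega \in \mathfrak{B}_1$ the induced action $N_\Omega^{\mathfrak{B}_2[\Omega]}$ equals~$N$ on $\binom{[m]}{t}$, which is $A_m^{(t)}$ by hypothesis. If $m > 4\log s$, then all conditions of Case~\ref{item:first-main-theorem-2} are met (with $k = 2$). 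Otherwise $m \leq 4\log n$; since also $m \leq d$ and $\log s \leq m$, one obtains $|N| \leq m! \leq 2^{m \log m} \leq 2^{4\log n \cdot \log(4d)} = n^{\mathcal{O}(\log d)}$, and combined with $|G:N| \leq 2$ this falls under Case~\ref{item:first-main-theorem-1}.

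For Type~\textsf{IV}, Lemma~\ref{la:primitive-size-type-4} gives $G \leq P \wr K$ with $T := \Soc(P) \cong A_m^{(t)}$ acting on $M := \binom{[m]}{t}$ of size~$s$, together with $N = T^\ell$, $\Omega = M^\ell$, $n = s^\ell$, and $|G:N| \leq n^{1+\log d}$. I would take the coordinate-prefix sequence: $\mathfrak{B}_i$ is the partition whose blocks are indexed by the prescribed values of the first $i-1$ coordinates, so $\mathfrak{B}_1 = \{\Omega\}$ and $\mathfrak{B}_{\ell+1}$ is the discrete partition. Each $\mathfrak{B}_i$ is $N$-invariant since $N = T^\ell$ acts coordinate-wise. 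For a block $B \in \mathfrak{B}_i$ prescribing $(\beta_1,\dots,\beta_{i-1})$, the setwise stabilizer $N_B$ factors as $T_{\beta_1} \times \dots \times T_{\beta_{i-1}} \times T \times \dots \times T$, and its action on $\mathfrak{B}_{i+1}[B]$ is precisely the action of the $i$-th factor~$T$ on~$M$, i.e.\ $A_m^{(t)}$. The threshold-failure case $m \leq 4\log s$ is once again absorbed into Case~\ref{item:first-main-theorem-1}: the same calculation as above gives $|T| \leq s^{\mathcal{O}(\log d)}$, and together with the bound $|K| \leq d^{\ell-1} \leq n^{\log d}$ from Lemma~\ref{la:size-gamma-d} this yields $|G| \leq |K| \cdot |P|^\ell \leq n^{\mathcal{O}(\log d)}$.

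Algorithmically, the O'Nan--Scott type and the socle~$N$ can be determined in polynomial time from a strong generating set using the routines of Theorem~\ref{thm:permutation-group-algorithm-library} combined with an analysis of the minimal normal subgroups; the product structure $\Omega \cong M^\ell$ in Type~\textsf{IV} is recovered from the orbits of the individual factors of~$N$, after which the partition sequence is read off directly, and in Type~\textsf{II} the sequence is trivial. The threshold $m > 4\log s$ is checked arithmetically. The main conceptual obstacle is calibrating the threshold so that every ``non-Johnson'' or ``small~$m$'' situation falls cleanly into Case~\ref{item:first-main-theorem-1} while leaving enough room in Case~\ref{item:first-main-theorem-2} for the subsequent algorithmic use of the sequence; everything else is bookkeeping across the five O'Nan--Scott types.
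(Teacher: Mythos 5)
Your proposal matches the paper's own proof in both structure and substance: the same O'Nan--Scott dispatch, the same appeals to Corollary~\ref{cor:primitive-size-type-1} and Lemmas~\ref{la:primitive-size-type-3}, \ref{la:primitive-size-type-5}, \ref{la:primitive-size-type-2}, \ref{la:primitive-size-type-4}, the same coordinate-prefix partition chain for Type~\textsf{IV}, and the same device of absorbing the $m \leq 4\log\binom{m}{t}$ regime into Case~\ref{item:first-main-theorem-1} via $|T| \leq m! \leq |M|^{4\log m}$. The one place you diverge slightly is the algorithmic paragraph: you propose recovering the product structure $\Omega \cong M^\ell$ from the orbits of the simple direct factors of $N$, whereas the paper instead iteratively computes a maximal block $B'$ inside the current block and takes its $N$-orbit, then invokes Lemma~\ref{la:direct-product-all-block-systems} to certify that \emph{any} maximal chain of $N$-block systems produced this way is a coordinate-prefix chain up to reordering coordinates. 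Your factor-orbit approach is plausible and would also work, but it sidesteps the uniqueness lemma that the paper uses to justify correctness of the greedy block refinement; if you go your route you should still note why the resulting partitions have the required form, e.g., by identifying the minimal normal subgroups of $N$ and checking that their orbits give the coordinate fibers.
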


\begin{proof}
 First suppose $G$ is a primitive group of Type \textsf{I}, \textsf{III} or \textsf{V}. 
 Then 1 holds; the claimed bound on the group size follows from
 Corollary \ref{cor:primitive-size-type-1}, Lemma \ref{la:primitive-size-type-3} and \ref{la:primitive-size-type-5}, respectively.
 So it remains to consider primitive groups of Type \textsf{II} and \textsf{IV}.
 
 Let $N = \Soc(G)$ be the socle of $G$. Suppose $G$ is a primitive
 group of Type \textsf{II}.  Then by
 Lemma~\ref{la:primitive-size-type-2} we conclude
 $|G| = n^{\mathcal{O}(\log d)}$ or $N$ is permutationally equivalent
 to $A_m^{(t)}$ for some $m \leq d$ and $t \leq \frac{m}{2}$ and
 $|G:N| \leq 2$.  In case $m \leq 4 \log \binom{m}{t} = 4 \log n$, it
 holds that $|N|\leq m^m \leq n^{4 \log m} \leq n^{4 \log d}$ and
 thus, $|G| = n^{\mathcal{O}(\log d)}$.  In
 case~$m > 4 \log \binom{m}{t}$ we set~$\mathfrak{B}_1 = \{\Omega\}$
 and~$\mathfrak{B}_2 = \{\{\alpha\} \mid \alpha \in \Omega\}$.
 
 Next assume~$G$ is of Type~\textsf{IV}. By
 Lemma~\ref{la:primitive-size-type-4}, $G \leq P \wr K$ is a wreath
 product in the product action for a transitive group $K\leq S_k$
 in~$\ourgamma_d$ and a group~$P\leq \Sym(M)$
 of Type~\textsf{II} with a socle~$T$ permutationally equivalent to
 $A_m^{(t)}$ for some $m \leq d$ and $t \leq \frac{m}{2}$, and $N$ is
 isomorphic to~$T^k$ with $|G:N| \leq n^{1 + \log d}$.  Moreover, in
 case $m \leq 4 \log |M|$ we have $|T|\le m! \leq |M|^{4 \log m}$
 and hence, $|N| \leq n^{4 \log d}$.  This implies that
 $|G| = n^{\mathcal{O}(\log d)}$, so we can assume
 $m > 4 \log |M|= 4 \log \binom{m}{t}$.

 Observe that, since the wreath product is in the product action, an element $h=(p_1,\dots,p_k) \in P^k$ acts on an element $(m_1,\dots,m_k) \in M^{k} = \Omega$ via $(m_1,\dots,m_k)^{h} = (m_1^{p_1},\dots,m_k^{p_k})$.
 For $i \in [k+1]$ define
 \[\mathfrak{B}_i = \{\{(m_1,\dots,m_k) \in M^{k} \mid \forall j < i\colon m_j = m_j^{*} \} \mid m_1^{*},\dots,m_{i-1}^{*} \in M\}.\]
 Clearly, $\mathfrak{B}_i$ is an $N$-invariant partition for all $i \in [k+1]$.
 Observe that $N_B^{\mathfrak{B}_{i+1}[B]}$ is permutationally equivalent to $T$ for all $i \in [k]$ which itself is permutationally equivalent to $A_m^{(t)}$.
 
 We describe a polynomial time algorithm as required by the theorem. Note first that~$|G|$ can be computed in polynomial time, so Option~1 can be detected. 
 Also note that the socle of a group is a normal subgroup and can be computed in polynomial time (see\;\cite{KL90}).

 The algorithm now sets~$\mathfrak{B}_1 =\{\Omega\}$.
 To compute~$\mathfrak{B}_{i+1}$ from~$\mathfrak{B}_{i}$ we choose an arbitrary block~$B\in \mathfrak{B}_{i}$ and compute a maximal block~$B'$ within~$B$, that is a block that is inclusion wise maximal with the property that~$B'\subsetneq B$.
 We set~$\mathfrak{B}_{i+1}= (B')^N$.

 Note that, up to permuting the coordinates, by Lemma \ref{la:direct-product-all-block-systems} the block systems described above are the only block systems of $N$.
 Hence every sequence of block systems~$\{\Omega\} = \mathfrak{B}_1 \succ \dots \succ \mathfrak{B}_k = \{\{\alpha\} \mid \alpha \in \Omega\}$ that cannot be extended has the desired properties.
 
 Finally note that the algorithm is also correct for groups of Type~\textsf{II}, since then~$N$ is primitive and we get the sequence $\mathfrak{B}_1 = \{\Omega\}$ and~$\mathfrak{B}_2 = \{\{\alpha\} \mid \alpha \in \Omega\}$.
\end{proof}

\begin{remark}
 Let $\Gamma_d$ denote the family of groups $G$ with the property that $G$ has no alternating composition factors of degree greater than $d$
 and no classical composition factors of rank greater than $d$.
 (There is no restriction on the cyclic, exceptional, and sporadic composition factors of $G$.)
 While the class $\ourgamma_d$ considered in this paper follows the original definition of Luks \cite{luks82},
 most of the recent literature is concerned with the more general class of groups $\Gamma_d$ \cite{BCP82,GSS98}.
 The reason is that many results that can be proved for the class $\ourgamma_d$ indeed carry over to the more general class of groups $\Gamma_d$.
 We want to stress the fact that this is not the case for the results presented in this section.
 Indeed, consider the affine general linear group $G = \AGL(d,p)$ of dimension $d$ (with its natural action on the corresponding vector space).
 Then $G$ is a primitive group of affine type and $|G| = n^{\Omega(d)}$ where $n = p^{d}$ is the size of the vector space.
 For this group Theorem \ref{thm:first-main-theorem} does not hold.
 The group $G$ is contained in the class $\Gamma_d$, but it is not contained in $\ourgamma_d$.
\end{remark}

\section[Almost d-ary block system sequences]{Almost~$d$-ary block system sequences}\label{sec:almost:d:ary}

In the previous section we essentially proved that the only obstacles to efficient Luks reduction are Johnson groups which is very similar to Babai's quasipolynomial time algorithm.
Hence, the natural approach to tackle the obstacle cases seems to be an adaption of the techniques introduced by Babai \cite{Babai15-full,Babai16}.
However, there is an intrinsic problem.
The group-theoretic methods forming the basis for Babai's algorithm rest on a group-theoretic statement, the Unaffected Stabilizers Theorem, for which the natural adaption to our setting does not hold (cf.\ \cite[Remark 8.2.5]{Babai15-full}).
To remedy this problem we introduce a preprocessing step that reduces the String Isomorphism Problem for $\ourgamma_d$-groups to a more restricted version of this problem.
In this restricted version, the group is equipped with a sequence of block systems satisfying a particular property defined as follows.  
(Recall that a permutation group $G \leq \Sym(\Omega)$ is semi-regular
if $G_\alpha = \{1\}$ for every $\alpha \in \Omega$.
Also remember that, for $G$-invariant partitions $\mathfrak{B} \prec \mathfrak{B}'$ and $B \in \mathfrak{B}'$,
we denote by $G_B^{\mathfrak{B}[B]}$ the natural induced action of $G_B$ on the set $\mathfrak{B}[B]$.)

\begin{definition}
 Let $G \leq \Sym(\Omega)$ be a permutation group. A $G$-invariant sequence of partitions
 $\{\Omega\} = \mathfrak{B}_0 \succ \dots \succ \mathfrak{B}_k = \{\{\alpha\} \mid \alpha \in \Omega\}$ 
 is called \emph{almost $d$-ary} if for every $i \in [k]$ and $B \in \mathfrak{B}_{i-1}$ it holds that 
 \begin{enumerate}
  \item $G_B^{\mathfrak{B}_{i}[B]}$ is semi-regular, or
  \item $|\mathfrak{B}_{i}[B]| \leq d$.
 \end{enumerate}  
 The sequence is called \emph{$d$-ary} if the second property is satisfied for every $i \in [k]$ and $B \in \mathfrak{B}_{i-1}$.
\end{definition}

A simple, but crucial observation is that such sequences are inherited by subgroups and restrictions to invariant subsets.

\begin{observation}
 \label{obs:sequence-of-partitions}
 Let $G \leq \Sym(\Omega)$ be a group, and let $\{\Omega\} =
 \mathfrak{B}_0 \succ \dots \succ \mathfrak{B}_m = \{\{\alpha\} \mid
 \alpha \in \Omega\}$ be an (almost) $d$-ary sequence of $G$-invariant partitions.
 Moreover, let $H \leq G$.
 Then $\mathfrak{B}_0 \succ \dots \succ \mathfrak{B}_m$ also forms an (almost) $d$-ary sequence of $H$-invariant partitions.
 Additionally, for an $H$-invariant subset $\Delta \subseteq \Omega$ it holds that $\mathfrak{B}_0[\Delta] \succeq \dots \succeq \mathfrak{B}_m[\Delta]$ forms an (almost) $d$-ary sequence of $H^{\Delta}$-invariant partitions.
\end{observation}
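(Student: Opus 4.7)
The plan is to verify the two inheritance claims in turn, noting that both boil down to tracking how the two alternative conditions (semi-regularity versus index at most $d$) behave under passing to subgroups and to restrictions.

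For the first claim I would simply observe that $H$-invariance of each $\mathfrak{B}_i$ is immediate from $G$-invariance since $H \leq G$. Fix any $i \in [m]$ and $B \in \mathfrak{B}_{i-1}$. The size condition $|\mathfrak{B}_i[B]| \leq d$ depends only on the partition, so it is unaffected. For the semi-regularity condition, the point is that $H_B \leq G_B$ and the induced action $H_B^{\mathfrak{B}_i[B]}$ is a subgroup of $G_B^{\mathfrak{B}_i[B]}$ (viewed as permutation groups of the same set $\mathfrak{B}_i[B]$); since any subgroup of a semi-regular group is semi-regular, the condition is inherited.

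For the restriction claim I would first verify the structural statements: for $h \in H$ we have $(B \cap \Delta)^h = B^h \cap \Delta^h = B^h \cap \Delta$, so $H^\Delta$ permutes $\mathfrak{B}_i[\Delta]$, and the refinement relation $\mathfrak{B}_{i-1}[\Delta] \succeq \mathfrak{B}_i[\Delta]$ is inherited from $\mathfrak{B}_{i-1} \succeq \mathfrak{B}_i$ (possibly non-strictly, which explains the $\succeq$ in the statement). Fix $B' = B \cap \Delta \in \mathfrak{B}_{i-1}[\Delta]$. The elements of $\mathfrak{B}_i[\Delta]$ contained in $B'$ are precisely the nonempty sets of the form $B'' \cap \Delta$ with $B'' \in \mathfrak{B}_i$ and $B'' \subseteq B$, so $|\mathfrak{B}_i[\Delta][B']| \leq |\mathfrak{B}_i[B]|$; this disposes of the size alternative.

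The heart of the matter, and what I expect to be the only step requiring care, is the semi-regular alternative under restriction. The plan is a lifting argument. Suppose $G_B^{\mathfrak{B}_i[B]}$ is semi-regular and consider $h \in H$ whose induced permutation on $\mathfrak{B}_i[\Delta][B']$ lies in $(H^\Delta)_{B'}^{\mathfrak{B}_i[\Delta][B']}$ and fixes some $B'' \cap \Delta$. Because distinct blocks in $\mathfrak{B}_{i-1}$ are disjoint, the equalities $B^h \cap \Delta = B \cap \Delta \neq \emptyset$ and $(B'')^h \cap \Delta = B'' \cap \Delta \neq \emptyset$ force $B^h = B$ and $(B'')^h = B''$; thus $h \in H_B \leq G_B$ and $h$ fixes a point of $\mathfrak{B}_i[B]$. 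Semi-regularity of $G_B^{\mathfrak{B}_i[B]}$ then makes $h$ act trivially on \emph{all} of $\mathfrak{B}_i[B]$, which in particular makes $h^\Delta$ act trivially on $\mathfrak{B}_i[\Delta][B']$. Hence the point stabilizer in $(H^\Delta)_{B'}^{\mathfrak{B}_i[\Delta][B']}$ is trivial, proving semi-regularity. The only subtlety is keeping the two different induced actions (of $H_B$ on $\mathfrak{B}_i[B]$ and of $(H^\Delta)_{B'}$ on $\mathfrak{B}_i[\Delta][B']$) cleanly separated so that the lift from the intersection $B'' \cap \Delta$ to the block $B''$ is justified by the $H$-invariance of $\Delta$ together with the disjointness of blocks in $\mathfrak{B}_i$.
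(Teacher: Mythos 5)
Your proposal is correct; the paper states this as an Observation without any proof, and your verification (subgroups of semi-regular groups are semi-regular, the size condition is independent of the group, and the lifting argument from $B''\cap\Delta$ to $B''$ via disjointness of blocks and $H$-invariance of $\Delta$) is precisely the routine check the paper leaves implicit. In particular, your handling of the only delicate point -- transferring semi-regularity of $G_B^{\mathfrak{B}_i[B]}$ to $(H^{\Delta})_{B'}^{\mathfrak{B}_i[\Delta][B']}$ -- is sound.
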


For groups equipped with an almost $d$-ary sequence of partitions it is possible to give a natural variant of the Unaffected Stabilizers Theorem which, in turn, allows for an adaption of Babai's algorithmic techniques to give an efficient algorithm deciding String Isomorphism for this type of groups.

The goal of this section is to describe a reduction that, given an instance of String Isomorphism for $\ourgamma_d$-groups, computes a new equivalent instance, in which the permutation group is equipped with an almost \emph{$d$-ary} $G$-invariant sequence of partitions.
This reduction runs in time $n^{\polylog{d}}$ and builds on the classification of large primitive groups obtained in the previous section.
We shall then see in subsequent sections that the String Isomorphism Problem for groups equipped with such a sequence can be solved in time $n^{\polylog{d}}$.

\subsection{The high-level idea}

The central idea for the reduction is to change the action of the permutation group $G$.
More precisely, we shall construct a new permutation domain $\Omega^{*}$ and consider an action of the group $G$ on the set $\Omega^{*}$.
Indeed, the set $\Omega^{*}$ will be larger than the original permutation domain $\Omega$.
Note that this is acceptable for our purposes as long as $|\Omega^{*}| \leq |\Omega|^{\polylog{d}}$.

Let us first illustrate this on a high level for the special case that $G$ is a primitive group.
Using the characterization of primitive $\ourgamma_d$-groups given in the previous section we have to distinguish two cases.
First suppose that $|G| \leq n^{c_1 \log d + c_2}$ for some appropriate absolute constants $c_1,c_2$.
Now define $\Omega^{*} = G \times \Omega$.
Then $g \in G$ acts on $\Omega^{*}$ via
\[(h,\alpha)^{g} = (hg,\alpha^{g}).\]
Let $G^{*} \leq \Sym(\Omega^{*})$ be the permutation group obtained from the action of $G$ on the set $\Omega^{*}$.
It is easy to check that $G^{*}$ is semi-regular.
Also note that $|\Omega^{*}| \leq n^{\mathcal{O}(\log d)}$.
Of course we also need to transform the strings.
For a string $\mathfrak{x}\colon \Omega \rightarrow \Sigma$ define
$\mathfrak{x}^{*}\colon \Omega^{*} \rightarrow \Sigma\colon (h,\alpha)
\mapsto \mathfrak{x}(\alpha)$. 
Note that no information is lost in this transformation.
Indeed, it can be verified that two strings $\mathfrak{x},\mathfrak{y}$ are $G$-isomorphic if and only if $\mathfrak{x}^{*}$ is $G^{*}$-isomorphic to $\mathfrak{y}^{*}$.
So this gives us the desired reduction.

Next, let us consider the more interesting case that $G$ satisfies Property \ref{item:first-main-theorem-2} of Theorem \ref{thm:first-main-theorem}.
Let $N = \Soc(G)$.
Then, in a first step, we consider the set $\Omega^{*} = G/N \times \Omega$.
An element $g \in G$ acts on $\Omega^{*}$ via
\[(Nh,\alpha)^{g} = (Nhg,\alpha^{g}).\]
Let $G^{*} = G^{\Omega^{*}} \leq \Sym(\Omega^{*})$ denote the permutation group corresponding to the action of $G$ on $\Omega^{*}$.
Now the crucial observation is that $\mathfrak{B} = \{\{(Nh,\alpha) \mid \alpha \in \Omega\} \mid h \in G\}$ is a $G^{*}$-invariant partition.
For every $B \in \mathfrak{B}$, it holds that $\left(G^{*}\right)_B^{B}$ is permutationally equivalent to $N$, and the group $(G^{*})^{\mathfrak{B}}$ is regular.
Note that again $|\Omega^{*}| \leq n^{\mathcal{O}(\log d)}$.
Also, the strings can be transformed in the same way as before.
Hence, it remains to consider only the group $N$.

Finally, for an intuition on how the group $N$ is transformed suppose for simplicity that $N = A_m^{(t)}$.
The group $A_m$ has another action closely related to the action
$A_m^{(t)}$ on the $t$-element subsets of $[m]$, 
 namely the action on the set $[m]^{\langle t\rangle}$ of all $t$-tuples with distinct entries.
A crucial difference between these actions is that the action on the tuples is not primitive.
Indeed, fixing more and more coordinates, we get the following sequence of partitions.
For $i = 0,\dots,t$ let
\[\mathfrak{B}_i^{*} = \{\{(a_1,\dots,a_t) \in [m]^{\langle t\rangle}
  \mid \forall j \leq i\colon a_j = b_j\} \mid (b_1,\dots,b_i) \in [m]^{\langle i\rangle}\}.\]
Let $N^{*}$ be the action of $N$ on the set of ordered $t$-tuples with distinct entries.
For every $i \in [t]$ the partition $\mathfrak{B}_i^{*}$ is $N^{*}$-invariant and for every $B \in \mathfrak{B}_{i-1}^{*}$ it holds that $|\mathfrak{B}_{i}^{*}[B]| \leq m \leq d$.
Moreover, with every element $\bar a \in [m]^{\langle t\rangle}$ we can associate the underlying unordered set of elements.
This way, we can also transform the strings in a way similar to before.
Also note that the set $[m]^{\langle t\rangle}$ is only slightly larger than $\binom{m}{t}$ (cf.\ Lemma \ref{la:approx-binom}).

In the following we describe this reduction for general groups.
Analogous to this high level description we proceed in two steps.

\subsection{First Step}

\begin{theorem}
 \label{thm:reduction-one}
 Let $G \leq \Sym(\Omega)$ be a transitive $\ourgamma_d$-group and let $\mathfrak{x},\mathfrak{y}\colon\Omega\rightarrow\Sigma$ be two strings.
 Then there is a set $\Omega^{*}$, a $\ourgamma_d$-group $G^{*} \leq \Sym(\Omega^{*})$,
 two strings $\mathfrak{x}^{*},\mathfrak{y}^{*}\colon\Omega^{*}\rightarrow\Sigma$, and a sequence of partitions
 $\{\Omega^{*}\} = \mathfrak{B}_0^{*} \succ \dots \succ \mathfrak{B}_k^{*} = \{\{\alpha^{*}\} \mid \alpha^{*} \in \Omega^{*}\}$ of the set $\Omega^{*}$
 such that the following holds:
 \begin{enumerate}
  \item\label{item:reduction-step-one-1} $|\Omega^{*}| \leq n^{c_1 \log d + c_2 + 1}$ for some absolute
    constants $c_1,c_2$ where $n = |\Omega|$,
  \item $G^{*}$ is transitive,
  \item $\mathfrak{B}_i^{*}$ is $G^{*}$-invariant for all $i \in [k]$,
  \item $\mathfrak{x} \cong_G \mathfrak{y}$ if and only if $\mathfrak{x}^{*} \cong_{G^{*}} \mathfrak{y}^{*}$, and
  \item\label{item:reduction-step-one-5} for every $i \in [k]$ and $B \in \mathfrak{B}_{i-1}^{*}$ it holds that
   \begin{enumerate}
    \item $\left(G^{*}\right)_B^{\mathfrak{B}_{i}^{*}[B]}$ is semi-regular, or
    \item $\left(G^{*}\right)_B^{\mathfrak{B}_{i}^{*}[B]}$ is permutationally equivalent to $A_m^{(t)}$ for some $m \leq d$ and $t \leq \frac{m}{2}$  where $m > 4 \log s$ for $s = \binom{m}{t}$.
   \end{enumerate}  
 \end{enumerate}
 Moreover, one can compute all objects in time polynomial in the size of $\Omega^{*}$.
\end{theorem}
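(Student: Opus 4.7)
The approach is to follow the high-level idea outlined in the preceding subsection and implement the recursive blow-up construction precisely. I would proceed by induction on $n = |\Omega|$, with the trivial base case $n = 1$. For the inductive step, first compute a minimal block system $\mathfrak{B}$ of $G$ using Theorem \ref{thm:permutation-group-algorithm-library}, so that the quotient action $G^{\mathfrak{B}}$ is primitive on $\mathfrak{B}$. The construction then has two pieces --- a top-level blow-up exploiting the structure of the primitive group $G^{\mathfrak{B}}$, and a recursive treatment of the blocks themselves --- that are then glued together.

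For the top level I apply Theorem \ref{thm:first-main-theorem} to $G^{\mathfrak{B}}$. If item 1 of that theorem holds, so that $|G^{\mathfrak{B}}|$ is small, I would use a regular-style blow-up as in the high-level description, namely take the top to be (a transitive restriction of) $G^{\mathfrak{B}} \times \mathfrak{B}$ with $G^\mathfrak{B}$ acting via $(h, B)^g = (hg, B^g)$, so that the coarsest non-trivial partition is stabilized semi-regularly and condition 5(a) holds at the top. If item 2 holds, with socle $N$, I would use $(G^\mathfrak{B}/N) \times \mathfrak{B}$ and combine the regular outermost action of $G^\mathfrak{B}/N$ with the Johnson sequence from Theorem \ref{thm:first-main-theorem}(2)(c) on the remaining refinements, each step of which is permutationally equivalent to $A_m^{(t)}$ for some $m \leq d$ with $m > 4 \log \binom{m}{t}$, ensuring condition 5(b) at those levels.

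For the fibers I invoke the inductive hypothesis on the transitive action of $G_B$ on a representative block $B \in \mathfrak{B}$, obtaining $B^\dagger$, $(G_B)^\dagger \in \ourgamma_d$, and its own almost-Johnson partition sequence. Since $G$ is transitive on $\mathfrak{B}$, a single $B^\dagger$ works for every block by pulling through a transversal of $G_B$ in $G$. I then splice: at each singleton atom of the top-level sequence --- which corresponds canonically to a block $B \in \mathfrak{B}$ --- I install a copy of $B^\dagger$. This yields $|\Omega^*| = |\text{top blow-up}| \cdot |B^\dagger|$, with a partition sequence obtained by concatenating the top sequence with the recursion's sequence (the finest atoms of the top identify with the coarsest partition of each $B^\dagger$ copy). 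Strings are transported by pull-back, $\mathfrak{x}^*(\omega) := \mathfrak{x}(\pi(\omega))$ with $\pi$ the projection from $\Omega^*$ to the underlying original block-element, and an element-by-element check using the coset structure gives $\mathfrak{x} \cong_G \mathfrak{y} \iff \mathfrak{x}^* \cong_{G^*} \mathfrak{y}^*$.

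The main obstacle is the size bound, which forces the two blow-up factors to telescope correctly. Writing $k = |\mathfrak{B}|$, the top blow-up has at most $k^{c_1 \log d + c_2 + 1}$ elements in item 1, and a similar bound in item 2 once the factor $|G^{\mathfrak{B}}:N| \leq k^{1 + \log d}$ is combined with the extra factor $k$ coming from $\mathfrak{B}$; meanwhile the induction gives $|B^\dagger| \leq (n/k)^{c_1 \log d + c_2 + 1}$. Multiplying and using that $k^{c_1 \log d + c_2 + 1} \cdot (n/k)^{c_1 \log d + c_2 + 1} = n^{c_1 \log d + c_2 + 1}$ yields the required bound, provided $c_1, c_2$ are chosen large enough to absorb the overhead of item 2. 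The remaining checks --- that $G^* \in \ourgamma_d$ (Lemma \ref{la:gamma-d-closure}), that $G^*$ is transitive, and that the whole construction is polynomial-time (via Theorems \ref{thm:permutation-group-algorithm-library} and \ref{thm:first-main-theorem}) --- are routine.
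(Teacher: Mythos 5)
Your proposal follows the same route as the paper: use the structure theorem for primitive $\ourgamma_d$-groups, blow up by cosets of the socle (or of the trivial subgroup when the primitive quotient is small), splice in the Johnson refinements for the large case, and telescope the per-level blow-up factors $b^{c_1\log d + c_2}$ across the tower of block systems so that the total stays below $n^{c_1\log d + c_2 + 1}$. The paper presents this as a single iterative transformation (refining $\mathfrak{B}_\ell$ to $\mathfrak{B}_{\ell+1}$ and blowing up the whole current domain $\Omega^*$ at once), while you reorganize it as a genuine recursion on a representative block $B$ with a gluing step. That reorganization is natural and the arithmetic works out.

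However, as written there is a gap in the recursion: the inductive hypothesis you invoke is just the statement of the theorem, which only guarantees the existence of $B^\dagger$, $(G_B^B)^\dagger$, strings and partitions with the stated properties. That is not enough to glue. To "pull a single $B^\dagger$ through a transversal" and install copies at the atoms of the top-level blow-up, you need the recursion to return additional data: a group isomorphism $\varphi_B\colon G_B^B \to (G_B^B)^\dagger$ and a projection $\pi_B\colon B^\dagger \to B$ that is equivariant with respect to $\varphi_B$ (so that $\mathfrak{x}^\dagger$ is the pull-back of $\mathfrak{x}|_B$ and, crucially, so that conjugates $\sigma_{i\to 1}\, g\, \sigma_{1\to j}$ of elements of $G$ act on copies of $B^\dagger$ compatibly). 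The paper's proof implicitly maintains exactly this data along the iteration — it is what the elements $\sigma_{1\to i}$ and the explicit verification that $(N^{B_i}h,\alpha)^{(gg')} = ((N^{B_i}h,\alpha)^{g})^{g'}$ encode — but a blind recursive version has to strengthen the induction to carry the isomorphism and projection along, and then verify that the glued action is well-defined and that the fiber partitions remain invariant under the full stabilizers in $G^*$ (which works because those stabilizers project into $N$, as in the paper's argument). None of this is conceptually hard, but without the strengthened hypothesis the splice step does not go through from the theorem statement alone.
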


We remark that the constants required for Property \ref{item:reduction-step-one-1} are precisely the constants
from Theorem \ref{thm:first-main-theorem}, Option \ref{item:first-main-theorem-1}.

\begin{proof}
 Inductively, by changing the action of~$G$, we transform a sequence 
 $\{\Omega\} = \mathfrak{B}_0 \succ \dots \succ \mathfrak{B}_\ell$ for which the last partition is not discrete
 into a sequence $\{\Omega^*\} = \mathfrak{B}^*_0 \succ \dots \succ \mathfrak{B}^*_{k}$
 such that the blocks in~$\mathfrak{B}^*_{k}$ are smaller than the blocks in~$\mathfrak{B}_\ell$.

 More precisely, let $\{\Omega\} = \mathfrak{B}_0 \succ \dots \succ \mathfrak{B}_\ell$
 be sequence of $G$-invariant partitions such that Property
 \ref{item:reduction-step-one-5} holds for all $i \leq \ell$ with respect to the group $G$,
 i.e.\ for all $i \in [\ell]$ and $B \in \mathfrak{B}_{i-1}$ it holds that
 \begin{enumerate}
  \item[(A)] $G_B^{\mathfrak{B}_{i}[B]}$ is semi-regular, or
  \item[(B)] $G_B^{\mathfrak{B}_{i}[B]}$ is permutationally equivalent to $A_m^{(t)}$ for some $m \leq d$ and $t \leq \frac{m}{2}$  where $m > 4 \log s$ for $s = \binom{m}{t}$.
 \end{enumerate}  

 Let $q = |B|$ for some (and therefore every) $B \in \mathfrak{B}_\ell$ and suppose $q > 1$.
 We show that there is a set $\Omega^{*}$, a $\ourgamma_d$-group $G^{*} \leq \Sym(\Omega^{*})$,
 two strings $\mathfrak{x}^{*},\mathfrak{y}^{*}\colon\Omega^{*}\rightarrow\Sigma$ and a sequence of partitions
 $\{\Omega^{*}\} = \mathfrak{B}_0^{*} \succ \dots \succ \mathfrak{B}_k^{*}$ of the set $\Omega^{*}$
 and natural numbers $b,p$ with $b > 1$ and $b \cdot p = q$ such that the following holds:
 \begin{enumerate}
  \item[(i)]\label{item:i} $|\Omega^{*}| \leq n \cdot b^{c_1 \log d + c_2}$,
  \item[(ii)] $G^{*}$ is transitive,
  \item[(iii)] $\mathfrak{B}_i^{*}$ is $G^{*}$-invariant for all $i \in [k]$,
  \item[(iv)] $|B| \leq p$ for all $B \in \mathfrak{B}_k^{*}$,
  \item[(v)] $\mathfrak{x} \cong_G \mathfrak{y}$ if and only if $\mathfrak{x}^{*} \cong_{G^{*}} \mathfrak{y}^{*}$, and
  \item[(vi)]\label{item:vi} for every $i \in [k]$ and $B \in \mathfrak{B}_{i-1}^{*}$ it holds that
   \begin{enumerate}
    \item[(a)] $\left(G^{*}\right)_B^{\mathfrak{B}_{i}^{*}[B]}$ is semi-regular, or
    \item[(b)] $\left(G^{*}\right)_B^{\mathfrak{B}_{i}^{*}[B]}$ is permutationally equivalent to $A_m^{(t)}$ for some $m \leq d$ and $t \leq \frac{m}{2}$  where $m > 4 \log s$ for $s = \binom{m}{t}$.
   \end{enumerate}  
 \end{enumerate}
 Moreover, one can compute all objects in time polynomial in the size of $\Omega^{*}$.
 Then the statement of the theorem follows by a simple induction.
 
 Let $\mathfrak{B}_{\ell+1} \prec \mathfrak{B}_\ell$ be a $G$-invariant partition such that $G_B^{\mathfrak{B}_{\ell+1}[B]}$ is primitive for every $B \in \mathfrak{B}_\ell$.
 Note that such a partition can be computed in polynomial time by computing a maximal block $B'$ of the group $G_B^{B}$ (where $B \in \mathfrak{B}_{\ell}$ is arbitrary) and setting $\mathfrak{B}_{\ell+1} = \{(B')^{g} \mid g \in G\}$.
 Let $b = |\mathfrak{B}_{\ell+1}[B]|$ for some $B \in \mathfrak{B}_\ell$ and $p = q/b =|B'|$ for $B' \in \mathfrak{B}_{\ell+1}$.
 For $B \in \mathfrak{B}_\ell$ let $G^{B} = G_B^{\mathfrak{B}_{\ell+1}[B]}$ and let
 \begin{equation}
  N^{B} = \begin{cases}
           \Soc\left(G^{B}\right) &\text{if } \left|G^{B}\right| > b^{c_1 \log d + c_2} \text{, and}\\
           \{1\}                  &\text{otherwise.}
          \end{cases}
 \end{equation}
 Note that $N^{B} \unlhd G^{B}$.
 By Theorem \ref{thm:first-main-theorem} it holds that $|G^{B} :
 N^{B}| \leq b^{c_1 \log d + c_2}$.
 As described above the main idea is now to act on the set of cosets of $N^{B}$ in $G^{B}$.
 The main difficulty in defining this action is that there are multiple blocks each equipped with a set of cosets on which the action needs to be defined in a consistent manner. 
 Let \[\Omega^{*} = \bigcup_{B \in \mathfrak{B}_\ell} \big(\{N^{B}h \mid h \in G^{B}\} \times B\big).\]
 Note that $|\Omega^{*}| \leq n \cdot b^{c_1 \log d + c_2}$ and hence Property (i) holds.
 In order to define the action of $G$ on the set $\Omega^{*}$ we first fix a set of elements mapping the blocks in $\mathfrak{B}_\ell$ onto each other.
 Suppose $\mathfrak{B}_\ell = \{B_1,\dots,B_s\}$.
 For $i = 2,\dots,s$ let $\sigma_{1\rightarrow i} \in G$ such that $B_1^{\sigma_{1 \rightarrow i}} = B_i$ and let $\sigma_{1 \rightarrow 1} = 1$ (the identity element).
 Since the groups $G^{B}$ are defined on the domain $\mathfrak{B}_{\ell+1}[B]$ it is actually convenient to define group elements $\overline{\sigma_{1\rightarrow i}}$ over the same domain.
 Let \[\overline{\sigma_{1 \rightarrow i}} \coloneqq \left((\sigma_{1 \rightarrow i})^{\mathfrak{B}_{\ell+1}}\right)|_{\mathfrak{B}_{\ell+1}[B_1]}\]
 (that is, we consider the natural action of $\sigma_{1\rightarrow i}$ on $\mathfrak{B}_{\ell+1}$ and restrict the domain to the set $\mathfrak{B}_{\ell+1}[B_1]$).
 Note that the image of $\overline{\sigma_{1 \rightarrow i}}$ is precisely $\mathfrak{B}_{\ell+1}[B_i]$.
 Moreover, for $i,j \in [k]$ let \[\overline{\sigma_{i \rightarrow j}} \coloneqq \overline{\sigma_{1 \rightarrow i}}^{\,-1}\overline{\sigma_{1 \rightarrow j}}.\]
 Note that $\overline{\sigma_{i \rightarrow j}}^{-1} = \overline{\sigma_{j \rightarrow i}}$ and $\overline{\sigma_{i \rightarrow j}}\,\overline{\sigma_{j \rightarrow r}} = \overline{\sigma_{i \rightarrow r}}$.
 Additionally, for every element $g \in G$ we introduce a similar notation defining
 \[\overline{g_{(i)}} =
   \left(g^{\mathfrak{B}_{\ell+1}}\right)|_{\mathfrak{B}_{\ell+1}[B_i]}.\]
 Now the group $G$ acts on the set $\Omega^{*}$ via
 \[
   (N^{B_i}h,\alpha)^{g} \coloneqq
   ((N^{B_i}h)^{g},\alpha^{g})
 \]
 where $\alpha \in B_i$ and, choosing $j$ such that $B_j = B_i^{g}$,
 \[(N^{B_i}h)^{g} \coloneqq \overline{\sigma_{i\rightarrow j}}^{\,-1}(N^{B_i}h) \overline{g_{(i)}}\]
 To argue that this is well-defined we need to argue that~$(N^{B_i}h,\alpha)^{g} \in \Omega^*$. For this first note that
 \begin{align*}
  \overline{\sigma_{i\rightarrow j}}^{\,-1}(N^{B_i}h)\overline{g_{(i)}}
  &= \overline{\sigma_{i\rightarrow j}}^{\,-1}N^{B_i}\overline{\sigma_{i\rightarrow j}}\,\overline{\sigma_{i\rightarrow j}}^{\,-1}h\overline{g_{(i)}}\\
  &= N^{B_j}\overline{\sigma_{i\rightarrow j}}^{\,-1}h\overline{g_{(i)}}.
 \end{align*}
 Observe that $\overline{\sigma_{i\rightarrow j}}^{\,-1}h\overline{g_{(i)}} \in G^{B_j}$ and $\alpha^{g} \in B_j$. It follows that~$(N^{B_i}h,\alpha)^{g} \in \Omega^*$. 
 
 We also need to argue that we really defined an action of~$G$ on~$\Omega^*$.
 For this let $g,g' \in G$ and $\alpha \in B_i$.
 Pick $j,r \in [s]$ such that $B_i^{g} = B_j$ and $B_j^{g'} = B_r$.
 Note that $B_r = B_i^{(gg')}$.
 Then
 \begin{align*}
  \left(N^{B_i}h,\alpha\right)^{(gg')} &= \left(\overline{\sigma_{i\rightarrow r}}^{\,-1}\left(N^{B_i}h\right)\overline{(gg')_{(i)}},\alpha^{(gg')}\right)\\
                                       &= \left(\overline{\sigma_{j\rightarrow r}}^{\,-1}\overline{\sigma_{i\rightarrow j}}^{\,-1}\left(N^{B_i}h\right)\overline{g_{(i)}}\,\overline{g'_{(j)}},\left(\alpha^{g}\right)^{g'}\right)\\
                                       &= \left(\overline{\sigma_{i\rightarrow j}}^{\,-1}\left(N^{B_i}h\right)\overline{g_{(i)}},\alpha^{g}\right)^{g'}\\
                                       &= \left(\left(N^{B_i}h,\alpha\right)^{g}\right)^{g'}.
 \end{align*}
 Now let $G^{*} = G^{\Omega^{*}}$ be the induced action of $G$ on $\Omega^{*}$ (at this point $G^{*}$ may not be transitive, this is fixed later).
 Also, for $g \in G$, let $g^{*} = g^{\Omega^{*}}$.
 For a string $\mathfrak{x}\colon\Omega \rightarrow \Sigma$ we define the string $\mathfrak{x}^{*}\colon \Omega^{*}\rightarrow\Sigma\colon(N^{B}h,\alpha) \mapsto \mathfrak{x}(\alpha)$.
 
 \begin{claim}
  \label{claim:reduction-one-1}
  For every $g \in G$ it holds that $\mathfrak{x}^{g} = \mathfrak{y}$ if and only if $\left(\mathfrak{x}^{*}\right)^{g^{*}} = \mathfrak{y}^{*}$.
  \proof
  Let $g\in G$. For the forward direction, 
  suppose that $\mathfrak{x}^{g} = \mathfrak{y}$, that is, $\mathfrak{x}(\alpha) = \mathfrak{y}(\alpha^{g})$ for all $\alpha \in \Omega$.
  Then $\mathfrak{x}^{*}(N^{B}h,\alpha) = \mathfrak{x}(\alpha) = \mathfrak{y}(\alpha^{g}) =
  \mathfrak{y}^{*}((N^{B}h)^{g},\alpha^{g}) =
  \mathfrak{y}^{*}((N^{B}h,\alpha)^{g^{*}})$ and hence,
  $(\mathfrak{x}^{*})^{g^{*}} =
  \mathfrak{y}^{*}$.
  
  For the backward direction, suppose that $(\mathfrak{x}^{*})^{g^{*}} = \mathfrak{y}^{*}$.
  Let $\alpha \in \Omega$ and let $B \in \mathfrak{B}_\ell$ such that $\alpha \in B$ and let $h \in G^{B}$.
  Then $\mathfrak{x}(\alpha) = \mathfrak{x}^{*}(N^{B}h,\alpha) = \mathfrak{y}^{*}((N^{B}h,\alpha)^{g^{*}})
  = \mathfrak{y}^{*}((N^{B}h)^{g},\alpha^{g}) = \mathfrak{y}(\alpha^{g})$.
  So $\mathfrak{x}^{g} = \mathfrak{y}$.
  \uend
 \end{claim}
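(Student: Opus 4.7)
The plan is to prove Claim~\ref{claim:reduction-one-1} by direct unfolding of definitions, exploiting two simple observations: first, the lifted string $\mathfrak{x}^{*}$ depends only on the second coordinate (it is obtained by pulling back $\mathfrak{x}$ along the projection $\Omega^{*} \to \Omega$), and second, the definition of $g^{*}$ shows that on the second coordinate it acts by~$g$ on~$\Omega$. So the equivalence between $G$-isomorphism and $G^{*}$-isomorphism should reduce to a pointwise identity after restricting to a single coset slice.

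For the forward implication, I would assume $\mathfrak{x}^{g} = \mathfrak{y}$, which (by the paper's convention $\mathfrak{x}^{\sigma}(\alpha) = \mathfrak{x}(\alpha^{\sigma^{-1}})$) is equivalent to $\mathfrak{x}(\alpha) = \mathfrak{y}(\alpha^{g})$ for all $\alpha \in \Omega$. Then for an arbitrary $(N^{B}h,\alpha) \in \Omega^{*}$ I would compute
\[
\mathfrak{x}^{*}(N^{B}h,\alpha) \;=\; \mathfrak{x}(\alpha) \;=\; \mathfrak{y}(\alpha^{g}) \;=\; \mathfrak{y}^{*}\bigl((N^{B}h)^{g},\alpha^{g}\bigr) \;=\; \mathfrak{y}^{*}\bigl((N^{B}h,\alpha)^{g^{*}}\bigr),
\]
where the first and third equalities are the definition of $\mathfrak{x}^{*},\mathfrak{y}^{*}$, the second is the hypothesis, and the fourth is the definition of the action of $g^{*}$ on $\Omega^{*}$. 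This gives $(\mathfrak{x}^{*})^{g^{*}} = \mathfrak{y}^{*}$.

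For the backward direction, I would assume $(\mathfrak{x}^{*})^{g^{*}} = \mathfrak{y}^{*}$ and fix an arbitrary $\alpha \in \Omega$. Letting $B \in \mathfrak{B}_{\ell}$ be the block containing $\alpha$ and choosing any $h \in G^{B}$ (say $h = 1$), the element $(N^{B}h,\alpha)$ lies in $\Omega^{*}$ and the hypothesis at this point collapses to $\mathfrak{x}(\alpha) = \mathfrak{y}(\alpha^{g})$ after again plugging in the definitions of $\mathfrak{x}^{*},\mathfrak{y}^{*}$ and of $g^{*}$. Since $\alpha$ was arbitrary, this yields $\mathfrak{x}^{g} = \mathfrak{y}$.

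There is no real obstacle: the claim is purely a bookkeeping statement, and the only care needed is to make sure one does not use the coset coordinate anywhere, since $\mathfrak{x}^{*}$ is constant along it. The well-definedness of the action $g \mapsto g^{*}$ has already been verified earlier in the proof, so no group-theoretic complications enter the argument for this claim.
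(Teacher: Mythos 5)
Your proof is correct and follows essentially the same approach as the paper: both directions unfold the definitions of $\mathfrak{x}^{*},\mathfrak{y}^{*}$ and of the action of $g^{*}$, using that $\mathfrak{x}^{*}$ ignores the coset coordinate. The only cosmetic difference is that you specialize $h$ to the identity in the backward direction, which the paper leaves arbitrary; nothing substantive changes.
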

 Next, we define the desired sequence of partitions.
 For $i \in [\ell]$ let
 \[\mathfrak{B}_i^{*} = \left\{\bigcup_{B \in \mathfrak{B}_\ell\colon B \subseteq B'} \left\{N^{B}h \mid h \in G^{B}\right\} \times B \;\middle|\; B' \in \mathfrak{B}_i\right\}.\]
 It is easy to check that $\mathfrak{B}_i^{*}$ is $G^{*}$-invariant for all $i \in [\ell]$.
 Moreover, $G^{\mathfrak{B}_\ell}$ is permutationally equivalent to
 $\left(G^{*}\right)^{\mathfrak{B}_\ell^{*}}$ via the permutational isomorphism
 $f\colon \mathfrak{B}_\ell \rightarrow \mathfrak{B}_\ell^{*}$ where
 \[f(B) = \{N^{B}h \mid h \in G^{B}\} \times B\]
 for all $B \in \mathfrak{B}_\ell$.
 As a result $G^{\mathfrak{B}_i}$ is permutationally equivalent to
 $\left(G^{*}\right)^{\mathfrak{B}_i^{*}}$ for all $i \in [\ell]$.
 In particular, Property (vi) holds for all $i \in [\ell]$.
 
 Now we distinguish two cases. First suppose $\left|G^{B}\right| \leq b^{c_1 \log d + c_2}$ where $b = |\mathfrak{B}_{\ell+1}[B]|$ for some (and therefore every) $B \in \mathfrak{B}_\ell$.
 Recall that $N^{B} = \{1\}$ in this case.
 Let
 \[\mathfrak{B}_{\ell+1}^{*} = \left\{\{N^{B}h\} \times B' \mid h \in G^{B}, B \in \mathfrak{B}_\ell, B' \in \mathfrak{B}_{\ell+1} \text{ with } B' \subseteq B\right\}\]
 and set $k = \ell + 1$.
 Clearly, $\mathfrak{B}_{\ell+1}^{*}$ is $G^{*}$-invariant and $|B^*| \leq p$ for all $B^* \in \mathfrak{B}_{\ell+1}^{*}$.
 Now consider the group
 \[\left(G^{*}\right)^{B^*} = \left(G^{*}\right)_{B^*}^{\mathfrak{B}_{\ell+1}^{*}[B^*]}\]
 for $B^* \in \mathfrak{B}_\ell^{*}$.
 It is easy to check that $\left(G^{*}\right)^{B^*}$ is permutationally equivalent to $G^{B} = G_{B}^{\mathfrak{B}_{\ell+1}[B]}$
 with its natural action on the set $G^{B} \times \mathfrak{B}_{\ell+1}[B]$ (acting regularly on the first component) where $B^{*} = \left\{N^{B}h \mid h \in G^{B}\right\} \times B$ and $B \in \mathfrak{B}_\ell$.
 Hence, $\left(G^{*}\right)^{B^*}$ is semi-regular.
 So it only remains to argue that (ii) holds.
 Indeed, the group $G^{*}$ is not necessarily transitive.
 Let $A^{*} \subseteq \Omega^{*}$ be an orbit of $G^{*}$.
 Then, by restricting all partitions and the two strings to $A^{*}$ the group $\left(G^{*}\right)^{A^*}$ satisfies all required properties.
 This is trivial for all properties but (v).
 For Property (v) note that $G$ is transitive.
 So for every $\alpha \in \Omega$ there is some element $a \in A^{*}$ whose second component is $\alpha$.
 This is all we need to prove a variant of Claim~\ref{claim:reduction-one-1} where we restrict the strings and the group to the set $A^{*}$.
 
 In the other case $\left|G^{B}\right| > b^{c_1 \log d + c_2}$ and $N^{B} = \Soc(G^{B})$.
 We consider the block $B_1 \in \mathfrak{B}_\ell$ (recall that in the beginning of the proof we fixed a numbering of the blocks in $\mathfrak{B}_\ell$ and elements $\sigma_{1 \rightarrow i}$ mapping the first block to the $i$-th block).
 By Theorem \ref{thm:first-main-theorem} there is a sequence of partitions
 $\{\mathfrak{B}_{\ell+1}[B_1]\} = \mathcal{P}_0 \succ \dots \succ \mathcal{P}_t = \{\{B'\} \mid B' \in \mathfrak{B}_{\ell+1}[B_1]\}$ such that
 \begin{enumerate}
  \item[(I)] $\mathcal{P}_i$ is $N^{B_1}$-invariant for every $i \in [t]$, and
  \item[(II)] there are $m \leq d$ and $t \leq \frac{m}{2}$ with $m > 4 \log s$ where $s = \binom{m}{t}$ such that for all $i \in [t]$ and $P \in \mathcal{P}_{i-1}$
        the group $\left(N^{B_1}\right)_P^{\mathcal{P}_{i}[P]}$ is permutationally equivalent to $A_m^{(t)}$.
 \end{enumerate}
 Let
 \[\mathfrak{B}_{\ell+1}^{*} = \left\{\{N^{B}h\} \times B \mid h \in G^{B}, B \in \mathfrak{B}_\ell\right\}\]
 and for $i \in [t]$ let
 \[\mathfrak{B}_{\ell+1+i}^{*} = \left\{\{N^{B_j}h\} \times
     \left(\bigcup_{B' \in P} (B')^{\sigma_{1\rightarrow j}}\right)
     \middle| h \in G^{B_j}, j \in [s], P \in \mathcal{P}_i\right\}.\]
 We set $k = \ell + 1 + t$.
 First note that $|B^{*}| \leq p$ for every $B^{*} \in \mathfrak{B}_k^{*}$.
 We argue that $\mathfrak{B}_{\ell+1+i}^{*}$ is a $G^{*}$-invariant partition for $i \in [t]$.
 Let $B^{*} \in \mathfrak{B}_{\ell+1+i}^{*}$ and $g^{*} \in G^{*}$ such that $(B^{*})^{g^{*}} \cap B^{*} \neq \emptyset$.
 Let $g \in G$ be the element corresponding to $g^*$ and suppose
 \[B^*=\{N^{B_j}h\} \times \left(\bigcup_{B' \in P} (B')^{\sigma_{1\rightarrow j}}\right),\]
 and let $B=B_j$
 Due to the action on the first component (i.e.\ the action on $N^Bh$)
 we conclude that $g \in G_B$ and $g^{\mathfrak{B}_{\ell+1}[B]} \in N^{B}$.
 Since $\mathcal{P}_i$ is $N^{B_1}$-invariant we conclude
 that $(B^{*})^{g^{*}} = B^{*}$.
 Hence, $\mathfrak{B}_{\ell+1+i}^{*}$ is a $G^{*}$-invariant partition.
 
 Moreover, $(G^{*})_{B^{*}}^{\mathfrak{B}^{*}_{\ell+1}[B^{*}]}$ is semi-regular for every $B^{*} \in \mathfrak{B}^{*}_\ell$
 and, for every $i \in [t]$ and every $B^{*} \in
 \mathfrak{B}^{*}_{\ell+i}$, the group
 $(G^{*})_{B^{*}}^{\mathfrak{B}^{*}_{\ell+1+i}[B^{*}]}$ is
 permutationally equivalent to
 $\left(N^{B_1}\right)_P^{\mathcal{P}_{i}[P]}$.
 To see this first observe that $\mathfrak{B}_{\ell+1+t}^{*} = \{\{N^{B}h\}\times B' \mid B \in \mathfrak{B}_\ell, B' \in \mathfrak{B}_{\ell+1},B' \subseteq B, h \in G^{B}\}$.
 Let $B^{*} \in \mathfrak{B}_{\ell+1}^{*}$ and suppose $B^{*} = \{N^{B}h\} \times B$ where $B \in \mathfrak{B}_\ell$.
 For every $g^{*} \in (G^{*})_{B^{*}}$ it holds that $(N^{B}h)^{g} = N^{B}h$ where $g$ is the element corresponding to $g^{*}$ and thus, $g^{\mathfrak{B}_{\ell+1}[B]} \in N^{B}$.
 Hence, $(G^{*})_{B^{*}}^{\mathfrak{B}_{\ell+1+t}^{*}[B^{*}]}$ is permutationally equivalent to $N^{B}$.
 Translating the sequence of partitions $\mathcal{P}_1,\dots,\mathcal{P}_t$ for $N^{B}$ back then gives the sequence of partitions described above.
 
 Finally, as in the previous case, if the group $G^{*}$ is not transitive we restrict the group (along with strings $\mathfrak{x}^{*}$ and $\mathfrak{y}^{*}$) to one of its orbits.
\end{proof}

\subsection{Second Step}

For the second step we require the following auxiliary lemmata. The first one is implicitly given in \cite[Section 4]{BLS87}.

\begin{lemma}[\cite{BLS87}]
 \label{la:compute-johnson-representation}
 Let $G \leq \Sym(\Omega)$ and suppose $G$ is permutationally equivalent to $A_{m}^{(t)}$ or $S_m^{(t)}$ for $m > 4 \log n$ and $t \leq \frac{m}{2}$.
 Then a permutational isomorphism $\rho\colon\Omega \rightarrow \binom{[m]}{t}$
 can be computed in polynomial time.
\end{lemma}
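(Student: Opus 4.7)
The plan is to first determine $m$ and $t$, then constructively recognize $G$ in its natural degree-$m$ action on $[m]$, and finally read off the $t$-subset associated with each $\alpha\in\Omega$ from the orbit partition of its stabilizer.

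First I would compute $|G|$ via Theorem~\ref{thm:permutation-group-algorithm-library}. Since $G$ is permutationally equivalent to $A_m^{(t)}$ or $S_m^{(t)}$, we have $|G|\in\{m!/2,m!\}$ and $n=\binom{m}{t}$, and together with $t\leq m/2$ and $m>4\log n$ this uniquely determines the pair $(m,t)$. The value of $m$ can be recovered by scanning the few candidates compatible with $|G|$, and then $t$ is the unique integer in $[1,m/2]$ with $\binom{m}{t}=n$.

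The main task is constructive recognition: produce in polynomial time an abstract isomorphism $\psi\colon G\to G^{\natural}$ from $G\leq \Sym(\Omega)$ to an explicit copy $G^{\natural}$ of $A_m$ or $S_m$ acting naturally on $[m]$, specified by the $\psi$-images of a given generating set of $G$. This is precisely the subroutine provided by~\cite{BLS87}: by sampling in $G$ and analyzing cycle structure on $\Omega$ one locates elements whose $\psi$-images are standard generators of $A_m$ or $S_m$ (a $3$-cycle and a long cycle). The hypothesis $m>4\log n$ is what makes this sampling succeed in polynomial time, since it ensures that short-support elements of the natural $A_m$ still have a sufficiently distinctive cycle structure under the unknown faithful degree-$n$ representation. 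Once $\psi$ is known on generators, $\psi(g)$ for arbitrary $g\in G$ is obtained by a constructive membership test in $G^{\natural}$ using Theorem~\ref{thm:permutation-group-algorithm-library}.

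Given $\psi$, for every $\alpha\in\Omega$ I compute the stabilizer $G_\alpha$ and its image $\psi(G_\alpha)\leq G^{\natural}$. Under the sought permutational isomorphism $\rho$, this image is the setwise stabilizer of $\rho(\alpha)$, so it has exactly two orbits on $[m]$, of sizes $t$ and $m-t$. When $t<m/2$ I set $\rho(\alpha)$ to be the unique size-$t$ orbit; $G$-equivariance is automatic because conjugation by $\psi(g)$ permutes these orbits of $\psi(G_\alpha)$ consistently with $G_{\alpha^g}=g^{-1}G_\alpha g$. When $t=m/2$ the two orbits have the same size and cannot be told apart from $G_\alpha$ alone, so I fix an arbitrary $\alpha_0\in\Omega$ and an arbitrary choice of one of its two orbits as $\rho(\alpha_0)$, and for every other $\alpha$ I find some $g\in G$ with $\alpha_0^g=\alpha$ (by an orbit computation) and put $\rho(\alpha):=\rho(\alpha_0)^{\psi(g)}$; this is independent of the coset representative $g\,G_{\alpha_0}$ because $\rho(\alpha_0)$ is setwise fixed by $\psi(G_{\alpha_0})$.

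The substantive obstacle is the constructive recognition step: all remaining work is stabilizer/orbit bookkeeping covered by Theorem~\ref{thm:permutation-group-algorithm-library}, and the hypothesis $m>4\log n$ is consumed precisely in invoking the algorithm of~\cite{BLS87}.
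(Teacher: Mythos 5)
The paper itself offers no proof of this lemma; it attributes the statement to Section~4 of \cite{BLS87} with the qualifier ``implicitly,'' so there is no proof in the paper to compare your argument against. What can be assessed is whether your sketch is a correct route to the claim. The overall strategy is sound: determine $(m,t)$ from $|G|$ and $n$; constructively recognize $G$ via an abstract isomorphism $\psi\colon G\to G^{\natural}$ onto $A_m$ or $S_m$ in its natural degree-$m$ action; and then read off $\rho(\alpha)$ as the size-$t$ orbit of $\psi(G_\alpha)$ on $[m]$. The $G$-equivariance argument for the resulting $\rho$ is correct. Two things you should make explicit, though. First, the key step ``$\psi(G_\alpha)$ is the setwise stabilizer of a $t$-set'' is not automatic: it uses that $\psi$ differs from the canonical isomorphism only by an automorphism of $A_m$ (resp.\ $S_m$) that preserves the conjugacy class of $t$-set stabilizers, which holds because every such automorphism is conjugation by an element of $S_m$ when $m\neq 6$. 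The hypothesis $m>4\log n$ rules out $m=6$, but you should say so, since for $m=6$ (with $t=1$) the outer automorphism of $S_6$ would break the argument. Second, the entire $t=m/2$ paragraph is vacuous: since $\binom{m}{m/2}\geq 2^m/(m+1)$, the inequality $m>4\log\binom{m}{m/2}$ would force $4\log(m+1)>3m$, which fails for every $m\geq 2$; the hypothesis already forces $t<m/2$ strictly.

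The part I would not endorse as written is your account of what \cite{BLS87} actually provides and why $m>4\log n$ matters. \cite{BLS87} is a paper on NC algorithms for permutation groups, which makes a randomized ``sample elements and inspect cycle structure'' subroutine an unlikely description of its method, and the claim that $m>4\log n$ ``ensures that short-support elements of the natural $A_m$ still have a sufficiently distinctive cycle structure'' is an invented rationale rather than something extracted from the reference. The more plausible reading of \cite{BLS87} (and a more self-contained route) is combinatorial: compute the $t+1$ orbitals of $G$ on $\Omega^2$ (which already yields $t$, hence $m$), identify the Johnson-graph orbital, and recover the elements of $[m]$ from the maximal-clique / flag structure of that graph; the hypothesis $m>4\log n$ serves mainly to place $G$ in the large ``Cameron group'' regime where this recovery is unambiguous and fast. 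Since the lemma is itself a citation, this is an issue of exposition and accurate attribution rather than a logical gap, but a reader cannot verify from your text that the black box you invoke is actually the one \cite{BLS87} supplies.
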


\begin{lemma}
 \label{la:permutational-automorphisms-johnson-action}
 Let $m \geq 7$ and suppose $\gamma \in \Sym(\binom{[m]}{t})$ is a permutational automorphism of $A_m^{(t)}$.
 Then $\gamma$ is induced by a unique permutation $\sigma \in S_m$, that is, $X^{\gamma} = X^{\sigma} = \{x^{\sigma} \mid x \in X\}$ for every $X \in \binom{[m]}{t}$.
\end{lemma}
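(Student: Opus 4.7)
The plan is to translate the permutational statement into a statement about abstract automorphisms of $A_m$, exploiting the classical fact that $\Aut(A_m)=S_m$ for $m\geq 7$. Since $\gamma$ is a permutational automorphism, conjugation by $\gamma$ maps $A_m^{(t)}$ to itself and so induces an abstract group automorphism $\phi$ of $A_m^{(t)}$. Because the action of $A_m$ on $\binom{[m]}{t}$ is faithful for $1\leq t\leq m-1$, the group $A_m^{(t)}$ is abstractly isomorphic to $A_m$, and $\phi$ can be viewed as an element of $\Aut(A_m)$. The hypothesis $m\geq 7$ enters crucially here, excluding $m=6$: for $m\geq 7$ we have $\Aut(A_m)=S_m$, so there exists some $\sigma\in S_m$ such that $\phi$ is conjugation by $\sigma$.

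Let $\sigma^{(t)}\in\Sym(\binom{[m]}{t})$ denote the permutation induced by $\sigma$ on $t$-subsets. Then both $\gamma$ and $\sigma^{(t)}$ realize the same conjugation action on $A_m^{(t)}$, so $\gamma\cdot(\sigma^{(t)})^{-1}$ lies in the centralizer $C:=C_{\Sym(\binom{[m]}{t})}(A_m^{(t)})$. The main technical step is then to show $C=\{1\}$. If $c\in C$ and $X\in\binom{[m]}{t}$, then $c$ is $A_m$-equivariant, so $\Stab_{A_m^{(t)}}(c(X))=\Stab_{A_m^{(t)}}(X)$. This stabilizer equals $A_m\cap(\Sym(X)\times\Sym([m]\setminus X))$, and its orbits on $[m]$ are precisely $X$ and $[m]\setminus X$: indeed, it contains (up to a parity adjustment when $|X|\leq 2$) the alternating groups on $X$ and on $[m]\setminus X$, both of which act transitively as long as $m\geq 7$. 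When $t<m/2$ the smaller orbit recovers $X$ uniquely, forcing $c(X)=X$ for all $X$, so $c=1$ and hence $\gamma=\sigma^{(t)}$.

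Uniqueness of $\sigma$ is immediate from the faithfulness of the $S_m$-action on $\binom{[m]}{t}$ for $1\leq t\leq m-1$: two elements $\sigma_1,\sigma_2\in S_m$ inducing $\gamma$ satisfy $\sigma_1\sigma_2^{-1}=1$. The main obstacle is the centralizer calculation, and in particular the reconstruction of $X$ from its setwise stabilizer, which requires the strict inequality $t<m/2$. The case $t=m/2$ is a genuine exception because complementation $X\mapsto[m]\setminus X$ is a permutational automorphism of $A_m^{(m/2)}$ that is not induced by any $\sigma\in S_m$; however, this case does not arise in the applications of the lemma, since the paper invokes it only when $m>4\log\binom{m}{t}$, a condition that forces $t$ to be bounded strictly away from $m/2$.
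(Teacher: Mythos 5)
Your proof is correct and follows essentially the same route as the paper: pass to the induced abstract automorphism of $A_m$ and invoke $\Aut(A_m)\cong S_m$ for $m\geq 7$, with your centralizer computation simply making explicit the step the paper leaves implicit (that a non-trivial permutational automorphism induces a non-trivial abstract automorphism). Your remark about $t=m/2$ is also accurate: complementation is a genuine exception to the statement as literally written, and the tacit assumption $t<m/2$ is indeed harmless in the paper, since the lemma is only invoked under the condition $m>4\log\binom{m}{t}$, which forces $t<m/4$.
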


\begin{proof}
 Every non-trivial permutational automorphism of $A_m^{(t)}$ gives a unique non-trivial automorphism of $A_m$ (as an abstract group) and every element $\sigma \in S_m$ induces a permutational automorphism of $A_m^{(t)}$.
 Since $\Aut(A_m) \cong S_m$ (for $m \geq 7$) the statement follows.
\end{proof}

\begin{theorem}
 \label{thm:reduction-two}
 Let $G \leq \Sym(\Omega)$ be a transitive $\ourgamma_d$-group and let $\mathfrak{x},\mathfrak{y}\colon\Omega\rightarrow\Sigma$ be two strings.
 Then there is a set $\Omega^{*}$, a $\ourgamma_d$-group $G^{*} \leq \Sym(\Omega^{*})$,
 two strings
 $\mathfrak{x}^{*},\mathfrak{y}^{*}\colon\Omega^{*}\rightarrow\Sigma$
 and a~$G^{*}$-invariant almost \emph{$d$-ary} sequence of partitions
 $\{\Omega^{*}\} = \mathfrak{B}_0^{*} \succ \dots \succ \mathfrak{B}_k^{*} = \{\{\alpha^{*}\} \mid \alpha^{*} \in \Omega^{*}\}$ of the set $\Omega^{*}$
 such that the following holds:
 \begin{enumerate}
  \item\label{item:reduction-step-two-1} $|\Omega^{*}| \leq n^{(c_1 \log d + c_2 + 1)\log d}$, and
  \item $\mathfrak{x} \cong_G \mathfrak{y}$ if and only if $\mathfrak{x}^{*} \cong_{G^{*}} \mathfrak{y}^{*}$.
 \end{enumerate}
 Moreover, one can compute all objects in time polynomial in the size of $\Omega^{*}$.
\end{theorem}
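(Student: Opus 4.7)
The plan is to iteratively unfold each Johnson level produced by Theorem~\ref{thm:reduction-one}, replacing the primitive Johnson action $A_m^{(t)}$ on $t$-subsets by its natural lift to the action on ordered $t$-tuples with distinct entries $[m]^{\langle t\rangle}$. This lift is intrinsically imprimitive: successively fixing the first $j$ coordinates for $j=1,\dots,t$ yields $t$ refining block systems, each of index $m-j+1\le d$ in the previous one. Inserting these $t$ block systems in place of the single Johnson level, and leaving the semi-regular levels untouched, produces a sequence that is almost $d$-ary in the sense of the definition above.

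Concretely, I would first apply Theorem~\ref{thm:reduction-one} to the input $(G,\mathfrak{x},\mathfrak{y})$ and obtain an equivalent instance $(G^*,\mathfrak{x}^*,\mathfrak{y}^*)$ together with a sequence of partitions $\{\Omega^*\}=\mathfrak{B}_0^*\succ\dots\succ\mathfrak{B}_k^*$ satisfying Property~\ref{item:reduction-step-one-5}. Then, proceeding level by level, for each level $i$ at which $(G^*)_B^{\mathfrak{B}_i^*[B]}$ is permutationally equivalent to some $A_m^{(t)}$ (with $m\le d$, $t\le m/2$, $m>4\log s$ for $s=\binom{m}{t}$), I would use Lemma~\ref{la:compute-johnson-representation} to compute, for a fixed reference parent block $B_1\in\mathfrak{B}_{i-1}^*$, a permutational isomorphism $\rho\colon\mathfrak{B}_i^*[B_1]\to\binom{[m]}{t}$. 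Mirroring the coset-style bookkeeping in the proof of Theorem~\ref{thm:reduction-one}, I would fix transversal elements $\sigma_{1\to j}$ mapping $B_1$ to each other parent block $B_j$ at the same level, so that the identification of children with $\binom{[m]}{t}$ transports consistently across parent blocks.

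Replacing each child block with its $t!$ orderings, I obtain a new domain in which each point carries, in addition to its original identity, an ordering of the Johnson $t$-subset sitting above it (and analogously at each Johnson ancestor along its chain). For a group element $g\in G^*$ moving one parent block to another, Lemma~\ref{la:permutational-automorphisms-johnson-action} is decisive: the induced permutational automorphism of $A_m^{(t)}$ is realised by a unique $\sigma\in S_m$, which dictates how $g$ should act on the ordering coordinate and thereby makes the action on the new domain well-defined. The strings are then extended by setting $\mathfrak{x}^{**}(\bar\alpha):=\mathfrak{x}^*(\alpha)$, where $\alpha$ is the base coordinate of $\bar\alpha$; since the base action is unchanged, an argument parallel to Claim~1 in the proof of Theorem~\ref{thm:reduction-one} yields $\mathfrak{x}\cong_G\mathfrak{y}\iff\mathfrak{x}^{**}\cong_{G^{**}}\mathfrak{y}^{**}$. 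To insert the intermediate levels, I declare that two new points are equivalent at the $j$-th inserted sublevel iff they project to the same block below level $i-1$ and their Johnson orderings agree in the first $j$ entries; this is a $G^{**}$-invariant equipartition, and the induced action at each such sublevel has index at most $m\le d$.

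For the size bound, each Johnson unfolding inflates the local fibre by a factor of $t!$. Lemma~\ref{la:approx-binom} applied with $k=t$ yields $t!\le m^t\le\binom{m}{t}^{\log m}\le s^{\log d}$, using $m\le d$. Multiplying over all Johnson levels along any leaf chain, and noting that the corresponding child counts already multiply to at most $|\Omega^*|$, the total blowup is bounded by $|\Omega^*|^{\log d}$, so that $|\Omega^{**}|\le|\Omega^*|^{1+\log d}\le n^{(c_1\log d+c_2+1)\log d}$ after absorbing constants in $c_1$ and $c_2$. The main obstacle will be showing that the unfolded action is truly well-defined and consistent across the many parent blocks at each Johnson level; this is exactly what Lemma~\ref{la:permutational-automorphisms-johnson-action} supplies, by canonically lifting a block-level permutational automorphism of the Johnson action to a genuine permutation of~$[m]$ acting on orderings.
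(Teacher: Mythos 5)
Your approach is essentially the paper's own: apply Theorem~\ref{thm:reduction-one}, then unfold each Johnson level by replacing $t$-subsets with ordered $t$-tuples, which is exactly what the paper does by inserting subset lattices $\binom{[m_i]}{\le t_i}$ into a ``partition graph'' $\Gamma$ and taking maximal branches as the new domain. Your intermediate sublevel partitions (agree on the first $j$ entries of the ordering) coincide with the paper's $\mathfrak{B}_i^{*}$ defined by initial segments of branches, and invoking Lemma~\ref{la:permutational-automorphisms-johnson-action} to lift the block-level action to the ordering coordinate is the same key step. The one cosmetic difference is that you transport a single reference identification $\rho$ via transversal elements $\sigma_{1\to j}$, whereas the paper computes independent permutational isomorphisms $\rho_{i,B}$ for each parent block $B$ and then relies on the full strength of Lemma~\ref{la:permutational-automorphisms-johnson-action} (applied to an arbitrary permutational automorphism, not just one in $A_m^{(t)}$) to patch them together; both are valid.

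One small arithmetic correction: bounding $t!\le m^t\le s^{\log d}$ and concluding $|\Omega^{**}|\le|\Omega^*|^{1+\log d}$ overshoots by a factor of $|\Omega^*|$, and $|\Omega^*|^{1+\log d}$ is $n^{(c_1\log d+c_2+1)(1+\log d)}$, not $n^{(c_1\log d+c_2+1)\log d}$ as stated. The tighter bound $t!\le m^t/\binom{m}{t}\le\binom{m}{t}^{\log m - 1}\le s^{\log d -1}$ (using Lemma~\ref{la:approx-binom}) gives $\prod_{i\in I} t_i!\le\bigl(\prod_{i\in I}\binom{m_i}{t_i}\bigr)^{\log d -1}\le|\Omega^*|^{\log d -1}$ and hence $|\Omega^{**}|\le|\Omega^*|^{\log d}$, matching the theorem with the same constants $c_1,c_2$ as Theorem~\ref{thm:first-main-theorem} and without any ``absorption''. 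You should also verify explicitly that the non-Johnson (semi-regular) levels remain semi-regular after blowing up the domain, which holds because the induced action on the new partition is permutationally equivalent to a subgroup of the old one.
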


\begin{proof}
 By Theorem \ref{thm:reduction-one} we can assume that there is a sequence of $G$-invariant partitions $\{\Omega\} = \mathfrak{B}_0 \succ \dots \succ \mathfrak{B}_\ell = \{\{\alpha\} \mid \alpha \in \Omega\}$
 such that for every $i \in [\ell]$ and $B \in \mathfrak{B}_{i-1}$ it holds that
 \begin{enumerate}
  \item[(A)] $G_B^{\mathfrak{B}_{i}[B]}$ is semi-regular, or
  \item[(B)] $G_B^{\mathfrak{B}_{i}[B]}$ is permutationally equivalent to $A_m^{(t)}$ for some $m \leq d$ and $t \leq \frac{m}{2}$ where $m > 4 \log s$ for $s = \binom{m}{t}$.
 \end{enumerate}
 (Actually, using Theorem \ref{thm:reduction-one}, the above condition can only be achieved by increasing the size of the set $\Omega$ as described in Theorem \ref{thm:reduction-one}, Property \ref{item:reduction-step-one-1}.
 We argue that under the above assumption the set $\Omega^{*}$ constructed in this proof
 has size at most $n^{\log d}$ which in combination with Theorem
 \ref{thm:reduction-one} results in the desired bound given in
 \ref{item:reduction-step-two-1}.)
 
 In order to get almost~$d$-arity,
 we need to worry about those blocks that satisfy item (B).
 Let \[I = \left\{i \in [\ell] \mid \exists B \in \mathfrak{B}_{i-1} \colon G_B^{\mathfrak{B}_{i}[B]} \text{ is permutationally equivalent to } A_{m_i}^{(t_i)}\right\}.\]
 Note that for $B,B' \in \mathfrak{B}_{i-1}$ the groups $G_B^{\mathfrak{B}_{i}[B]}$ and $G_{B'}^{\mathfrak{B}_{i}[B']}$ are permutationally equivalent.
 So the existential quantifier in the definition of the set $I$ can also be replaced by a universal quantifier.
 
 For $i \in I$ and $B \in \mathfrak{B}_{i-1}$ let $\rho_{i,B}\colon \mathfrak{B}_{i}[B] \rightarrow \binom{[m_i]}{t_i}$ be a permutational isomorphism from $G_B^{\mathfrak{B}_{i}[B]}$ to $A_{m_i}^{(t_i)}$.
 Note that such a $\rho_{i,B}$ can be computed in polynomial time by Lemma \ref{la:compute-johnson-representation}.
 Let $\Gamma = (V(\Gamma),E(\Gamma))$ be the directed graph with
 \[
  V(\Gamma) =\bigcup_{i \in \{0,\dots,\ell\}} \mathfrak{B}_i
              \cup\left\{(i,B,X) \mid i \in I, B \in \mathfrak{B}_{i-1}, X \in \binom{[m_i]}{\leq t_i}\right\}
 \]
 and
 \begin{align*}
  \left((i,B,X),(i',B',X')\right) \in E(\Gamma) \;\;\; :\Leftrightarrow \;\;\; &i=i' \wedge B = B' \wedge X \subseteq X' \wedge |X' \setminus X| = 1 ,\\
  \left(B,(i,B',X)\right) \in E(\Gamma)         \;\;\; :\Leftrightarrow \;\;\; &B = B' \wedge X = \emptyset ,\\
  \left((i,B,X),B'\right) \in E(\Gamma)         \;\;\; :\Leftrightarrow \;\;\; &|X| = t_i \wedge B' \in \mathfrak{B}_i \wedge B' \subseteq B \wedge \rho_{i,B}(B') = X,\\
  \left(B,B'\right) \in E(\Gamma)               \;\;\; :\Leftrightarrow \;\;\; & \exists i\in [\ell]\setminus I \colon B \in \mathfrak{B}_{i-1} \wedge B' \in \mathfrak{B}_i \wedge B' \subseteq B.
 \end{align*}
 We regard $v_0 \coloneqq \Omega\in\mathfrak B_0$ as the ``root'' of $\Gamma$ ($v_0$ is the unique vertex with in-degree $0$ in $\Gamma$).
 An illustration of the graph $\Gamma$ is given in Figure~\ref{fig:visualization-gamma}.
 
 A \emph{branch of $(\Gamma,v_0)$} is a path $(v_0,v_1,\dots,v_p)$ such that $\dist(v_0,v_i) = i$ for all $i \in [p]$.
 A \emph{maximal branch of $(\Gamma,v_0)$} is a branch of maximal length.
 Observe that for every maximal branch $(v_0,v_1,\dots,v_p)$ it holds that $v_p = \{\alpha\}$ for some $\alpha \in \Omega$.
 Let $M$ be the set of maximal branches of $(\Gamma,v_0)$.
 
 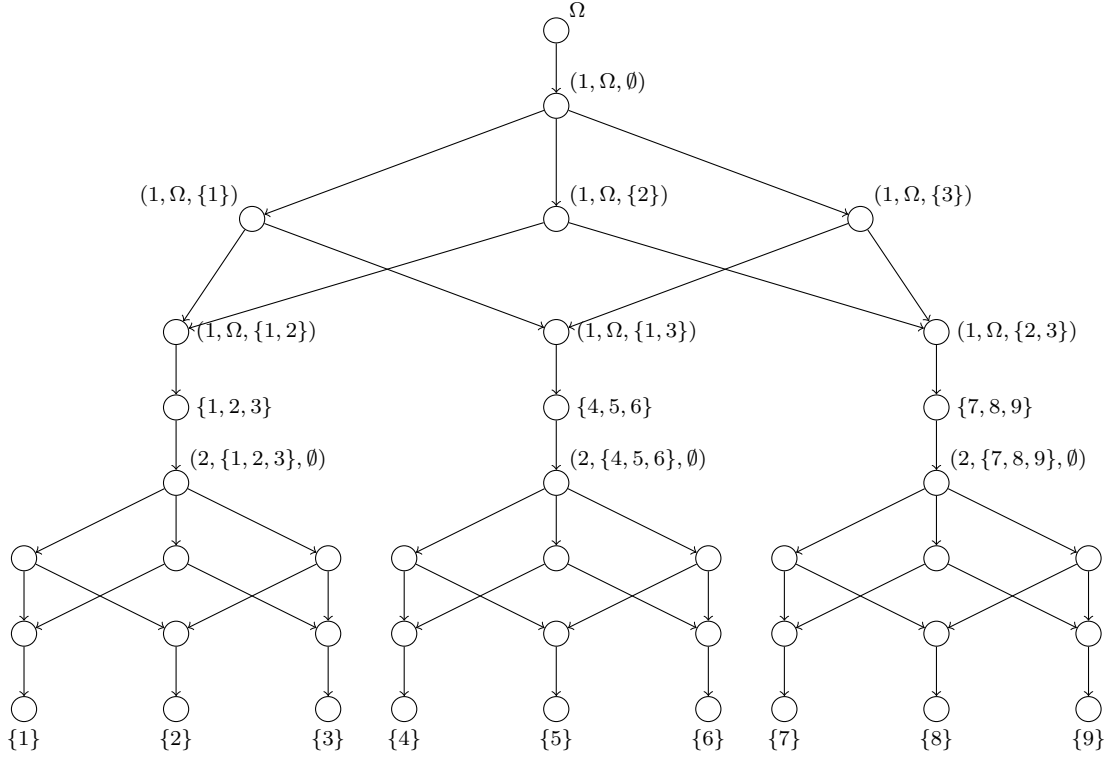
\begin{figure}
  \centering
  \begin{tikzpicture}
   \node[vertex,label={[label distance=-3]45:\footnotesize{$\Omega$}}] (v00) at (0,1.0) {};
   
   \node[vertex,label={[label distance=-3]45:\footnotesize{$(1,\Omega,\emptyset)$}}] (v0) at (0,0) {};
   \node[vertex,label={[label distance=-3]135:\footnotesize{$(1,\Omega,\{1\})$}}] (v1) at (-4,-1.5) {};
   \node[vertex,label={[label distance=-3]45:\footnotesize{$(1,\Omega,\{2\})$}}] (v2) at (0,-1.5) {};
   \node[vertex,label={[label distance=-3]45:\footnotesize{$(1,\Omega,\{3\})$}}] (v3) at (4,-1.5) {};
   \node[vertex,label={[label distance=-1]0:\footnotesize{$(1,\Omega,\{1,2\})$}}] (v12) at (-5,-3) {};
   \node[vertex,label={[label distance=-1]0:\footnotesize{$(1,\Omega,\{1,3\})$}}] (v13) at (0,-3) {};
   \node[vertex,label={[label distance=-1]0:\footnotesize{$(1,\Omega,\{2,3\})$}}] (v23) at (5,-3) {};
   
   \node[vertex,label={[label distance=-1]0:\footnotesize{$\{1,2,3\}$}}] (w12) at (-5,-4) {};
   \node[vertex,label={[label distance=-1]0:\footnotesize{$\{4,5,6\}$}}] (w13) at (0,-4) {};
   \node[vertex,label={[label distance=-1]0:\footnotesize{$\{7,8,9\}$}}] (w23) at (5,-4) {};
   
   \node[vertex,label={[label distance=-3]45:\footnotesize{$(2,\{1,2,3\},\emptyset)$}}] (x0-1) at ($(-5,-5)+(0,0)$) {};
   \node[vertex,label={[label distance=-3]45:\footnotesize{$(2,\{4,5,6\},\emptyset)$}}] (x0-2) at ($(0,-5)+(0,0)$) {};
   \node[vertex,label={[label distance=-3]45:\footnotesize{$(2,\{7,8,9\},\emptyset)$}}] (x0-3) at ($(5,-5)+(0,0)$) {};
   
   \foreach \i in {1,2,3}{
    \node[vertex] (x1-\i) at ($(-10+5*\i,-5)+(-2,-1)$) {};
    \node[vertex] (x2-\i) at ($(-10+5*\i,-5)+(0,-1)$) {};
    \node[vertex] (x3-\i) at ($(-10+5*\i,-5)+(2,-1)$) {};
    \node[vertex] (x12-\i) at ($(-10+5*\i,-5)+(-2,-2)$) {};
    \node[vertex] (x13-\i) at ($(-10+5*\i,-5)+(0,-2)$) {};
    \node[vertex] (x23-\i) at ($(-10+5*\i,-5)+(2,-2)$) {};
   }
   
   \node[vertex] (y12-1) at ($(-5,-5)+(-2,-3)$) {};
   \node[vertex] (y13-1) at ($(-5,-5)+(0,-3)$) {};
   \node[vertex] (y23-1) at ($(-5,-5)+(2,-3)$) {};
   \node[vertex] (y12-2) at ($(0,-5)+(-2,-3)$) {};
   \node[vertex] (y13-2) at ($(0,-5)+(0,-3)$) {};
   \node[vertex] (y23-2) at ($(0,-5)+(2,-3)$) {};
   \node[vertex] (y12-3) at ($(5,-5)+(-2,-3)$) {};
   \node[vertex] (y13-3) at ($(5,-5)+(0,-3)$) {};
   \node[vertex] (y23-3) at ($(5,-5)+(2,-3)$) {};
   
   \node at ($(-5,-5)+(-2,-3)+(270:0.4)$) {\footnotesize{$\{1\}$}};
   \node at ($(-5,-5)+(0,-3)+(270:0.4)$) {\footnotesize{$\{2\}$}};
   \node at ($(-5,-5)+(2,-3)+(270:0.4)$) {\footnotesize{$\{3\}$}};
   \node at ($(0,-5)+(-2,-3)+(270:0.4)$) {\footnotesize{$\{4\}$}};
   \node at ($(0,-5)+(0,-3)+(270:0.4)$) {\footnotesize{$\{5\}$}};
   \node at ($(0,-5)+(2,-3)+(270:0.4)$) {\footnotesize{$\{6\}$}};
   \node at ($(5,-5)+(-2,-3)+(270:0.4)$) {\footnotesize{$\{7\}$}};
   \node at ($(5,-5)+(0,-3)+(270:0.4)$) {\footnotesize{$\{8\}$}};
   \node at ($(5,-5)+(2,-3)+(270:0.4)$) {\footnotesize{$\{9\}$}};
   
   \path[draw,->]
    (v00) edge (v0)
    (v0) edge (v1)
    (v0) edge (v2)
    (v0) edge (v3)
    (v1) edge (v12)
    (v1) edge (v13)
    (v2) edge (v12)
    (v2) edge (v23)
    (v3) edge (v13)
    (v3) edge (v23)
    (v12) edge (w12)
    (v13) edge (w13)
    (v23) edge (w23)
    (w12) edge (x0-1)
    (w13) edge (x0-2)
    (w23) edge (x0-3);
   
   \foreach \i in {1,2,3}{
    \path[draw,->]
     (x0-\i) edge (x1-\i)
     (x0-\i) edge (x2-\i)
     (x0-\i) edge (x3-\i)
     (x1-\i) edge (x12-\i)
     (x1-\i) edge (x13-\i)
     (x2-\i) edge (x12-\i)
     (x2-\i) edge (x23-\i)
     (x3-\i) edge (x13-\i)
     (x3-\i) edge (x23-\i)
     (x12-\i) edge (y12-\i)
     (x13-\i) edge (y13-\i)
     (x23-\i) edge (y23-\i);;
   }
  \end{tikzpicture}
  \caption{A visualization of $\Gamma$. Here $\Omega=[9]$, $k=3$, and $\mathfrak{B}_0 = \{\Omega\}$,
    $\mathfrak{B}_1=\big\{\{1,2,3\}, \{4,5,6\},\{7,8,9\}\big\}$,
    $\mathfrak{B}_3=\big\{\{1\},\ldots,\{9\}\big\}$. Furthermore, $I = \{1,2\}$, $m_i=3$ and
    $t_i=2$; we ignore the condition $t_i\le m_i/2$ for illustration purposes.}
  \label{fig:visualization-gamma}
 \end{figure}
 
 \begin{claim}
  \label{claim:second-main-theorem-size}
  $|M| \leq n^{\log d}$.
  \proof
  We can view the sequence of partitions $\mathfrak{B}_i$ as a
  tree of height $\ell$. Each leaf of this tree corresponds to an element $\alpha \in \Omega$.
  
  The graph $\Gamma$ is
  obtained from the partition tree by squeezing subset-lattices of
  the $(\le t_i)$-element subsets of $[m_i]$ between some internal node of
  the tree and its $\binom{m_i}{t_i}$ children. Counting the number of
  branches in $\Gamma$ amounts to counting the number of leaves in the
  tree unfolding of $\Gamma$. To obtain the tree unfolding, we replace
  each of the subset lattices of size $\binom{m_i}{t_i}$ by a tree of size
  $m_i^{t_i}$. For a fixed subset lattice every element $X \subseteq [m_i]$
  of size $t_i$ corresponds to $m_i^{t_i}/\binom{m_i}{t_i}$ many tuples in the tree unfolding.
  Hence,
  \begin{align*}
   |M| &\leq n \cdot \prod_{i \in I} \left(m_i^{t_i}/\binom{m_i}{t_i}\right)\\
       &\leq n \cdot \prod_{i \in I}\binom{m_i}{t_i}^{\log d - 1}
&\text{by Lemma \ref{la:approx-binom}}\\
       &\leq n \cdot \left(\prod_{i \in I}\binom{m_i}{t_i}\right)^{\log d - 1}\\
       &\leq n \cdot \left(\prod_{i \in I}|\mathfrak{B}_{i-1} : \mathfrak{B}_i|\right)^{\log d - 1}\\
       &\leq n^{\log d}.
  \end{align*}
  \uend
 \end{claim}
 
 For every branch $\bar v = (v_0,\dots,v_p) \in M$ define $\sigma(\bar v) = \alpha$ for the unique $\alpha \in \Omega$ such that $v_p = \{\alpha\}$.
 Now let $\Omega^{*} = \{(\alpha,\bar v) \mid \alpha \in \Omega, \bar v \in M, \alpha = \sigma(\bar v)\}$.
 Clearly, $|\Omega^{*}| = |M| \leq n^{\log d}$ by Claim
 \ref{claim:second-main-theorem-size}.
 Let $\mathfrak{x}^{*}\colon \Omega^{*} \rightarrow \Sigma\colon (\alpha,\bar v) \mapsto \mathfrak{x}(\alpha)$ and $\mathfrak{y}^{*}\colon \Omega^{*} \rightarrow \Sigma\colon (\alpha,\bar v) \mapsto \mathfrak{y}(\alpha)$.
 
 For $g \in G$ define $g^{\Gamma} \in \Sym(V(\Gamma))$ to be the permutation defined by
 \[B^{\left(g^{\Gamma}\right)} = B^{g}\]
 and
 \[(i,B,X)^{\left(g^{\Gamma}\right)} = (i,B^{g},X')\]
 where $X'$ is defined as follows.
 Let $g^{\mathfrak{B}_{i}[B]} \colon \mathfrak{B}_{i}[B] \rightarrow \mathfrak{B}_{i}[B^{g}]\colon B' \mapsto (B')^{g}$ and define
 \[f \colon \binom{[m_i]}{t_i} \rightarrow \binom{[m_i]}{t_i}\colon Y \mapsto Y^{\rho_{i,B}^{-1} \;\cdot\; g^{\mathfrak{B}_{i}[B]} \;\cdot\; \rho_{i,B^{g}}}\]
 The bijection $f \in \Sym(\binom{[m_i]}{t_i})$ is induced by a unique permutation $\pi \in S_{m_i}$ (see Lemma \ref{la:permutational-automorphisms-johnson-action}).
 Now define $X' = X^{\pi}$.
 
 \begin{claim}[resume]
  For every $g \in G$ we have $g^{\Gamma} \in \Aut(\Gamma,v_0)$.
  \proof
  Clearly, the root of the graph $\Gamma$ is mapped to the root.
  For the edge relation we consider the four types of edges one by one.
  First suppose $(B,B') \in E(\Gamma)$ for some $B \in \mathfrak{B}_{i-1}$, $B' \in \mathfrak{B}_i$ such that $B' \subseteq B$ and $i \notin I$.
  Then $(B,B')^{\left(g^{\Gamma}\right)} = (B^{g},(B')^{g})$ where $B^{g} \in \mathfrak{B}_{i-1}$, $(B')^{g} \in \mathfrak{B}_i$ and $(B')^{g} \subseteq B^{g}$.
  Hence, $(B,B')^{\left(g^{\Gamma}\right)} \in E(\Gamma)$.
  
  Next suppose $((i,B,X),(i',B',X')) \in E(\Gamma)$.
  Then $i=i'$, $B=B'$, $X \subseteq X'$ and $|X' \setminus X| = 1$.
  Let $\pi,\pi' \in S_{m_i}$ such that $(i,B,X)^{\left(g^{\Gamma}\right)} = (i,B^{g},X^{\pi})$ and $(i',B',X')^{\left(g^{\Gamma}\right)} = (i',(B')^{g},(X')^{\pi'})$.
  From $B=B'$ it follows that $\pi = \pi'$.
  Hence, $X^{\pi} \subseteq (X')^{\pi'}$ and $|(X')^{\pi'} \setminus X^{\pi}| = 1$.
  So $((i,B,X)^{\left(g^{\Gamma}\right)},(i',B',X')^{\left(g^{\Gamma}\right)}) \in E(\Gamma)$.
  
  So assume that $(B,(i,B',X)) \in E(\Gamma)$.
  Then $B = B'$ and $X = \emptyset$.
  Let $\pi \in S_{m_i}$ such that $(i,B',X)^{\left(g^{\Gamma}\right)} = (i,(B')^{g},X^{\pi})$.
  But $X^{\pi} = \emptyset^{\pi} = \emptyset$ and $B^{g} = (B')^{g}$.
  This implies that $(B^{\left(g^{\Gamma}\right)},(i,B',X)^{\left(g^{\Gamma}\right)}) \in E(\Gamma)$.
  
  Finally suppose $((i,B,X),B') \in E(\Gamma)$.
  We have $|X| = t_i$, $B' \in \mathfrak{B}_i$, $B' \subseteq B$ and $\rho_{i,B}(B') = X$.
  Let $\pi \in S_{m_i}$ such that $(i,B,X)^{\left(g^{\Gamma}\right)} = (i,(B)^{g},X^{\pi})$.
  It suffices to show that $\rho_{i,B^{g}}((B')^{g}) = X^{\pi}$.
  But this follows directly from the definition of the permutation $g^{\Gamma}$.
  \uend
 \end{claim}
 
 For $g \in G$ define $g^{*} \in \Sym(\Omega^{*})$ via
 \[(\alpha,(v_0,\dots,v_p))^{g^{*}} = \left(\alpha^{g},\left(v_0^{\left(g^{\Gamma}\right)},\dots,v_p^{\left(g^{\Gamma}\right)}\right)\right)\]
 and let $G^{*} = \{g^{*} \mid g \in G\}$.
 
 \begin{claim}[resume]
  \label{claim:second-main-theorem-multiplication}
  For every $g,h \in G$ it holds that $(gh)^{*} = g^{*}h^{*}$.
  \proof
  Looking at the definition it is not difficult to see that $g^{\Gamma}h^{\Gamma} = (gh)^{\Gamma}$ for all $g,h \in G$.
  From this we immediately get that $(gh)^{*} = g^{*}h^{*}$.
  \uend
 \end{claim}
 
 Hence, $G^{*}$ forms a group and the mapping $\varphi\colon G \rightarrow G^{*}\colon g \mapsto g^{*}$ is a group isomorphism.
 So $G^{*} \in \ourgamma_d$ by Lemma \ref{la:gamma-d-closure}.
 
 \begin{claim}[resume]
  \label{claim:second-main-theorem-isomorphism}
  For every $g \in G$ it holds that $\mathfrak{x}^{g} = \mathfrak{y}$ if and only if $\left(\mathfrak{x}^{*}\right)^{g^{*}} = \mathfrak{y}^{*}$.
  \proof
  First suppose there is some $g \in G$ such that $\mathfrak{x}^{g} = \mathfrak{y}$, that is, $\mathfrak{x}(\alpha) = \mathfrak{y}(\alpha^{g})$ for all $\alpha \in \Omega$.
  Then $\mathfrak{x}^{*}(\alpha,\bar v) = \mathfrak{x}(\alpha) = \mathfrak{y}(\alpha^{g}) = \mathfrak{y}^{*}(\alpha^{g},\bar v^{\left(g^{\Gamma}\right)}) = \mathfrak{y}^{*}((\alpha,\bar v)^{g^{*}})$ and hence, $(\mathfrak{x}^{*})^{g^{*}} = \mathfrak{y}^{*}$.
  
  For the backward direction assume there is some $g^{*} \in G^{*}$ such that $(\mathfrak{x}^{*})^{g^{*}} = \mathfrak{y}^{*}$.
  Let $\alpha \in \Omega$ and let $\bar v \in M$ be a maximal branch such that $\alpha = \sigma(\bar v)$.
  Then $\mathfrak{x}(\alpha) = \mathfrak{x}^{*}(\alpha,\bar v) = \mathfrak{y}^{*}((\alpha,\bar v)^{g^{*}}) = \mathfrak{y}^{*}(\alpha^{g},\bar v^{\left(g^{\Gamma}\right)}) = \mathfrak{y}(\alpha^{g})$.
  So $\mathfrak{x}^{g} = \mathfrak{y}$.
  \uend
 \end{claim}
 
 It remains to define the sequence of block systems.
 Let $k = \max_{v\in V(\Gamma)} \dist(v_0,v)$.
 Note that $k$ is the length of every maximal branch $\bar v \in M$.
 For $i \in \{0,\dots,k\}$ define
 \[\mathfrak{B}_i^{*} = \{\{(\alpha,(v_0,\dots,v_k)) \in \Omega^{*} \mid \forall j \leq i\colon v_j = w_j \} \mid (w_0,\dots,w_i) \text{ is a branch of } (\Gamma,v_0)\}.\]
 Clearly, $\mathfrak{B}_i^{*}$ is invariant under $G^{*}$ and $\mathfrak{B}_{i-1}^{*} \succeq \mathfrak{B}_{i}^{*}$ for all $i \in [k]$.
 So it only remains to check that the sequence $\mathfrak{B}_0^{*},\dots,\mathfrak{B}_k^{*}$ is almost $d$-ary.
 For every $B^{*} \in \mathfrak{B}_{i-1}^{*}$, $i \in [k]$, it holds that
 \[|\mathfrak{B}_{i}^{*}[B^{*}]| \leq \max_{v \in V(\Gamma)\colon \dist(v_0,v) = i-1} |N^{+}(v)|\]
 where $N^{+}(v) = \{w \in N(v) \mid \dist(v_0,w) > \dist(v_0,v)\}$.
 Let $i \in [k]$ and $B^{*} \in \mathfrak{B}_{i-1}^{*}$.
 Suppose that $|\mathfrak{B}_i^{*}[B^{*}]| > d$.
 Let $(w_0,\dots,w_{i-1})$ be the branch of $(\Gamma,v_0)$ that corresponds to the block $B^{*}$.
 Then $|N^{+}(w_{i-1})| > d$ and thus, $w_{i-1} = B$ for some $B \in \mathfrak{B}_{j-1}$ and $j \in [\ell]$ where $j \notin I$.
 Moreover, $G_B^{\mathfrak{B}_j[B]}$ is semi-regular since Property (A) holds for all $B \in \mathfrak{B}_{j-1}$ and $j \in [\ell]$ where $j \notin I$.
 But in this case $(G^{*})_{B^{*}}^{\mathfrak{B}_j^{*}[B^{*}]}$ is permutationally equivalent to a subgroup of $G_B^{\mathfrak{B}_j[B]}$ and hence,
 the group $(G^{*})_{B^{*}}^{\mathfrak{B}_j^{*}[B^{*}]}$ is also semi-regular.
\end{proof}

The previous theorem states that there is an $n^{\polylog{d}}$-reduction from the String Isomorphism Problem for $\ourgamma_d$-groups
to the String Isomorphism Problem for groups where we are additionally given an almost $d$-ary sequence of invariant partitions.
Hence, in the remainder of this work, we shall be concerned with solving the latter problem.
The basic approach to do this is to adapt the Local Certificates Routine developed by Babai for his quasipolynomial time isomorphism test~\cite{Babai16}.

\section{Affected Orbits}\label{sec:affected:orbits}

The basis of Babai's Local Certificates algorithm is a group theoretic statement, the Unaffected Stabilizers Theorem (see \cite[Theorem 6]{Babai16}).
In the following we generalize this theorem to our setting. For the proof we roughly follow the argumentation from \cite{Babai15-full}.
However, on the technical level, several details need to be changed to allow for the treatment of the semi-regular operations allowed in our setting.

\begin{lemma}[cf.\;\cite{Babai15-full,Meierfrankenfeld95}]
 \label{la:epimorphism-subdirect-product}
 Let $G \leq K_1 \times \dots \times K_\ell$ be a subdirect product and let $\varphi\colon G \rightarrow S$ be an epimorphism where $S$ is a non-abelian simple group.
 Furthermore let $\pi_i\colon G \rightarrow K_i$ be the projection to the $i$-th component and $M_i = \ker(\pi_i)$.
 Then there is some $i^{*} \in [\ell]$ such that $M_{i^*} \leq \ker(\varphi)$.
\end{lemma}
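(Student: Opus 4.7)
The plan is to prove the statement by induction on $\ell$, combined with a proof by contradiction in the inductive step. The base case $\ell = 1$ is immediate: subdirectness forces $G = K_1$, so $\pi_1$ is the identity and $M_1 = \{1\} \leq \ker(\varphi)$ trivially. So assume the result holds for subdirect products with $\ell - 1$ factors, and let $G \leq K_1 \times \cdots \times K_\ell$ together with $\varphi \colon G \to S$ be given.

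Set $N = \ker(\varphi)$ and suppose for contradiction that $M_i \not\leq N$ for every $i \in [\ell]$. Since each $M_i$ is normal in $G$, the image $\varphi(M_i)$ is a normal subgroup of $\varphi(G) = S$; as $S$ is simple and $\varphi(M_i) \neq 1$, this forces $\varphi(M_i) = S$ for all $i$. In particular $\varphi|_{M_1} \colon M_1 \to S$ is surjective. To invoke the induction hypothesis, I would view $M_1$ as sitting inside the $(\ell-1)$-fold product $K_2 \times \cdots \times K_\ell$ (its first coordinate is trivial), and shrink the ambient factors by setting $L_j := \pi_j(M_1)$ for $j \geq 2$. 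Then $M_1 \leq L_2 \times \cdots \times L_\ell$ is by construction a subdirect product, so the inductive hypothesis applied to $\varphi|_{M_1}$ yields an index $j \in \{2,\ldots,\ell\}$ with $\ker(\pi_j|_{M_1}) = M_1 \cap M_j \leq \ker(\varphi|_{M_1}) \leq N$.

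The contradiction now comes from a commutator calculation. Since $M_1$ and $M_j$ are both normal in $G$, we have $[M_1, M_j] \leq M_1 \cap M_j \leq N$, so $\varphi([M_1, M_j]) = 1$. On the other hand, a homomorphism sends commutators to commutators of images, giving
\[
\varphi([M_1, M_j]) \;=\; [\varphi(M_1), \varphi(M_j)] \;=\; [S,S],
\]
and $[S,S] = S$ because $S$ is a non-abelian simple group (its commutator subgroup is normal and nontrivial, hence everything). This forces $S = 1$, a contradiction.

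The main conceptual obstacle is the inductive setup: one must resist the urge to recurse on $G$ itself (which has no smaller $\ell$ available) and instead recurse on $M_1$, after replacing the ambient $K_j$'s by the projections $L_j$ so that subdirectness is preserved. Once this is done, the normality of each $M_i$ in $G$ together with simplicity of $S$ does the rest of the work in two essentially independent ways: first to promote ``$\varphi(M_i) \neq 1$'' to ``$\varphi(M_i) = S$'', and second to guarantee $[S,S]=S$ in the final step.
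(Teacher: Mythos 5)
The paper cites this lemma (to Babai and Meierfrankenfeld) without giving a proof, so there is no in-paper argument to compare against; checked on its own merits, your proof is correct. The base case $\ell=1$ is right (subdirectness gives $G=K_1$, so $M_1=\{1\}$), promoting $\varphi(M_i)\neq 1$ to $\varphi(M_i)=S$ via normality of $M_i$ in $G$ and simplicity of $S$ is sound, replacing the ambient factors by $L_j := \pi_j(M_1)$ before recursing on $M_1$ is exactly the right move to preserve subdirectness, and the closing step correctly combines $[M_1,M_j]\leq M_1\cap M_j$ (valid because $M_1$ and $M_j$ are normal in $G$) with $\varphi([A,B])=[\varphi(A),\varphi(B)]$ and $[S,S]=S$ for a non-abelian simple $S$. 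For what it is worth, the commutator calculation you perform at the end is precisely the content of the paper's own Lemma~\ref{la:homomorphism-intersection}; combining that lemma with the observation $\bigcap_{i\in[\ell]} M_i=\{1\}$ (immediate because $G$ embeds in $K_1\times\cdots\times K_\ell$, so the $\pi_i$ jointly separate points) yields a shorter, non-inductive variant: if $\varphi(M_i)=S$ for all $i$, then iterating Lemma~\ref{la:homomorphism-intersection} over the normal subgroups $M_1, M_1\cap M_2, \dots$ gives $\varphi\bigl(\bigcap_i M_i\bigr)=S$, i.e.\ $\varphi(\{1\})=S$, which is absurd. Your induction on $\ell$ with the re-ambientation of $M_1$ is a perfectly valid alternative that trades that trivial-intersection observation for the bookkeeping of the recursion.
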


\begin{lemma}
 \label{la:homomorphism-intersection}
 Let $G$ be a group, $H,K\unlhd G$ and suppose $\varphi \colon G \rightarrow S$ is an epimorphism where $S$ is a non-abelian simple group.
 Furthermore suppose that $H^{\varphi} = K^{\varphi} = S$.
 Then $(H \cap K)^{\varphi} = S$.
\end{lemma}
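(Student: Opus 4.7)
The plan is to exploit simplicity of $S$ together with the fact that non-abelian simple groups are perfect. Concretely, I would first observe that $H \cap K$ is itself a normal subgroup of $G$, since the intersection of two normal subgroups is normal. Therefore, because $\varphi$ is an epimorphism onto $S$, the image $(H \cap K)^{\varphi}$ is a normal subgroup of $S$. Simplicity of $S$ then forces $(H \cap K)^{\varphi}$ to be either trivial or all of $S$, reducing the problem to ruling out the trivial case.

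To rule out $(H \cap K)^{\varphi} = \{1\}$, I would use the commutator subgroup $[H,K]$. The key inclusion is $[H,K] \leq H \cap K$, which holds whenever both $H$ and $K$ are normal in $G$: for $h \in H$ and $k \in K$, the element $[h,k] = h^{-1}k^{-1}hk$ lies in $K$ because $h^{-1}k^{-1}h \in K$ by normality of $K$, and it lies in $H$ because $k^{-1}hk \in H$ by normality of $H$. Applying $\varphi$ and using that homomorphisms respect commutators gives
\[
 [H^{\varphi},K^{\varphi}] = [H,K]^{\varphi} \leq (H \cap K)^{\varphi}.
\]

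Finally, I would invoke the hypothesis $H^{\varphi} = K^{\varphi} = S$ together with the fact that every non-abelian simple group is perfect, i.e.\ $[S,S] = S$. This yields $[H^{\varphi},K^{\varphi}] = [S,S] = S$, so $S \leq (H \cap K)^{\varphi} \leq S$, giving the claimed equality. There is no real obstacle here; the only subtlety worth double-checking is the inclusion $[H,K] \leq H \cap K$, which requires both subgroups to be normal (normality of only one would give $[H,K] \leq H$ or $[H,K] \leq K$, but not the intersection).
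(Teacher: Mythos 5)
Your proof is correct and rests on exactly the same key observation as the paper's: by normality of both $H$ and $K$, every commutator $[h,k]$ with $h\in H$, $k\in K$ lies in $H\cap K$, and these commutators map onto a generating set of $[S,S]=S$. The paper packages the finish as a contradiction (if $(H\cap K)^\varphi$ were trivial then all such commutators would vanish, making $S$ abelian), whereas you package it positively via perfectness of $S$; this is a cosmetic difference, and your phrasing is arguably a bit cleaner, but it is the same argument.
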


\begin{proof}
 Let $N = \ker(\varphi)$.
 Suppose that $(H \cap K)^{\varphi} \neq S$.
 Since $H \cap K \unlhd G$ and $S$ is a simple group we conclude that $(H \cap K)^{\varphi} = \{1\}$, that is, $H \cap K \leq N$.
 
 Now let $s_1,s_2 \in S$ be two arbitrary elements.
 Then there are $h \in H$, $k \in K$ such that $\varphi(h) = s_1$ and $\varphi(k) = s_2$.
 Moreover, $n:=h^{-1}k^{-1}hk \in H \cap K\le N$ since $H \unlhd G$ and $K \unlhd G$.
 Note that $hk = khn$.
 But then
 \[s_1s_2 = \varphi(h)\varphi(k) = \varphi(hk) = \varphi(khn) = \varphi(k)\varphi(h)\varphi(n) = s_2s_1.\]
 Since $s_1,s_2 \in S$ were chosen arbitrarily it follows that $S$ is abelian.
\end{proof}

\begin{lemma}[\cite{Babai15-full}, Lemma 8.3.1]
 \label{la:giant-transitive-babai}
 Let $G \leq S_d$ be a transitive group and $\varphi\colon G \rightarrow A_k$ an epimorphism where $k > \max\{8,2+\log_2 d\}$.
 Then $G_\alpha^{\varphi} \neq A_k$ for all $\alpha \in [d]$.
\end{lemma}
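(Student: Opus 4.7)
The plan is to suppose, towards a contradiction, that $G_\alpha^{\varphi} = A_k$ for some $\alpha \in [d]$ (by transitivity this then holds for every $\alpha$, since the stabilizers are pairwise $G$-conjugate) and to derive a size contradiction via the centralizer of $N := \ker \varphi$ in $G$. The equality $G_\alpha^{\varphi} = G/N$ immediately yields $G = G_\alpha \cdot N$ and hence $d = [G : G_\alpha] = [N : N \cap G_\alpha]$, so $N$ itself acts transitively on $[d]$.

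The central observation is that for any transitive subgroup $N \leq S_d$, the centralizer $C_{S_d}(N)$ is semi-regular: an element $g \in S_d$ that commutes with $N$ and fixes some point $\alpha$ must also fix every $\alpha^n$ as $n$ ranges over $N$, and by the transitivity of $N$ this is all of $[d]$. In particular $|C_G(N)| \leq |C_{S_d}(N)| \leq d < 2^{k-2}$. Now $C_G(N) \cdot N / N$ is normal in the simple group $G/N \cong A_k$, so it equals $\{1\}$ or $A_k$. In the latter case, $C_G(N) / Z(N) \cong A_k$ forces $|C_G(N)| \geq k!/2$; combined with $|C_G(N)| < 2^{k-2}$ this contradicts $k!/2 > 2^{k-2}$, an inequality that is easy to verify for all $k \geq 4$ (and therefore certainly for $k > 8$).

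The main obstacle is the opposite case $C_G(N) \leq N$, where $C_G(N) = Z(N)$ and conjugation yields an injection $A_k \hookrightarrow \Out(N)$. To handle it, observe first that any minimal normal subgroup $M$ of $G$ not contained in $N$ would satisfy $[M, N] \leq M \cap N = \{1\}$ (since $M \cap N$ is a $G$-normal subgroup of $M$ properly contained in it), hence $M \leq C_G(N) \leq N$, a contradiction; so $\Soc(G) \leq N$. The plan is then to iterate by passing to the faithful action of $G/Z(N)$ on the set of $Z(N)$-orbits in $[d]$, of size $d/|Z(N)| < d$, with the induced surjection onto $A_k$ preserved since $Z(N) \leq N$. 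This reduces to a strictly smaller instance, so an induction on $d$ — whose base case is covered by the contradiction of the previous paragraph — finishes the argument.
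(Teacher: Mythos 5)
The paper only cites this as Lemma~8.3.1 of \cite{Babai15-full} and gives no proof, so your argument must stand on its own. Your first case, $C_G(N)N/N = A_k$, is handled correctly via the semi-regularity of $C_{S_d}(N)$. But the ``opposite case'' $C_G(N) = Z(N)$ has genuine gaps. The claim that $G/Z(N)$ acts faithfully on the set of $Z(N)$-orbits is unjustified and false in general: the kernel of that action is the block stabilizer $G_{(\mathfrak{B})}$, which contains $Z(N)$ but may be strictly larger. You could try instead to pass to the faithful group $G^{\mathfrak{B}}$, but for $\varphi$ to factor through it you would need $G_{(\mathfrak{B})} \le N$, and you do not establish this.

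More seriously, the induction stalls whenever $Z(N)$ is trivial or all of $N$, and both can occur under your hypotheses. When $Z(N) = 1$ --- a completely generic possibility for a transitive $N$ --- the set of $Z(N)$-orbits has size $d$ and you gain nothing; the observation $\Soc(G) \le N$ is correct but is never actually brought to bear on this case. When $N$ is abelian (hence regular, since it is transitive), $Z(N) = N$ is transitive, so the block system is the trivial one $\{[d]\}$ and again there is no reduction. This affine situation is in fact the crux of the lemma: one has to rule out an abelian regular $N$ of order $d < 2^{k-2}$ carrying a faithful action of $A_k$ by automorphisms, which ultimately requires a lower bound on the dimension of a faithful modular representation of $A_k$ (for instance, at least $k-2$ over $\mathbb{F}_2$). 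No ingredient of that kind appears in your argument, so the base case of your induction is not covered by the size contradiction of the first paragraph, and the induction does not close.
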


\begin{lemma}
 \label{la:epimorphism-transitive-almost-tree}
 Let $G \leq \Sym(\Omega)$ be a transitive group and suppose
 there is an almost $d$-ary sequence of invariant partitions $\{\Omega\} = \mathfrak{B}_0 \succ \dots \succ \mathfrak{B}_m = \{\{\alpha\} \mid \alpha \in \Omega\}$.
 Furthermore let $k > \max\{8,2 + \log_2d\}$, and let $\varphi\colon G \rightarrow A_k$ be an epimorphism.
 Then $G_\alpha^{\varphi} \neq A_k$ for all $\alpha \in \Omega$.
\end{lemma}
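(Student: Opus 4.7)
I would prove the lemma by induction on the length $m$ of the almost $d$-ary sequence; the base case $m=0$ (so $|\Omega|=1$) is vacuous, since the trivial group admits no epimorphism onto $A_k$. For the inductive step, let $\mathfrak{B}_1$ be the first non-trivial partition in the sequence and set $N := G_{(\mathfrak{B}_1)} \unlhd G$. Because $A_k$ is simple (for $k\geq 5$), the image $\varphi(N)$ is either $\{1\}$ or $A_k$, and the proof splits along these two cases.

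If $\varphi(N) = \{1\}$, then $\varphi$ descends to $\bar{\varphi}\colon G^{\mathfrak{B}_1} \to A_k$. Writing $B \in \mathfrak{B}_1$ for the block containing $\alpha$, one has $\varphi(G_\alpha) \subseteq \bar{\varphi}((G^{\mathfrak{B}_1})_B)$. The almost $d$-ary condition at level $1$ forces one of two sub-cases: either $G^{\mathfrak{B}_1}$ is semi-regular on $\mathfrak{B}_1$, in which case transitivity promotes it to regular and $(G^{\mathfrak{B}_1})_B = \{1\}$, or $|\mathfrak{B}_1| \leq d$, in which case Lemma \ref{la:giant-transitive-babai} applied to the transitive action of $G^{\mathfrak{B}_1}$ on $\mathfrak{B}_1$ together with $\bar{\varphi}$ yields $\bar{\varphi}((G^{\mathfrak{B}_1})_B) \neq A_k$.

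If $\varphi(N) = A_k$, I exploit the subdirect embedding $N \hookrightarrow \prod_{B \in \mathfrak{B}_1} N^B$. Lemma \ref{la:epimorphism-subdirect-product} produces some $B_0$ for which the projection kernel $M_{B_0}$ lies in $\ker\varphi$; since $\ker\varphi \unlhd G$ the collection $\{B : M_B \leq \ker\varphi\}$ is $G$-invariant, so the transitivity of $G$ on $\mathfrak{B}_1$ extends this to every block $B$, and $\varphi|_N$ descends to $\tilde{\varphi}_B\colon N^B \twoheadrightarrow A_k$. Iterating the same idea on the subdirect embedding of $N^B$ into the product over its orbits $\Delta$ on $B$ (using that $G_B$ acts transitively on these orbits, since it is transitive on $B$) gives a further descent $\hat{\varphi}_\Delta\colon N^\Delta \twoheadrightarrow A_k$ for every orbit $\Delta$. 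By Observation \ref{obs:sequence-of-partitions} the restricted sequence $\mathfrak{B}_0[\Delta] \succeq \dots \succeq \mathfrak{B}_m[\Delta]$ is almost $d$-ary and $N^\Delta$-invariant on $\Delta$, with $|\Delta| < |\Omega|$. The inductive hypothesis applied to $(N^\Delta, \hat{\varphi}_\Delta)$ yields $\hat{\varphi}_\Delta((N^\Delta)_\alpha) \neq A_k$, and lifting via the surjection $N_\alpha \twoheadrightarrow (N^\Delta)_\alpha$ gives $\varphi(N_\alpha) \neq A_k$ for every $\alpha$.

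To promote this conclusion from $N_\alpha$ to $G_\alpha$, I observe that the set $U := \{\alpha : \varphi(G_\alpha) \neq A_k\}$ is $G$-invariant, since conjugation inside $A_k$ preserves the property of being all of $A_k$; by transitivity of $G$ we must have $U = \Omega$ (in which case we are done) or $U = \emptyset$. The main obstacle I expect is to rule out $U = \emptyset$. In that scenario $K := \ker\varphi$ is transitive on $\Omega$, while $\varphi(N_\alpha) \neq A_k$ translates (via the action of the simple quotient $A_k = N/\mathcal{K}$ on $\mathcal{K}$-orbits) to $\mathcal{K} := K \cap N$ having at least $r \geq k$ orbits inside each $N$-orbit. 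In the semi-regular sub-case of almost $d$-ariness at level $1$ we get $(G^{\mathfrak{B}_1})_{B^*} = \{1\}$ and thus $K_{B^*} = \mathcal{K}$, which contradicts the transitivity of $K_{B^*}$ on $B^*$ because $\mathcal{K}$ has at least $k$ orbits per $N$-orbit. In the $|\mathfrak{B}_1| \leq d$ sub-case, one analyses the action of $(G^{\mathfrak{B}_1})_{B^*} = K_{B^*}/\mathcal{K}$ on the $\mathcal{K}$-orbits of $B^*$ and, via a further subdirect-product reduction, applies Lemma \ref{la:giant-transitive-babai} (at degree $\leq d$) to reach a contradiction.
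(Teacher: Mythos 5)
Your case split $\varphi(N)\in\{1,A_k\}$ for $N:=G_{(\mathfrak B_1)}$ is the same as the paper's (the paper phrases it as $K\le\ker\varphi$ versus $K\ker\varphi=G$ with $K=G_{(\mathfrak B_1)}$), and your handling of the case $\varphi(N)=\{1\}$ matches the paper's. Your subdirect-product descent showing $\varphi(N_\alpha)\neq A_k$ for all $\alpha$ is also essentially the argument the paper runs on $K=G_{(\mathfrak B_1)}$ inside its Claim. However, the point at which your proof diverges from the paper — passing from the statement about $N_\alpha$ to the statement about the full stabilizer $G_\alpha$ — has a genuine gap.

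You reduce the promotion step to ruling out the scenario in which $\ker\varphi$ is transitive, and you split on the two sub-cases of almost $d$-arity at level~$1$. The semi-regular sub-case is fine, and in fact it is immediate there: semi-regularity gives $G_\alpha\le G_{B^*}=N$, so $G_\alpha=N_\alpha$ and $\varphi(G_\alpha)=\varphi(N_\alpha)\neq A_k$ without any argument about orbit counts. The problem is the sub-case $|\mathfrak B_1|\le d$ with $\varphi(N)=A_k$. There, $G_\alpha$ can be strictly larger than $N_\alpha$, and $\varphi(N_\alpha)\neq A_k$ alone does not control $\varphi(G_\alpha)$. Your sketch says to apply Lemma~\ref{la:giant-transitive-babai} ``at degree $\le d$'' to the action of $(G^{\mathfrak B_1})_{B^*}=K_{B^*}/\mathcal K$ on the $\mathcal K$-orbits, but Lemma~\ref{la:giant-transitive-babai} needs a \emph{transitive group of degree at most $d$ equipped with an epimorphism onto $A_k$}, and neither ingredient is available here: $K_{B^*}/\mathcal K$ is a quotient of a subgroup of $K=\ker\varphi$, so $\varphi$ is trivial on it and induces no $A_k$-epimorphism; and its action on the $\mathcal K$-orbits within $B^*$ has degree at least $k$, not at most $d$. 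As written, the argument does not close.

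The paper avoids this difficulty by never leaving $N$-orbit stabilizers unlinked from $G_\alpha$. In the case $\varphi(G_{(\mathfrak B_1)})=A_k$ it assumes $G_\alpha^\varphi=A_k$ towards a contradiction, fixes the block $B\ni\alpha$, and uses Lemma~\ref{la:homomorphism-intersection} (which you never invoke) to show that $G_{(B)}^\varphi=A_k$ would force $(G_{(\mathfrak B_1)})_{(B)}^\varphi=A_k$, contradicting your subdirect/IH computation via $(G_{(\mathfrak B_1)})_\alpha$. Hence $G_{(B)}^\varphi$ is a proper normal subgroup of $A_k$, so $G_{(B)}\le\ker\varphi$, so $\varphi|_{G_B}$ factors through $G_B^B$. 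Applying the induction hypothesis to $G_B^B$ acting transitively on $B$ then contradicts $G_\alpha^\varphi=A_k$ directly. The essential point you are missing is this reduction to $G_B^B$ rather than $N^\Delta$: $G_B^B$ retains the entire action of $G_\alpha$, whereas $N^\Delta$ forgets the part of $G_\alpha$ living outside $N$, which is exactly what you then struggle (and fail) to recover. I would also note that the paper inducts on $|G|$ rather than on the length $m$ of the partition sequence; your induction on $m$ can be made to work (the restriction to $\Delta\subsetneq\Omega$ collapses $\mathfrak B_0[\Delta]=\mathfrak B_1[\Delta]$), but it requires silently discarding repeated partitions and assuming $\mathfrak B_1\neq\{\Omega\}$, which you should state explicitly.
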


\begin{proof}
 We prove the statement by induction on the cardinality of $G$.
 Let $K = G_{(\mathfrak{B}_1)}$ be the normal subgroup stabilizing the block system $\mathfrak{B}_1$ and $N = \ker(\varphi)$.
 Observe that $N$ is a maximal normal subgroup of $G$
 ($N \unlhd G$ is a maximal normal subgroup of $G$ if and only if the quotient group $G/N$ is simple; here $G/N$ is isomorphic to $G^{\varphi} = A_k$).
 Hence, it holds that $K \leq N$ or $\langle K,N \rangle = KN = G$.
 
 First suppose $K \leq N$.
 Then $\varphi$ factors across $G \rightarrow G^{\mathfrak{B}_1} \stackrel{\psi}{\rightarrow} A_k$.
 Observe that $\psi$ is an epimorphism since $\varphi$ is an epimorphism.
 Suppose $|\mathfrak{B}_1| \leq d$.
 Then, by Lemma \ref{la:giant-transitive-babai}, for every $B \in \mathfrak{B}_1$ it holds that $(G^{\mathfrak{B}_1})_{B}^{\psi} \neq A_k$.
 Hence, $G_\alpha^{\varphi} \leq G_{B}^{\varphi} \neq A_k$ where $B \in \mathfrak{B}_1$ is the unique set such that $\alpha \in B$.
 Otherwise $G^{\mathfrak{B}_1}$ is semi-regular and hence, $(G^{\mathfrak{B}_1})_{B}^{\psi} = \{1\} \neq A_k$ for all $B \in \mathfrak{B}_1$.
 Again, $G_\alpha^{\varphi} \leq G_{B}^{\varphi} \neq A_k$ where $B \in \mathfrak{B}_1$ is the unique set such that $\alpha \in B$.
 
 So consider the case that $KN = G$, that is, $K^{\varphi} = A_k$.
 Suppose towards a contradiction that there is some $\alpha \in \Omega$ such that $G_\alpha^{\varphi} = A_k$.
 Pick $B \in \mathfrak{B}_1$ such that $\alpha \in B$.
 In particular, $G_B^{\varphi} = A_k$.
 \begin{claim}
  $G_{(B)}^{\varphi} \neq A_k$.
  \proof
  Assume towards a contradiction that $G_{(B)}^{\varphi} = A_k$.
  Then, by Lemma \ref{la:homomorphism-intersection}, $K_{(B)}^{\varphi} = (G_{(B)} \cap K)^{\varphi} = A_k$ since $G_{(B)}\unlhd G_B$, $K \unlhd G_B$ and $K^{\varphi} = A_k$.
  
  On the other hand, let $\Omega_1,\dots,\Omega_\ell$ be the orbits of $K$.
  Let $\pi_i\colon K \rightarrow \Sym(\Omega_i)$ be the restriction of $K$ to $\Omega_i$, $K_i = \im(\pi_i)$ and $M_i = \ker(\pi_i)$.
  By Lemma \ref{la:epimorphism-subdirect-product} there is some $i \in [\ell]$ such that $M_i \leq N$.
  Since $G$ acts transitively on the blocks $\{\Omega_1,\dots,\Omega_\ell\}$ the groups $M_i$, $i \in [\ell]$, are conjugate subgroups in $G$ and therefore $M_i \leq N$ for all $i \in [\ell]$.
  Pick $i^{*} \in [\ell]$ such that $\alpha \in \Omega_{i^{*}}$.
  Since $M_{i^{*}} \leq N$ the epimorphism $\varphi|_K\colon K \rightarrow A_k$ factors across $K_{i^{*}}$ as $K \stackrel{\pi_{i^{*}}}{\rightarrow} K_{i^{*}} \stackrel{\psi}{\rightarrow} A_k$.
  Hence, $K_{i^{*}}^{\psi} = A_k$.
  Moreover, $\mathfrak{B}_1[\Omega_{i^{*}}] \succ \dots \succ \mathfrak{B}_m[\Omega_{i^{*}}]$ is an almost $d$-ary sequence of partitions for $K_{i^{*}}$.
  By the induction hypothesis it follows that $(K_{i^{*}})_\alpha^{\psi} \neq A_k$ and thus, $K_\alpha^{\varphi} \neq A_k$.
  But this is a contradiction since $K_{(B)}^{\varphi} \leq K_\alpha^{\varphi}$.
  \uend
 \end{claim}
 Since $G_{(B)}^{\varphi} \trianglelefteq G_{B}^{\varphi}$ it follows $G_{(B)}^{\varphi} = \{1\}$.
 So $\varphi|_{G_B}$ factors across $G_B \rightarrow G_B^{B} \stackrel{\psi}{\rightarrow} A_k$.
 Moreover, $\varphi|_{G_\alpha}$ factors across $G_\alpha \rightarrow G_\alpha^{B} \stackrel{\psi'}{\rightarrow} A_k$, where~$\psi'= \psi|_{{G_\alpha^{B}}}$.
 Overall this means $(G_B^{B})^{\psi} = A_k$ and
 $(G_B^{B})_\alpha^{\psi} = (G_\alpha^{B})^{\psi'} = A_k$.
 But this contradicts the induction hypothesis since $\mathfrak{B}_1[B] \succ \dots \succ \mathfrak{B}_m[B]$ is an almost $d$-ary sequence of $G_B^{B}$-invariant partitions and $G_B^{B}$ is transitive.
\end{proof}

The following lemma shows that we can drop the assumption of $G$ being
transitive in Lemma~\ref{la:epimorphism-transitive-almost-tree} if we are only looking for some element $\alpha \in \Omega$ such that $G_\alpha^{\varphi} \neq A_k$.

\begin{lemma}
 \label{la:epimorphism-to-ak}
 Let $G \leq \Sym(\Omega)$ be a group and suppose there is an almost $d$-ary
 sequence of $G$-invariant partitions $\{\Omega\} = \mathfrak{B}_0 \succ \dots \succ \mathfrak{B}_m = \{\{\alpha\} \mid \alpha \in \Omega\}$.
 Furthermore let $k > \max\{8,2 + \log_2d\}$, and let $\varphi\colon G \rightarrow A_k$ be an epimorphism.
 Then $G_\alpha^{\varphi} \neq A_k$ for some $\alpha \in \Omega$.
\end{lemma}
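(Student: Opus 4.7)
The plan is to reduce to the transitive case (Lemma~\ref{la:epimorphism-transitive-almost-tree}) by restricting $G$ to one of its orbits, chosen cleverly so that $\varphi$ factors through that restriction. Let $\Omega_1,\dots,\Omega_\ell$ be the orbits of $G$ on $\Omega$, let $\pi_i\colon G \rightarrow G^{\Omega_i}$ be the restriction homomorphism, and let $M_i = \ker(\pi_i)$. Then $G$ embeds into the direct product $G^{\Omega_1}\times\dots\times G^{\Omega_\ell}$ as a subdirect product via $(\pi_1,\dots,\pi_\ell)$.

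Since $\varphi\colon G \rightarrow A_k$ is an epimorphism onto the non-abelian simple group $A_k$ (note $k \geq 9$), Lemma~\ref{la:epimorphism-subdirect-product} yields some $i^* \in [\ell]$ with $M_{i^*} \leq \ker(\varphi)$. Consequently $\varphi$ factors through $\pi_{i^*}$: there is an epimorphism $\psi\colon G^{\Omega_{i^*}} \rightarrow A_k$ such that $\varphi = \psi \circ \pi_{i^*}$, and $\psi$ is indeed surjective because $\varphi$ is.

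By Observation~\ref{obs:sequence-of-partitions}, the induced sequence $\mathfrak{B}_0[\Omega_{i^*}] \succeq \dots \succeq \mathfrak{B}_m[\Omega_{i^*}]$ is an almost $d$-ary sequence of $G^{\Omega_{i^*}}$-invariant partitions, and of course $G^{\Omega_{i^*}}$ is transitive on $\Omega_{i^*}$. Thus Lemma~\ref{la:epimorphism-transitive-almost-tree} applies to $(G^{\Omega_{i^*}},\psi)$, giving
\[
(G^{\Omega_{i^*}})_\alpha^{\,\psi} \neq A_k \quad \text{for every } \alpha \in \Omega_{i^*}.
\]
Picking any such $\alpha \in \Omega_{i^*}$, observe that $\pi_{i^*}(G_\alpha) \leq (G^{\Omega_{i^*}})_\alpha$, so
\[
G_\alpha^{\,\varphi} = \psi\bigl(\pi_{i^*}(G_\alpha)\bigr) \leq (G^{\Omega_{i^*}})_\alpha^{\,\psi} \neq A_k,
\]
as desired. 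The only nontrivial ingredient is the subdirect product lemma, which is exactly what produces a single orbit absorbing all of $\ker(\varphi)$; everything else is routine bookkeeping.
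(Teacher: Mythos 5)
Your proof is correct and follows essentially the same route as the paper: apply Lemma~\ref{la:epimorphism-subdirect-product} to the subdirect product decomposition of $G$ over its orbits, factor $\varphi$ through the restriction to a single orbit, and invoke the transitive case (Lemma~\ref{la:epimorphism-transitive-almost-tree}) together with Observation~\ref{obs:sequence-of-partitions}. You spell out a couple of steps the paper leaves implicit (surjectivity of $\psi$, the containment $\pi_{i^*}(G_\alpha)\leq (G^{\Omega_{i^*}})_\alpha$), but the argument is the same.
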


\begin{proof}
 Let $\Omega_1,\dots,\Omega_\ell$ be the orbits of $G$ and let $\pi_i\colon G \rightarrow \Sym(\Omega_i)$ be the restriction of $G$ to $\Omega_i$.
 Let $G_i = \im(\pi_i)$ and $M_i = \ker(\pi_i)$.
 By Lemma \ref{la:epimorphism-subdirect-product} there is some $i \in [\ell]$ such that $M_i \leq \ker(\varphi)$.
 So $\varphi$ factors across $G_i$ as $G \stackrel{\pi_{i}}{\rightarrow} G_i \stackrel{\psi}{\rightarrow} A_k$.
 It follows that $G_i^{\psi} = A_k$.
 Now let $\alpha \in \Omega_i$.
 Note that $\mathfrak{B}_0[\Omega_i] \succeq \dots \succeq \mathfrak{B}_m[\Omega_i]$ forms an almost $d$-ary sequence of $G_i$-invariant partitions.
 Thus by Lemma \ref{la:epimorphism-transitive-almost-tree} it follows that $(G_i)_\alpha^{\psi} \neq A_k$ and thus, $G_\alpha^{\varphi} \neq A_k$.
\end{proof}

For a set $\Delta$ we denote by $\Alt(\Delta)$ the alternating group acting with its standard action on the set $\Delta$.
Moreover, we refer to the groups $\Alt(\Delta)$ and $\Sym(\Delta)$ as the \emph{giants} where $\Delta$ is an arbitrary finite set.

\begin{definition}[Babai \cite{Babai16}]
 Let $G \leq \Sym(\Omega)$.
 A homomorphism $\varphi\colon G \rightarrow S_k$ is a \emph{giant representation} if $G^\varphi \geq A_k$.
 In this case an element $\alpha \in \Omega$ is \emph{affected by $\varphi$} if $G_\alpha^{\varphi} \not\geq A_k$.
\end{definition}

\begin{remark}
 \label{rem:affected-orbit}
 Let $\varphi\colon G \rightarrow S_k$ be a giant representation and suppose $\alpha \in \Omega$ is affected by $\varphi$.
 Then every element in the orbit $\alpha^{G}$ is affected by
 $\varphi$. We call $\alpha^G$ an \emph{affected orbit} (with respect to $\varphi$).
\end{remark}

With this definition we can now state the generalization of the Unaffected Stabilizers Theorem (see \cite[Theorem 6]{Babai16}).

\begin{theorem}
 \label{thm:unaffected-stabilizer-tree}
 Let $G \leq \Sym(\Omega)$ be a permutation group and suppose there is an almost $d$-ary sequence of $G$-invariant partitions
 $\{\Omega\} = \mathfrak{B}_0 \succ \dots \succ \mathfrak{B}_m = \{\{\alpha\} \mid \alpha \in \Omega\}$.
 Furthermore let $k > \max\{8,2 + \log_2d\}$ and $\varphi\colon G \rightarrow S_k$ be a giant representation.
 Let $D \subseteq \Omega$ be the set of elements not affected by $\varphi$.
 Then $G_{(D)}^{\varphi} \geq A_k$.
\end{theorem}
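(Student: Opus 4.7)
The plan is to mimic the structure of Babai's original Unaffected Stabilizers Theorem, replacing his appeal to the analogous statement for general permutation groups with our Lemma~\ref{la:epimorphism-to-ak}, which already encodes the almost $d$-ary hypothesis.

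First I would reduce to the case $G^\varphi = A_k$. Let $H = \varphi^{-1}(A_k)\leq G$, so that $|G:H|\leq 2$ and $H^{\varphi\vert_H} = A_k$. Any almost $d$-ary $G$-invariant sequence is automatically $H$-invariant (Observation~\ref{obs:sequence-of-partitions}), so $H$ still satisfies the hypotheses. I would then check that ``affected'' is preserved under passage to $H$: if $A_k\leq G_\alpha^\varphi$, then for every $a\in A_k$ any preimage $g\in G_\alpha$ automatically lies in $H$, so $A_k\leq H_\alpha^{\varphi\vert_H}$, and the converse is immediate from $H_\alpha\leq G_\alpha$. Thus the set $D$ of unaffected points is the same for $\varphi$ and $\varphi\vert_H$, and $H_{(D)}^{\varphi\vert_H}\geq A_k$ implies $G_{(D)}^\varphi\geq A_k$.

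Assuming now $G^\varphi = A_k$, I would argue by contradiction: suppose $G_{(D)}^\varphi\not\geq A_k$. Because $D$ is a union of (unaffected) $G$-orbits by Remark~\ref{rem:affected-orbit}, the pointwise stabilizer $G_{(D)}$ is normal in $G$, hence $G_{(D)}^\varphi$ is a normal subgroup of $A_k$. Since $k>8$, the group $A_k$ is simple, so we must have $G_{(D)}^\varphi = \{1\}$, that is, $G_{(D)}\leq \ker\varphi$. Consequently $\varphi$ factors as $G\to G^D \xrightarrow{\psi} A_k$ with $\psi$ an epimorphism, where $G^D$ is the induced action of $G$ on $D$.

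To derive a contradiction, I would apply Lemma~\ref{la:epimorphism-to-ak} to $(G^D,\psi)$. The restricted partitions $\mathfrak{B}_0[D]\succeq\dots\succeq\mathfrak{B}_m[D]$ form an almost $d$-ary sequence of $G^D$-invariant partitions by Observation~\ref{obs:sequence-of-partitions}, and $k>\max\{8,2+\log_2 d\}$ is inherited, so the lemma gives some $\alpha\in D$ with $(G^D)_\alpha^\psi\neq A_k$. Translating back through the factorization, $(G^D)_\alpha^\psi = G_\alpha^\varphi$, so $G_\alpha^\varphi$ is a proper subgroup of $A_k$, meaning $\alpha$ is affected; this contradicts $\alpha\in D$.

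The main obstacle I anticipate is the reduction bookkeeping in the first paragraph: one must be careful that the notion of ``affected'' really is invariant under restriction to $H=\varphi^{-1}(A_k)$ and that passing to $H$ does not lose any almost $d$-ary structure. Everything else is a compact normal-subgroup argument combined with the recursive Lemma~\ref{la:epimorphism-to-ak}, which is where the real work of handling the semi-regular layers of the sequence has already been done.
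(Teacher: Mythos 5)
Your proof is correct and follows the same structure as the paper's: reduce to the case $G^\varphi = A_k$, use normality of $G_{(D)}$ together with simplicity of $A_k$ to conclude $G_{(D)} \leq \ker\varphi$ and factor $\varphi$ through $G^D$, then invoke Lemma~\ref{la:epimorphism-to-ak} to produce an affected point inside $D$, a contradiction. The only stylistic differences are that you perform the $A_k$-reduction first rather than last, and your direct observation that any preimage in $G_\alpha$ of an element of $A_k$ automatically lies in $H=\varphi^{-1}(A_k)$ slightly streamlines the paper's index-two argument for showing the unaffected set is unchanged under passing to $H$.
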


\begin{proof}
 First suppose that $G^{\varphi} = A_k$.
 The set $D$ is $G$-invariant (cf.\ Remark \ref{rem:affected-orbit}).
 Let $\psi\colon G \rightarrow \Sym(D)$ be the restriction of $G$ to $D$. Observe that $\ker(\psi) = G_{(D)}$.
 So $G_{(D)} \trianglelefteq G$ and hence, $G_{(D)}^{\varphi} \trianglelefteq G^{\varphi} = A_k$.
 Assume towards a contradiction that $G_{(D)}^{\varphi} \neq A_k$.
 Then $G_{(D)}^{\varphi} = 1$, that is, $G_{(D)} \leq \ker(\varphi)$.
 So $\varphi$ factors across $H := G^{\psi} \leq \Sym(D)$ as $G \stackrel{\psi}{\rightarrow} H \stackrel{\rho}{\rightarrow} A_k$.
 Note that $\mathfrak{B}_0[D] \succeq \dots \succeq \mathfrak{B}_m[D]$ forms an almost $d$-ary sequence of $H$-invariant partitions.
 It follows that $H^{\rho} = A_k$ and hence, $H^{\rho}_\alpha \neq A_k$ for some $\alpha \in D$ by Lemma \ref{la:epimorphism-to-ak}.
 But $G_\alpha^{\varphi} = H_\alpha^{\rho} = A_k$ since $\alpha \in D$ is not affected, which is a contradiction.
 
 So consider the case that $G^{\varphi} = S_k$ and let $G' = \varphi^{-1}(A_k)$. 
 Let $\varphi' = \varphi|_{G'}$. 
 Let $D'$ be the set of points not affected by $\varphi'$.
 We argue that $D' = D$.
 We have $D' \subseteq D$ because $G_\alpha^{\varphi} \geq (G_\alpha')^{\varphi}$ for all $\alpha \in \Omega$.
 Now suppose there is some $\alpha \in D \setminus D'$.
 Then $G_\alpha^{\varphi} \geq A_k$, $(G_\alpha')^{\varphi} < A_k$ and
 $|G_\alpha^{\varphi} : (G_\alpha')^{\varphi}| \leq 2$.
 Overall, this gives us a subgroup of $A_k$ of index $2$. But such a subgroup would be a normal subgroup contradicting the fact that $A_k$ is simple.
 So $D' = D$.
 Then, by the previous case, $G_{(D)}^{\varphi} \geq (G')_{(D)}^{\varphi'} = (G')_{(D')}^{\varphi'} \geq A_k$.
\end{proof}

We also use Babai's Affected Orbit Lemma, which does not need to be adapted to our setting.

\begin{theorem}[\mbox{\cite[Theorem 6(b)]{Babai16}}]
 \label{thm:kernel-affected-orbits}
 Let $G \leq \Sym(\Omega)$ be a permutation group and suppose $\varphi\colon G \rightarrow S_k$ is a giant representation for $k \geq 5$.
 Suppose $\Delta \subseteq \Omega$ is an affected orbit of $G$ (with respect to $\varphi$). Then every orbit of $\ker(\varphi)$ in $\Delta$ has length at most $|\Delta|/k$.
\end{theorem}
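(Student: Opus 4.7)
The plan is to translate the desired inequality on orbit lengths into an index inequality inside $G^{\varphi}$, and then invoke the classical fact that the minimum index of a proper subgroup of $A_k$ is $k$ for $k \geq 5$.

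First, I would fix $\alpha \in \Delta$ and set $N = \ker(\varphi)$ and $N_\alpha = G_\alpha \cap N$. By orbit-stabilizer, $|\Delta| = [G : G_\alpha]$ and $|\alpha^N| = [N : N_\alpha]$. Using the standard identity $|G_\alpha N| = |G_\alpha| \cdot |N| / |G_\alpha \cap N|$, a direct computation gives
$$\frac{|\Delta|}{|\alpha^N|} \;=\; \frac{|G| \cdot |N_\alpha|}{|G_\alpha| \cdot |N|} \;=\; [G : G_\alpha N].$$
Since $N \leq G_\alpha N$, the right-hand side equals $[G/N : G_\alpha N / N]$, which under the isomorphism $G/N \cong G^{\varphi}$ is precisely $[G^{\varphi} : G_\alpha^{\varphi}]$. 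Thus it suffices to prove $[G^{\varphi} : G_\alpha^{\varphi}] \geq k$.

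For the final step, I would restrict the action of $G^{\varphi}$ on its coset space $G^{\varphi}/G_\alpha^{\varphi}$ to the subgroup $A_k \leq G^{\varphi}$. The stabilizer in $A_k$ of the trivial coset is $A_k \cap G_\alpha^{\varphi}$, and this is a \emph{proper} subgroup of $A_k$ because $\alpha$ is affected, i.e.\ $G_\alpha^{\varphi} \not\geq A_k$. Hence the $A_k$-orbit of the trivial coset has size $[A_k : A_k \cap G_\alpha^{\varphi}]$, which by the classical lower bound on the index of a proper subgroup of $A_k$ (for $k \geq 5$) is at least $k$. Since $[G^{\varphi} : G_\alpha^{\varphi}]$ is at least the size of any single $A_k$-orbit on the coset space, we conclude $[G^{\varphi} : G_\alpha^{\varphi}] \geq k$, as required. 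There is no real obstacle beyond carrying out this index chase; the only non-elementary ingredient is the minimum-index result for $A_k$, which is standard and whose proof in turn relies only on the simplicity of $A_k$ together with the observation that $|A_k| > |S_m|$ for $m < k$.
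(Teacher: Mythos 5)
Your proof is correct. Note that the paper itself does not prove this statement; it is imported verbatim from Babai (Theorem~6(b) of~\cite{Babai16}), so there is no in-paper argument to compare against. Your reconstruction is the standard one and matches Babai's: since $N=\ker(\varphi)\trianglelefteq G$ and $\Delta$ is a $G$-orbit, all $N$-orbits in $\Delta$ have the same size, and the index chase
$\frac{|\Delta|}{|\alpha^{N}|}=[G:G_\alpha N]=[G^{\varphi}:G_\alpha^{\varphi}]$
is exactly right; restricting the coset action of $G^{\varphi}$ to its subgroup $A_k$ and using that $\alpha$ being affected forces $A_k\cap G_\alpha^{\varphi}$ to be a \emph{proper} subgroup of $A_k$ then yields $[G^{\varphi}:G_\alpha^{\varphi}]\geq[A_k:A_k\cap G_\alpha^{\varphi}]\geq k$ by the classical minimum-index bound for $A_k$ ($k\geq 5$), which is all that is needed. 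One tiny point worth making explicit in a write-up: the equality $|\Delta|/|\alpha^{N}|=[G:G_\alpha N]$ is independent of the choice of $\alpha\in\Delta$ because the $N$-orbits form a block system, so proving the bound for one $\alpha$ suffices for all $N$-orbits in $\Delta$; you use this implicitly and it is fine.
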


\section{Local Certificates}
\label{sec:local-certificates}

In this section we adapt the Local Certificates technique developed in \cite{Babai16} to our setting using the generalization to the Unaffected Stabilizers Theorem presented in the previous section (Theorem~\ref{thm:unaffected-stabilizer-tree}).
As before the basic argumentation and also the notation follows \cite{Babai15-full}.
Besides an adaptation to our setting, the main difference is a more precise analysis of the running time which is required for our overall analysis.

\subsection{Algorithm}

Let $G \leq \Sym(\Omega)$ be a permutation group and let $\mathfrak{x} \colon \Omega \rightarrow \Sigma$ be a string.
Furthermore let $\varphi\colon G \rightarrow S_k$ be a giant representation.
For a set $T \subseteq [k]$ let $G_T = \varphi^{-1}((G^{\varphi})_T)$ and $G_{(T)} = \varphi^{-1}((G^{\varphi})_{(T)})$.

Since our notation may be getting hard to trace here, as an example, let us disassemble it
for one of the groups appearing in the next definition:
$((\Aut_{G_T}(\mathfrak{x}))^{\varphi})^{T}$. We start from the group
$G\le\Sym(\Omega)$. With the homomorphism $\varphi$ we
map it to $G^\varphi\le S_k$ acting on the set $[k]$. Then we take the setwise stabilizer
of the set $T\subseteq[k]$ and obtain the subgroup $G^\varphi_T\le S_k$. We pull back to $\Omega$ via $\varphi^{-1}$ and obtain the
subgroup $G_T:=\varphi^{-1}(G^\varphi_T)\le \Sym(\Omega)$.
We move on to the subgroup $\Aut_{G_T}(\mathfrak x)$ and, once more,
transfer it back to $[k]$ via $\varphi$ to obtain $(\Aut_{G_T}(\mathfrak{x}))^{\varphi}\le S_k$.
The set $T$ is invariant with respect to this group, so the group also acts on $T$.
This, finally, gives us $((\Aut_{G_T}(\mathfrak{x}))^{\varphi})^{T}\le\Sym(T)$.
What makes this complicated is the going back and forth between $\Omega$ and $[k]$.
But this interplay between the two sets, or rather the groups acting on these sets, is crucial for the overall reasoning.

\begin{definition}\label{def:full}
 A set $T \subseteq [k]$ is \emph{full} if $((\Aut_{G_T}(\mathfrak{x}))^{\varphi})^{T} \geq \Alt(T)$.
 A \emph{certificate of fullness} is a subgroup $K \leq \Aut_{G_T}(\mathfrak{x})$ such that $(K^{\varphi})^{T} \geq \Alt(T)$.
 A \emph{certificate of non-fullness} is a non-giant $M \leq \Sym(T)$ such that $((\Aut_{G_T}(\mathfrak{x}))^{\varphi})^{T} \leq M$.
\end{definition}

Let $W \subseteq \Omega$ be $G$-invariant and let $\mathfrak{y}\colon \Omega \rightarrow \Sigma$ be a second string.
Recall that $\Iso_G^{W}(\mathfrak{x},\mathfrak{y}) = \{g \in G \mid
\forall \alpha \in W\colon\mathfrak{x}(\alpha) =
\mathfrak{y}(\alpha^{g})\}$ and $\Aut_G^{W}(\mathfrak{x}) = \Iso_G^{W}(\mathfrak{x},\mathfrak{x})$.

For $H \leq G$ we define $\Aff(H,\varphi) = \{\alpha \in \Omega \mid H_\alpha^{\varphi} \not\geq A_k\}$.
Note that for $H_1 \leq H_2 \leq G$ it holds that $\Aff(H_1,\varphi) \supseteq \Aff(H_2,\varphi)$.

Finally, recall that $n$ always denotes the size of the permutation domain $\Omega$.

\begin{lemma}
 \label{la:local-certificate}
 Let $\mathfrak{x}\colon \Omega \rightarrow \Sigma$ be a string, $G \leq \Sym(\Omega)$ be a group and suppose there is
 an almost $d$-ary sequence of $G$-invariant partitions $\{\Omega\} = \mathfrak{B}_0 \succ \dots \succ \mathfrak{B}_m = \{\{\alpha\} \mid \alpha \in \Omega\}$.
 Furthermore suppose there is a giant representation $\varphi\colon G \rightarrow S_k$ and let $T \subseteq [k]$ be a set of size $|T| = t > \max\{8,2 + \log_2d\}$.
 
 Then there are natural numbers $n_1,\dots,n_\ell \leq n/2$ such that $\sum_{i=1}^{\ell}n_i \leq n$ and,
 for each $i \in [\ell]$ using at most $t!$ recursive calls to String
 Isomorphism over domain size $n_i$ and $\mathcal{O}(t!\cdot n^{c})$
 additional computation, one can decide whether $T$ is full or not and generate a corresponding certificate.
\end{lemma}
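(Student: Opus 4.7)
I adapt Babai's local certificates procedure to our setting, relying on the generalized Unaffected Stabilizers Theorem~\ref{thm:unaffected-stabilizer-tree} (whose hypothesis is met since $t > \max\{8, 2 + \log_2 d\}$) and the Affected Orbit Lemma (Theorem~\ref{thm:kernel-affected-orbits}). Initialize $K := \varphi^{-1}((G^{\varphi})_T)$, so that $(K^{\varphi})^{T} \geq \Alt(T)$ by hypothesis that $\varphi$ is a giant representation. Repeat the following: compute the unaffected set $W := \Omega \setminus \Aff(K,\varphi)$, which is $K$-invariant by Remark~\ref{rem:affected-orbit}; compute $K' := \Aut_{K}^{W}(\mathfrak{x})$ by Luks's orbit-by-orbit reduction applied to the $K$-orbits inside $W$; if $((K')^{\varphi})^{T}$ fails to contain $\Alt(T)$, return it as a certificate of non-fullness; if $K' = K$, return $K$ as a certificate of fullness; otherwise, replace $K$ by $K'$ and continue.

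\textbf{Correctness.} Apply Theorem~\ref{thm:unaffected-stabilizer-tree} to the current $K$, to $\varphi|_K$ (which remains a giant representation), and to the almost $d$-ary sequence that $K$ inherits from $G$ by Observation~\ref{obs:sequence-of-partitions}; this yields $(K_{(W)}^{\varphi})^{T} \geq \Alt(T)$. Since the pointwise stabilizer $K_{(W)}$ automatically fixes $\mathfrak{x}$ on $W$, we have $K_{(W)} \leq K'$, so the giant on $T$ passes to $K'$ unless it is actively destroyed — which is precisely the signal for a non-fullness certificate. Upon stabilization with $K' = K$, the group $K$ is contained in $\Aut_{G_T}(\mathfrak{x})$ and satisfies $(K^{\varphi})^{T} \geq \Alt(T)$, witnessing fullness in the sense of Definition~\ref{def:full}.

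\textbf{Complexity and the main obstacle.} The outer loop runs at most $t!$ times, because at each non-terminal step $(K^{\varphi})^{T}$ strictly shrinks inside $\Sym(T)$, which has $t!$ elements (shrinkage strictly inside $\ker(\cdot^T \circ \varphi)$ can be merged into one further application of $\Aut_{K}^{W}(\mathfrak{x})$ after the final iteration). Inside each iteration, orbit-by-orbit processing of $W$ into $K$-orbits $\Delta_1,\dots,\Delta_\ell$ of sizes $n_i := |\Delta_i|$ produces exactly one recursive String Isomorphism call of domain size $n_i$ per orbit, so across all iterations we obtain at most $t!$ calls per orbit size, with $\mathcal{O}(t! \cdot n^{c})$ bookkeeping overhead for group-theoretic manipulations via Theorem~\ref{thm:permutation-group-algorithm-library}. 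The additive bound $\sum_i n_i \leq n$ is immediate since the orbits are disjoint. The key technical difficulty is the bound $n_i \leq n/2$: if $K$ has at least two orbits in $\Omega$ this is automatic; if $K$ is transitive, then Lemma~\ref{la:epimorphism-transitive-almost-tree} forbids $W = \Omega$, so some point is affected and $W \subsetneq \Omega$, and to sharpen to $n/2$ we exploit Theorem~\ref{thm:kernel-affected-orbits} and refine the affected orbits into $\ker(\varphi|_K)$-suborbits of length at most $n/t \leq n/2$. Coordinating these two orbit systems — $K$-orbits for Luks reduction and $\ker(\varphi|_K)$-orbits for the size bound — while absorbing the enumeration of the $\leq t!$ cosets of $\ker(\varphi|_K)$ into the $t!$ global factor is the main bookkeeping challenge, and constitutes the bulk of the proof's work.
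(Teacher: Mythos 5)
Your plan inverts the central mechanism of Babai's local certificates and, as written, the algorithm neither makes progress nor produces a valid fullness certificate.

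You set $W := \Omega \setminus \Aff(K,\varphi)$, the \emph{unaffected} set, and then compute $K' := \Aut_K^W(\mathfrak x)$. But for $H_1\le H_2$ one has $\Aff(H_1,\varphi)\supseteq\Aff(H_2,\varphi)$, so as $K$ shrinks the unaffected set also shrinks. Consequently after the first step the next unaffected set $W' \subseteq W$, and $K'$ already respects $\mathfrak x$ on $W'$ because it respects it on the larger $W$; the loop stabilizes after essentially one iteration without doing anything useful. Worse, at termination your $K$ respects $\mathfrak x$ only on the \emph{unaffected} set, not on $\Omega$, so the claim ``$K$ is contained in $\Aut_{G_T}(\mathfrak x)$'' is false: $K$ can act arbitrarily on the affected positions. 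The correct algorithm (Algorithm~\ref{alg:local-certificates}) iterates over the \emph{growing affected} window, restricting $G_i$ to respect $\mathfrak x$ on each fresh slice $W_{i+1}^*$ of newly affected points; upon stabilization the fullness certificate is not $G_i$ itself but the pointwise stabilizer $(G_i)_{(\Omega\setminus W_i)}$ of the unaffected set, which respects $\mathfrak x$ on $W_i$ (inherited from $G_i$), fixes $\Omega\setminus W_i$ pointwise, and induces a giant on $T$ by Theorem~\ref{thm:unaffected-stabilizer-tree}. Your argument applies Theorem~\ref{thm:unaffected-stabilizer-tree} correctly but then draws the wrong conclusion about which group is the certificate.

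The running-time argument also misidentifies the source of the $t!$ factor. The while-loop does \emph{not} run at most $t!$ times because ``$(K^\varphi)^T$ strictly shrinks''; the group $G_i^\varphi$ can stay a giant throughout while the affected window grows, and the loop length is bounded only by $n$. The factor $t!$ appears instead when a single newly-affected slice $W_{i+1}^*$ has size $>n/2$: then one enumerates the at most $t!$ cosets of $\ker(\varphi|_{G_i})$, and for each coset recurses on the $\ker(\varphi|_{G_i})$-orbits inside $W_{i+1}^*$, each of size at most $|W_{i+1}^*|/t\le n/2$ by Theorem~\ref{thm:kernel-affected-orbits}. Your closing sketch gestures at this but attaches the Affected Orbit Lemma to the unaffected set, where it does not apply, and the accounting (``$t!$ calls per orbit size across all iterations'', ``$\sum_i n_i\le n$ since orbits are disjoint'') does not cohere because orbits from different iterations are not disjoint. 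You would need to redo both the correctness and complexity parts along the lines above.
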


\begin{algorithm}[t]
 \caption{\textsf{LocalCertificates}}
 \label{alg:local-certificates}
 \DontPrintSemicolon
 \SetKwInOut{Input}{Input}
 \SetKwInOut{Output}{Output}
 \Input{$G \leq \Sym(\Omega)$, $\mathfrak{x}\colon \Omega \rightarrow \Sigma$, and $\varphi\colon G \rightarrow S_k$ with $k > \max\{8,2 + \log_2d\}$.
        There exists an almost $d$-ary sequence of $G$-invariant partitions $\{\Omega\} = \mathfrak{B}_0 \succ \dots \succ \mathfrak{B}_m = \{\{\alpha\} \mid \alpha \in \Omega\}$.
        }
 \Output{non-giant $M \leq S_k$ with $(\Aut_G(\mathfrak{x}))^{\varphi} \leq M$ or $K \leq \Aut_G(\mathfrak{x})$ with $K^{\varphi} \geq A_k$.}
 \BlankLine
 $G_0 := G$\;
 $W_0 := \emptyset$\;
 $i := 0$\;
 \While{$G_i^{\varphi} \geq A_k \textup{ \bfseries and } W_{i} \neq \Aff(G_i,\varphi)$}{
  $W_{i+1} := \Aff(G_i, \varphi)$\;
  $W_{i+1}^{*} := W_{i+1} \setminus W_i$\;
  \eIf{$|W_{i+1}^{*}| \leq \frac{1}{2}|\Omega|$}{
   $G_{i+1} := \Aut_{G_i}^{W_{i+1}^{*}}(\mathfrak{x})$\;
  }{
   $G_{i+1} := \emptyset$\;
   $N := \ker(\varphi|_{G_i})$\;
   \For{$g \in G_i^{\varphi}$}{
    compute a $\bar g \in \varphi^{-1}(g)$\;
    $G_{i+1} := G_{i+1} \cup \Aut_{N\bar g}^{W_{i+1}^{*}}(\mathfrak{x})$
   }
  }
  $i := i+1$\;
 }
 \eIf{$G_i^{\varphi} \not\geq A_k$}{
  \Return $G_i^{\varphi}$\;
 }{
  \Return $(G_i)_{(\Omega \setminus W_i)}$
 }
\end{algorithm}

\begin{proof}
 Without loss of generality assume $T = [k]$.
 Otherwise, we can compute the group $G_T$ and restrict the image of $\varphi$ to the set $T$.
 
 Consider Algorithm \ref{alg:local-certificates}.
 The algorithm computes, for increasing windows $W_0 \subseteq W_1 \subseteq W_2 \subseteq \dots$,
 the group $G_i$ of permutations that respect the input string $\mathfrak{x}$ on the window $W_i$,
 that is, $G_i = \Aut_{G}^{W_{i}}(\mathfrak{x}) = \Aut_{G_{i-1}}^{W_{i}^{*}}(\mathfrak{x})$, where $W_i^*=W_i\setminus W_{i-1}$.
 Note that $G_i \leq G_{i-1}$ and $W_{i+1} \supseteq W_i$ for $i \geq 1$
 (initially $W_1 \neq \emptyset$ since by Lemma~\ref{la:epimorphism-to-ak} at least one point has to be affected). 
 The algorithm stops when the current group $G_i^{\varphi}$ is not a giant or the window stops growing.
 
 Let $i^{*}$ be the value of the variable $i$ at the end of while-loop.
 Furthermore let $W = W_{i^{*}}$.
 Note that $\{W_j^{*} \mid 1 \leq j \leq i^{*}\}$ forms a partition of the set $W$.
 
 We first show the correctness of the algorithm.
 For every $0 \leq j \leq i^{*}$ it holds that $\Aut_G(\mathfrak{x}) \leq G_j \leq G$.
 We distinguish two cases.
 First suppose that $G_{i^{*}}^{\varphi} \not\geq A_k$.
 Then $G_{i^{*}}^{\varphi}$ forms a certificate of non-fullness.
 Otherwise $G_{i^{*}}^{\varphi} \geq A_k$ and $W = \Aff(G_{i^{*}},\varphi)$.
 Note that $\mathfrak{B}_0,\dots,\mathfrak{B}_m$ forms an almost $d$-ary sequence of invariant partitions for the group $G_{i^{*}}$ (cf.\ Observation \ref{obs:sequence-of-partitions}).
 So $((G_{i^{*}})_{(\Omega \setminus W)})^{\varphi} \geq A_k$ by Theorem~\ref{thm:unaffected-stabilizer-tree}.
 Furthermore, it easy to check that $G_{i^*}$ respects the string $\mathfrak{x}$ on all positions in $W_j$ for all $0 \leq j \leq i^{*}$.
 Hence, $(G_{i^{*}})_{(\Omega \setminus W)} \leq \Aut_G(\mathfrak{x})$ because it respects all positions within $W$ and fixes all other positions.
 
 It remains to analyze the running time of the algorithm. 
 Again we distinguish two cases.
 First suppose $|W_j^{*}| \leq n/2$ for all $j \in [i^{*}]$.
 Then, for each $j \in [i^{*}]$, the algorithm makes one recursive call to String Isomorphism over domain size $|W_i^{*}| \leq n/2$ (Line 8) and $\sum_{j \in [i^{*}]} |W_j^{*}| \leq |W| \leq n$.
 Otherwise there is a unique $j^{*} \in \{0,\dots,i^{*}-1\}$ such that $|W_{j^{*}+1}^{*}| > n/2$.
 Let $N = \ker(\varphi|_{G_{j^{*}}})$.
 Since all elements in $W_{j^{*}+1}^{*}$ are affected by $\varphi$ with respect to $G_{j^{*}}$ it holds that
 every orbit of $N$ in $W_{j^{*}+1}^{*}$ has size at most $|W_{j^{*}+1}^{*}|/k$ by Theorem \ref{thm:kernel-affected-orbits}.
 Since $G_{j^{*}}^\varphi\le S_k$, for each orbit the algorithm makes at most $k!$ calls to String Isomorphism
 where the domain is restricted to exactly this orbit (Line 14).
 (Recall the assumption $T=[k]$, which implies $t!=k!$.)
 Additionally, for every $j \in [i^{*}], j \neq j^{*}+1$ there is one recursive call to String Isomorphism over domain size $|W_j^{*}|$.
\end{proof}

\subsection{Comparing Local Certificates}

\begin{lemma}
 \label{la:compare-local-certificates}
 Let $\mathfrak{x}_1,\mathfrak{x}_2\colon \Omega \rightarrow \Sigma$ be two strings,
 $G \leq \Sym(\Omega)$ be a group and suppose there is an almost $d$-ary sequence of $G$-invariant partitions
 $\{\Omega\} = \mathfrak{B}_0 \succ \dots \succ \mathfrak{B}_m = \{\{\alpha\} \mid \alpha \in \Omega\}$.
 Furthermore suppose there is a giant representation $\varphi\colon G \rightarrow S_k$.
 Let $T_1,T_2 \subseteq [k]$ be sets of equal size $t = |T_1| = |T_2| > \max\{8,2 + \log_2d\}$ and suppose $T_1$ is not full.
 
 Then there are natural numbers $n_1,\dots,n_\ell \leq n/2$ such that $\sum_{i=1}^{\ell}n_i \leq n$ and,
 for each $i \in [\ell]$ using $t!$ recursive calls to String
 Isomorphism over domain size $n_i$ and $\mathcal{O}(t!n^{c})$
 additional computation,
 one can compute a non-giant group $M \leq \Sym(T_1)$ and a bijection $\sigma\colon T_1 \rightarrow T_2$ such that
 \begin{equation}\label{eq:clc1}
  \left\{g^\varphi|_{T_1} \,\middle|\, g \in \Iso_G(\mathfrak{x}_1,\mathfrak{x}_2) \wedge T_1^{\left(g^{\varphi}\right)} = T_2\right\} \subseteq M\sigma.
 \end{equation}
 Moreover, the set of bijections $M\sigma$ is canonical for the two test sets
 (w.r.t.~$\mathfrak{x}_1,\mathfrak{x}_2,G$ and the giant representation $\varphi$).
\end{lemma}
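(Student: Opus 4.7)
The plan is to reduce to an isomorphism analogue of Algorithm~\ref{alg:local-certificates} and reuse the recursive structure of Lemma~\ref{la:local-certificate}. First, using Theorem~\ref{thm:permutation-group-algorithm-library} applied to the natural action of $G^\varphi$ on $\binom{[k]}{t}$, search for an element $g_0 \in G$ with $T_1^{g_0^\varphi} = T_2$. If no such $g_0$ exists, the set on the left of~\eqref{eq:clc1} is empty and we may return any canonical pair, for instance $M=\{1\}$ together with the lexicographically smallest bijection $T_1\to T_2$. Otherwise, set $\mathfrak{x}_2':=\mathfrak{x}_2^{g_0^{-1}}$. By~\eqref{eq:string-align}, the set $\{g\in\Iso_G(\mathfrak{x}_1,\mathfrak{x}_2):T_1^{g^\varphi}=T_2\}$ equals $\Iso_{G_{T_1}}(\mathfrak{x}_1,\mathfrak{x}_2')\cdot g_0$, so its image under $g\mapsto g^\varphi|_{T_1}$ equals $\bigl(\Iso_{G_{T_1}}(\mathfrak{x}_1,\mathfrak{x}_2')\bigr)^\varphi|_{T_1}\cdot g_0^\varphi|_{T_1}$. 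Thus it suffices to produce a non-giant $M\leq\Sym(T_1)$ and a bijection $\tau:T_1\to T_1$ with $\bigl(\Iso_{G_{T_1}}(\mathfrak{x}_1,\mathfrak{x}_2')\bigr)^\varphi|_{T_1}\subseteq M\tau$; we then output $\sigma:=\tau\cdot g_0^\varphi|_{T_1}$.

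To compute $M$ and $\tau$, I would run the direct isomorphism analogue of Algorithm~\ref{alg:local-certificates} on the instance $(G_{T_1},\mathfrak{x}_1,\mathfrak{x}_2',\varphi|_{G_{T_1}})$: maintain a descending sequence $K_0\supseteq K_1\supseteq\cdots$ of cosets (possibly empty), starting with $K_0=G_{T_1}$ and iterating $K_{i+1}:=\Iso_{K_i}^{W_{i+1}^*}(\mathfrak{x}_1,\mathfrak{x}_2')$, where each $W_{i+1}$ is the affected set of the underlying group $H_i:=K_iK_i^{-1}$ of $K_i$. The distinction between the cases $|W_{i+1}^*|\leq n/2$ and $|W_{i+1}^*|>n/2$, and the corresponding recursion via Theorem~\ref{thm:kernel-affected-orbits}, carries over verbatim from the proof of Lemma~\ref{la:local-certificate} (with $\mathfrak{y}=\mathfrak{x}_2'$ replacing $\mathfrak{x}$); hence we obtain natural numbers $n_1,\dots,n_\ell\leq n/2$ with $\sum_i n_i\leq n$ and the same bound of at most $t!$ recursive calls per $n_i$ plus $\mathcal{O}(t!\,n^c)$ overhead.

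The algorithm terminates at some step $i^*$ either with $(H_{i^*}^\varphi)^{T_1}$ non-giant, in which case we set $M:=(H_{i^*}^\varphi)^{T_1}$ and $\tau$ to be the induced coset representative, or with $W_{i^*}=\Aff(H_{i^*},\varphi)$ stable. In the second case, since $\{\Omega\}=\mathfrak{B}_0\succ\dots\succ\mathfrak{B}_m$ is also an almost $d$-ary sequence of $H_{i^*}$-invariant partitions by Observation~\ref{obs:sequence-of-partitions}, Theorem~\ref{thm:unaffected-stabilizer-tree} applied to the giant representation $g\mapsto(g^\varphi)^{T_1}$ of $H_{i^*}$ gives $\bigl(((H_{i^*})_{(\Omega\setminus W_{i^*})})^\varphi\bigr)^{T_1}\geq\Alt(T_1)$; but $(H_{i^*})_{(\Omega\setminus W_{i^*})}\leq\Aut_{G_{T_1}}(\mathfrak{x}_1)$, yielding $\bigl((\Aut_{G_{T_1}}(\mathfrak{x}_1))^\varphi\bigr)^{T_1}\geq\Alt(T_1)$ and contradicting the non-fullness hypothesis on $T_1$. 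Hence only the first case occurs and we obtain the required $M$ and $\tau$.

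The main obstacle is verifying that the coset $M\sigma$ is canonical with respect to $(\mathfrak{x}_1,\mathfrak{x}_2,G,\varphi,T_1,T_2)$ rather than depending on the auxiliary choice of $g_0$. Canonicity of $M\tau$ as a function of $(\mathfrak{x}_1,\mathfrak{x}_2',G_{T_1},\varphi)$ is immediate from the determinism of the algorithm. If we replace $g_0$ by $g_0':=hg_0$ with $h\in G_{T_1}$, then $\mathfrak{x}_2'$ is replaced by $(\mathfrak{x}_2')^{h^{-1}}$, which right-translates the iso coset by $h^{-1}$ and hence shifts $\tau$ to $\tau\cdot(h^{-1})^\varphi|_{T_1}$ (while $M$, being the image of the group part, is unchanged); simultaneously $g_0^\varphi|_{T_1}$ becomes $h^\varphi|_{T_1}\cdot g_0^\varphi|_{T_1}$, so the product $\tau\cdot g_0^\varphi|_{T_1}=\sigma$ is invariant, and $M\sigma$ depends only on the stated data.
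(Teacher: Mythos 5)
Your proposal is correct and follows essentially the same approach as the paper: run an isomorphism version of the Local Certificates routine on the coset of maps taking $T_1$ to $T_2$, track the affected-set windows via the group part of the current coset, terminate when the image in $\Sym(T_1)$ drops below $\Alt(T_1)$, and rule out the window-stabilizing alternative by Theorem~\ref{thm:unaffected-stabilizer-tree} together with non-fullness. The only cosmetic difference is that you normalize up front by a fixed $g_0$ (so all intermediate sets are right cosets of subgroups of $G_{T_1}$) whereas the paper's Algorithm~\ref{alg:compare-local-certificates} carries the representative $\sigma_i$ along explicitly; these are interchangeable, and your check that the choice of $g_0$ drops out is the corresponding bookkeeping. (Two small remarks: the case ``no $g_0$ exists'' cannot occur, since $G^\varphi\ge A_k$ is transitive on $t$-subsets for $t<k$; and the canonicity claim in the paper is the equivariance under replacing $T_i$ by $T_i^{g_i^\varphi}$ for $g_i\in\Aut_G(\mathfrak{x}_i)$ --- your explicit computation only treats $g_0$-independence, but the same ``each step is intrinsically defined'' argument you invoke via determinism gives the full equivariance, and the paper is equally terse on this point.)
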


Here, canonical means that given additional test sets $T_1',T_2' \subseteq [k]$ such that $T_i' = T_i^{g}$ for some $g \in \Aut_G(\mathfrak{x_i})$ for both $i \in \{1,2\}$, the algorithm computes a set $M'\sigma'$ such that $(M\sigma)^{g} = M'\sigma'$.

\begin{algorithm}
 \caption{\textsf{CompareLocalCertificates}}
 \label{alg:compare-local-certificates}
 \DontPrintSemicolon
 \SetKwInOut{Input}{Input}
 \SetKwInOut{Output}{Output}
 \Input{$G \leq \Sym(\Omega)$, $\mathfrak{x}_1,\mathfrak{x}_2\colon \Omega \rightarrow \Sigma$,
        $\varphi\colon G \rightarrow S_k$, and $T_1,T_2\subseteq [k]$ of size $t > \max\{8,2 + \log_2d\}$.
        There exists an almost $d$-ary sequence of invariant partitions
                $\{\Omega\} = \mathfrak{B}_0 \succ \dots \succ \mathfrak{B}_m = \{\{\alpha\} \mid \alpha \in \Omega\}$ and~$T_1$ is not full.
        }
 \Output{non-giant group $M \leq \Sym(T_1)$ and bijection $\sigma\colon T_1 \rightarrow T_2$ such that
         \[\left\{g^\varphi|_{T_1} \,\middle|\, g \in \Iso_G(\mathfrak{x}_1,\mathfrak{x}_2) \wedge T_1^{\left(g^{\varphi}\right)} = T_2\right\} \subseteq M\sigma.\]}
 \BlankLine
 compute $\sigma_0 \in G$ such that $T_1^{(\sigma_0^{\varphi})} = T_2$\;
 $G_0 := G_{T_1}$\;
 $W_0 := \emptyset$\;
 $i := 0$\;
 $\psi \colon G_0 \rightarrow \Sym(T_1)$ homomorphism obtained from $\varphi$ by restricting the image to $T_1$\;
 \While{$G_i^{\psi} \geq \Alt(T_1)$}{
  $W_{i+1} := \Aff(G_i, \psi)$\;
  $W_{i+1}^{*} := W_{i+1} \setminus W_i$\;
  \eIf{$|W_{i+1}^{*}| \leq \frac{1}{2}|\Omega|$}{
   $G_{i+1}\sigma_{i+1} := \Iso_{G_i\sigma_i}^{W_{i+1}^{*}}(\mathfrak{x},\mathfrak{y})$\;
  }{
   $G_{i+1} := \emptyset$\;
   $N := \ker(\psi|_{G_i})$\;
   $\ell := 0$\;
   \For{$g \in G_i^{\psi}$}{
    compute $\bar g \in \psi^{-1}(g)$\;
    $H_\ell h_\ell := \Iso_{N\bar g\sigma_i}^{W_{i+1}^{*}}(\mathfrak{x},\mathfrak{y})$\;
    $\ell := \ell+1$\;
   }
   $G_{i+1}\sigma_{i+1} := \bigcup_{j \leq \ell} H_jh_j$\;
  }
  $i := i+1$\;
 }
 
 \Return $(G_i^{\psi},(\sigma_i^{\varphi})|_{T_1})$\;
\end{algorithm}

\begin{proof}
 Consider Algorithm \ref{alg:compare-local-certificates}.
 First suppose towards a contradiction there is some $i$ such that $W_{i+1} = W_i$.
 Then $((G_i)_{(\Omega \setminus W_i)})^{\psi} \geq \Alt(T_1)$ by Theorem \ref{thm:unaffected-stabilizer-tree}.
 Furthermore $(G_i)_{(\Omega \setminus W_i)} \leq \Aut_G(\mathfrak{x})$.
 Together this implies that $(\Aut_{G_{T_1}}(\mathfrak{x}))^{\psi} \geq \Alt(T_1)$ contradicting the fact that $T_1$ is not full.
 
 So the algorithm terminates and returns a non-giant group $M \leq \Sym(T_1)$ and a bijection $\sigma\colon T_1 \rightarrow T_2$ with the desired properties.
 The complexity analysis is completely analogous to Lemma \ref{la:local-certificate}.
 
 Finally, the canonicity of the set of bijections $M\sigma$ follows from the fact that in each iteration the set of affected points is canonically defined.
\end{proof}

\subsection{Aggregating Local Certificates}

Let $G \leq \Sym(\Omega)$ be a group.
The \emph{symmetry defect} of $G$ is the minimal $t \in [n]$ such that there is a set $\Delta \subseteq \Omega$ of size $|\Delta| = n-t$ such that $\Alt(\Delta) \leq G$ (the group $\Alt(\Delta)$ fixes all elements of $\Omega\setminus \Delta$).
In this case the \emph{relative symmetry defect} is $t/n$.

For any relational structure $\mathfrak{A}$ we define the \emph{(relative) symmetry defect} of $\mathfrak{A}$ to be the (relative) symmetry defect of its automorphism group $\Aut(\mathfrak{A})$.

\begin{theorem}[cf.\;\cite{DM96}, Theorem 5.2 A,B]
 Let $A_n \leq S \leq S_n$ and suppose $n > 9$.
 Let $G \leq S$ and $r < n/2$.
 Suppose that $|S : G| < \binom{n}{r}$.
 Then the symmetry defect of $G$ is strictly less than $r$.
\end{theorem}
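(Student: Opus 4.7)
The plan is to reduce the statement to one about a subgroup of $A_n$ and then appeal directly to the classification of small-index subgroups of $A_n$ contained in the cited theorems. Set $H := G \cap A_n$. Since $A_n$ is normal in $S_n$ and hence in $S$, the second isomorphism theorem gives $A_n / H = A_n / (A_n \cap G) \cong A_n G / G \leq S/G$, and therefore
\[
 |A_n : H| \;\leq\; |S : G| \;<\; \binom{n}{r}.
\]
So it suffices to show that any proper subgroup $H$ of $A_n$ (with $n > 9$) satisfying $|A_n : H| < \binom{n}{r}$ for some $r < n/2$ must contain $\Alt(\Delta)$ for some $\Delta \subseteq [n]$ with $|\Delta| > n - r$.

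Next I would invoke Theorems~5.2A and~5.2B of~\cite{DM96}, which classify the proper subgroups of $A_n$ whose index is less than $\binom{n}{r}$ for $r \leq n/2$. The hypothesis $n > 9$ is precisely what eliminates the finitely many sporadic exceptions (e.g.\ Mathieu-type actions) appearing in the classification. The upshot of these theorems is that every such $H$ must be intransitive: a Bochert-type bound forces any primitive subgroup of $A_n$ not equal to $A_n$ to have index well beyond $\binom{n}{\lfloor n/2\rfloor}$, while a recursive/O'Nan--Scott style argument on a minimal block system rules out the transitive imprimitive case under the same index constraint. This leaves only the intransitive case, in which $H$ stabilizes some proper subset $C \subsetneq [n]$ of size $|C| < r$, and the same index bound forces $H$ to contain the full alternating group on the complement $\Delta := [n]\setminus C$.

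Once that structural conclusion is in hand, the remainder is immediate: from $\Alt(\Delta) \leq H \leq G$ and $|\Delta| = n - |C| > n - r$ we read off a witness showing the symmetry defect of $G$ is $n - |\Delta| < r$. The main obstacle of the argument is thus entirely encapsulated inside the quoted Dixon--Mortimer theorems; the work done here is the routine reduction from $G \leq S$ to $H \leq A_n$ via the index inequality above, together with the trivial transfer of the containment $\Alt(\Delta) \leq H$ back up to $G$.
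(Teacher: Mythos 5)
The paper does not prove this statement --- it is given with a ``cf.'' pointer to Dixon--Mortimer, Theorems~5.2A,B and no argument of its own --- so there is no in-paper proof to compare against. Your plan of passing to $H:=G\cap A_n$ and then invoking Dixon--Mortimer's classification of small-index subgroups of $A_n$ is exactly what that citation stands for, and the index reduction is correct: $A_nG\le S$ since $A_n\trianglelefteq S$, hence $|A_n:H|=|A_nG:G|\le|S:G|<\binom{n}{r}$, and $\Alt(\Delta)\le H\le G$ transfers the symmetry-defect conclusion upward.

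The gap is in the middle step. Theorem~5.2A of Dixon--Mortimer does \emph{not} say that a proper subgroup of $A_n$ of index $<\binom{n}{r}$ must be intransitive; its conclusion is a trichotomy, and both the imprimitive case~(ii) (two blocks of size $n/2$, with index exactly $\tfrac{1}{2}\binom{n}{n/2}$) and the sporadic Table~5.2A exceptions of case~(iii) can in fact satisfy the index hypothesis. Your assertion that ``a recursive/O'Nan--Scott style argument on a minimal block system rules out the transitive imprimitive case under the same index constraint'' fails for $r$ close to $n/2$: for $n=2m$ and $r=m-1<n/2$ one has $\tfrac{1}{2}\binom{2m}{m}<\binom{2m}{m-1}$ whenever $m\ge 2$, so the setwise stabilizer in $A_n$ of a partition into two $m$-blocks meets every hypothesis here yet has symmetry defect exactly $m>r$. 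Likewise, $n>9$ by itself does not dispose of all the exceptional Table~5.2A entries, several of which have $n$ larger than $9$. A correct proof has to either rule out cases~(ii) and~(iii) explicitly for the parameter range one cares about --- which is what the paper implicitly relies on in Corollary~\ref{cor:index-subgroup-large-symmetry-defect}, where only $n\ge 24$ and $r=\lfloor n/4\rfloor$ are used and the offending cases are comfortably out of range --- or else state the conclusion with those exceptions carried along.
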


Using the inequality $\binom{n}{\lfloor n/4\rfloor} \geq \left(\frac{n}{\lfloor n/4\rfloor}\right)^{\lfloor n/4\rfloor} \geq \frac{1}{4} \cdot \sqrt{2}^{n}$ we get the following corollary.

\begin{corollary}
 \label{cor:index-subgroup-large-symmetry-defect}
 Let $A_n \leq S \leq S_n$ and suppose $n \geq 24$.
 Let $G \leq S$ and suppose the relative symmetry defect of $G$ is at least $1/4$.
 Then $|S:G| \geq (4/3)^{n}$.
\end{corollary}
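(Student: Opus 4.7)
The plan is to apply the preceding theorem contrapositively, with a carefully chosen value of $r$, and then estimate the resulting binomial coefficient.

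First I would set $r := \lfloor n/4 \rfloor$. Since $n \geq 24$, we have $r \geq 6$ and $r < n/2$, so $r$ satisfies the hypothesis of the preceding theorem. The relative symmetry defect of $G$ being at least $1/4$ means the (integer-valued) symmetry defect of $G$ is at least $n/4$, and in particular at least $\lfloor n/4 \rfloor = r$. The contrapositive of the preceding theorem then says that if the symmetry defect of $G$ is at least $r$, then $|S:G| \geq \binom{n}{r}$. Hence
\begin{equation*}
  |S:G| \;\geq\; \binom{n}{\lfloor n/4\rfloor}.
\end{equation*}

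Second, I would invoke the displayed estimate $\binom{n}{\lfloor n/4\rfloor} \geq \frac{1}{4} \cdot \sqrt{2}^{n}$ cited just before the corollary. It remains to verify that $\frac{1}{4} \cdot \sqrt{2}^{n} \geq (4/3)^{n}$ for $n \geq 24$. Taking logarithms, this is equivalent to $n \cdot \log(3\sqrt{2}/4) \geq \log 4$. Since $\log(3\sqrt{2}/4) > 0$ (as $3\sqrt{2}/4 > 1$), a direct numerical check shows the threshold value is slightly below $24$, so the inequality holds for all $n \geq 24$. Combining these two estimates gives $|S:G| \geq (4/3)^{n}$, as required.

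The proof is essentially a one-step reduction to the cited theorem plus a numerical estimate, so no real obstacle arises. The only point requiring minor care is the boundary: one must pick $r = \lfloor n/4 \rfloor$ (rather than, say, $\lceil n/4 \rceil$) to ensure $r < n/2$, and one must verify that the small constant factor $1/4$ from the binomial estimate is absorbed by the gap between $\sqrt{2}$ and $4/3$ precisely when $n \geq 24$, which explains the hypothesis on $n$ in the statement.
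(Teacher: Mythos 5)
Your proof is correct and matches the paper's intended argument: the paper omits the proof of the corollary but displays exactly the binomial estimate $\binom{n}{\lfloor n/4\rfloor} \geq \frac{1}{4}\sqrt{2}^n$ immediately beforehand, and the route through the contrapositive of the preceding theorem with $r=\lfloor n/4\rfloor$, followed by checking $\frac{1}{4}\sqrt{2}^n \geq (4/3)^n$ for $n\geq 24$, is precisely what is meant. (Minor aside: the choice $r=\lceil n/4\rceil$ would also satisfy $r<n/2$ when $n\geq 24$; the real reason for the floor is simply that the paper's displayed estimate is stated for $\lfloor n/4\rfloor$.)
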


\begin{lemma}
 \label{la:aggregate-local-certificates}
 Let $\mathfrak{x}_1,\mathfrak{x}_2\colon \Omega \rightarrow \Sigma$ be two strings,
 $G \leq \Sym(\Omega)$ be a group and suppose there is an almost $d$-ary sequence of $G$-invariant partitions
 $\{\Omega\} = \mathfrak{B}_0 \succ \dots \succ \mathfrak{B}_m = \{\{\alpha\} \mid \alpha \in \Omega\}$.
 Furthermore suppose there is a giant representation $\varphi\colon G \rightarrow S_k$.
 Let $\max\{8,2 + \log_2d\} < t < k/10$.
 
 Then there are natural numbers $\ell \in \mathbb{N}$ and $n_1,\dots,n_\ell \leq n/2$ such that $\sum_{i=1}^{\ell}n_i \leq k^{\mathcal{O}(t)}n$ and,
 for each $i \in [\ell]$ using a recursive call to String Isomorphism
 over domain size $n_i$, and $k^{\mathcal{O}(t)}n^{c}$
 additional computation,
 one obtains for $i=1,2$ one of the following:
 \begin{enumerate}
  \item\label{item:aggregate-local-certificates-1} a family of $r\le k^6$ many $t$-ary relational structures $\mathfrak{A}_{i,j}$, for $j\in[r]$, associated with $\mathfrak x_i$,
    each with domain $D_{i,j}\subseteq[k]$ of size $|D_{i,j}| \geq \frac{3}{4}k$ and with relative symmetry defect at least $\frac{1}{4}$ such that
    \[\left\{\mathfrak{A}_{1,1},\dots,\mathfrak{A}_{1,r}\right\}^{\varphi(g)} = \left\{\mathfrak{A}_{2,1},\dots,\mathfrak{A}_{2,r}\right\} \text{ for every } g \in \Iso_G(\mathfrak{x}_1,\mathfrak{x}_2),\]
    or
  \item\label{item:aggregate-local-certificates-2} a subset
    $\Delta_i \subseteq [k]$ associated with $\mathfrak{x}_i$ of size
    $|\Delta_i| \geq \frac{3}{4}k$ and $K_i \leq \Aut_{G_{\Delta_i}}(\mathfrak{x}_i)$
    such that $(K_i^{\varphi})^{\Delta_i} \geq \Alt(\Delta_i)$ and
    \[\Delta_1^{\varphi(g)} = \Delta_2 \text{ for every } g \in \Iso_G(\mathfrak{x}_1,\mathfrak{x}_2).\]
 \end{enumerate}
\end{lemma}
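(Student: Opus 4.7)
The plan is to query every $t$-subset of $[k]$ via the Local Certificates routine and then aggregate according to a dichotomy on the size of the union of the full subsets. First, for every $T \in \binom{[k]}{t}$ and each $i \in \{1,2\}$, I invoke Lemma~\ref{la:local-certificate} on $(G, \mathfrak{x}_i, \varphi, T)$ to decide fullness and obtain a certificate, yielding either a fullness certificate $K_{i,T} \leq \Aut_{G_T}(\mathfrak{x}_i)$ with $(K_{i,T}^{\varphi})^T \geq \Alt(T)$, or a non-fullness certificate $M_{i,T} \leq \Sym(T)$ that is non-giant. In parallel, for every pair $(T_1, T_2)$ with at least one of them non-full, I invoke Lemma~\ref{la:compare-local-certificates} to obtain the canonical coset of bijections $T_1 \to T_2$ containing $g^{\varphi}|_{T_1}$ for all $g \in \Iso_G(\mathfrak{x}_1, \mathfrak{x}_2)$ that map $T_1$ to $T_2$. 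Since $\binom{k}{t}^2 = k^{\mathcal{O}(t)}$, the total recursive domain stays within the allowed $k^{\mathcal{O}(t)}n$ budget, as both subroutines have recursion budget linear in~$n$.

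Next, I define $F_i = \{T \in \binom{[k]}{t}: T \text{ full for } \mathfrak{x}_i\}$ and $\Delta_i = \bigcup F_i$. The canonicity of Local Certificates ensures $F_1^{\varphi(g)} = F_2$ and hence $\Delta_1^{\varphi(g)} = \Delta_2$ for every $g \in \Iso_G(\mathfrak{x}_1, \mathfrak{x}_2)$. Case~A is $|\Delta_1|, |\Delta_2| \geq \tfrac{3}{4}k$: here I set $K_i = \langle K_{i,T} : T \in F_i\rangle$, observe that $K_i \leq \Aut_{G_{\Delta_i}}(\mathfrak{x}_i)$ because each generator preserves a subset of $\Delta_i$ and respects $\mathfrak{x}_i$, and use the classical Jordan-style fact that $\Alt(T)$ and $\Alt(T')$ generate $\Alt(T \cup T')$ whenever $|T \cap T'| \geq 2$, together with connectivity of the intersection graph on $F_i$ (forced by $t > 8$ and the size bound on $\Delta_i$), to conclude $(K_i^{\varphi})^{\Delta_i} \geq \Alt(\Delta_i)$. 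This yields Option~\ref{item:aggregate-local-certificates-2}.

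Case~B is the complementary case, in which most $t$-subsets are non-full for at least one $\mathfrak{x}_i$. Their non-giant certificates $M_{i,T}$, together with the comparison data returned by Lemma~\ref{la:compare-local-certificates}, can be encoded as $t$-ary relational structures $\mathfrak{A}_{i,j}$ on canonically defined large subsets $D_{i,j} \subseteq [k]$ of size $\geq \tfrac{3}{4}k$: for each ordered $t$-tuple inside $D_{i,j}$, the relations record the isomorphism type of the pair $(T, M_{i,T})$ and the canonical bijection cosets to all other $t$-subsets. Corollary~\ref{cor:index-subgroup-large-symmetry-defect} translates the non-giant nature of the certificates into the required relative symmetry defect at least $\tfrac{1}{4}$, and grouping the non-full subsets by their canonical invariants caps the number of distinct structures by $r \leq k^6$. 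Canonicity across $\mathfrak{x}_1$ and $\mathfrak{x}_2$ is inherited from the canonicity already established for the outputs of Local\-Certificates and CompareLocal\-Certificates.

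The main obstacle is the explicit construction in Case~B: producing structures that are simultaneously canonical across the two strings, supported on domains of size at least $\tfrac{3}{4}k$, and certifiably of relative symmetry defect at least $\tfrac{1}{4}$, while keeping their number bounded by $k^6$. Case~A, by contrast, is a direct aggregation; the role of the almost $d$-ary sequence of invariant partitions enters only indirectly, through Theorem~\ref{thm:unaffected-stabilizer-tree} which underlies both subroutines.
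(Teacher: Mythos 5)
There is a genuine gap, and it sits exactly where the real difficulty of this lemma lies. Your Case~A claims that if the union $\Delta_i$ of the full test sets is large, then the group $K_i$ generated by the fullness certificates satisfies $(K_i^{\varphi})^{\Delta_i} \geq \Alt(\Delta_i)$, via Jordan-style generation and ``connectivity of the intersection graph on $F_i$''. This does not work. First, a fullness certificate $K_{i,T}$ only satisfies $((K_{i,T})^{\varphi})^{T} \geq \Alt(T)$ \emph{after restriction to $T$}; its elements may move points of $[k]$ outside $T$, so $K_i^{\varphi}$ does not literally contain $\Alt(T)$ as a subgroup fixing $[k]\setminus T$, and the classical fact about $\langle \Alt(T),\Alt(T')\rangle$ is not applicable. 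Second, nothing forces the intersection graph of full test sets to be connected: $t>8$ and $|\Delta_i|\geq\frac{3}{4}k$ are perfectly compatible with the full sets splitting into families with disjoint unions. Consequently the hard case remains: the certificate group can have large support, even a single large orbit $C$, while inducing on $C$ a transitive non-giant group. The paper's proof is organized around precisely this possibility: it works with the \emph{support} $S_i$ of the group $F_i$ generated by the certificates (not the union of full sets), splits on $|S_i|$, and in the subcase of a large orbit $C$ with $(F_i^{\varphi})^{C}$ not a giant it invokes the CFSG bound $d((F_i^{\varphi})^{C})\leq 5$ (Theorem~\ref{thm:degree-of-transitivity}), individualizes at most $4$ points, passes to the orbital configuration, and individualizes an orbital (and possibly a vertex); the multiplicative cost $k^{4}\cdot k\cdot k$ of these individualizations is the actual source of the bound $r\leq k^{6}$, together with Lemma~\ref{la:symmetry-defect-regular-graph} for the symmetry defect. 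None of this machinery appears in your proposal, and your dichotomy (union of full sets large vs.\ not) does not cover it.

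Your Case~B is also not sound as stated. Corollary~\ref{cor:index-subgroup-large-symmetry-defect} is used in the wrong direction: it derives a lower bound on the index from a symmetry-defect assumption, and cannot ``translate the non-giant nature of the certificates into'' a symmetry defect of the encoded structures. In the paper, the case corresponding to yours is $|S_i|<\frac{1}{4}k$, where on $D_i=[k]\setminus S_i$ \emph{every} $t$-subset is non-full; one then builds a single canonical $t$-ary structure per string (not $k^{6}$ of them) whose relations are the equivalence classes of ordered $t$-tuples under the canonical comparison cosets of Lemma~\ref{la:compare-local-certificates}, and the symmetry-defect bound $|D_i|-t+1\geq |D_i|/4$ comes from non-fullness of all $t$-subsets of $D_i$, not from an index computation. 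Your first step (running Lemma~\ref{la:local-certificate} on all $\binom{k}{t}$ test sets and Lemma~\ref{la:compare-local-certificates} on pairs, with total recursion budget $k^{\mathcal{O}(t)}n$) does match the paper, but the aggregation — the actual content of the lemma — needs the support/orbit case analysis and the CFSG-based individualization argument that your write-up omits.
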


The proof is completely analogous to the proof of \cite[Theorem
24]{Babai16} replacing the methods to compute the local certificates.
Note that colorings and equipartitions of a subset of $[k]$ can be viewed as relational structures.
For the sake of completeness a full proof of the lemma is given in Appendix \ref{app:proof-aggregate}.

Next we describe how we use the two possible outcomes of the previous lemma to make progress.

\begin{lemma}
 \label{la:find-structure}
 Suppose Option \ref{item:aggregate-local-certificates-1} of Lemma
 \ref{la:aggregate-local-certificates} is satisfied, yielding a number
 $r\le k^6$ and relational structures $\mathfrak A_{i,j}$ for
 $i\in[2],j\in[r]$.
 Then there are subgroups $H_j \leq G$ and elements $h_j \in \Sym(\Omega)$ for $j \in [r]$ such that
 \begin{enumerate}
  \item $|G^{\varphi} : H_j^{\varphi}| \geq (4/3)^{k}$ for all $j \in [r]$, and
  \item $\mathfrak{x}_1 \cong_G \mathfrak{x}_2$ if and only if $\mathfrak{x}_1 \cong_{H_jh_j} \mathfrak{x}_{2}$ for some $j \in [r]$, 
        and given representations for the sets $\Iso_{H_jh_j}(\mathfrak{x}_1,\mathfrak{x}_{2})$ for all $j \in [r]$ one can compute in polynomial time a representation for $\Iso_G(\mathfrak{x}_1,\mathfrak{x}_{2})$.
 \end{enumerate}
 Moreover, given the relational structures $\mathfrak{A}_{i,j}$ for all $i \in [2]$ and $j \in [r]$, the groups $H_j$ and elements $h_j$ can be computed in time $k^{\mathcal{O}(t^{c} (\log k)^{c})}n^{c}$ for some constant $c$.
\end{lemma}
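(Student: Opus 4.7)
The plan is to decompose $\Iso_G(\mathfrak{x}_1,\mathfrak{x}_2)$ according to the image of a single fixed structure from the $\mathfrak{x}_1$-side, so that the $r$ choices of $j$ in the statement of the lemma correspond to the $r$ candidate target structures on the $\mathfrak{x}_2$-side (rather than to all $r^2$ pairs). Concretely, fix $j_0 := 1$ and set
\[
  K \;:=\; \{g \in G \mid \mathfrak{A}_{1,1}^{\varphi(g)} = \mathfrak{A}_{1,1}\},
\]
the setwise stabilizer in $G$ of $\mathfrak{A}_{1,1}$ under the action of $G^\varphi\le S_k$ on $t$-ary relational structures on $[k]$. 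For every $j \in [r]$ for which some element $\sigma_j \in G$ with $\mathfrak{A}_{1,1}^{\varphi(\sigma_j)} = \mathfrak{A}_{2,j}$ exists, compute one such $\sigma_j$; the remaining indices (which cannot contribute to $\Iso_G(\mathfrak{x}_1,\mathfrak{x}_2)$) are dropped, leaving at most $r$ groups. Put $H_j := K$ and $h_j := \sigma_j$. By Option~\ref{item:aggregate-local-certificates-1}, every $g \in \Iso_G(\mathfrak{x}_1,\mathfrak{x}_2)$ sends $\mathfrak{A}_{1,1}$ to some $\mathfrak{A}_{2,j}$, whence $g\sigma_j^{-1}\in K$ and so $g \in K\sigma_j = H_jh_j$; together with $H_jh_j\subseteq G$ this gives $\Iso_G(\mathfrak{x}_1,\mathfrak{x}_2) = \bigcup_j \Iso_{H_jh_j}(\mathfrak{x}_1,\mathfrak{x}_2)$, which is the decomposition asserted in Item~2.

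For the index bound in Item~1, write $D := D_{1,1}$, noting $|D|\ge \tfrac34 k$. Because $A_k \le G^\varphi$, the orbit--stabilizer identity applied to the action of $A_k$ on relational structures yields
\[
  |G^\varphi : K^\varphi| \;\ge\; |A_k : (A_k)_{\mathfrak{A}_{1,1}}| \;=\; \binom{k}{|D|}\cdot \bigl|\bar S : \bar S\cap \Aut(\mathfrak{A}_{1,1})\bigr|,
\]
where $\bar S\le \Sym(D)$ is the restriction to $D$ of the setwise stabilizer $(A_k)_D$; this $\bar S$ contains $\Alt(D)$. Since $\mathfrak{A}_{1,1}$ has relative symmetry defect at least $\tfrac14$ and $|D|\ge 24$ (for $k\ge 32$), Corollary~\ref{cor:index-subgroup-large-symmetry-defect} gives $|\bar S : \bar S\cap \Aut(\mathfrak{A}_{1,1})|\ge (4/3)^{|D|}$. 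Combined with the elementary bound $\binom{k}{|D|} = \binom{k}{k-|D|} \ge \bigl(k/(k-|D|)\bigr)^{k-|D|}\ge 4^{k-|D|}\ge (4/3)^{k-|D|}$ (valid for $|D|\ge 3k/4$), the two factors multiply to $(4/3)^k$.

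The remaining task is algorithmic: compute $K$ and each $\sigma_j$. Both reduce to isomorphism testing for $t$-ary relational structures on $[k]$ restricted by the ambient group $G^\varphi\ge A_k$. I would encode each $t$-ary structure as a vertex-coloured graph on at most $\binom{k}{t}\le k^t$ vertices (vertices indexed by the $t$-tuples, edges and colours recording which relations they lie in) and invoke Babai's quasipolynomial-time graph isomorphism algorithm~\cite{Babai16}; this costs $k^{\mathcal{O}(t(\log k)^c)}$ per call. Lifting the result back through $\varphi$ using Theorem~\ref{thm:permutation-group-algorithm-library}(6) contributes only a polynomial-in-$n$ factor. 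Summed over the $r\le k^6$ indices $j$, the total stays within $k^{\mathcal{O}(t^c(\log k)^c)}n^c$ as required.

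The main obstacle is the index inequality: the hypothesized thresholds ``relative symmetry defect at least $1/4$'' and ``$|D|\ge 3k/4$'' are precisely calibrated so that $(4/3)^{|D|}\cdot (4/3)^{k-|D|} = (4/3)^k$, with the binomial factor $\binom{k}{|D|}\ge (4/3)^{k-|D|}$ supplying exactly the missing slack when $|D|<k$. Neither ingredient alone would suffice; the combination depends crucially on the fact that Option~\ref{item:aggregate-local-certificates-1} returns structures whose domains cover at least three quarters of $[k]$. The decomposition and algorithmic reduction to graph isomorphism are otherwise routine.
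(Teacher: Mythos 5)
Your proof is correct and follows essentially the same route as the paper's: both fix the single structure $\mathfrak{A}_{1,1}$, form for each $j$ the coset of elements of $G$ whose $\varphi$-image carries $\mathfrak{A}_{1,1}$ to $\mathfrak{A}_{2,j}$, and compute those cosets via Babai's quasipolynomial isomorphism test after encoding the $t$-ary structures as graphs. The only real divergence is in how the index bound is derived: the paper simply observes that $H_j^\varphi$ has relative symmetry defect at least $\tfrac14$ on $[k]$ and invokes Corollary~\ref{cor:index-subgroup-large-symmetry-defect} directly, whereas you split the index through the setwise stabilizer of the domain $D$, estimating $\binom{k}{|D|}\ge (4/3)^{k-|D|}$ separately from the $(4/3)^{|D|}$ lower bound coming from the symmetry defect on $D$. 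Both arguments are valid; yours is more explicit about where each factor of $(4/3)$ comes from, while the paper's observation requires the reader to check (as you implicitly do) that any large alternating subgroup of $H_j^\varphi$ must either sit inside $D$ or miss $D$ entirely, neither of which is possible when $|D|\ge\tfrac34 k$ and the defect on $D$ is $\ge\tfrac14$.
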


\begin{proof}
 Let $D_{i,j} \subseteq [k]$ be the domain of $\mathfrak{A}_{i,j}$ for all $i \in [2]$ and $j \in [r]$.
 Let $\mathfrak{A}_1 = \mathfrak{A}_{1,1}$ and $D_1 = D_{1,1}$.
 Now define \[H_jh_j = \{g \in G \mid (D_1)^{\left(g^{\varphi}\right)} = D_{2,j} \wedge (g^{\varphi})|_{D_1} \in \Iso(\mathfrak{A}_1,\mathfrak{A}_{2,j})\}.\]
 Using the quasipolynomial time isomorphism test from \cite{Babai16} the set $\Iso(\mathfrak{A}_1,\mathfrak{A}_{2,j})$ can be computed in time $k^{\mathcal{O}(t^{c} (\log k)^{c})}$ for some constant $c$
 (first translate the relational structures into two graphs of size $k^{\mathcal{O}(t)}$ (see e.g.\ \cite{Miller79}) and then apply the isomorphism test from \cite{Babai16} to the resulting graphs).
 Hence, the groups $H_j \leq G$ and elements $h_j \in \Sym(\Omega)$ can be computed within the desired time bound.
 Moreover
 \[\Iso_G(\mathfrak{x}_1,\mathfrak{x}_2) = \bigcup_{j \in [r]} \Iso_{H_jh_j}(\mathfrak{x}_1,\mathfrak{x}_2).\]
 Finally observe that the symmetry defect of $H_j^{\varphi}$ is at least $\frac{1}{4}$.
 So $|G^{\varphi} : H_j^{\varphi}| \geq (4/3)^{k}$ by Corollary \ref{cor:index-subgroup-large-symmetry-defect}.
\end{proof}

\begin{remark}
 The proof of the previous lemma is the only place where we use Babai's quasipolynomial time isomorphism test \cite{Babai16} as a black box.
\end{remark}

\begin{lemma}
 \label{la:find-symmetry}
 Suppose Option \ref{item:aggregate-local-certificates-2} of Lemma \ref{la:aggregate-local-certificates} is satisfied.
 Then there is a number $r \in \{1,2\}$, a subgroup $H \leq G$ and elements $h_j \in \Sym(\Omega)$ for $j \in [r]$ such that
 \begin{enumerate}
  \item $|G^{\varphi} : H^{\varphi}| \geq (4/3)^{k}$, and
  \item $\mathfrak{x}_1 \cong_G \mathfrak{x}_2$ if and only if
    $\mathfrak{x}_1 \cong_{Hh_j} \mathfrak{x}_{2}$  for some $j \in [r]$, 
    and given representations for the sets
    $\Iso_{Hh_j}(\mathfrak{x}_1,\mathfrak{x}_{2})$ for all $j \in [r]$
    and a generating set for $K_1$
    one can compute in polynomial time a representation for $\Iso_G(\mathfrak{x}_1,\mathfrak{x}_{2})$.
 \end{enumerate}
 Moreover, given the sets $\Delta_{i}$ for all $i \in [2]$, the group $H$ and the elements $h_i$ can be computed in polynomial time.
\end{lemma}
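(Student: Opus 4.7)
The plan is to take $H := \varphi^{-1}\bigl((G^\varphi)_{(\Delta_1)}\bigr)$, the preimage in $G$ of the pointwise stabilizer of $\Delta_1$, together with one or two coset representatives $h_j \in G$ recording how isomorphisms send $\Delta_1$ to $\Delta_2$ modulo the giant action on $\Delta_1$ already provided by $K_1$. Concretely, compute $h_1 \in G$ with $\Delta_1^{h_1^\varphi} = \Delta_2$ via Theorem~\ref{thm:permutation-group-algorithm-library} (if none exists, canonicity of $\Delta_1$ forces $\Iso_G(\mathfrak{x}_1,\mathfrak{x}_2) = \emptyset$, so that any $h_1 \in G$ yields $\Iso_{Hh_1} = \emptyset$ and the lemma is vacuous). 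The restriction $(K_1^\varphi)^{\Delta_1}$ is either $\Sym(\Delta_1)$ or $\Alt(\Delta_1)$; set $r := 1$ in the first case, and in the second set $r := 2$ and compute $h_2 \in G$ whose $\varphi$-image sends $\Delta_1$ to $\Delta_2$ with $h_2^\varphi|_{\Delta_1}$ in the $\Alt(\Delta_1)$-coset opposite to that of $h_1^\varphi|_{\Delta_1}$, reverting to $r := 1$ if no such $h_2$ lies in $G$. Setting $\Lambda_1 := [k]\setminus\Delta_1$ with $|\Lambda_1| \leq k/4$, the inclusion $(A_k)_{(\Delta_1)} = \Alt(\Lambda_1) \leq H^\varphi$ gives $|G^\varphi : H^\varphi| \geq |A_k : \Alt(\Lambda_1)| \geq k!/\lfloor k/4 \rfloor!$, and the elementary bound $k!/\lfloor k/4 \rfloor! \geq (k/4)^{3k/4} \geq (4/3)^k$ holds once $k$ exceeds a small absolute constant, which is guaranteed by $k > \max\{8, 2 + \log_2 d\}$.

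The heart of the argument is that for every $g \in \Iso_G(\mathfrak{x}_1,\mathfrak{x}_2)$ there exist $k \in K_1$ and $j \in [r]$ with $kg \in Hh_j$. By canonicity $g^\varphi|_{\Delta_1}$ is a bijection $\Delta_1 \to \Delta_2$, and the collection of all such restrictions arising from $\Iso_G$ is a single coset of $(\Aut_G(\mathfrak{x}_1)^\varphi)^{\Delta_1}$ inside the bijections $\Delta_1 \to \Delta_2$; this subgroup contains $(K_1^\varphi)^{\Delta_1}$, and our choice of the $h_j$ guarantees that $\{h_j^\varphi|_{\Delta_1} : j \in [r]\}$ hits every $(K_1^\varphi)^{\Delta_1}$-coset of bijections $\Delta_1 \to \Delta_2$ reachable inside $G^\varphi$. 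Hence for some $j \in [r]$ and $a \in (K_1^\varphi)^{\Delta_1}$ we have $g^\varphi|_{\Delta_1} = h_j^\varphi|_{\Delta_1} \circ a$. Lifting $a^{-1}$ through the surjection $K_1^\varphi \twoheadrightarrow (K_1^\varphi)^{\Delta_1}$ yields $k \in K_1$ with $(kgh_j^{-1})^\varphi$ trivial on $\Delta_1$, so $kg \in Hh_j$, and $kg \in \Iso_G$ since $k \in K_1 \leq \Aut_G(\mathfrak{x}_1)$. This proves $\mathfrak{x}_1 \cong_G \mathfrak{x}_2$ implies $\mathfrak{x}_1 \cong_{Hh_j} \mathfrak{x}_2$ for some $j$; the converse is immediate because $Hh_j \subseteq G$. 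The same factorization applied to $\Aut_G(\mathfrak{x}_1)$ itself gives $\Aut_G(\mathfrak{x}_1) = K_1 \cdot \Aut_H(\mathfrak{x}_1)$, from which any nonempty $\Iso_{Hh_j}$ yields $\Iso_G(\mathfrak{x}_1,\mathfrak{x}_2) = K_1 \cdot \Iso_{Hh_j}$, computable in polynomial time from generators of $K_1$ and a representation of $\Iso_{Hh_j}$ via Theorem~\ref{thm:permutation-group-algorithm-library}.

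The main subtlety I anticipate is the boundary case $|\Lambda_1| \in \{0,1\}$ combined with $G^\varphi = A_k$: every $\sigma \in (G^\varphi)_{\Delta_1 \to \Delta_2}$ then restricts to an even permutation of $\Delta_1$, so no valid $h_2 \in G$ is available and we are forced to take $r = 1$; the same parity restriction however confines $(K_1^\varphi)^{\Delta_1}$ to $\Alt(\Delta_1)$ to match the unique reachable $\Alt(\Delta_1)$-coset, making $r = 1$ genuinely sufficient, and in this regime the index bound degrades only to $|G^\varphi : H^\varphi| \geq k!/2 \geq (4/3)^k$, still within the lemma's requirement.
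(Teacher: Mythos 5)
Your construction of $H=\varphi^{-1}((G^{\varphi})_{(\Delta_1)})$ and of the one or two coset representatives $h_j$, the factorization argument showing that every $g\in\Iso_G(\mathfrak{x}_1,\mathfrak{x}_2)$ can be pushed into some $Hh_j$ by an element of $K_1$, and the index bound via $A_k\cap H^{\varphi}=\Alt(\Lambda_1)$ are all correct and essentially identical to the paper's proof (the paper takes $h_1=g$ and $h_2=\tau g$ with $(\tau^{\varphi})^{\Delta_1}$ a transposition, and bounds the index by $|\Alt(\Delta_1)|$; you are in fact more careful than the paper about when $h_2$ fails to exist).

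However, the final recovery step is wrong as stated. The identity $\Aut_G(\mathfrak{x}_1)=K_1\cdot\Aut_H(\mathfrak{x}_1)$ does not follow from your factorization: when $(K_1^{\varphi})^{\Delta_1}=\Alt(\Delta_1)$ (the case $r=2$), every element of $K_1\cdot\Aut_H(\mathfrak{x}_1)$ restricts, via $\varphi$, to an \emph{even} permutation of $\Delta_1$, whereas $\Aut_G(\mathfrak{x}_1)$ may well contain elements acting as odd permutations on $\Delta_1$ (e.g.\ when $\mathfrak{x}_1=\mathfrak{x}_2$ and the point stabilizer structure is rich); the factorization only yields $ka\in Hh_j$ for \emph{some} $j\in[r]$, not $j=1$. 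Consequently your claim that any single nonempty $\Iso_{Hh_j}$ gives $\Iso_G(\mathfrak{x}_1,\mathfrak{x}_2)=K_1\cdot\Iso_{Hh_j}$ fails: elements of $\Iso_{Hh_2}$ lie in the opposite $\Alt(\Delta_1)$-class relative to $h_1$ and are disjoint from $K_1\cdot\Iso_{Hh_1}$, so whenever both $\Iso_{Hh_1}$ and $\Iso_{Hh_2}$ are nonempty your formula misses half of $\Iso_G$. The repair is exactly what the paper does: writing $\Iso_{Hh_j}(\mathfrak{x}_1,\mathfrak{x}_2)=G_jg_j$, one has $\Iso_G(\mathfrak{x}_1,\mathfrak{x}_2)=\bigcup_{j\in[r]}\langle K_1,G_j\rangle g_j$, from which a coset representation (generators of $\langle K_1,G_1,G_2,g_2g_1^{-1}\rangle$ together with $g_1$, say) is computable in polynomial time via Theorem~\ref{thm:permutation-group-algorithm-library}. (A further small point: $K_1\cdot\Aut_H(\mathfrak{x}_1)$ need not even be a group, so one has to pass to the generated subgroup in any case.) Aside from this localized error, your argument matches the paper's.
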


\begin{proof}
 Let $H = G_{(\Delta_1)}$ (recall that $G_{(T)} = \varphi^{-1}((G^{\varphi})_{(T)})$ for $T \subseteq [k]$).
 Let $g \in G$ such that $\Delta_1^{g^{\varphi}} = \Delta_2$ and $\tau \in G_{\Delta_1}$ such that $(\tau^{\varphi})^{\Delta_1}$ is a transposition.
 Now define $h_1 = g$ and $h_2 = \tau g$.
 Then $\mathfrak{x}_1 \cong_G \mathfrak{x}_2$ if and only if
 $\mathfrak{x}_1 \cong_{Hh_j} \mathfrak{x}_{2}$ since
 $(K_1^{\varphi})^{\Delta_1} \geq \Alt(\Delta_1)$.
 Moreover, if $G_jg_j = \Iso_{Hh_j}(\mathfrak{x}_1,\mathfrak{x}_{2})$ then $\Iso_{G}(\mathfrak{x}_1,\mathfrak{x}_{2}) = \bigcup_{j=1,2}\langle K_1,G_j \rangle g_j$.
 Finally, $|G^{\varphi} : H^{\varphi}| \geq |\Alt(\Delta_1)| \geq (4/3)^{k}$.
\end{proof}

\section{String Isomorphism}
\label{sec:algorithm}

We are now ready to formalize our algorithm. We shall need the
following result characterizing the obstacle cases for efficient Luks reduction.

\begin{lemma}[cf.\;\cite{Babai15-full}, Theorem 3.2.1]
 \label{la:compute-giant-representation}
 Let $G \leq S_d$ be a primitive group of order $|G| \geq d^{1+\log d}$ 
 where $d$ is greater than some absolute constant.
 Then there is a polynomial-time algorithm computing a normal subgroup $N \leq G$ of index $|G:N| \leq d$, an $N$-invariant equipartition $\mathfrak{B}$
 and a giant representation $\varphi\colon N \rightarrow S_k$ where $k \geq \log d$ and $\ker(\varphi) = N_{(\mathfrak{B})}$.
\end{lemma}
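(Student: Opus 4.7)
The plan is to invoke Cameron's classification of large primitive permutation groups and then read the desired $N$, $\mathfrak{B}$, $\varphi$ off the resulting structure. Cameron's theorem~\cite{Cameron81,Mar02} asserts that, for $d$ above an absolute threshold, any primitive $G\le S_d$ with $|G|\ge d^{1+\log d}$ is a \emph{Cameron group}: up to permutational isomorphism there exist integers $m,t,r$ with $t\le m/2$ and $d=\binom{m}{t}^r$ so that $G\le(S_m\wr S_r)^{(t)}$ acts in the natural product action on $\Omega=\binom{[m]}{t}^r$, and $\Soc(G)=T_1\times\cdots\times T_r$ with each $T_i\cong A_m$ acting as $A_m^{(t)}$ on the $i$-th coordinate and trivially on all others.

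Algorithmically I would first compute $\Soc(G)$ via~\cite{KL90}, decompose it into its minimal normal subgroups $T_1,\dots,T_r$, and from the orbit patterns of the individual $T_i$ recover the product decomposition $\Omega\cong M^r$ with $M=\binom{[m]}{t}$; this also identifies the parameters $m$ and $t$. Let $\pi\colon G\to\Sym\{T_1,\dots,T_r\}\le S_r$ be the conjugation action of $G$ on the set of simple socle factors, and set $N:=\ker\pi\unlhd G$. Then $N$ stabilises each coordinate, so $|G:N|\le r!\le\binom{m}{t}^r=d$, where the middle inequality uses $r\le\binom{m}{t}$ (which holds in every Cameron group because $\binom{m}{t}\ge m\ge r$ under the hypothesis; see the final paragraph). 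Pick the first coordinate and let $\mathfrak{B}$ be the partition of $\Omega$ into the fibres of the first-coordinate projection; then $\mathfrak{B}$ is an $N$-invariant equipartition with $\binom{m}{t}$ blocks. The restriction of $N$ to the first coordinate lies inside $S_m^{(t)}$ and contains $A_m^{(t)}$, so inverting the Johnson embedding via Lemma~\ref{la:permutational-automorphisms-johnson-action} produces a homomorphism $\varphi\colon N\to S_m$ with $\varphi(N)\supseteq A_m$, i.e.\ a giant representation with $k=m$. For $m\ge t+1$ the only element of $S_m$ fixing every $t$-subset of $[m]$ is the identity, which yields $\ker\varphi=N_{(\mathfrak{B})}$ as required.

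The last point is the quantitative bound $k=m\ge\log d$, which follows from the hypothesis $|G|\ge d^{1+\log d}$ by comparing it against the crude upper bound $\log|G|\le r\log(m!)+\log(r!)=O(rm\log m)$ and using $\log d=r\log\binom{m}{t}\ge r\log m$; rearranging gives $(\log d)^2\le O(m\log d)$, hence $m\ge\Omega(\log d)$, and the constant can be absorbed into the "sufficiently large $d$" threshold. The genuinely nontrivial ingredient throughout is Cameron's classification itself; once it is in hand, every step (socle computation, decomposition into minimal normal subgroups, orbit-based coordinate identification, and inversion of the Johnson embedding) fits into the standard toolbox of Theorem~\ref{thm:permutation-group-algorithm-library}, so the main obstacle of the proof is really the careful invocation and quantitative bookkeeping around Cameron's theorem rather than any subsequent construction.
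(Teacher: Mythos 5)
The paper does not actually prove this lemma itself: it imports it verbatim from Babai (Theorem~3.2.1 of \cite{Babai15-full}), and Babai's proof is exactly the route you take -- Cameron--Mar\'oti classification of primitive groups of order at least $d^{1+\log d}$, $N$ defined as the kernel of the action on the $r$ simple factors of the socle, $\mathfrak{B}$ the fibres of one coordinate of the product decomposition $\binom{[m]}{t}^r$, and $\varphi$ the induced Johnson action pulled back to $S_m$ via the uniqueness statement of Lemma~\ref{la:permutational-automorphisms-johnson-action}. Your construction of $N$, $\mathfrak{B}$, $\varphi$ and the identification $\ker\varphi=N_{(\mathfrak{B})}$ is correct (two cosmetic points: the $T_i$ are the simple direct factors of $\Soc(G)$, not minimal normal subgroups of $G$, since $G$ permutes them; and Lemma~\ref{la:permutational-automorphisms-johnson-action} is only an existence/uniqueness statement, so computing $\varphi$ needs an algorithmic reconstruction of the $S_m$-action, e.g.\ in the spirit of \cite{BLS87} -- standard, and in any case deferred to Babai by the paper as well).

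The genuine gap is the bound $k=m\geq\log d$. Your computation yields only $m\geq\Omega(\log d)$ with an unspecified multiplicative constant, and the claim that this constant ``can be absorbed into the sufficiently large $d$ threshold'' is invalid: if all you know is $m\geq c\log d$ with $c<1$, then raising the threshold on $d$ never upgrades this to $m\geq\log d$; thresholds absorb additive exceptions for small $d$, not multiplicative losses that persist for all $d$. The clean bound is true but needs the estimate run with explicit constants. For instance: from $(\log_2 d)^2\leq\log_2|G|\leq r\log_2(m!)+\log_2(r!)$ one first extracts $r\leq m$ (for $d$ above the threshold), and then, writing $b=\log_2\binom{m}{t}\geq\log_2 m$ and using the Stirling saving $\log_2(m!)\leq m\log_2 m-(\log_2 e)m+\mathcal{O}(\log m)$, one gets
\[
 \log_2 d \;=\; rb \;\leq\; \frac{\log_2(m!)+\log_2 r}{b}
 \;\leq\; m\cdot\frac{\log_2 m-\log_2 e+o(1)}{\log_2 m} \;<\; m
\]
for $m$ large, and $m$ is large because $d$ is large and $m=\Omega(\log d)$. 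Without some such sharpened bookkeeping (the $-(\log_2 e)m$ term, or an equivalent slack, is what beats the $\log\log d$-type loss of the crude estimate), the statement $k\geq\log d$ as written is not established by your argument. The same remark applies, more mildly, to your justification of $r\leq\binom{m}{t}$ for the index bound $|G:N|\leq r!\leq d$: it does follow from $r\leq\log_2 d/\log_2 m$ together with $m=\Omega(\log d)$ and $m$ large, but you should say so explicitly rather than pointing to a paragraph that only proves $m=\Omega(\log d)$.
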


\begin{lemma}
 \label{lem:partition-small-size-recursion}
 Let $G \leq \Sym(\Omega)$ be transitive and let $\mathfrak{x},\mathfrak{y} \colon \Omega \rightarrow \Sigma$ be two strings.
 Moreover, suppose there is an almost $d$-ary sequence of $G$-invariant partitions
 $\{\Omega\} = \mathfrak{B}_0 \succ \dots \succ \mathfrak{B}_m = \{\{\alpha\} \mid \alpha \in \Omega\}$
 such that $|\mathfrak{B}_1| \leq d$.
 Then there are natural numbers $\ell \in \mathbb{N}$ and $n_1,\dots,n_\ell \leq n/2$ such that $\sum_{i=1}^{\ell}n_i \leq 2^{\mathcal{O}((\log d)^{3})}n$ and,
 using a recursive call to String Isomorphism over domain size at most $n_i$ for each $i \in [\ell]$ and $d^{\mathcal{O}((\log d)^{c})}n^{c}$ additional computation,
 one can compute $\Iso_G(\mathfrak{x},\mathfrak{y})$.
\end{lemma}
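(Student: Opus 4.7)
The strategy is to perform one level of Luks reduction through a primitive coarsening of $\mathfrak{B}_1$, invoking the Local Certificates framework only when the primitive quotient is too large for direct enumeration. First, in polynomial time compute a $G$-invariant partition $\mathfrak{B}_1^*$ with $\{\Omega\} \succ \mathfrak{B}_1^* \succeq \mathfrak{B}_1$ such that $P := G^{\mathfrak{B}_1^*}$ is primitive. Since $|\mathfrak{B}_1^*| \leq |\mathfrak{B}_1| \leq d$ and $P \in \ourgamma_d$ by Lemma~\ref{la:gamma-d-closure}, Theorem~\ref{thm:first-main-theorem} applies to $P$. Writing $b := |\mathfrak{B}_1^*| \geq 2$, every block of $\mathfrak{B}_1^*$ has size $n/b \leq n/2$, so any reduction through $\mathfrak{B}_1^*$ automatically meets the $n_i \leq n/2$ requirement.

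If Option~\ref{item:first-main-theorem-1} of Theorem~\ref{thm:first-main-theorem} holds, i.e., $|P| \leq b^{c_1 \log d + c_2}$, the standard Luks reduction through $\mathfrak{B}_1^*$ suffices: enumerate the $\leq |P|$ cosets of $G_{(\mathfrak{B}_1^*)}$ in $G$ and, for each coset, solve the resulting problem orbit-by-orbit on the $b$ orbits of $G_{(\mathfrak{B}_1^*)}$. This produces $|P|\cdot b$ recursive calls of domain size $n/b$ summing to $|P| \cdot n \leq 2^{\mathcal{O}((\log d)^2)} n$, well within budget.

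Otherwise Option~\ref{item:first-main-theorem-2} applies, giving a normal subgroup $\tilde P \trianglelefteq P$ of index at most $b^{1 + \log d}$ and a $\tilde P$-invariant partition $\mathfrak{C}$ of $\mathfrak{B}_1^*$ on which $\tilde P$ acts as a Johnson group $A_m^{(t)}$ with $m \leq d$ and $m > 4 \log |\mathfrak{C}|$. Pulling back to $G$ yields a subgroup $\tilde N \leq G$ of index $\leq d^{1 + \log d}$ together with a $\tilde N$-invariant equipartition $\tilde{\mathfrak{C}}$ of $\Omega$ (blocks of size $n/|\mathfrak{C}| \leq n/2$), and composing $\tilde N \to \tilde N^{\tilde{\mathfrak{C}}} \cong A_m^{(t)}$ with the standard embedding $A_m \hookrightarrow S_m$ gives a canonical giant representation $\varphi\colon \tilde N \to S_m$ with $\ker \varphi = \tilde N_{(\tilde{\mathfrak{C}})}$. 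Enumerate the $\leq d^{1 + \log d}$ cosets of $\tilde N$ in $G$; for each, either (a) when $m = \mathcal{O}(\log d)$, directly enumerate the $\leq m! = d^{\mathcal{O}(\log \log d)}$ cosets of $\ker \varphi$ and finish orbit-by-orbit on $\tilde{\mathfrak{C}}$, or (b) when $m$ is larger, choose $t'$ with $\max\{8, 2 + \log_2 d\} < t' < m/10$ and apply Lemmas~\ref{la:aggregate-local-certificates}, \ref{la:find-structure}, and~\ref{la:find-symmetry}. Since the obtained subgroups $H_j$ satisfy $|\tilde N^\varphi : H_j^\varphi| \geq (4/3)^m > 2$, the image $H_j^\varphi$ cannot contain $A_m$, so that further structure on $[m]$ (intransitivity, imprimitivity, or a bounded primitive image) descends to $\tilde{\mathfrak{C}}$ and permits a subsequent Luks reduction on blocks of $\tilde{\mathfrak{C}}$.

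The main obstacle will be the accounting: combining the $\leq d^{1+\log d}$ coset branches for $\tilde N$, the $\leq m^6$ branches from Lemma~\ref{la:find-structure}, and the $m^{\mathcal{O}(t')} n = 2^{\mathcal{O}((\log d)^2)} n$ contribution of a single aggregation call, one has to check that every recursive instance has domain size $\leq n/2$ and that the total $\sum n_i$ stays within $2^{\mathcal{O}((\log d)^3)} n$, while the residual computation remains $d^{\mathcal{O}((\log d)^c)} n^c$. The structural guarantee $m > 4\log|\mathfrak{C}|$ supplied by Theorem~\ref{thm:first-main-theorem} is crucial here, since it ensures that the Johnson parameter $m$ is comparable to $\log b$ rather than to $b$, keeping the aggregation cost polylogarithmic in $d$.
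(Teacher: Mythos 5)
Your opening setup is on the right track: pass to a primitive quotient $P$, distinguish the small case (standard Luks reduction) from the large case, and in the large case obtain a giant representation and invoke Lemmas~\ref{la:aggregate-local-certificates}--\ref{la:find-symmetry}. The route through Theorem~\ref{thm:first-main-theorem} to manufacture the giant representation is a mild variant of the paper's route (which instead applies Lemma~\ref{la:compute-giant-representation}, the cited Babai result guaranteeing $k\ge\log d$ and $\ker\varphi=N_{(\mathfrak{B})}$ directly); this difference is cosmetic.

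The real gap is that you treat one round of the Local Certificates machinery as terminal. It is not. Lemmas~\ref{la:find-structure} and~\ref{la:find-symmetry} hand you at most $k^6$ instances of $H_j$-isomorphism \emph{over the full domain $\Omega$} — these are not recursive calls over domain size $\le n/2$, and they cannot be charged to the budget $\sum_i n_i$. Your sentence claiming that the structure on $[m]$ forced by $|\tilde N^\varphi:H_j^\varphi|\ge(4/3)^m$ ``descends to $\tilde{\mathfrak C}$ and permits a subsequent Luks reduction on blocks of $\tilde{\mathfrak C}$'' is not correct: after Lemma~\ref{la:find-structure} the image $H_j^\varphi$ is merely of large index, not necessarily intransitive, imprimitive, or of bounded primitive quotient, and a direct Luks reduction through $\tilde{\mathfrak C}$ could incur a branching factor exponential in $k$. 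The actual proof iterates: for each $H_j$ it re-enters the lemma's procedure, recomputes a primitive quotient and (when large) a new giant representation $\varphi'\colon H'\to S_{k'}$, and applies the aggregation again. The crux — which is entirely absent from your write-up — is the inequality $\frac{(k')!}{2}\le\frac{k!}{(4/3)^k}$, obtained by combining $\ker\varphi=G'_{(\mathfrak C')}$ with $|(H')^{\mathfrak C'}|\ge(k')!/2$, which forces $k'\le k-\frac{k}{3\log k}$. This is what bounds the depth of the internal iteration by $\mathcal O((\log d)^2)$ and the total number of nodes by $d^{\mathcal O((\log d)^2)}=2^{\mathcal O((\log d)^3)}$, and only then does the per-node contribution of $k^{\mathcal O(t)}n$ recursive calls of size $\le n/2$ yield the claimed budget $\sum_i n_i\le 2^{\mathcal O((\log d)^3)}n$. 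Your last paragraph flags ``the accounting'' as an obstacle but does not supply this argument, and without it the lemma is not proved.
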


\begin{proof}
 Let $\mathfrak{B} = \mathfrak{B}_1$ and let $P = G^{\mathfrak{B}}$ be
 the induced action of $G$ on the partition $\mathfrak{B}$.
 Without loss of generality suppose $P$ is primitive (otherwise replace $\mathfrak{B}$ with a block system of smaller size).
 First suppose $|P| \leq d^{1+\log d}$.
 Then the statement immediately follows by standard Luks reduction.
 Otherwise let $N \leq P$ be the normal subgroup computed by Lemma
 \ref{la:compute-giant-representation} and let $\mathfrak{C}$ be the corresponding partition and $\psi\colon N \rightarrow S_k$ the giant representation.
 Observe that $k \leq d$ since $N \leq P \leq \Sym(\mathfrak{B})$ and $|\mathfrak{B}| \leq d$.
 Let $G' = \{g \in G \mid g^{\mathfrak{B}} \in N\}$.
 Also let $\mathfrak{C}' = \{\{\alpha\in\Omega \mid \exists B \in C\colon
 \alpha \in B\} \mid C \in \mathfrak{C}\}$.
 Note that $\mathfrak{C}'$ is $G'$-invariant.
 Since $|G:G'| \leq d$ it suffices to prove the statement for the group $G'$.
 Let $\varphi\colon G' \rightarrow S_k\colon g \mapsto (g^{\mathfrak{B}})^{\psi}$.
 Note that $\varphi$ is a giant representation and $(G')_{(\mathfrak{C}')} = \ker(\varphi)$.
 Let $t = \max\{9,3 + \log d\}$.
 In case $k \leq 10t$ the statement follows again by standard Luks
 reduction (in this case $|G' : (G')_{(\mathfrak{C}')}| = |G' : \ker(\varphi)| \leq k! \leq 2^{\mathcal{O}((\log d)^{2})}$).
 So suppose $\max\{8,2 + \log d\} < t < k/10$.
 In this case the requirements of Lemma \ref{la:aggregate-local-certificates} are satisfied.
 
 Using Lemma \ref{la:aggregate-local-certificates},
 \ref{la:find-structure} and \ref{la:find-symmetry} we can reduce the
 problem (using additional recursive calls to String Isomorphism over
 domain size at most $n/2$) to at most $k^{6}$ instances of
 $H$-isomorphism over the same strings $\mathfrak x,\mathfrak y$ for
 groups $H\le G'$ with $|(G')^{\varphi} : H^{\varphi}| \geq (4/3)^{k}$.
 Applying the same argument to these instances of
 $H$-isomorphism and repeating the process until we can afford to perform standard Luks reduction gives our desired algorithm.
 It remains to analyze its running time, that is, we have to analyze the number of times this process has to be repeated until we reach a group that is sufficiently small to perform standard Luks reduction.
 Towards this end, we analyze the parameter $k$ of the giant representation and show that it has to be reduced in each round by a certain amount.
 Recall that our algorithm performs standard Luks reduction as soon as $k \leq 10t$.
 
 Consider the recursion tree of the algorithm (ignoring the additional
 recursive calls to String Isomorphism over domain size at most $n/2$ for the moment).
 Recall that $\mathfrak{C}'$ is $G'$-invariant and thus, it is also $H$-invariant.
 In case $H$ is not transitive it is processed orbit by orbit.
 Note that there is at most one orbit of size greater than $n/2$ that has to be considered in the current recursion (for the other orbits additional recursive calls to String Isomorphism over domain size at most $n/2$ suffice and these recursive calls are ignored for the moment).
 Let $\varphi'\colon H' \rightarrow S_{k'}$ be the giant
 representation computed on the next level of the recursion where $H'$ is the projection of $H''$ to an invariant subset of the domain for some $H'' \leq H$
 (if no giant representation is computed then the algorithm performs standard Luks reduction and the node on the next level is a leaf).
 Observe that $|(H')^{\mathfrak{C}'}| \geq \frac{(k')!}{2}$ because $(H')^{\varphi'} \geq A_{k'}$ and $H'_{(\mathfrak{C}')} \leq \ker(\varphi')$.
 Also note that $|H^{\mathfrak{C}'}| \leq \frac{k!}{(4/3)^{k}}$ since $\ker(\varphi) = G'_{(\mathfrak{C}')}$ by Lemma \ref{la:compute-giant-representation}.
 So \[\frac{(k')!}{2} \leq \frac{k!}{(4/3)^{k}}.\]
 Hence,
 \[(4/3)^{k} \leq 2 \cdot 2^{(k - k') \log k} \leq (4/3)^{3(k-k')\log k}\]
 since $k$ is sufficiently large.
 So \[k' \leq k - \frac{k}{3 \log k}.\]
 It follows that the height of the recursion tree is $\mathcal{O}((\log d)^{2})$.
 Thus, the number of nodes of the recursion tree is bounded by $d^{\mathcal{O}((\log d)^{2})} = 2^{\mathcal{O}((\log d)^{3})}$.
 By Lemma \ref{la:aggregate-local-certificates}, \ref{la:find-structure} and \ref{la:find-symmetry} each node of the recursion tree makes recursive calls to String Isomorphism over domain sizes $n_i \leq n/2$
 where $\sum_i n_i \leq 2^{\mathcal{O}((\log d)^{2})}n$ and uses additional computation $d^{\mathcal{O}((\log d)^{c})}n^{c}$ for some constant $c$.
 Putting this together, the desired bound follows.
\end{proof}

\begin{theorem}
 \label{thm:string-isomorphism-almost-d-ary}
 Let $G \leq \Sym(\Omega)$ be a permutation group and let $\mathfrak{x},\mathfrak{y} \colon \Omega \rightarrow \Sigma$ be two strings.
 Moreover, suppose there is an almost $d$-ary sequence of $G$-invariant partitions
 $\{\Omega\} = \mathfrak{B}_0 \succ \dots \succ \mathfrak{B}_m = \{\{\alpha\} \mid \alpha \in \Omega\}$.
 Then one can compute $\Iso_G(\mathfrak{x},\mathfrak{y})$ in time
 $n^{\mathcal{O}((\log d)^{c})}$, for an absolute constant $c$.
\end{theorem}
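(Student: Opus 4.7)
The plan is to prove the theorem by strong induction on $n = |\Omega|$, showing that $\Iso_G(\mathfrak{x},\mathfrak{y})$ can be computed in time $D(d)\cdot n^E$, where $E = \mathcal{O}((\log d)^c)$ and $D(d) = d^{\mathcal{O}((\log d)^c)}$ depend only on $d$ (small $n$ being absorbed into $D(d)$). At every recursive step, the almost $d$-ary structure is inherited by subgroups and by restrictions to invariant subsets (Observation~\ref{obs:sequence-of-partitions}), so the inductive hypothesis will apply to every sub-instance produced.

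First, if $G$ is not transitive on $\Omega$ with orbits $\Omega_1,\dots,\Omega_s$, I process $\Omega$ orbit by orbit using the standard Luks reduction (Algorithm~\ref{alg:orbit-by-orbit}). Since $\sum_i |\Omega_i|=n$ and each $|\Omega_i|<n$, the inductive hypothesis together with the bound $\sum_i n_i^E\le (\sum_i n_i)^E$ (that is, Lemma~\ref{la:recursion-bound} Option~2) closes this case.

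Next, assume $G$ is transitive, so $|\mathfrak{B}_1|\ge 2$. If $|\mathfrak{B}_1|\le d$, I invoke Lemma~\ref{lem:partition-small-size-recursion} directly; it produces sub-instances of sizes $n_i\le n/2$ with $\sum n_i\le 2^{\mathcal{O}((\log d)^3)}n$ plus $d^{\mathcal{O}((\log d)^c)}n^c$ additional work. Lemma~\ref{la:recursion-inductive-step} (with $E$ taken slightly larger than $\mathcal{O}((\log d)^3)+1$) then absorbs both the recursive calls and the polynomial-in-$n$ additive term. If instead $|\mathfrak{B}_1|>d$, the almost $d$-ary condition forces $G^{\mathfrak{B}_1}$ to be semi-regular; since $G$ is transitive, so is $G^{\mathfrak{B}_1}$, and hence $G^{\mathfrak{B}_1}$ is regular of order $|\mathfrak{B}_1|$. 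Standard Luks reduction then enumerates the $|\mathfrak{B}_1|$ cosets of $G_{(\mathfrak{B}_1)}$ in $G$ and reduces each, via (\ref{eq:string-align}), to an instance for $G_{(\mathfrak{B}_1)}$ processed orbit-by-orbit (each orbit has size at most $n/|\mathfrak{B}_1|$). Using the per-coset bound $\sum_j n_j^E\le (n/|\mathfrak{B}_1|)^{E-1}\sum_j n_j\le n^E/|\mathfrak{B}_1|^{E-1}$ and summing over the $|\mathfrak{B}_1|$ cosets gives total cost $D(d)\,n^E/|\mathfrak{B}_1|^{E-2}\le D(d)\,n^E$ as soon as $E\ge 2$.

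The main technical obstacle will be calibrating $E$ and $D(d)$ so that all three sources of growth are simultaneously absorbed: the weighted recursion from Lemma~\ref{lem:partition-small-size-recursion}, its additive $d^{\mathcal{O}((\log d)^c)}n^c$ term, and the factor $|\mathfrak{B}_1|^{2-E}$ from the regular-factor case. Choosing $E$ slightly larger than the exponent demanded by Lemma~\ref{lem:partition-small-size-recursion} (so that Lemma~\ref{la:recursion-inductive-step} yields a factor $\le 1/2$ rather than $\le 1$), and $D(d)$ at least twice the implicit constant from Lemma~\ref{lem:partition-small-size-recursion}, should close the induction and yield the claimed bound $n^{\mathcal{O}((\log d)^c)}$.
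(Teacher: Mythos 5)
Your proof follows the same case split as the paper's Algorithm~\ref{alg:string-isomorphism-alg}: orbit-by-orbit processing when $G$ is intransitive, Lemma~\ref{lem:partition-small-size-recursion} when $|\mathfrak{B}_1|\le d$, and standard Luks reduction when $G^{\mathfrak{B}_1}$ is semi-regular (hence regular), with the almost $d$-ary sequence passed down via Observation~\ref{obs:sequence-of-partitions}. The explicit per-coset bound $\sum_j n_j^E\le n^E/|\mathfrak{B}_1|^{E-1}$ you give for the semi-regular case is sound and in fact slightly more careful than the paper's bare citation of Lemma~\ref{la:recursion-bound} (whose option~1 needs $\sum n_i\le 2^kn$ for fixed $k$, which $|\mathfrak{B}_1|\cdot n$ need not satisfy), but this is a bookkeeping refinement rather than a different route.
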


\begin{algorithm}
 \caption{String Isomorphism}
 \label{alg:string-isomorphism-alg}
 \DontPrintSemicolon
 \SetKwInOut{Input}{Input}
 \SetKwInOut{Output}{Output}
 \Input{$G \leq \Sym(\Omega)$ a $\ourgamma_d$-group, $\mathfrak{x},\mathfrak{y}\colon \Omega \rightarrow \Sigma$ two strings
        and an almost $d$-ary sequence of $G$-invariant partitions
        $\{\Omega\} = \mathfrak{B}_0 \succ \dots \succ \mathfrak{B}_m = \{\{\alpha\} \mid \alpha \in \Omega\}$.}
 \Output{$\Iso_G(\mathfrak{x},\mathfrak{y})$}
 \BlankLine
 \eIf{$G$ is not transitive}{
  compute orbits $\Omega_1,\dots,\Omega_s$\;
  recursively process group orbit by orbit \tcc*[r]{\small restrict partitions to orbits}
  \Return $\Iso_G(\mathfrak{x},\mathfrak{y})$\;
 }{
  \eIf{$G^{\mathfrak{B}_1}$ is semi-regular}{
   apply standard Luks reduction \tcc*[r]{\small restrict partitions to orbits of $G_{(\mathfrak{B}_1)}$}
   \Return $\Iso_G(\mathfrak{x},\mathfrak{y})$\;
  }(\tcc*[f]{\small assumptions of Lemma~\ref{lem:partition-small-size-recursion} are satisfied}){
   apply Lemma~\ref{lem:partition-small-size-recursion}\;
   \Return $\Iso_G(\mathfrak{x},\mathfrak{y})$\;
  }
 }
\end{algorithm}

\begin{proof}
 Consider Algorithm \ref{alg:string-isomorphism-alg}.
 The algorithm essentially distinguishes between two cases. 
 If the input group $G$ is not transitive or the action of $G$ on the block system $\mathfrak{B}_1$ is semi-regular, the algorithm follows Luks algorithm recursively computing the set $\Iso_G(\mathfrak{x},\mathfrak{y})$.
 In the other case $G$ is transitive and $|\mathfrak{B}_1| \leq d$ and hence, we can apply Lemma~\ref{lem:partition-small-size-recursion} to recursively compute $\Iso_G(\mathfrak{x},\mathfrak{y})$.
 
 Clearly, it computes the desired set of isomorphisms.
 The bound on the running follows from Lemma \ref{la:recursion-bound}.
 Note that the bottleneck is the type of recursion used in Lemma~\ref{lem:partition-small-size-recursion}.
 Also observe that every group $H$, for which the algorithm performs a recursive call, is the projection of a subgroup of $G$ to an invariant subset of the domain.
 Hence, by restricting the partitions $\mathfrak{B}_0,\dots,\mathfrak{B}_m$ to the domain of $H$
 one obtains a sequence of partitions for the group $H$ with the desired properties (cf.\ Observation \ref{obs:sequence-of-partitions}).
\end{proof}

Combining Theorem \ref{thm:reduction-two} and Theorem~\ref{thm:string-isomorphism-almost-d-ary} we obtain the main technical result of this work.

\begin{theorem}
 \label{thm:main-result-gamma-d}
 Let $G \leq \Sym(\Omega)$ be a $\ourgamma_d$-group and let $\mathfrak{x},\mathfrak{y} \colon \Omega \rightarrow \Sigma$ be two strings.
 Then there is an algorithm deciding whether $\mathfrak{x} \cong_G \mathfrak{y}$ in time $n^{\mathcal{O}((\log d)^{c})}$, for an absolute constant $c$.
\end{theorem}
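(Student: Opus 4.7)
The plan is to assemble the theorem essentially by stacking the two main technical ingredients that have just been developed: the reduction that embeds a general $\ourgamma_d$ String Isomorphism instance into one equipped with an almost $d$-ary sequence of invariant partitions (Theorem~\ref{thm:reduction-two}), and the algorithm that solves the latter restricted problem efficiently (Theorem~\ref{thm:string-isomorphism-almost-d-ary}). Nothing new is needed beyond bookkeeping of exponents.

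First, given the input $G \leq \Sym(\Omega)$, $\mathfrak{x},\mathfrak{y} \colon \Omega \to \Sigma$, I would split the problem orbit by orbit using the standard Luks orbit-by-orbit reduction so that it suffices to handle the transitive case. For a transitive $\ourgamma_d$-group $G$, I would invoke Theorem~\ref{thm:reduction-two} to produce (in time polynomial in the new domain size) a new domain $\Omega^{*}$, a $\ourgamma_d$-group $G^{*} \leq \Sym(\Omega^{*})$, strings $\mathfrak{x}^{*},\mathfrak{y}^{*}\colon \Omega^{*}\to\Sigma$, and an almost $d$-ary sequence of $G^{*}$-invariant partitions
\[\{\Omega^{*}\} = \mathfrak{B}_0^{*} \succ \dots \succ \mathfrak{B}_k^{*} = \{\{\alpha^{*}\} \mid \alpha^{*} \in \Omega^{*}\},\]
with the guarantees that $|\Omega^{*}| \leq n^{(c_1 \log d + c_2 + 1)\log d} = n^{\mathcal{O}((\log d)^{2})}$ and $\mathfrak{x} \cong_G \mathfrak{y}$ if and only if $\mathfrak{x}^{*} \cong_{G^{*}} \mathfrak{y}^{*}$.

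Next, I would feed the transformed instance $(G^{*},\mathfrak{x}^{*},\mathfrak{y}^{*})$ together with its almost $d$-ary sequence of invariant partitions into Theorem~\ref{thm:string-isomorphism-almost-d-ary}. That algorithm computes $\Iso_{G^{*}}(\mathfrak{x}^{*},\mathfrak{y}^{*})$ in time $|\Omega^{*}|^{\mathcal{O}((\log d)^{c'})}$ for some absolute constant $c'$. Since $|\Omega^{*}| \leq n^{\mathcal{O}((\log d)^{2})}$, this yields an overall running time of
\[n^{\mathcal{O}((\log d)^{2}) \cdot \mathcal{O}((\log d)^{c'})} = n^{\mathcal{O}((\log d)^{c'+2})},\]
so setting $c := c' + 2$ gives the claimed bound. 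The equivalence provided by Theorem~\ref{thm:reduction-two} lets us read off the answer to the original problem from $\Iso_{G^{*}}(\mathfrak{x}^{*},\mathfrak{y}^{*})$: if the latter is nonempty, then $\mathfrak{x} \cong_G \mathfrak{y}$, and otherwise not.

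Since all hard work is already encapsulated in the two cited theorems, there is no serious obstacle left in this final proof; the only thing to watch is to correctly combine the exponents from the blow-up in domain size with the running-time exponent of the almost $d$-ary algorithm so that the resulting exponent is still polylogarithmic in $d$. One minor bookkeeping point: Theorem~\ref{thm:reduction-two} assumes transitivity of $G$, which is why I would first reduce to the transitive case via orbit-by-orbit Luks reduction; this step costs at most polynomially many recursive calls of total domain size bounded by $n$ and does not affect the final exponent.
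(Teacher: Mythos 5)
Your proposal follows the same route as the paper's proof: reduce to the transitive case via orbit-by-orbit processing, then apply Theorem~\ref{thm:reduction-two} to obtain an almost $d$-ary sequence of invariant partitions on a domain of size $n^{\mathcal{O}((\log d)^{2})}$, and finally invoke Theorem~\ref{thm:string-isomorphism-almost-d-ary}. The exponent bookkeeping is carried out correctly, so this is a sound proof in essentially the paper's own style.
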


\begin{proof}
 Using orbit-by-orbit processing we can assume that the group $G$ is transitive.
 For a transitive group the statement follows by first applying Theorem \ref{thm:reduction-two} and then Theorem~\ref{thm:string-isomorphism-almost-d-ary}.
\end{proof}

\section{Applications}\label{sec:applications}

\subsection{Isomorphism for graphs of bounded degree}

Using the improved algorithm for string isomorphism we can now prove the main result of this work using the following well-known reduction.

\begin{theorem}[\cite{luks82,BL83}]
 \label{thm:reduction-gi-to-si}
 There is a polynomial-time Turing-reduction from the Graph Isomorphism Problem for graphs of maximum degree $d$ to the String Isomorphism Problem for $\ourgamma_d$-groups
 (the running time of the reduction does not depend on $d$).
\end{theorem}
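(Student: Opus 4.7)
The plan follows Luks's original reduction. First, I reduce to connected graphs: match up connected components of $\Gamma_1$ and $\Gamma_2$ by size (and reject if the multisets differ), and test isomorphism between components pairwise; this replaces the general problem by polynomially many instances on connected graphs. Second, I fix an arbitrary vertex $v_1 \in V(\Gamma_1)$ and run a subroutine for each of the $n$ candidate images $v_2 \in V(\Gamma_2)$. So it suffices to test rooted isomorphism for connected graphs $(\Gamma_1, v_1)$ and $(\Gamma_2, v_2)$ of maximum degree~$d$.

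Given such a rooted connected graph $(\Gamma, v)$, I perform a BFS from $v$ to obtain layers $V_0, V_1, \dots, V_r$. If the layer-size sequences of the two rooted graphs disagree, reject. Otherwise, identify both vertex sets with a common layered set $\Omega$ of size $n$ via any layer-preserving bijection. Luks's structural insight is that one can compute in polynomial time a group $G \leq \Sym(\Omega)$, built as an iterated wreath-product tower following the BFS structure, with two properties: (i) $G \in \ourgamma_d$, because each tower extension permutes at most $d$ BFS-children of a single parent and hence corresponds to a subgroup of $S_d$; and (ii) every rooted-graph isomorphism $(\Gamma_1, v_1) \to (\Gamma_2, v_2)$ induces, via the fixed identifications, a permutation in $G$.

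Now I encode each $\Gamma_i$ as a string $\mathfrak{x}_i \colon \binom{\Omega}{2} \to \{0,1\}$ with $\mathfrak{x}_i(\{u,w\}) = 1$ iff the pair $\{u,w\}$ is an edge of $\Gamma_i$ under the chosen identification. The domain has size $\mathcal{O}(n^2)$ and carries the natural $G$-action. By (i) and (ii), the two rooted graphs are isomorphic iff $\Iso_G(\mathfrak{x}_1, \mathfrak{x}_2) \neq \emptyset$, reducing the rooted-isomorphism test to one call to String Isomorphism over a $\ourgamma_d$-group. Iterating over choices of $v_2$ and over matched pairs of connected components yields the claimed polynomial-time Turing reduction, whose running time does not depend on~$d$.

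The main technical obstacle --- and the heart of the classical argument --- is the construction of $G$ in (i)--(ii): it must be large enough to realize every rearrangement that a rooted-graph isomorphism could produce, yet small enough to remain in $\ourgamma_d$. The inductive construction along the BFS tree ensures both: at each layer the action is built as an extension by permutations of sibling groups of size at most~$d$, each a subgroup of $S_d$, and the composition factors of the resulting tower are precisely such subgroups, so the closure properties of $\ourgamma_d$ from Lemma~\ref{la:gamma-d-closure} deliver $G \in \ourgamma_d$.
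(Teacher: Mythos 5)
Your high-level plan (reduce to connected graphs, fix a root, try all images of it, use BFS layers, encode the edge set as a string over $\binom{\Omega}{2}$) does follow Luks's strategy, and invoking Lemma~\ref{la:gamma-d-closure} to control composition factors is the right tool. However, the central step has a genuine gap: you assert that one can compute up front, without oracle calls, a group $G\leq\Sym(\Omega)$ on the $n$-element vertex set that is (i) an iterated wreath-product tower in $\ourgamma_d$ ``following the BFS structure'' and (ii) contains every rooted isomorphism. No such construction exists as described. If the tower is built over a BFS \emph{spanning tree} (each vertex assigned one parent), property (ii) fails: a rooted graph isomorphism need not respect the arbitrary tree choice. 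Concretely, take $V_0=\{v\}$, $V_1=\{a,b\}$, $V_2=\{c,d,e\}$ with edges $va,vb,ac,ad,bd,be$ (maximum degree $3$); every BFS tree assigns $d$ to exactly one of $a,b$, so the tree-automorphism group cannot swap $a$ and $b$, yet $(a\,b)(c\,e)$ is a rooted graph automorphism. If instead you take $G$ to be the group of all layer-preserving permutations respecting the full cross-layer adjacency --- which does contain all rooted isomorphisms and is in $\ourgamma_d$ by Luks's theorem --- then computing $G$ is itself an instance of the very problem being reduced, so the argument is circular.

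The reduction the paper actually appeals to (Luks~\cite{luks82} refined by~\cite{BL83}) is inherently iterative: one computes the coset of partial isomorphisms on $V_0\cup\dots\cup V_i$ layer by layer, and each extension to $V_{i+1}$ is itself an oracle call whose input group is built from the \emph{output} of the previous call. That feedback is precisely why the statement is a \emph{Turing} reduction rather than a many-one reduction to a single string-isomorphism instance; the $\ourgamma_d$ membership is likewise proved inductively over the layers rather than by writing down a wreath-product tower a priori. You also leave the $d$-independence claim entirely unjustified: the naive layer-extension step manipulates the neighborhood set system (subsets of a layer of size at most $d$), and without care the domains or enumerations involved are $d$-dependent --- removing this dependence is exactly the content of the trick in~\cite[Section~4.2]{BL83} that the theorem cites.
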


The reduction follows \cite{luks82} using an additional trick presented in \cite[Section 4.2]{BL83} to remove the dependence of the running time on $d$.

Combining this reduction with the improved algorithm for string isomorphism, we get the desired algorithm for isomorphism tests of bounded degree graphs.

\begin{theorem}[Theorem~\ref{thm:main-result-degree-d} restated]
 The Graph Isomorphism Problem for graphs of maximum degree $d$ can be solved in time $n^{\mathcal{O}((\log d)^{c})}$, for an absolute constant $c$.
\end{theorem}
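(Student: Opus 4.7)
The plan is to combine the two results cited immediately before the theorem statement: Theorem~\ref{thm:reduction-gi-to-si} reduces the Graph Isomorphism Problem for graphs of maximum degree~$d$ to polynomially many instances of the String Isomorphism Problem for $\ourgamma_d$-groups, and Theorem~\ref{thm:main-result-gamma-d} solves each such String Isomorphism instance in time $n^{\mathcal{O}((\log d)^c)}$. Since the reduction runs in polynomial time independently of $d$, it produces instances whose domain size is polynomially bounded in the number of vertices of the input graphs, so each resulting call to the String Isomorphism algorithm of Theorem~\ref{thm:main-result-gamma-d} still costs only $n^{\mathcal{O}((\log d)^c)}$. Multiplying by the polynomial number of calls and the polynomial overhead of the reduction absorbs cleanly into the same asymptotic bound.

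Concretely, given two graphs $\Gamma_1,\Gamma_2$ of maximum degree~$d$ on $n$ vertices, I would first invoke Theorem~\ref{thm:reduction-gi-to-si} to produce an equivalent collection of String Isomorphism instances $(G_i,\mathfrak{x}_i,\mathfrak{y}_i)$ with $G_i\le\Sym(\Omega_i)$, $G_i\in\ourgamma_d$, and $|\Omega_i|\le n^{\mathcal{O}(1)}$. The key structural fact underlying this reduction (due to Luks) is that the pointwise stabilizer in $\Aut(\Gamma)$ of any vertex of a connected graph of maximum degree~$d$ lies in $\ourgamma_d$, since it acts faithfully on the breadth-first-search tree whose branching is bounded by $d$. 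The refinement of~\cite{BL83} is used to ensure that the reduction itself is polynomial with no hidden dependence on~$d$. I would then feed each instance into the algorithm of Theorem~\ref{thm:main-result-gamma-d}, decide String Isomorphism in time $|\Omega_i|^{\mathcal{O}((\log d)^{c'})}=n^{\mathcal{O}((\log d)^{c'})}$ for the absolute constant $c'$ from that theorem, and aggregate the outcomes according to the reduction to obtain the answer for $\Gamma_1\cong\Gamma_2$.

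There is essentially no obstacle at this stage: the work has already been done in Sections~\ref{sec:characterization-primitive}--\ref{sec:algorithm}, which establish Theorem~\ref{thm:main-result-gamma-d}, and the reduction of Theorem~\ref{thm:reduction-gi-to-si} is classical. The only bookkeeping to attend to is verifying that the bounds compose into the claimed form, which they do since raising a polynomial-size domain to the exponent $(\log d)^{c'}$ yields the same form $n^{\mathcal{O}((\log d)^c)}$ for a (slightly larger) absolute constant~$c$.
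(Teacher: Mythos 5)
Your proposal matches the paper's proof exactly: the paper cites precisely Theorem~\ref{thm:reduction-gi-to-si} (Luks/Babai--Luks reduction of degree-$d$ GI to String Isomorphism for $\ourgamma_d$-groups, with running time independent of $d$) and Theorem~\ref{thm:main-result-gamma-d} (the $n^{\mathcal{O}((\log d)^c)}$ String Isomorphism algorithm), and observes that the bounds compose. Your additional remarks about the domain size being polynomial in $n$ and the constant absorbing are the same bookkeeping the paper leaves implicit.
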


\begin{proof}
 This follows from Theorem \ref{thm:main-result-gamma-d} and \ref{thm:reduction-gi-to-si}.
\end{proof}

\subsection{Isomorphism for relational structures and hypergraphs}

For the second application of Theorem~\ref{thm:main-result-gamma-d} consider the isomorphism problem for relational structures.

\begin{theorem}
  \label{thm:relational}
 Let $\mathfrak{A} = (D,R)$, $\mathfrak{A}' = (D,R')$ be relational structures where $R,R' \subseteq D^{t}$ are $t$-ary relations.
 Then one can decide whether $\mathfrak{A}$ is isomorphic to $\mathfrak{A}'$ in time $n^{\mathcal{O}(t \cdot (\log n)^{c})}$ where $n = |D|$.
\end{theorem}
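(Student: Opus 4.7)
The plan is to reduce the isomorphism problem for $t$-ary relational structures to an instance of the String Isomorphism Problem on a permutation group in~$\ourgamma_n$ that admits an (almost) $n$-ary sequence of invariant partitions, and then to invoke Theorem~\ref{thm:string-isomorphism-almost-d-ary} with the parameter $d$ set to $n$. Concretely, set $\Omega = D^{t}$ (of size $N = n^{t}$) and let $G \leq \Sym(\Omega)$ be the image of $\Sym(D)$ under the diagonal action $(d_1,\dots,d_t)^{\sigma} = (d_1^{\sigma},\dots,d_t^{\sigma})$. Since this action is faithful for $t\geq 1$, the group $G$ is abstractly isomorphic to $\Sym(D)$, and thus its only non-abelian composition factor is $A_{n}$, which is a subgroup of $S_n$; hence $G \in \ourgamma_{n}$. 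The relations $R, R' \subseteq D^{t}$ are encoded as binary strings $\mathfrak{x}_R, \mathfrak{x}_{R'} \colon \Omega \to \{0,1\}$ via $\mathfrak{x}_R(\bar d) = 1$ iff $\bar d \in R$, and clearly $\mathfrak{A} \cong \mathfrak{A}'$ if and only if $\mathfrak{x}_R \cong_{G} \mathfrak{x}_{R'}$.

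Next I would exhibit an (almost) $n$-ary sequence of $G$-invariant partitions of $\Omega$ by successively fixing coordinates. For $i \in \{0,1,\dots,t\}$, let
\[
 \mathfrak{B}_i = \bigl\{\, \{(d_1^*,\dots,d_i^*)\} \times D^{t-i} \;\bigm|\; (d_1^*,\dots,d_i^*) \in D^{i}\,\bigr\}.
\]
Each $\mathfrak{B}_i$ is plainly $G$-invariant, the sequence strictly refines from $\{\Omega\}$ to the discrete partition, and for every $B \in \mathfrak{B}_{i-1}$ one has $|\mathfrak{B}_{i}[B]| = n$, so the sequence is even $n$-ary.

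Applying Theorem~\ref{thm:string-isomorphism-almost-d-ary} with $d = n$ to $G$, $\mathfrak{x}_R$, and $\mathfrak{x}_{R'}$ on the domain $\Omega$ of size $N = n^{t}$ then yields the set $\Iso_G(\mathfrak{x}_R,\mathfrak{x}_{R'})$ in time
\[
 N^{\mathcal{O}((\log n)^{c})} \;=\; (n^{t})^{\mathcal{O}((\log n)^{c})} \;=\; n^{\mathcal{O}(t\cdot(\log n)^{c})},
\]
which is the claimed bound. No serious obstacle is expected: the two points to verify carefully are the $G$-invariance of each $\mathfrak{B}_i$ (immediate from the diagonal action) and that $G \in \ourgamma_n$ (immediate from faithfulness plus the composition factors of $S_n$). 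The overhead for converting the input structures to strings and extracting an isomorphism from the returned coset is polynomial in $N$ and thus absorbed in the stated running time.
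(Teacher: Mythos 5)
Your proof is correct and begins with the same reduction as the paper: encode the $t$-ary relations as binary strings on $\Omega = D^t$, act by $\Sym(D)$ diagonally, observe $G\cong S_n\in\ourgamma_n$, and note $\mathfrak A\cong\mathfrak A'$ iff $\mathfrak x_R\cong_G\mathfrak x_{R'}$. Where you diverge is in how you dispatch the resulting String Isomorphism instance: the paper simply cites Theorem~\ref{thm:main-result-gamma-d} with $d=n$ on the domain of size $n^t$, which internally routes through the change-of-action reduction of Theorem~\ref{thm:reduction-two} before reaching Theorem~\ref{thm:string-isomorphism-almost-d-ary}. You instead observe that the diagonal action on $D^t$ already comes equipped with the obvious $n$-ary sequence of invariant partitions $\mathfrak B_0\succ\cdots\succ\mathfrak B_t$ obtained by fixing coordinates one at a time, and therefore apply Theorem~\ref{thm:string-isomorphism-almost-d-ary} directly. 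This bypasses the reduction step (and the attendant further blow-up of the domain by a $\polylog$ factor in the exponent), yielding a marginally better constant $c$ in the exponent; asymptotically both routes give $n^{\mathcal{O}(t(\log n)^{c})}$. One small note: Theorem~\ref{thm:string-isomorphism-almost-d-ary} as stated does not explicitly demand $G\in\ourgamma_d$, but the underlying algorithm lists a $\ourgamma_d$-group as input; your $G\cong S_n\in\ourgamma_n$ satisfies this regardless, so the application is sound.
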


\begin{proof}
 Let $\mathfrak{x}\colon D^{t} \rightarrow \{0,1\}$ be the string with $\mathfrak{x}(a_1,\dots,a_t) = 1$ if and only if $(a_1,\dots,a_t) \in R$.
 Similarly define the string $\mathfrak{x}'\colon D^{t} \rightarrow \{0,1\}$ for the relation $R'$.
 Now let $G = \Sym(D)^{(D^{t})}$ be the symmetric group over the set $D$ with its natural action on $t$-tuples.
 Then $\mathfrak{A}$ is isomorphic to $\mathfrak{A}'$ if and only if $\mathfrak{x}$ is $G$-isomorphic to $\mathfrak{x}'$.
 Moreover, $G \in \ourgamma_n$.
 Hence, by Theorem~\ref{thm:main-result-gamma-d}, one can decide in time $n^{\mathcal{O}(t \cdot (\log n)^{c})}$ whether $\mathfrak{x}$ is $G$-isomorphic to $\mathfrak{x}'$.
\end{proof}

In many cases this leads to a better running time than first translating the structure into a graph and than applying Babai's algorithm to test whether the two resulting graphs are isomorphic.
In particular, in case the arity $t$ is large and also the size of the relation is large our method gives a much better worst case complexity than the other approach.

Also note that as a special case the same running time can be obtained for hypergraphs if $t$ is the maximal hyperedge size.
This also improves on previous results (see e.g.\ \cite{BC08}).

\begin{corollary}\label{cor:hypergraph}
 Let $\mathcal{H} = (V,\mathcal{E})$, $\mathcal{H}' = (V,\mathcal{E}')$ be two hypergraphs such that every hyperedge $E \in \mathcal{E} \cup \mathcal{E}'$ has size $|E| \leq t$.
 Then one can decide whether $\mathcal{H}$ is isomorphic to $\mathcal{H}'$ in time $n^{\mathcal{O}(t \cdot (\log n)^{c})}$ where $n = |V|$.
\end{corollary}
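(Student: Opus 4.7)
The plan is to reduce hypergraph isomorphism directly to isomorphism of $t$-ary relational structures and then invoke Theorem~\ref{thm:relational}. Given a hypergraph $\mathcal{H} = (V,\mathcal{E})$ in which every hyperedge has size at most $t$, I would define the $t$-ary relation
\[
R_{\mathcal{H}} = \{(v_1,\dots,v_t) \in V^t \mid \{v_1,\dots,v_t\} \in \mathcal{E}\},
\]
where $\{v_1,\dots,v_t\}$ denotes the underlying set with multiplicities collapsed. I would then associate to $\mathcal{H}$ the relational structure $\mathfrak{A}_{\mathcal{H}} = (V,R_{\mathcal{H}})$, and do the same for $\mathcal{H}'$.

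The key step is verifying that $\mathcal{H} \cong \mathcal{H}'$ if and only if $\mathfrak{A}_{\mathcal{H}} \cong \mathfrak{A}_{\mathcal{H}'}$. One direction is immediate: every hypergraph isomorphism $\varphi\colon V \to V$ preserves $R_{\mathcal{H}}$, since the underlying set of a tuple is mapped to the underlying set of its image. For the converse, given any hyperedge $E = \{u_1,\dots,u_i\} \in \mathcal{E}$ with $i \leq t$, I would pad it to obtain the tuple $(u_1,\dots,u_i,u_i,\dots,u_i) \in R_{\mathcal{H}}$; applying a relational isomorphism $\varphi$ produces a tuple in $R_{\mathcal{H}'}$ whose underlying set $\{\varphi(u_1),\dots,\varphi(u_i)\}$ must therefore be a hyperedge of $\mathcal{H}'$.

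Once this equivalence is established, Theorem~\ref{thm:relational} applied to $\mathfrak{A}_{\mathcal{H}}$ and $\mathfrak{A}_{\mathcal{H}'}$ produces the desired isomorphism test in time $n^{\mathcal{O}(t \cdot (\log n)^{c})}$. I do not foresee any real obstacle here: the encoding is elementary and the entire algorithmic content sits inside Theorem~\ref{thm:relational} (which in turn is a direct consequence of Theorem~\ref{thm:main-result-gamma-d}).
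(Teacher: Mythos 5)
Your proposal is correct and is essentially the approach the paper intends: the paper simply notes that the hypergraph case is a special case of Theorem~\ref{thm:relational} without spelling out the encoding, and your encoding of each hyperedge by all $t$-tuples whose underlying set equals it (with repetition used as padding) is exactly the standard way to carry this out. The only minor implicit assumption is that hyperedges are nonempty so that padding is possible, which is the usual convention.
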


\section{Concluding Remarks}
We have obtained a new graph isomorphism test with a running time
bounded by a polynomial of degree polylogarithmic in the
maximum degree of the input graphs. Technically, this result relies on
some heavy group theory, new combinatorial tricks that allow us to
reduce the string isomorphism problem for $\ourgamma_d$ groups to
a setting where we have an ``almost $d$-ary'' sequence of invariant partitions
controlling the operation of the groups, and a refinement of the
techniques introduced by Babai~\cite{Babai16} for his quasipolynomial time
isomorphism test.

We hope that the machinery we have developed here will have further
applications and ultimately even lead to an improvement of Babai's
isomorphism test. More immediate applications may be obtained for the
isomorphism problem under restrictions of other parameters than the
maximum degree. For example, we conjecture that there also is an
isomorphism test running in time $n^{\mathcal{O}((\log k)^c)}$,
where $k$ is the tree width of the input graphs.
We remark that the results established in this work have already been used in \cite{GroheNSW18}
to obtain an improved fpt algorithm for isomorphism parameterized by tree width.

Another related problem that we leave open is whether the graph
isomorphism problem parameterized by the maximum degree of the input
graphs is fixed-parameter tractable.

\bibliographystyle{plain}
\bibliography{literature}

\begin{appendix}

\section{Aggregating Local Certificates}
\label{app:proof-aggregate}

We give a proof for Lemma \ref{la:aggregate-local-certificates}.

\begin{definition}
 A group $G \leq \Sym(\Omega)$ is \emph{$t$-transitive} if its natural induced action on the set of $n(n-1)\dots(n-t+1)$ ordered $t$-tuples of distinct elements is transitive.
 The \emph{degree of transitivity} $d(G)$ is the largest $t$ such that $G$ is $t$-transitive.
\end{definition}

\begin{theorem}[CFSG]
 \label{thm:degree-of-transitivity}
 Let $G \leq \Sym(\Omega)$ be a non-giant group.
 Then $d(G) \leq 5$.
\end{theorem}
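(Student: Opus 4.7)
The plan is to reduce to the classification of finite $2$-transitive groups, which rests on CFSG. First, I would observe that any $t$-transitive group with $t\geq 2$ is automatically primitive, so for $t\geq 2$ one may apply the full structural classification of finite $2$-transitive permutation groups (as obtained by Cameron, Curtis, Hering, Kantor, Liebeck, and others; see for example \cite{DM96}). This classification lists every $2$-transitive group by its socle, partitioning them into affine socle and almost simple socle cases.

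Next, for each family appearing in this classification I would read off (or quote) the exact degree of transitivity. The plan is to verify that outside the giants $\Alt(\Omega)$ and $\Sym(\Omega)$ themselves, the degree of transitivity never exceeds~$5$: the affine $2$-transitive groups have degree of transitivity at most~$3$ (one gains transitivity on the zero vector, on one further nonzero vector, and, only in special cases such as $\AGL$, on a third), while the almost simple cases are handled family by family. Projective groups $\PSL$, $\PSU$, Suzuki, Ree, and the remaining sporadic $2$-transitive groups are at most $3$-transitive with the single exception that $\PGL_2(q)$ is sharply $3$-transitive; the only cases reaching higher transitivity are precisely the Mathieu groups $M_{11}, M_{12}, M_{23}, M_{24}$, and these achieve transitivity $4,5,4,5$ respectively. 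In particular no non-giant $2$-transitive group is $6$-transitive, which gives the bound $d(G)\le 5$.

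The main obstacle (really the only nontrivial ingredient) is invoking CFSG and the resulting classification of $2$-transitive groups: this is where all the substance of the theorem lives, and I would simply cite the classification rather than reproduce it. Once one has that classification in hand, the verification that no non-giant reaches $6$-transitivity is merely a table lookup through the finitely many families.
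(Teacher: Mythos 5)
Your proposal is correct and follows the standard route. Note, however, that the paper does not actually give a proof of this statement: it states the bound as a direct consequence of CFSG and merely adds the remark that the weaker bound $d(G)\le 7$ can be obtained from Schreier's Hypothesis alone (citing \cite[Theorem~7.3A]{DM96}). Your plan to reduce to the classification of finite $2$-transitive groups and then run a family-by-family check is precisely how this fact is established in the literature, so it matches the (implicitly cited) approach rather than offering a genuinely different one. One small caution on your sketch of the affine case: the cleanest justification that $\AGL(n,q)$ is never $4$-transitive is that $\GL(n,q)$ preserves linear dependence among triples of nonzero vectors (so it cannot be $3$-transitive on nonzero vectors), rather than a vague count of how many points one can individualize; but this is a cosmetic point, and the conclusion $d(G)\le 5$ with the Mathieu groups $M_{12},M_{24}$ attaining $5$ is accurate.
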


A slightly weaker statement, namely $d(G) \leq 7$ for all non-giants permutation groups, can be shown using only Schreier's Hypothesis (see \cite[Theorem 7.3A]{DM96}).

\begin{lemma}[cf.\;\cite{Babai15-full}, Corollary 2.4.13]
 \label{la:symmetry-defect-regular-graph}
 Let $\Gamma$ be a non-trivial regular graph. Then the relative symmetry defect of $\Gamma$ is at least $1/2$. 
\end{lemma}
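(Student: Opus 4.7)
The plan is to show that any $\Delta\subseteq V(\Gamma)$ with $\Alt(\Delta)\leq\Aut(\Gamma)$ must satisfy $|\Delta|\leq n/2$; by the definition of symmetry defect this is equivalent to the claim. I will argue by contradiction, assuming $|\Delta|>n/2$, so in particular $|\Delta|\geq 4$ once $n$ is not tiny. The few leftover configurations with $|\Delta|\leq 3$ only occur when $n\leq 5$ and can be verified against the handful of non-trivial regular graphs of that size.

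The central tool is a rigidity observation: $\Alt(\Delta)$ pointwise fixes every $v\in V(\Gamma)\setminus\Delta$, so $N(v)\cap\Delta$ is $\Alt(\Delta)$-invariant. For $|\Delta|\geq 3$ this group acts transitively on $\Delta$, whence $N(v)\cap\Delta\in\{\emptyset,\Delta\}$; for $|\Delta|\geq 4$ it acts $2$-transitively on $\Delta$, forcing the induced subgraph $\Gamma[\Delta]$ to be either a clique or an independent set.

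A degree count using regularity now closes things out. Suppose first $\Gamma[\Delta]$ is a clique and set $S=\{v\in V(\Gamma)\setminus\Delta:\Delta\subseteq N(v)\}$. Each $u\in\Delta$ has degree $|\Delta|-1+|S|$, while any $v\in (V(\Gamma)\setminus\Delta)\setminus S$ has degree at most $|V(\Gamma)\setminus\Delta|-1<|\Delta|-1$ (since $|\Delta|>n/2$). Regularity therefore forces $(V(\Gamma)\setminus\Delta)\setminus S=\emptyset$, and then every vertex of $\Delta$ has degree $n-1$, making $\Gamma=K_n$ and contradicting non-triviality. The case $\Gamma[\Delta]$ empty is dual: the analogous $S$ would need $|S|\geq|\Delta|>n/2$, which is impossible since $|S|\leq n-|\Delta|<n/2$, so $S=\emptyset$ and $\Gamma$ is edgeless. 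The main obstacle is the bookkeeping in the degenerate regime $|\Delta|\leq 3$; I plan to sidestep it by checking the $n\leq 5$ cases directly and letting the clean $2$-transitive argument do the heavy lifting in the general case.
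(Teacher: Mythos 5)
Your argument is correct, but note that the paper does not prove this lemma at all: it is imported verbatim from Babai (Corollary 2.4.13 of the full version) and used as a black box in the proof of Lemma~\ref{la:aggregate-local-certificates}. What you give is a correct, self-contained elementary replacement. The core steps all check out: any $\Delta$ with $\Alt(\Delta)\leq\Aut(\Gamma)$ and $|\Delta|\geq 3$ forces $N(v)\cap\Delta\in\{\emptyset,\Delta\}$ for every fixed vertex $v\notin\Delta$, and $2$-transitivity for $|\Delta|\geq 4$ makes $\Gamma[\Delta]$ complete or empty; the degree comparison under the assumption $|\Delta|>n/2$ then forces $\Gamma=K_n$ in the clique case and $\Gamma$ edgeless in the independent-set case, contradicting non-triviality (which here, consistent with how the lemma is applied to non-diagonal orbital graphs, means neither complete nor empty). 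Two small points you should still write out to make the proposal a complete proof: first, the leftover regime really is finite as you claim ($|\Delta|>n/2$ and $|\Delta|\leq 3$ forces $n\leq 5$, and the only non-trivial regular graphs there are $2K_2$, $C_4$, $C_5$ and $\overline{C_5}$, whose automorphism groups have order $8$ or $10$ and hence contain no $3$-cycle fixing the remaining vertices); second, the degenerate sets with $|\Delta|\leq 2$, for which $\Alt(\Delta)$ is trivial and no graph-theoretic contradiction is available, are harmless only because $|\Delta|\leq 2>n/2$ would force $n\leq 3$, where no non-trivial regular graph exists --- this deserves an explicit sentence, since for such $\Delta$ the containment $\Alt(\Delta)\leq\Aut(\Gamma)$ holds vacuously. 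Incidentally, for $|\Delta|=3$ your dichotomy still goes through with a weaker tool, since $A_3$ is transitive on unordered pairs even though it is not $2$-transitive, so the case analysis could be pushed down to $|\Delta|\geq 3$ and shrink the finite check; but deferring to a direct verification as you do is perfectly sound.
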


\begin{lemma}[Lemma \ref{la:aggregate-local-certificates} restated]
 Let $\mathfrak{x}_1,\mathfrak{x}_2\colon \Omega \rightarrow \Sigma$ be two strings,
 $G \leq \Sym(\Omega)$ be a group and suppose there is an almost $d$-ary sequence of $G$-invariant partitions
 $\{\Omega\} = \mathfrak{B}_0 \succ \dots \succ \mathfrak{B}_m = \{\{\alpha\} \mid \alpha \in \Omega\}$.
 Furthermore suppose there is a giant representation $\varphi\colon G \rightarrow S_k$.
 Let $\max\{8,2 + \log_2d\} < t < k/10$.
 
 Then there are natural numbers $\ell \in \mathbb{N}$ and $n_1,\dots,n_\ell \leq n/2$ such that $\sum_{i=1}^{\ell}n_i \leq k^{\mathcal{O}(t)}n$ and,
 for each $i \in [\ell]$ using a recursive call to String Isomorphism
 over domain size $n_i$, and $k^{\mathcal{O}(t)}n^{c}$
 additional computation,
 one obtains for $i=1,2$ one of the following:
 \begin{enumerate}
  \item a family of $r\le k^6$ many $t$-ary relational structures $\mathfrak{A}_{i,j}$, for $j\in[r]$, associated with $\mathfrak x_i$,
    each with domain $D_{i,j}\subseteq[k]$ of size $|D_{i,j}| \geq \frac{3}{4}k$ and with relative symmetry defect at least $\frac{1}{4}$ such that
    \[\left\{\mathfrak{A}_{1,1},\dots,\mathfrak{A}_{1,r}\right\}^{\varphi(g)} = \left\{\mathfrak{A}_{2,1},\dots,\mathfrak{A}_{2,r}\right\} \text{ for every } g \in \Iso_G(\mathfrak{x}_1,\mathfrak{x}_2),\]
    or
  \item a subset
    $\Delta_i \subseteq [k]$ associated with $\mathfrak{x}_i$ of size
    $|\Delta_i| \geq \frac{3}{4}k$ and $K_i \leq \Aut_{G_{\Delta_i}}(\mathfrak{x}_i)$
    such that $(K_i^{\varphi})^{\Delta_i} \geq \Alt(\Delta_i)$ and
    \[\Delta_1^{\varphi(g)} = \Delta_2 \text{ for every } g \in \Iso_G(\mathfrak{x}_1,\mathfrak{x}_2).\]
 \end{enumerate}
\end{lemma}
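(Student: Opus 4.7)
The plan is to follow the aggregation strategy of~\cite{Babai16}, using Lemma~\ref{la:local-certificate} in place of Babai's local certificates routine and Lemma~\ref{la:compare-local-certificates} for canonical comparisons across the two input strings. First, for each of the $\binom{k}{t} \le k^{\mathcal{O}(t)}$ many $t$-subsets $T \subseteq [k]$ and each $i \in \{1,2\}$, apply Algorithm~\ref{alg:local-certificates} to $(G,\mathfrak{x}_i,\varphi,T)$. Each call returns either a certificate of fullness $K_{i,T} \le \Aut_{G_T}(\mathfrak{x}_i)$ with $(K_{i,T}^{\varphi})^T \ge \Alt(T)$, or a non-giant $M_{i,T} \le \Sym(T)$ witnessing non-fullness. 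By Lemma~\ref{la:local-certificate}, the cost summed over all subsets fits within the claimed budget: recursive calls on domain sizes $n_\ell \le n/2$ summing to $k^{\mathcal{O}(t)} n$, together with $k^{\mathcal{O}(t)} n^{c}$ additional computation.

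Second, set $F_i := \bigcup \{T : T \text{ is full for } \mathfrak{x}_i\}$, which depends canonically on $\mathfrak{x}_i$, so $F_1^{\varphi(g)} = F_2$ for every $g \in \Iso_G(\mathfrak{x}_1,\mathfrak{x}_2)$. Because $t > 8$, any two fullness certificates on $t$-subsets sharing at least two points generate an $\Alt$-giant on their union; chaining such pairings across all full subsets yields a subgroup $K_i \le \Aut_{G_{F_i}}(\mathfrak{x}_i)$ with $(K_i^{\varphi})^{F_i} \ge \Alt(F_i)$. If $|F_i| \ge 3k/4$ we output Option~\ref{item:aggregate-local-certificates-2} with $\Delta_i := F_i$ and the $K_i$ just constructed; canonicity of $F_i$ supplies the required $\Delta_1^{\varphi(g)} = \Delta_2$.

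Otherwise $|F_i| < 3k/4$, so the family $\mathcal{N}_i$ of non-full $t$-subsets is rich. For every pair $(T_1,T_2) \in \mathcal{N}_1 \times \mathcal{N}_2$ we apply Lemma~\ref{la:compare-local-certificates} to obtain a canonical coset of bijections $T_1 \to T_2$ compatible with $\Iso_G(\mathfrak{x}_1,\mathfrak{x}_2)$. These data are packaged into at most $r \le k^{6}$ canonical $t$-ary relational structures $\mathfrak{A}_{i,j}$ on canonical domains $D_{i,j} \subseteq [k]$: the index $j$ ranges over a short canonical list of invariants (the permutational isomorphism type of $M_{i,T}$ together with an invariant of the comparison coset), and the $t$-ary relation of $\mathfrak{A}_{i,j}$ records the non-full $t$-subsets of the corresponding type. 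The domain $D_{i,j}$ is obtained by canonically removing a small ``bad'' set of indices; were this deletion to leave $|D_{i,j}| < 3k/4$, we could iteratively enlarge $F_i$ using the discovered symmetries and surpass the $3k/4$ threshold, contradicting the current case hypothesis.

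The main obstacle is establishing that each $\mathfrak{A}_{i,j}$ has relative symmetry defect at least $1/4$. The argument is by contradiction: if $\Aut(\mathfrak{A}_{i,j}) \supseteq \Alt(E)$ for some $E \subseteq D_{i,j}$ with $|E| > (3/4)|D_{i,j}|$, then $\Alt(E)$ acts transitively on the $t$-subsets of $E$, so the non-full $t$-subsets contained in $E$ would be mutually $\Aut_G(\mathfrak{x}_i)$-equivalent; aggregated with the fullness certificates already collected on $F_i$, this would exhibit an $\Alt$-giant on $E \cup F_i$ inside $(\Aut_G(\mathfrak{x}_i))^{\varphi}$, forcing those $t$-subsets of $E$ to be full --- which is ruled out by Theorem~\ref{thm:degree-of-transitivity} since $t > 5$ and each $M_{i,T}$ is a non-giant. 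The canonicity of the family, the bound $r \le k^{6}$, and the cost bookkeeping are then routine and mirror the corresponding steps in the proof of~\cite[Theorem 24]{Babai16}.
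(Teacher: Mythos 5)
Your proposal follows the right high-level strategy (compute local certificates for all $t$-subsets, form a global object from the fullness certificates, then branch on cases), but it diverges from the paper's argument in ways that create genuine gaps.

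The first problem is your construction of $F_i$ and the associated $K_i$. You define $F_i$ as the \emph{union of sets} $\bigcup\{T : T \text{ is full}\}$ and then claim that chaining fullness certificates over overlapping $t$-subsets yields a subgroup $K_i$ with $(K_i^\varphi)^{F_i}\geq\Alt(F_i)$. This fails if the full $t$-subsets do not overlap suitably: two full $t$-subsets sharing at most one point need not produce a giant on their union, and in general the collection of full subsets can split into several ``components''. The paper avoids this by taking $F_i$ to be the \emph{group} generated by the fullness certificates and then analyzing the support $S_i$ of $F_i^\varphi$. Crucially, even when $S_i$ is large, $F_i^\varphi$ may not act as a giant on $S_i$, and the paper spends much of the proof on exactly this situation. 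Your proposal jumps from ``$|F_i|\geq 3k/4$'' to ``output Option~\ref{item:aggregate-local-certificates-2}'' without justification that a single alternating action exists.

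The second problem is that several of the paper's case distinctions are simply absent. The paper treats (a) the intermediate case $\tfrac14 k\leq|S_i|\leq\tfrac34 k$ by a coloring of $[k]$; (b) the case $|S_i|>\tfrac34 k$ with no orbit of size $\geq\tfrac34 k$, by the orbit partition; and (c) the case of a large orbit $C$ on which $(F_i^\varphi)^C$ is not a giant, by invoking the bound $d((F_i^\varphi)^C)\leq 5$ from Theorem~\ref{thm:degree-of-transitivity}, individualizing a few points, and extracting a nontrivial orbital graph of symmetry defect $\geq\tfrac12$. Case (c) in particular is where the multiplicative factor $k^6$ and most of the technical work come from. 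You have no analogue for (a), (b) or (c); instead you try to push everything either into Option~\ref{item:aggregate-local-certificates-2} or into the ``$F_i$ small'' case, which does not cover the actual case space.

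Finally, your argument for the symmetry defect of the structures $\mathfrak{A}_{i,j}$ in the last case is garbled. You invoke Theorem~\ref{thm:degree-of-transitivity}, but that theorem is used in the paper only in case (c) above, not in the $|S_i|<\tfrac14 k$ case. In the paper's small-support case the symmetry defect bound is immediate: since every $t$-subset of $D_i$ is non-full, the orbit structure of the comparison morphisms already refines the $t$-tuples enough that the symmetry defect is at least $|D_i|-t+1\geq|D_i|/4$. Your contradiction via ``$\Alt(E\cup F_i)$ inside $(\Aut_G(\mathfrak{x}_i))^\varphi$'' is not established and relies on the same unproven chaining claim. To repair the proof you should switch to the group-and-support formulation, add the missing case analysis on the orbit structure of $F_i^\varphi$ using degree-of-transitivity and orbital configurations, and replace the symmetry-defect argument with the direct count from non-fullness.
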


\begin{proof}
 For every $t$-element subset $T \subseteq [k]$ determine whether $T$ is full (with respect to $\mathfrak{x}_i$) and compute a corresponding certificate using Lemma \ref{la:local-certificate}.
 Let $F_i \leq \Sym(\Omega)$ be the group generated by the fullness-certificates for all full subsets $T \subseteq [k]$ with respect to string $\mathfrak{x}_i$.
 Note that the group $F_i$ is canonical. Let $S_i \subseteq [k]$ be the support of $F_i^{\varphi}$ (the set of elements not fixed by $F_i^{\varphi}$).
 
 First suppose $\frac{1}{4}k \leq |S_i| \leq \frac{3}{4}k$.
 Then one obtains a canonical structure $\mathfrak{A}_i$ with domain $[k]$ and relative symmetry defect at least $\frac{1}{4}$ by coloring each element $\alpha \in [k]$ depending on whether $\alpha \in S_i$.
 
 Next suppose $|S_i| > \frac{3}{4}k$.
 We distinguish between three subcases.
 First assume there is no orbit of $F_i^{\varphi}$ of size at least $\frac{3}{4}k$.
 Then the partition into the orbits of $F_i^{\varphi}$ gives a canonical structure $\mathfrak{A}_i$ with domain $[k]$ and relative symmetry defect at least $\frac{1}{4}$.
 So assume there is a (unique) orbit $C \subseteq [k]$ of size $C \geq \frac{3}{4}k$.
 If $(F_i^{\varphi})^{C} \geq \Alt(C)$ then the second option of the Lemma is satisfied.
 Hence suppose $(F_i^{\varphi})^{C}$ is not a giant.
 By Theorem \ref{thm:degree-of-transitivity} the degree of transitivity satisfies $d((F_i^{\varphi})^{C}) \leq 5$.
 Let $I \subseteq C$ be an arbitrary set of size $d((F_i^{\varphi})^{C}) - 1$ and individualize the elements of $I$.
 Then $(F_i^{\varphi})_{(I)}^{C'}$ is transitive, but not $2$-transitive, where $C' = C \setminus I$.
 Note that the number of possible choices for the set $I$ is at most $k^{4}$.
 Now let $\mathfrak{X}_i = (C',R_1,\dots,R_r)$ be the \emph{orbital configuration} of $(F_i^{\varphi})_{(I)}$ on the set $C'$, that is, the relations $R_i$ are the orbits of $(F_i^{\varphi})_{(I)}$ in its natural action on $C' \times C'$.
 Note that $r \geq 3$ since $(F_i^{\varphi})_{(I)}^{C'}$ is not $2$-transitive.
 Also observe that the numbering of the $R_i$ is not canonical (isomorphisms may permute the $R_i$).
 Without loss of generality suppose that $R_1$ is the diagonal.
 Now individualize one of the $R_i$ for $i \geq 2$ at a multiplicative cost of $r-1 \leq k-1$.
 If $R_i$ is undirected (i.e.\;$R_i = R_i^{-1}$) then it defines a non-trivial regular graph.
 Since the symmetry defect of this graph is at least $1/2$ (see Lemma \ref{la:symmetry-defect-regular-graph}) this gives us the desired structure.
 Otherwise $R_i$ is directed.
 If the out-degree of a vertex is strictly less $(|C'|-1)/2$ then the undirected graph $\Gamma = (C',R_i \cup R_i^{-1})$ is again a non-trivial regular graph.
 Otherwise, by individualizing one vertex (at a multiplicative cost of $|C'| \leq k$), one obtains a coloring of symmetry defect at least $1/2$ by coloring vertices depending on whether they are an in- or out-neighbor of the individualized vertex.
 
 Finally suppose $|S_i| < \frac{1}{4}k$.
 Let $D_i = [k] \setminus S_i$.
 Indeed it can be assumed that $|D_1| = |D_2| \geq \frac{3}{4}k$.
 Observe that every $T \subseteq D_i$ is not full with respect to the string $\mathfrak{x}_i$.
 Let $D_i' = D_i \times \{i\}$ (to make the sets disjoint).
 
 Consider the following category $\mathcal{L}$.
 The objects are the pairs $(T,i)$ where $T \subseteq D_i$ is a
 $t$-element subset.
 The morphisms $(T,i) \rightarrow (T',i')$ are the bijections computed
 in Lemma \ref{la:compare-local-certificates} for the test sets $T$
 and $T'$ along with the corresponding strings.
 The morphisms define an equivalence relation on the set
 $(D_1')^{\angles{t}} \cup (D_2')^{\angles{t}}$
 where $(D_i')^{\angles{t}}$ denotes the set of all ordered $t$-tuples with distinct elements over the set $D_i'$.
 Let $R_1,\dots,R_r$ be the equivalence classes and define $R_j(i) = R_j \cap (D_i')^{\angles{t}}$.
 Then $\mathfrak{A}_i = (D_i',R_1(i),\dots,R_r(i))$ is a canonical $t$-ary relational structure.
 Moreover, the symmetry defect of $\mathfrak{A}_i$ is at least $|D_i| - t +1 \geq |D_i|/4$.
\end{proof}

\end{appendix}

\end{document}